\documentclass[onecolumn,notitlepage,floatfix,superscriptaddress]{revtex4-2}

\usepackage[lmargin=.7in,rmargin=.7in,tmargin=.6in,bmargin=0.8in]{geometry}
\usepackage{graphicx}
\usepackage{amsmath}
\usepackage{amsthm}
\usepackage{amssymb}
\usepackage{latexsym}
\usepackage{array}
\usepackage{hyperref}
\usepackage{float}
\usepackage{amsfonts}
\usepackage{dsfont}
\usepackage{mathrsfs}
\usepackage{verbatim}
\usepackage{bbold}
\usepackage{lipsum}
\usepackage[normalem]{ulem}
\usepackage{upgreek}
\usepackage{makecell}
\usepackage{adjustbox,lipsum}

\usepackage{algorithm}
\usepackage{algpseudocode}

\usepackage{textcomp}
\usepackage[dvipsnames]{xcolor}

\usepackage{bm}
\usepackage{times}

\newcommand{\bra}[1]{\left<#1\right|}
\newcommand{\ket}[1]{\left|#1\right>}
\newcommand{\abs}[1]{\left|#1\right|}
\newcommand{\norm}[1]{\left\lVert#1\right\rVert}

\newcommand{\braket}[2]{\left<{#1}|{#2}\right>}
\newcommand{\ketbra}[2]{\ket{#1}\!\!\bra{#2}}

\newcommand{\tr}[1]{\mathrm{Tr}{#1}}

\newtheorem{theorem}{Theorem}
\newtheorem{proposition}{Proposition}
\newtheorem{lemma}{Lemma}
\newtheorem{corollary}{Corollary}
\newtheorem{definition}{Definition}
\newtheorem{remark}{Remark}

\makeatletter
\renewcommand\part[1]{%
  \clearpage
  \onecolumngrid
  \section*{#1}
  
}
\makeatother

\makeatletter
\let\origaddcontentsline\addcontentsline
\renewcommand{\addcontentsline}[3]{}
\makeatother

\begin{document}

\part{}

\title{Sharpening Worst-Case Error Assessment for Fault-Tolerant Quantum Computing: Fidelity and Its Deviation}

\author{Kyoungho~Cho}
\affiliation{Institute for Convergence Research and Education in Advanced Technology, Yonsei University, Seoul 03722, Republic of Korea}
\affiliation{Department of Statistics and Data Science, Yonsei University, Seoul 03722, Republic of Korea}

\author{Ilkwon~Sohn}
\affiliation{Quantum Network Research Center, Korea Institute of Science and Technology Information, Daejeon 34141, Korea}

\author{Yongsoo~Hwang}\email{yhwang@etri.re.kr}
\affiliation{Electronics and Telecommunications Research Institute, Daejeon 34129, Korea}

\author{Jeongho~Bang}\email{jbang@yonsei.ac.kr}
\affiliation{Institute for Convergence Research and Education in Advanced Technology, Yonsei University, Seoul 03722, Republic of Korea}
\affiliation{Department of Quantum Information, Yonsei University, Incheon 21983, Republic of Korea}

\date{\today}

\begin{abstract}
Gate fidelity---an average fidelity over all possible input states---is the workhorse metric for benchmarking quantum gates or circuits, yet fault-tolerant quantum computing ultimately depends on the worst-case behavior, typically quantifiable by so-called the diamond distance. In the low-error regime, the coherent errors can inflate the worst-case error even when the reported gate fidelity is high, making the gate fidelity alone an unreliable proxy for fault-tolerance readiness. To capture the missing information, we introduce a companion observable---what we dub the fidelity deviation---that quantifies how strongly the state-dependent fidelities fluctuate across input states. Adopting such fluctuations in assessing the fault-tolerance is physically natural because some input directions are nearly unaffected while others form narrow ``valleys'' that dominate adversarial circuit behavior. For coherent (unitary) gate errors on two or more qubits, we show that the gate fidelity together with the fidelity deviation constrains the relevant spectral moments of the error unitary, enabling an explicit and tight certificate of the worst-case error. Both quantities are estimated directly from the same randomized input-measurement experiment, without full process tomography. We show that the fidelity and its deviation can provide an economical, operationally meaningful, and accurate standard for assessing the fault tolerance of the engineered quantum gates and circuits.
\end{abstract}

\maketitle


The characterization of gate or circuit performance is a central task in quantum computing, and it becomes especially critical in fault-tolerant quantum computation (FTQC)~\cite{Steane1999,Martinis2015,Postler2022,Zhou2025}. Typically, FTQC is governed by the worst-case behavior: gates are embedded deep inside large circuits and used repeatedly, and a small subset of adversarial input directions can dominate the failure probability of an error-correcting gadget. The circuit-level metric aligned with this viewpoint is the diamond distance~\cite{Kitaev2002,Gilchrist2005}. For channels $\mathcal{E}$, $\mathcal{F}$ acting on $\mathcal{H}$, the (normalized) diamond distance is
\begin{eqnarray}
d_{\diamond}(\mathcal{E},\mathcal{F}) := \frac{1}{2}\norm{\mathcal{E}-\mathcal{F}}_{\diamond} = \frac{1}{2}\max_{\hat{\rho}}\norm{(\mathcal{E}\otimes \mathcal{I})(\hat{\rho})-(\mathcal{F}\otimes \mathcal{I})(\hat{\rho})}_{1},
\label{eq:intro_diamond_def}
\end{eqnarray}
where the maximization is over density operators on the system plus an ancilla of dimension $d=\dim\mathcal{H}$. However, it is not directly accessible without essentially tomographic information.

Thus, in practice, the gate (in)fidelity is widely used~\cite{Nielsen2002}. For the state-dependent single fidelity $f_{\mathcal{E}}(\psi) := \bra{\psi}\mathcal{E}\bigl(\ketbra{\psi}{\psi}\bigr)\ket{\psi}$ and the interaction-picture error channel $\mathcal{E}$, the gate fidelity $F$ is defined as the Haar average of the single fidelities, i.e., 
\begin{eqnarray}
F := F_{\rm avg}(\mathcal{E}) := \int d\psi\, f_{\mathcal{E}}(\psi).
\label{eq:intro_f_def}
\end{eqnarray}
The gate fidelity $F$ can be estimated efficiently and is robust against state-preparation-and-measurement (SPAM) errors using scalable protocols, such as randomized benchmarking (RB)~\cite{Knill2008,Magesan2011}. Concretely, given an ideal unitary gate $\hat{U}$ on $\mathbb{C}^d$ and an implemented completely-positive trace-preserving (CPTP) map $\mathcal{G}$, it is convenient to work in the interaction picture and define the error channel $\mathcal{E} := \mathcal{U}^{\dagger}\circ \mathcal{G}$, where $\mathcal{U}(\hat{\rho})=\hat{U}\hat{\rho}\hat{U}^{\dagger}$, so that $\mathcal{E}=\mathcal{I}$ corresponds to perfect implementation. However, average fidelity data do \emph{not} by themselves provide sharp worst-case guarantees. In the low-error regime, coherent (systematic/unitary) errors make the worst-case behavior parametrically larger than what the average (in)fidelity suggests (the familiar square-root amplification), and unitarity-based refinements lose resolution once the noise becomes predominantly coherent~\cite{Gilchrist2005,Sanders2016,Kueng2016}. 

A major refinement proposed in prior work introduces the unitarity $u$ and derives $(r, u)$-based bounds for unital noise without leakage~\cite{Kueng2016}. This refinement is powerful when the noise has a significant incoherent component because deviations of $u$ from unity certify the favorable linear scaling $d_{\diamond}=O(r)$. However, in the coherence-dominated regime that is most dangerous for FTQC, unitarity necessarily saturates to $u \simeq 1$ (and equals $1$ for purely unitary errors), so the $(r, u)$ strategy loses resolution within the coherent manifold.

In this work, we develop a complementary strategy. Our starting point is to treat gate quality as a landscape over input states rather than a single number. We formalize our idea by introducing the \emph{fidelity deviation} $D$, 
\begin{eqnarray}
D := D(\mathcal{E}) := \sqrt{\int d\psi\, f_{\mathcal{E}}(\psi)^2 - F^2}.
\label{eq:intro_D_def}
\end{eqnarray}
Physically, $D$ quantifies how non-uniformly the gate acts across input states. This is precisely the information that becomes decisive for coherent control errors, where narrow ``valleys'' in the fidelity landscape can drive worst-case risk even when the gate (averaged) fidelity is high~\cite{Sanders2016,Kueng2016}.

\begin{figure}[t]
\centering
\includegraphics[width=1.00\linewidth]{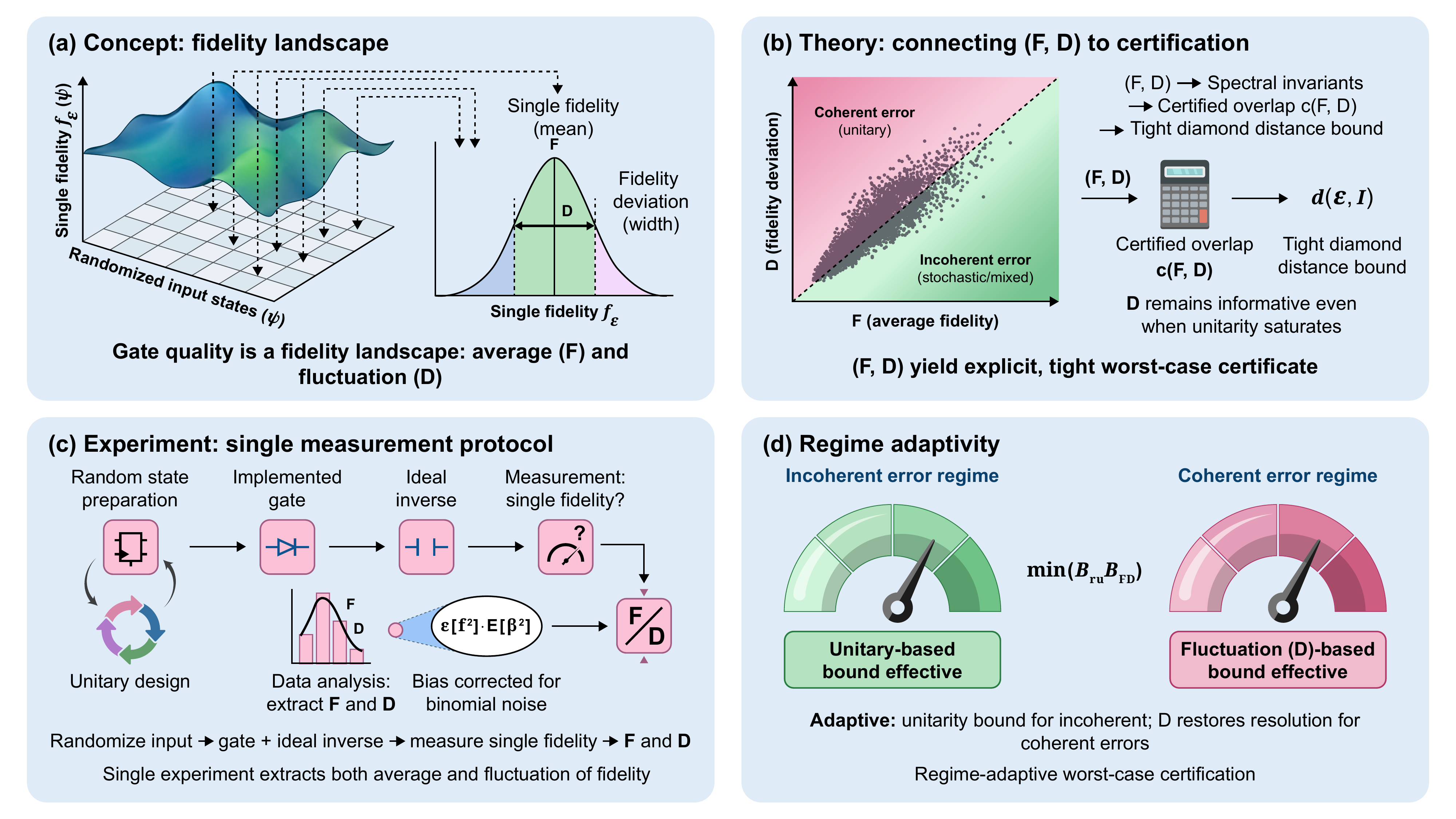}
\caption{\textbf{Fluctuation-assisted worst-case certification from $(F,D)$.} \textbf{(a) Concept} The quality of a quantum gate should be treated as a fidelity landscape rather than a single number. From the interaction-picture error channel $\mathcal{E}$, one samples the state-dependent single fidelities $f_{\mathcal{E}}(\psi)$ over randomized input states $\ket{\psi}$. The mean of this distribution is the gate fidelity $F$, while its width is the fidelity deviation $D$. \textbf{(b) Theory.} In the coherent (unitary) regime, $D$ remains informative even when unitarity saturates. For two-qubit and larger coherent errors, the pair $(F, D)$ fixes the spectral moment invariants of the effective error unitary and yields an explicit certified overlap $c(F, D)$ ({\bf Theorem~\ref{thm:main}}), which directly translates into a tight worst-case certificate on the diamond distance. \textbf{(c) Experiment.} Both $F$ and $D$ are extracted from the same randomized input-measurement experiment: one prepares random states (for example from a unitary design), applies the implemented gate followed by the ideal inverse, and performs a projective ``identity'' test. A simple factorial-moment correction removes the bias from finite-shot (binomial) noise when estimating the second moment, enabling an unbiased estimate of $D$ from the same data stream used for $F$ (Methods). \textbf{(d) Regime adaptivity.} When the incoherence dominates, the unitarity-based bounds can certify linear-in-infidelity worst-case behavior; when the coherence dominates and unitarity saturates, the fluctuation information in $D$ restores high-resolution worst-case certification.}
\label{fig:overview}
\end{figure}

Our main result shows that for coherent (unitary) errors on two or more qubits, the pair $(F,D)$ fixes spectral moment invariants of the effective error unitary and yields a tight upper bound on the diamond distance. Importantly, $D$ can be estimated from the same randomized input-measurement experiment used for fidelity estimation, using a simple shot-noise correction. We illustrate the tightening on three coherent examples (a CZ-like two-qubit phase error, a decomposed Toffoli gate, and a $10$-qubit QFT circuit) and discuss a regime-adaptive certification workflow that combines the strengths of our $(F, D)$ and unitarity-based measure proposed in Ref.~\cite{Kueng2016}. Schematic overview of our work is illustrated in Fig.~\ref{fig:overview}.

\section*{Results}\label{sec:results}

\medskip
\paragraph*{\bf Gate fidelity is not a worst-case certificate.}

A key reason that the gate fidelity alone can be misleading for FTQC is that $F$ is an average-over-input overlap-squared quantity for a single use of the gate, whereas $d_{\diamond}$ quantifies the worst-case distinguishability under arbitrary (possibly entangled) inputs. In a fault-tolerant circuit, gates are composed and nested inside gadgets, and adversarial input directions can be selected implicitly by the preceding noisy evolution. In such settings, the relevant question becomes: what is the largest deviation a gate can induce on any state that may be entangled with the rest of the computer?

This mismatch already appears in the best-known general conversion between the average infidelity and diamond distance. If $r:=1-F$ is the average infidelity, then one only has inequalities of the form~\cite{Gilchrist2005,Sanders2016}
\begin{eqnarray}
\frac{d+1}{d} r  \le d_{\diamond}(\mathcal{E}, \mathcal{I}) \le \sqrt{d(d+1) r},
\label{eq:intro_conversion}
\end{eqnarray}
so even when $r \ll 1$ the worst-case error can scale as $\Theta(\sqrt{r})$. This square-root amplification is structural: it is saturated by coherent miscalibrations, where the gate error is a small unitary over-rotation. A simple illustration is a single-qubit coherent rotation error $\hat{X}=e^{-i\delta \hat{\sigma}_z}$, for which
\begin{eqnarray}
r(\delta)=1-F(\delta)=\frac{2}{3}\sin^2(\delta)\ \simeq\ \frac{2}{3}\delta^2,
\quad
d_{\diamond}(\mathcal{X},\mathcal{I})=\left|\sin(\delta)\right|\ \simeq\ |\delta|.
\label{eq:results_sqrt_r_example}
\end{eqnarray}
Thus, a gate can report a very small infidelity (quadratic in $\delta$) while already having a percent-level worst-case distinguishability (linear in $\delta$). In multi-qubit settings, this effect is amplified by the increased freedom of the error spectrum.

A widely used refinement supplements $r$ with the unitarity $u$~\cite{Kueng2016}, which acts as a regime witness: when incoherence is present, deviations of $u$ from $1$ can certify the favorable linear scaling $d_{\diamond}=O(r)$. Concretely, for unital noise without leakage one has an upper bound of the form
\begin{eqnarray}
d_{\diamond}(\mathcal{E},\mathcal{I}) \le d^2 c_d\sqrt{u+\frac{2dr}{d-1}-1},
\quad
c_d := \frac{1}{2}\sqrt{1-\frac{1}{d^2}}.
\label{eq:results_ru_bound}
\end{eqnarray}
However, in the fully coherent limit $u=1$ identically, so $u$ cannot resolve different coherent errors. In fact, Eq.~(\ref{eq:results_ru_bound}) reduces to a dimension-amplified version of the fidelity-only conversion,
\begin{eqnarray}
d_{\diamond}(\mathcal{E},\mathcal{I}) \le \frac{d}{\sqrt{2}}\sqrt{d(d+1)\,r} \quad (u=1),
\end{eqnarray}
which can be substantially looser as the system size grows. This motivates an additional experimentally accessible observable that remains sensitive within the coherent manifold. For more details, see Sec.~II of the Supplementary Information.

\medskip
\paragraph*{\bf Fidelity deviation as a fluctuation observable.}

The fidelity deviation $D$ defined in Eq.~(\ref{eq:intro_D_def}) is the standard deviation of the single fidelities $f_{\mathcal{E}}(\psi)$ whose Haar average is $F$. Conceptually, the pair $(F, D)$ treats the gate quality as a distribution (or landscape) rather than a single number: $F$ measures the mean height, while $D$ measures the width (state dependence). This fluctuation information is precisely what the gate fidelity discards, and it becomes decisive when a small set of low-fidelity ``valleys'' determines the worst-case behavior.

Two basic sanity checks are immediate. First, $D=0$ if and only if the single fidelity is constant almost everywhere on the sphere, meaning that the gate acts uniformly well (or uniformly badly) on all inputs. Second, because $0\le f_{\mathcal{E}}(\psi)\le 1$, one always has the trivial variance bound
\begin{eqnarray}
0 \le D^2 \le F(1-F),
\label{eq:results_D_trivial_bound}
\end{eqnarray}
so $D$ is naturally of the same order as the infidelity in typical low-error regimes. As a concrete reference point, for an isotropic depolarizing error channel $\mathcal{E}_{\rm dep}(\hat{\rho})=(1-p)\hat{\rho} + p\hat{\mathds{1}}/d$, the single fidelity is state-independent, i.e., $f_{\mathcal{E}_{\rm dep}}(\psi)=(1-p)+p/d$, and hence $D(\mathcal{E}_{\rm dep})=0$ even when $F<1$. In contrast, coherent unitary mis-rotations generically create strong state dependence: at fixed average infidelity $r$, the coherent errors can have markedly different valley structure. Here, $D$ provides exactly the extra ``roughness'' information needed to distinguish them. To see this, let $\hat{U}_{\rm ideal}$ and $\hat{U}_{\rm exp}$ be the ideal and implemented unitaries, and define the effective error unitary as
\begin{eqnarray}
\hat{X} := \hat{U}_{\rm ideal}^{\dagger}\hat{U}_{\rm exp} \in U(d),
\label{eq:results_X_def}
\end{eqnarray}
so that the interaction-picture error channel is $\mathcal{X}(\hat{\rho})=\hat{X}\hat{\rho}\hat{X}^{\dagger}$. Then, the state-dependent single fidelity simplifies to
\begin{eqnarray}
f_{\mathcal{X}}(\psi) = \abs{\bra{\psi}\hat{X}\ket{\psi}}^2,
\end{eqnarray}
and therefore $F$ and $D$ become Haar moments of the same overlap amplitude $\abs{\bra{\psi}\hat{X}\ket{\psi}}$.
For $d=2$ (i.e., for the single-qubit gates), these moments are not independent (indeed, $D$ is fixed once $F$ is known), but for two-qubit and larger gates ($d\ge 4$) the deviation provides genuinely new information about the unitary spectrum that is invisible to the mean fidelity alone.

\medskip
\paragraph*{\bf From $(F, D)$ to a ``tight'' worst-case certificate.}

The relation between $D$ and the worst-case error becomes especially direct. For a unitary error channel $\mathcal{X}(\hat{\rho})=\hat{X}\hat{\rho}\hat{X}^{\dagger}$, the worst-case error is governed by the smallest overlap that $\hat{X}$ can have with any input state. This is captured by the minimum overlap amplitude
\begin{eqnarray}
m(\hat{X})=\min_{\ket{\psi}}\abs{\bra{\psi}\hat{X}\ket{\psi}},
\label{eq:min_X}
\end{eqnarray}
which admits both an operational and a geometric interpretation.
Define the numerical range
\begin{eqnarray}
W(\hat{X}) := \Bigl\{\bra{\psi}\hat{X}\ket{\psi}\ :\ \ket{\psi}\in \mathbb{C}^d,\ \braket{\psi}{\psi}=1\Bigr\}\subset\mathbb{C}.
\label{eq:results_numerical_range}
\end{eqnarray}
For a unitary $\hat{X}$, $W(\hat{X})$ is the convex hull of the eigenvalues of $\hat{X}$ on the unit circle, so $m(\hat{X})=\min_{z\in W(\hat{X})}|z|$ is simply the distance from the origin to this spectral polygon. The worst-case behavior becomes large precisely when this polygon approaches (or encloses) the origin---a geometric way to visualize the coherent ``valleys'' of the fidelity landscape.

A standard fact then turns the diamond-distance problem into the problem of lower-bounding $m(\hat{X})$:
\begin{lemma}[Diamond distance for unitary errors]\label{lem:unitary_diamond}
Let $\mathcal{X}(\hat{\rho})=\hat{X}\hat{\rho}\hat{X}^{\dagger}$ with $\hat{X}\in U(d)$. Then,
\begin{eqnarray}
d_{\diamond}(\mathcal{X},\mathcal{I}) = \sqrt{1-m(\hat{X})^2}.
\label{eq:lem_unitary_diamond}
\end{eqnarray}
\end{lemma}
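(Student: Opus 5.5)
The plan is to prove the two inequalities separately: first reduce the diamond distance to a maximization over pure states on system plus ancilla, then identify that maximum with $m(\hat{X})$. By convexity of the trace norm, the maximum in Eq.~(\ref{eq:intro_diamond_def}) is attained on pure inputs $\ket{\Phi}\in\mathcal{H}\otimes\mathcal{H}$.

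For pure states, the outputs are $(\hat{X}\otimes\hat{\mathds{1}})\ket{\Phi}$ and $\ket{\Phi}$. Both are pure, so the standard formula for the trace distance of pure states gives
\begin{eqnarray}
\frac12\norm{(\hat{X}\otimes\hat{\mathds{1}})\ketbra{\Phi}{\Phi}(\hat{X}^{\dagger}\otimes\hat{\mathds{1}})-\ketbra{\Phi}{\Phi}}_1=\sqrt{1-\abs{\bra{\Phi}\hat{X}\otimes\hat{\mathds{1}}\ket{\Phi}}^2}.
\end{eqnarray}
Hence $d_\diamond(\mathcal{X},\mathcal{I})=\sqrt{1-\mu^2}$, where $\mu:=\min_{\Phi}\abs{\bra{\Phi}\hat{X}\otimes\hat{\mathds{1}}\ket{\Phi}}$. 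It remains to show $\mu=m(\hat{X})$.

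Write $\ket{\Phi}=\sum_k\sqrt{p_k}\ket{e_k}\ket{k}$ in Schmidt form and $\rho=\sum_k p_k\ketbra{e_k}{e_k}$ for the reduced state on the system. Then
\begin{eqnarray}
\bra{\Phi}\hat{X}\otimes\hat{\mathds{1}}\ket{\Phi}=\mathrm{Tr}(\rho\hat{X})=\sum_k p_k\bra{e_k}\hat{X}\ket{e_k}.
\end{eqnarray}
This is a convex combination of points of $W(\hat{X})$. By Toeplitz--Hausdorff, $W(\hat{X})$ is convex, so the combination lies in $W(\hat{X})$; this gives $\mu\ge m(\hat{X})$. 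Conversely, product inputs $\ket{\psi}\ket{0}$ achieve every point of $W(\hat{X})$, so $\mu\le m(\hat{X})$. For unitary (normal) $\hat{X}$ one can avoid Toeplitz--Hausdorff altogether. In that case both sets equal the convex hull of the spectrum: $\mathrm{Tr}(\rho\hat{X})=\sum_j\bra{j}\rho\ket{j}\lambda_j$ in the eigenbasis, and a single pure state $\ket{\psi}=\sum_j\sqrt{q_j}\ket{j}$ yields $\sum_j q_j\lambda_j$.

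The only delicate points are justifying the restriction to pure inputs and the convexity step; both are standard. I expect the convexity of the numerical range to be the main conceptual step, and I would handle it via the spectral (normal-operator) argument just given. Minima are attained because the state spaces are compact.
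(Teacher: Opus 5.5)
Your proposal is correct and follows essentially the same route as the paper's proof. Both arguments restrict to pure system--ancilla inputs by convexity of the trace norm, apply the pure-state trace-distance formula, reduce the overlap to $\mathrm{Tr}(\hat{X}\hat{\rho}_S)$, use convexity of the numerical range to show the ancilla cannot bring the overlap closer to the origin, and finish with monotonicity of $x\mapsto\sqrt{1-x^2}$. The one small difference is that you replace the paper's appeal to Toeplitz--Hausdorff with the elementary spectral argument for normal $\hat{X}$, which shows that both the mixed- and pure-state expectation sets equal $\mathrm{Conv}(\mathrm{Spec}(\hat{X}))$; this is valid and slightly more self-contained.
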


\begin{proof}[Proof sketch.]---By convexity of the trace norm in the definition of $d_{\diamond}$, the maximization can be restricted to pure states on the system--ancilla space, $\hat{\rho}_{SA}=\ketbra{\Psi}{\Psi}$. For two pure states $\ket{\Psi}$ and $\ket{\Phi}$, one has $\frac{1}{2}\|\ketbra{\Psi}{\Psi}-\ketbra{\Phi}{\Phi}\|_1=\sqrt{1-\abs{\braket{\Psi}{\Phi}}^2}$. Taking $\ket{\Phi}=(\hat{X}\otimes\hat{\mathds{1}}_A)\ket{\Psi}$ therefore yields
\begin{eqnarray}
d_{\diamond}(\mathcal{X},\mathcal{I})=\max_{\ket{\Psi}}\sqrt{1-\abs{\bra{\Psi}(\hat{X}\otimes\hat{\mathds{1}}_A)\ket{\Psi}}^2}.
\end{eqnarray}
The overlap depends only on the reduced system state $\hat{\rho}_S=\tr{{}_A\ketbra{\Psi}{\Psi}}$ via $\bra{\Psi}(\hat{X}\otimes\hat{\mathds{1}}_A)\ket{\Psi}=\tr{(\hat{X}\hat{\rho}_S)}$. As $\hat{\rho}_S$ ranges over all density operators, $\tr{(\hat{X}\hat{\rho}_S)}$ ranges over the numerical range $W(\hat{X})$, which is convex; hence allowing mixed $\hat{\rho}_S$ (equivalently, an ancilla) does not enlarge the achievable overlap set beyond $\{\bra{\psi}\hat{X}\ket{\psi}\}$. Because $x\mapsto \sqrt{1-x^2}$ decreases on $[0,1]$, the maximization reduces to the minimum overlap amplitude $m(\hat{X})$, giving Eq.~(\ref{eq:lem_unitary_diamond}).
\end{proof}

{\bf Lemma~\ref{lem:unitary_diamond}} shows that the coherent worst-case certification reduces to a spectral-geometric question: how close can $W(\hat{X})$ get to the origin? The central observation of this work is that the pair $(F,D)$ provides exactly the right amount of the moment information to constrain this geometry for two-qubit and larger coherent errors. To make this explicit, define two spectral trace invariants
\begin{eqnarray}
P := \abs{\tr{(\hat{X})}},
\quad
E_2 := \int d\psi\, f_{\mathcal{X}}(\psi)^2 = D^2+F^2,
\label{eq:results_P_E2}
\end{eqnarray}
and define $Q$ as a fourth-moment invariant (see Methods for the derivation):
\begin{eqnarray}
Q := \abs{ \tr{(\hat{X}^2)} + \tr{(\hat{X})}^2 }.
\label{eq:results_Q_def}
\end{eqnarray}
The key point is that $F$ fixes $P$ through a second-moment Haar identity, while $D$ fixes $Q$ through a fourth-moment identity. In particular,
\begin{eqnarray}
F = \frac{d+P^2}{d(d+1)},
\quad
E_2=\frac{2d(d+3)+4(d+2)P^2+Q^2}{d(d+1)(d+2)(d+3)}.
\label{eq:results_moments_closed}
\end{eqnarray}
For $d\ge 4$ (two qubits and above), Eq.~(\ref{eq:results_moments_closed}) implies that the experimentally accessible pair $(F,D)$ uniquely determines the spectral invariants $(P,Q)$ (equivalently, $P^2$ and $Q^2$). For completeness, we summarize this moment-to-spectrum conversion as the following lemma.

\begin{lemma}[Spectral invariants from $(F,D)$ for $d\ge 4$]\label{lem:PQ_from_FD}
Let $d \ge 4$ and let $\hat{X}\in U(d)$ be the effective error unitary. Then, $(F,D)$ uniquely determines $P$ and $Q$. In particular,
\begin{eqnarray}
P^2 &=& d(d+1)F-d, \nonumber \\
Q^2 &=& d(d+1)(d+2)(d+3)(D^2+F^2) - 2d(d+3) - 4(d+2)P^2.
\end{eqnarray}
\end{lemma}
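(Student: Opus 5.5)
Given the moment identities in Eq.~(\ref{eq:results_moments_closed}), the lemma follows by inverting them. The first identity is linear in $P^2$, so it gives $P^2 = d(d+1)F - d$. Substituting this into the second identity, with $E_2 = D^2+F^2$ from Eq.~(\ref{eq:results_P_E2}), makes it linear in $Q^2$ and yields the stated expression. Since $P,Q\ge 0$ are absolute values, $P$ and $Q$ are then determined uniquely from $(F,D)$. The real content is therefore establishing Eq.~(\ref{eq:results_moments_closed}), which the paper defers to Methods; I would prove it here through Haar moment identities.

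For $F$, write $f_{\mathcal{X}}(\psi)=\mathrm{Tr}\bigl[(\hat{X}\otimes\hat{X}^{\dagger})(\ketbra{\psi}{\psi})^{\otimes 2}\bigr]$. Then use $\int d\psi\,(\ketbra{\psi}{\psi})^{\otimes 2} = \Pi_{\rm sym}^{(2)}/\binom{d+1}{2}$, where $\Pi_{\rm sym}^{(2)}=(\hat{\mathds{1}}+\mathrm{SWAP})/2$. This gives $\bigl[\mathrm{Tr}\hat{X}\,\mathrm{Tr}\hat{X}^{\dagger}+\mathrm{Tr}(\hat{X}\hat{X}^{\dagger})\bigr]/(d(d+1)) = (P^2+d)/(d(d+1))$.

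For $E_2$, write $f^2=\mathrm{Tr}\bigl[(\hat{X}^{\otimes 2}\otimes\hat{X}^{\dagger\otimes 2})(\ketbra{\psi}{\psi})^{\otimes 4}\bigr]$. The fourth moment is $\Pi_{\rm sym}^{(4)}/\binom{d+3}{4}$, with $\Pi_{\rm sym}^{(4)}=\frac{1}{24}\sum_{\sigma\in S_4}V_\sigma$. This produces $\frac{1}{d(d+1)(d+2)(d+3)}\sum_{\sigma\in S_4}\mathrm{Tr}\bigl[V_\sigma(\hat{X}\otimes\hat{X}\otimes\hat{X}^{\dagger}\otimes\hat{X}^{\dagger})\bigr]$. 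Each term factorizes over the cycles of $\sigma$ into products of traces of words in $\hat{X},\hat{X}^{\dagger}$ taken in cyclic order.

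The main bookkeeping step is classifying the 24 permutations by cycle type, labelling tensor slots 1 and 2 with $\hat{X}$ and slots 3 and 4 with $\hat{X}^\dagger$:
\begin{itemize}
\item The identity gives $P^4$.
\item The transpositions $(12)$ and $(34)$ give $\mathrm{Tr}(\hat{X}^2)\overline{\mathrm{Tr}\hat{X}}^{\,2}$ and its conjugate. The four mixed transpositions give $4\,d\,P^2$.
\item The double transposition $(12)(34)$ gives $|\mathrm{Tr}\hat{X}^2|^2$. The two mixed double transpositions give $2d^2$.
\item Among the 3-cycles, those containing both $\hat{X}$ slots and one $\hat{X}^\dagger$ slot reduce via $\hat{X}\hat{X}^\dagger=\hat{\mathds{1}}$ to $|\mathrm{Tr}\hat{X}|^2$, and symmetrically for the others. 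All eight 3-cycles contribute $P^2$ each, for $8P^2$.
\item For the 4-cycles, the words are $\hat{X}\hat{X}\hat{X}^\dagger\hat{X}^\dagger$ or $\hat{X}\hat{X}^\dagger\hat{X}\hat{X}^\dagger$ up to cyclic order, each with trace $d$, for $6d$.
\end{itemize}
The sum is
\[
P^4+2\,\mathrm{Re}\bigl[\mathrm{Tr}(\hat{X}^2)\overline{\mathrm{Tr}\hat{X}}^{\,2}\bigr]+|\mathrm{Tr}\hat{X}^2|^2+4(d+2)P^2+2d^2+6d,
\]
and the first three terms combine to $|\mathrm{Tr}\hat{X}^2+(\mathrm{Tr}\hat{X})^2|^2=Q^2$. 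This reproduces $2d(d+3)+4(d+2)P^2+Q^2$ and confirms the definition of $Q$.

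The main obstacle is getting this permutation census right, in particular distinguishing which 3- and 4-cycles collapse via unitarity and the counts $4d$ versus $8$ for the $P^2$ terms. I would cross-check the total against a simple test case, such as $\hat{X}=\hat{\mathds{1}}$, where $E_2=1$: then $P=d$ and $Q=d+d^2$, and indeed $2d(d+3)+4(d+2)d^2+d^2(d+1)^2 = d(d+1)(d+2)(d+3)$.

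Finally, the restriction $d\ge4$ plays no role in the inversion itself, which is valid for any $d$. It only reflects that for $d\le 3$ the invariant $Q$ is constrained by $P$ (for $d=2$ the paper notes $D$ is a function of $F$), so the information is genuinely new only from $d\ge4$. I would remark on this rather than use it.
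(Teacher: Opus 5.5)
Your proof is correct and follows the paper's route. You invert the closed-form Haar moments obtained from the symmetric-projector identity, and your explicit census of $S_4$ checks out, reproducing $2d(d+3)+4(d+2)P^2+Q^2$; it supplies the fourth-moment computation that the paper only sketches in its appendix and attributes to earlier work.

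One small correction to your closing aside, which does not affect the lemma: for $d=3$ the invariant $Q$ is not a function of $P$. The spectra $\{1,e^{i\pi/3},e^{-i\pi/3}\}$ and $\{1,1,e^{ic}\}$ with $\cos c=-1/4$ both have $P=2$, but $Q=4$ versus $Q=\sqrt{19}$, so the degeneracy you describe is specific to $d=2$.
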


\begin{proof}[Proof sketch.]---For a unitary error channel, $f_{\mathcal{X}}(\psi)=\abs{\bra{\psi}\hat{X}\ket{\psi}}^2$. The Haar integrals of $(\ketbra{\psi}{\psi})^{\otimes 2}$ and $(\ketbra{\psi}{\psi})^{\otimes 4}$ reduce to contractions with the symmetric projectors (Methods), yielding Eq.~(\ref{eq:results_moments_closed}). Expanding $\abs{\tr{(\hat{X}^2)}+\tr{(\hat{X})}^2}^2$ collects the fourth-order trace terms, so solving Eq.~(\ref{eq:results_moments_closed}) for $P^2$ and $Q^2$ (using $E_2=D^2+F^2$) gives the stated expressions.
\end{proof}

One can therefore ask and answer an extremal question: among all unitaries $\hat{X}\in U(d)$ consistent with the observed moments, what is the smallest possible $m(\hat{X})$? The solution yields an explicit certified minimum overlap $c(F,D)$.

\begin{theorem}[Moment-assisted coherent certificate for $d\ge 4$]\label{thm:main}
Let $d \ge 4$ and let $\hat{X} \in U(d)$ be the effective error unitary. Let $F$ and $D$ be defined by Eq.~(\ref{eq:intro_f_def}) and Eq.~(\ref{eq:intro_D_def}) for $\mathcal{X}(\hat{\rho})=\hat{X}\hat{\rho}\hat{X}^{\dagger}$. Then, the minimum overlap obeys
\begin{eqnarray}
m(\hat{X}) \ge c(F,D) := \left[\frac{P}{d}-\frac{\sqrt{(d-2)\bigl(dQ+d^2-(d+2)P^2\bigr)}}{2d}\right]_{+},
\label{eq:thm_c_def}
\end{eqnarray}
where $[x]_{+}:=\max\{x, 0\}$. Consequently, we have
\begin{eqnarray}
d_{\diamond}(\mathcal{X},\mathcal{I})  \le \sqrt{1-c(F,D)^2}.
\label{eq:thm_diamond_bound}
\end{eqnarray}
The bound is tight for admissible $(F,D)$ arising from unitary errors.
\end{theorem}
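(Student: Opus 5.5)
The plan is to reduce the statement to an extremal problem over the eigenphases of $\hat{X}$ and then solve it with Cauchy--Schwarz. First I fix the global phase. Neither $m(\hat{X})$ nor $P,Q$ change under $\hat{X}\mapsto e^{i\alpha}\hat{X}$. So I may assume $\tr{(\hat{X})}=P\ge 0$ is real, and write the eigenvalues as $\lambda_j=e^{i\theta_j}$ with $x_j:=\cos\theta_j$ and $y_j:=\sin\theta_j$. Then $\sum_j x_j=P$ and $\sum_j y_j=0$.

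Since $W(\hat{X})=\mathrm{conv}\{\lambda_j\}$, every $z\in W(\hat{X})$ satisfies $|z|\ge \mathrm{Re}\,z\ge \min_j x_j$. Hence $m(\hat{X})\ge [\min_j x_j]_+$, and it suffices to lower-bound the smallest $x_j$. Lemma~\ref{lem:unitary_diamond} then turns this into Eq.~(\ref{eq:thm_diamond_bound}), and Lemma~\ref{lem:PQ_from_FD} makes the bound a function of $(F,D)$.

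Next I turn $Q$ into a second-moment constraint on the $x_j$. With $\tr{(\hat{X})}=P$ real, $Q\ge \mathrm{Re}\bigl(\tr{(\hat{X}^2)}+P^2\bigr)=\sum_j\cos 2\theta_j+P^2$. Using $\cos 2\theta_j=2x_j^2-1$, this gives
\begin{eqnarray}
\sum_j x_j^2\le S:=\frac{Q-P^2+d}{2}.
\end{eqnarray}
Now let $x_k$ be the minimum. Cauchy--Schwarz on the other $d-1$ coordinates gives $(P-x_k)^2\le (d-1)(S-x_k^2)$. This is a quadratic inequality in $x_k$, whose smaller root is
\begin{eqnarray}
x_k\ge \frac{P}{d}-\frac{1}{d}\sqrt{(d-1)\,(dS-P^2)},\qquad dS-P^2=\frac{dQ+d^2-(d+2)P^2}{2}.
\end{eqnarray}
This already has the structure of $c(F,D)$: the same $P/d$ centre and the same radicand $dQ+d^2-(d+2)P^2$. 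The prefactor, however, is $(d-1)/2$ rather than the claimed $(d-2)/4$.

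The main obstacle is closing this constant gap, so this crude estimate is the starting point rather than the final step. To sharpen it I would use the constraints dropped so far:
\begin{itemize}
\item $\sum_j y_j=0$;
\item $x_j^2+y_j^2=1$;
\item the imaginary part $\sum_j\sin 2\theta_j=2\sum_j x_jy_j$, which enters $Q$ exactly, since $Q^2=(\sum_j\cos2\theta_j+P^2)^2+(\sum_j \sin2\theta_j)^2$.
\end{itemize}
I would set up the Lagrangian for minimizing $x_k$ subject to these exact constraints. The stationarity conditions should force the remaining eigenvalues into at most two conjugate-symmetric clusters $e^{\pm i\phi}$, with the extreme eigenvalue possibly paired with its conjugate. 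This pairing removes one more degree of freedom, which is where I expect $d-1$ to become $d-2$, and the exact use of $Q$ should supply the remaining factor $1/2$. I would verify that the reduced two- or three-parameter problem reproduces exactly $P/d-\sqrt{(d-2)(dQ+d^2-(d+2)P^2)}/(2d)$.

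Finally, for tightness I would take the extremal configuration from the Lagrangian analysis: a conjugate pair at the minimizing real part, with the remaining eigenvalues distributed to saturate the constraints. I would then check two things. The first is that its $(P,Q)$ equal the prescribed ones; by Lemma~\ref{lem:PQ_from_FD} this means it realizes the given admissible $(F,D)$. The second is that the edge of the polygon joining the conjugate pair is the nearest point of $W(\hat{X})$ to the origin, so that $m(\hat{X})=c(F,D)$.
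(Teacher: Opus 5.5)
Your individual steps are sound. With the phase chosen so that $\mathrm{Tr}\,\hat X=P\ge 0$, the bound $m(\hat X)\ge[\min_j x_j]_+$ holds, the inequality $\sum_j x_j^2\le S$ from $Q\ge\mathrm{Re}(\cdot)$ holds, and Cauchy--Schwarz over the other $d-1$ coordinates is correct. So you have rigorously proved a \emph{weaker} certificate, with $2(d-1)\bigl(dQ+d^2-(d+2)P^2\bigr)$ under the square root instead of $(d-2)\bigl(dQ+d^2-(d+2)P^2\bigr)$.

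The gap is the step meant to close this constant. In your frame it cannot be closed, because the intermediate claim $\min_j x_j\ge c(F,D)$ is false. Take the paper's CZ model $\hat X=\mathrm{diag}(1,1,1,e^{i\phi})$ and rotate so that $\mathrm{Tr}\,\hat X$ is real. The outlying eigenvalue then sits at angle $3\phi/4+O(\phi^3)$, so $\min_j x_j\approx 1-9\phi^2/32$. By contrast, $c(F,D)\approx 1-(3+2\sqrt{3})\phi^2/32$ and $m(\hat X)=\cos(\phi/2)\approx 1-4\phi^2/32$. At $\phi=0.2$ these are $0.98875$, $0.99192$ and $0.99500$ respectively (incidentally, this also shows the CZ model does not saturate the bound, contrary to the remark in Methods).

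Your crude estimate also equals $1-9\phi^2/32$ to leading order here. The other three eigenvalues coincide, so Cauchy--Schwarz is saturated, and the imaginary part of $\mathrm{Tr}\,\hat X^2+(\mathrm{Tr}\,\hat X)^2$ is only $O(\phi^3)$. Hence no Lagrangian refinement using $\sum_j y_j=0$ or the imaginary part of the $Q$-combination can turn $(d-1)/2$ into $(d-2)/4$. The information is already lost in your first step: the direction of $\mathrm{Tr}\,\hat X$ is not the direction in which $m(\hat X)$ is realised.

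The missing idea, which is the paper's route, is to fix the phase by the polygon rather than by the trace. Rotate so that the point of $W(\hat X)$ nearest the origin lies on the positive real axis. Then the nearest edge has endpoints $e^{\pm i\beta}$, $m(\hat X)=\cos\beta=:b$, and \emph{two} eigenvalues have real part exactly $b$, with all others at least $b$. Apply Cauchy--Schwarz only to the remaining $d-2$ real parts, $\sum_{\mathrm{bulk}}x_j^2\ge(P-2b)^2/(d-2)$. Together with $\sum_j x_j^2=S$ this gives $2db^2-4Pb+P^2-(d-2)S\le 0$, whose smaller root is exactly the bracket in $c(F,D)$. Both the change $d-1\to d-2$ and the extra factor $1/2$ under the root come from this second pinned eigenvalue, not from the imaginary part of $Q$.

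The price is that in this frame $\mathrm{Tr}\,\hat X$ is generally not real, so $P=\sum_j x_j$ and $\sum_j x_j^2=S$ are no longer available. From $P\ge\mathrm{Re}\,\mathrm{Tr}\,\hat X$ you only get $\sum_j x_j\le P$, which is the wrong side for the Cauchy--Schwarz step. The paper obtains these identities in its Step~3 by restricting to a conjugation-symmetric spectrum, a reduction it asserts rather than proves. If you switch to this frame, you must also justify that symmetrisation or control the imaginary parts exactly. Your tightness construction (the pair $e^{\pm i\beta}$ plus a single bulk pair $e^{\pm i\alpha}$ of multiplicity $(d-2)/2$) agrees with the paper's.
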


\begin{proof}[Proof sketch.]---By {\bf Lemma~\ref{lem:unitary_diamond}}, it suffices to lower bound $m(\hat{X})=\min_{\psi}\abs{\bra{\psi}\hat{X}\ket{\psi}}$. By {\bf Lemma~\ref{lem:PQ_from_FD}}, the data $(F,D)$ fixes the two spectral invariants $(P,Q)$, so one considers the extremal problem of minimizing $m(\hat{X})$ over all $\hat{X}\in U(d)$ consistent with these invariants. Because $\hat{X}$ is normal, $W(\hat{X})=\mathrm{Conv}(\mathrm{Spec}(\hat{X}))$ is the convex hull of eigenvalues on the unit circle, and $m(\hat{X})$ is the distance from the origin to this spectral polygon.
If $0\in W(\hat{X})$ then $m(\hat{X})=0$ and the bound is trivial. Otherwise, the closest point of $W(\hat{X})$ to the origin lies on a boundary chord between two extreme eigenvalues. After a global phase rotation, one can assume this chord is symmetric with endpoints $e^{\pm i\beta}$ and that all eigenvalues lie on the arc $\theta\in[-\beta,\beta]$, so $m(\hat{X})=\cos\beta$.
Writing $b=\cos\beta$ and $x_k=\cos\alpha_k$ for the remaining eigenvalue angles, the constraints imposed by $(P,Q)$ become algebraic constraints on $\sum_k x_k$ and $\sum_k x_k^2$. A Cauchy--Schwarz inequality lower bounds $\sum_k x_k^2$ at fixed $\sum_k x_k$, yielding a quadratic inequality for $b$ whose smaller root is exactly the bracketed expression in Eq.~(\ref{eq:thm_c_def}). Hence, $b \ge c(F,D)$, proving Eq.~(\ref{eq:thm_c_def}) and therefore Eq.~(\ref{eq:thm_diamond_bound}). Tightness follows by saturating the Cauchy--Schwarz step, which is achieved by an explicit two-angle spectrum (one bulk angle and one extremal conjugate pair), so that $m(\hat{X})=c(F,D)$ for a compatible unitary. The proof details are provided in Sec.~V of the Supplementary Information.
\end{proof}

{\bf Theorem~\ref{thm:main}} is a ``moment upgrade'' of coherent worst-case certification. The gate fidelity alone fixes essentially one spectral constraint through $P=\abs{\tr{(\hat{X})}}$, leaving enough freedom for the spectral polygon of $\hat{X}$ to approach the origin and collapse $m(\hat{X})$. The deviation $D$ injects an additional fourth-moment constraint (through $Q$) that controls the coherent spectral spread and thereby sharpens the worst-case guarantees. The tightness follows because for $d \ge 4$ the two moment constraints are sufficient to pin down the extremal geometry.

Operationally, {\bf Theorem~\ref{thm:main}} gives a direct recipe: (i) estimate $(F,D)$ from randomized fidelity data; (ii) compute $(P,Q)$; (iii) evaluate $c(F,D)$ in Eq.~(\ref{eq:thm_c_def}); and (iv) report $\sqrt{1-c(F,D)^2}$ as an explicit data-driven upper bound on the coherent worst-case error. It is also useful to interpret the certified overlap $c(F,D)$. Because $m(\hat{X})=\min_{z\in W(\hat{X})}|z|$, $c(F,D)$ is a data-driven lower bound on the distance from the origin to the numerical range (spectral polygon) of the coherent error. When $c(F,D)$ is close to $1$, the spectrum of $\hat{X}$ is forced to lie in a narrow arc of the unit circle and the gate is certified to be uniformly close to the ideal on all input states. Conversely, if statistical noise or genuinely broad fluctuations make $c(F,D)$ small, then the moment data are compatible with a spectral polygon that approaches the origin, and no nontrivial worst-case guarantee can be asserted from moments alone (the resulting bound correctly returns a value close to $1$). Finally, the tightness statement in {\bf Theorem~\ref{thm:main}} should be read operationally: there exist coherent error unitaries whose worst-case error attains the bound for the same observed $(F,D)$, so additional certification beyond $\sqrt{1-c(F,D)^2}$ would require additional experimentally accessible information beyond these two moments.

\medskip
\paragraph*{\bf Direct estimation of $(F,D)$ from randomized input-measurement data.}

One key practical point is that $D$ is experimentally accessible without full process tomography~\cite{OBrien2004,Blume2017}. Both $F$ and $D^2$ are moments of the same random variable $f_{\mathcal{E}}(\psi)$. Hence, once an experiment can sample the single fidelities over randomized input states, it can estimate $F$ and additionally track the empirical second moment to estimate $D$.

In our direct protocol, one samples $\ket{\psi_i}=\hat{V}_i\ket{0}$ from a unitary design: a $2$-design suffices for estimating $F$, while a $4$-design suffices for estimating the second moment entering $D$~\cite{Dankert2009,Harrow2009,Brandao2016}. For each sampled state, one implements the interaction-picture error channel $\mathcal{E}$ (for example, by applying the implemented gate followed by the ideal inverse), and performs the projective ``identity'' test onto $\ket{\psi_i}$ by applying $\hat{V}_i^{\dagger}$ and measuring in the computational basis. Repeating this test $N$ times yields binomial counts $K_i \sim {\rm Bin}(N, f_{\mathcal{E}}(\psi_i))$.

Crucially, estimating $D$ does \emph{not} require any additional experimental configuration beyond what is already needed for estimating $F$: one simply retains the per-state shot statistics. To remove the upward bias in the naive estimator of $f_{\mathcal{E}}(\psi)^2$ caused by finite-shot (binomial) noise, we use a factorial-moment correction (Methods). Concretely, letting $\hat{f}_i:=K_i/N$, an unbiased estimator of $f_{\mathcal{E}}(\psi_i)^2$ is
\begin{eqnarray}
\widehat{f_i^2}:=\frac{K_i(K_i-1)}{N(N-1)}=\frac{N\hat{f}_i^2-\hat{f}_i}{N-1},
\end{eqnarray}
and the unbiased estimators of the Haar moments are obtained by averaging $\hat{f}_i$ and $\widehat{f_i^2}$ over $M$ random inputs and by using an unbiased cross-average for $F^2$ (Methods). The resulting estimator $(\hat{F},\hat{D})$ can be inserted directly into Eq.~(\ref{eq:thm_diamond_bound}) to produce a data-driven worst-case certificate. Because the protocol reuses the same data stream as fidelity estimation, the additional experimental cost for extracting $D$ is minimal.
\begin{figure}[t]
\centering
\includegraphics[width=1.00\linewidth]{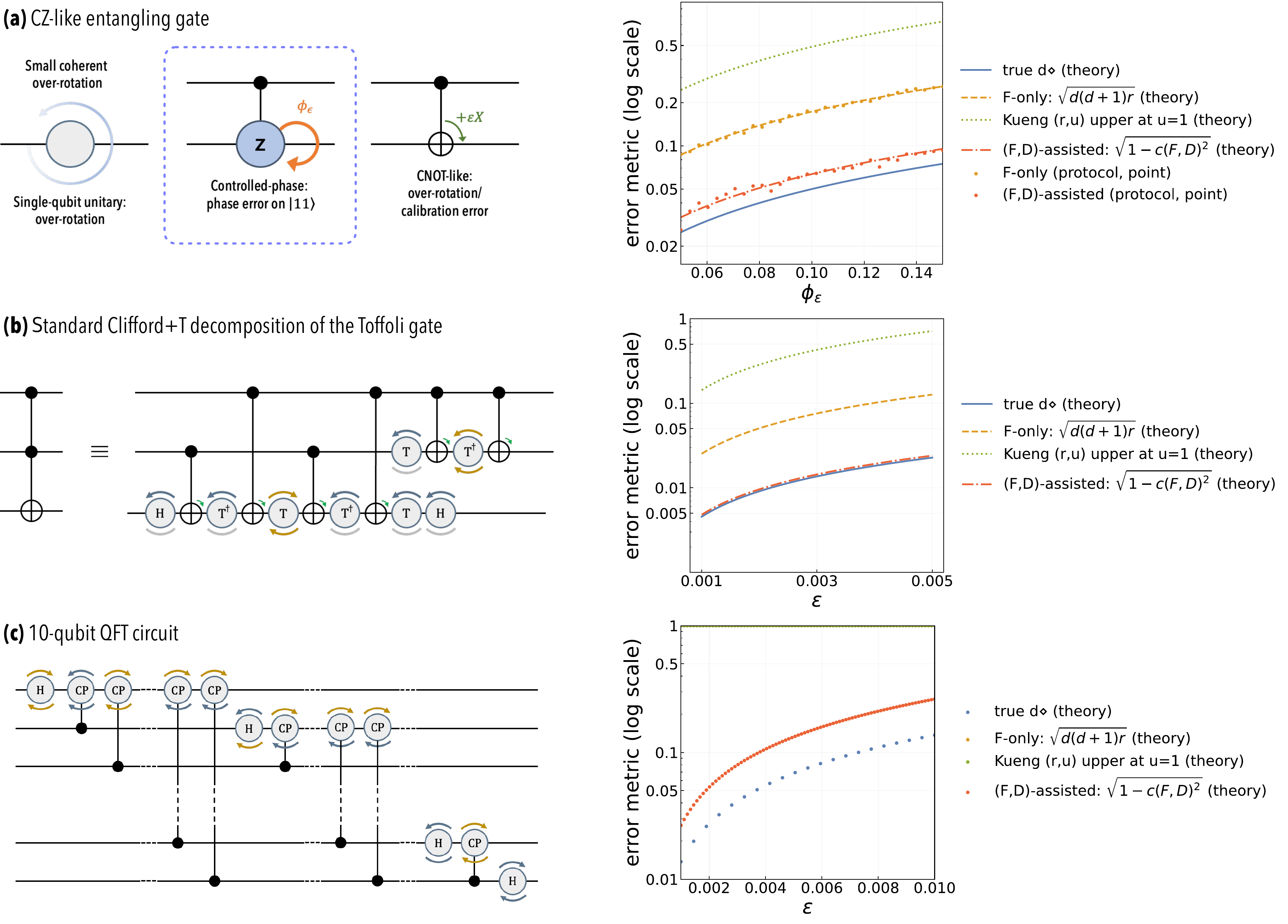}
\caption{\textbf{Three coherent examples and consolidated worst-case certification.} Single-qubit primitives carry a small coherent over-rotation about their native unitary axis (for example, a calibration error on the generator of a $T$ or Hadamard gate). Two-qubit primitives carry a coherent miscalibration, such as, a controlled-phase over-rotation or a controlled-$X$ over-rotation on a CNOT-like interaction. Here, we consider three examples: (a) a CZ-like entangling gate with a coherent phase error on the $\ket{11}$ component, (b) a standard Clifford+$T$ decomposition of the Toffoli gate, where each primitive (single-qubit gates and CNOTs) is affected by the same coherent over-rotation parameter, and (c) the $10$-qubit QFT circuit built from Hadamard and controlled-phase gates, with a uniform coherent over-rotation applied to each primitive. For each example, we compare the exact diamond distance of the resulting coherent error (solid line) with three computable upper bounds obtained from experimentally accessible data: the fidelity-only conversion bound, Kueng et al.'s $(r,u)$-based bound evaluated in the coherent limit, and our proposed $(F, D)$-assisted bound from {\bf Theorem~\ref{thm:main}}. All plots use a logarithmic vertical axis to highlight the low-error regime relevant to fault tolerance. Across all three examples, incorporating the fidelity deviation $D$ yields a substantially tighter worst-case certificate in the coherent regime, while the unitarity-assisted bound becomes increasingly loose with system size because unitarity necessarily saturates for unitary errors.}
\label{fig:examples}
\end{figure}

\medskip
\paragraph*{\bf Examples.}

We test our moment-assisted certificate on three coherent settings: (i) a CZ-like coherent phase miscalibration of a two-qubit entangling gate, (ii) a three-qubit Toffoli gate implemented via a standard decomposition with a uniform coherent over-rotation on each single- and two-qubit primitive, and (iii) a $10$-qubit quantum Fourier transform (QFT) circuit under a coherent over-rotation model on Hadamard and controlled-phase primitives. These examples span a primitive two-qubit entangling gate, a compiled three-qubit non-Clifford gate, and a large ($n=10$) algorithmic circuit module.

For each example, the overall interaction-picture error channel is unitary, so the true diamond distance can be computed exactly from the effective error unitary $\hat{X}$ using {\bf Lemma~\ref{lem:unitary_diamond}}. Numerically, we evaluate
\begin{eqnarray}
d_{\diamond}(\mathcal{X},\mathcal{I})=\sqrt{1-m(\hat{X})^2},
\label{eq:results_exact_diamond_convex_hull}
\end{eqnarray}
where $m(\hat{X})=\min_{z\in \mathrm{Conv}(\mathrm{Spec}(\hat{X}))}\abs{z}$, i.e., we compute the eigenvalues of $\hat{X}$ on the unit circle, take their convex hull in the complex plane, and evaluate the Euclidean distance from the origin to this polygon.
In parallel, we compute three computable upper bounds from experimentally accessible summary data: (i) the fidelity-only conversion $\sqrt{d(d+1)r}$ with $r=1-F$;  (ii) Kueng~et~al.'s $(r,u)$-based bound evaluated in the coherent limit $u=1$, which reduces to $(d/\sqrt{2})\sqrt{d(d+1)r}$ [Eq.~(\ref{eq:results_ru_bound})]; and (iii) our proposed $(F, D)$-assisted bound $\sqrt{1-c(F,D)^2}$ from {\bf Theorem~\ref{thm:main}}. The theory curves in Fig.~\ref{fig:examples} are obtained deterministically from the trace moments of $\hat{X}$ via Eq.~(\ref{eq:results_moments_closed}) (Methods). In addition, for the CZ example, we overlay finite-sample protocol-based point estimates obtained by the Monte Carlo simulation described above (here, $M=500$ random inputs and $N=1000$ shots per input).

(i) {\em Two-qubit CZ-like coherent phase rotation.} Here, we take
\begin{eqnarray}
\hat{X}_{\mathrm{CZ}}(\phi_{\epsilon}) = \mathrm{diag}(1,1,1,e^{i\phi_{\epsilon}}),\quad (d=4).
\label{eq:results_CZ_model}
\end{eqnarray}
This simplest coherent model already exhibits the $\sqrt{r}$ gap: $r(\phi_{\epsilon})$ is quadratic in $\phi_{\epsilon}$ whereas $d_{\diamond}(\phi_{\epsilon})=\abs{\sin(\phi_{\epsilon}/2)}$ is linear (Methods). Geometrically, $\mathrm{spec}(\hat{X}_{\mathrm{CZ}})$ consists of two distinct points on the unit circle, so $m(\hat{X}_{\mathrm{CZ}})$ is the distance from the origin to the chord connecting $1$ and $e^{i\phi_{\epsilon}}$.

(ii) {\em Decomposed Toffoli under uniform coherent primitive over-rotations.} Let qubits $1$, $2$ be controls and qubit $3$ be the target. We use the standard Clifford+$T$ decomposition~\cite{Selinger2013}
\begin{eqnarray}
\hat{U}_{\mathrm{Tof}} = \hat{\mathrm{CNOT}}_{1\rightarrow 2} \hat{T}_{2}^{\dagger} \hat{T}_{1} \hat{H}_{3} \hat{\mathrm{CNOT}}_{1\rightarrow 2} \hat{T}_{3} \hat{T}_{2} 
\hat{\mathrm{CNOT}}_{1\rightarrow 3} \hat{T}_{3}^{\dagger} \hat{\mathrm{CNOT}}_{2\rightarrow 3} \hat{T}_{3} \hat{\mathrm{CNOT}}_{1\rightarrow 3} \hat{T}_{3}^{\dagger} \hat{\mathrm{CNOT}}_{2\rightarrow 3} \hat{H}_{3},
\label{eq:results_toffoli_decomposition}
\end{eqnarray}
where $\hat{H}$ is the Hadamard gate and $\hat{T}=\mathrm{diag}(1,e^{i\pi/4})$. We assume that every primitive in Eq.~(\ref{eq:results_toffoli_decomposition}) is implemented with the same coherent over-rotation parameter $\epsilon$: the single-qubit gates are 
\begin{eqnarray}
\hat{T}^{(\epsilon)} := \exp\left(-i \epsilon \frac{\hat{\sigma}_{z}}{2}\right)\hat{T},
\quad
\hat{H}^{(\epsilon)} := \exp\left(-i \epsilon \frac{\hat{H}}{2}\right)\hat{H},
\label{eq:results_1q_overrot}
\end{eqnarray}
and CNOT gate is
\begin{eqnarray}
\hat{\mathrm{CNOT}}_{c \rightarrow t}^{(\epsilon)} := \exp\left(-i\epsilon\hat{\Pi}_{1}^{(c)}\otimes \hat{\sigma}_{x}^{(t)}\right)\hat{\mathrm{CNOT}}_{c\rightarrow t}.
\label{eq:results_CNOT_overrot}
\end{eqnarray}
where $\hat{\Pi}_{1} := \ket{1}\bra{1} = \frac{1}{2}\bigl(\hat{\mathds{1}}-\hat{\sigma}_{z}\bigr)$. The implemented Toffoli is obtained by replacing every primitive with its $(\epsilon)$-version, and the effective error unitary is
\begin{eqnarray}
\hat{X}_{\mathrm{Tof}}(\epsilon)=\hat{U}_{\mathrm{Tof}}^{\dagger}\hat{U}_{\mathrm{Tof}}^{(\epsilon)}\in U(8).
\end{eqnarray}
We compute $\mathrm{spec}(\hat{X}_{\mathrm{Tof}}(\epsilon))$ and apply Eq.~(\ref{eq:results_exact_diamond_convex_hull}) to obtain the true diamond distance, while $(F(\epsilon),D(\epsilon))$ follow from the trace moments $\tr{(\hat{X}_{\mathrm{Tof}}(\epsilon))}$ and $\tr{(\hat{X}_{\mathrm{Tof}}(\epsilon)^2)}$ (Methods). This example illustrates a key practical point: compiled multi-qubit gates can inherit coherent miscalibrations from many primitives, and the dimension-amplified $u=1$ saturation makes unitarity-based worst-case bounds increasingly conservative as $d$ grows.

(iii) {\em $n$-qubit QFT circuit under uniform coherent over-rotations.}
We consider the standard $n$-qubit QFT decomposition into Hadamard and controlled-phase (CP) gates (omitting the final bit-reversal swaps). Let $d=2^n$ and define
\begin{eqnarray}
\hat{U}_{\mathrm{QFT},n} = \hat{G}_L \cdots \hat{G}_2 \hat{G}_1,
\label{eq:results_qft_gate_list}
\end{eqnarray}
with gate sequence
\begin{eqnarray}
(\hat{G}_1,\ldots,\hat{G}_L) = \left( \hat{H}_1, \hat{CP}_{2,1}(\theta_{2,1}), \ldots, \hat{CP}_{n,1}(\theta_{n,1}), \hat{H}_2, \hat{CP}_{3,2}(\theta_{3,2}), \ldots, \hat{H}_n \right),
\label{eq:results_qft_gate_sequence}
\end{eqnarray}
where $\theta_{k,j}=\pi/2^{k-j}$ for $1\le j<k\le n$ and
\begin{eqnarray}
\hat{CP}_{k,j}(\theta) = \hat{\mathds{1}}+(e^{i\theta}-1)\ketbra{11}{11}_{k,j}.
\label{eq:results_qft_cp_def}
\end{eqnarray}
We impose a uniform coherent miscalibration parameter $\epsilon$ on every primitive: the Hadamard gates follow Eq.~(\ref{eq:results_1q_overrot}) and CP gates are
\begin{eqnarray}
\hat{CP}_{k,j}^{(\epsilon)}(\theta) = \hat{CP}_{k,j}((1+\epsilon)\theta),
\label{eq:results_qft_overrot_model}
\end{eqnarray}
so that the implemented QFT is $\hat{U}_{\mathrm{QFT},n}^{(\epsilon)}=\hat{G}_L^{(\epsilon)}\cdots\hat{G}_1^{(\epsilon)}$ and the effective error unitary is $\hat{X}_{n}(\epsilon)=\hat{U}_{\mathrm{QFT},n}^{\dagger}\hat{U}_{\mathrm{QFT},n}^{(\epsilon)}\in U(d)$. We evaluate up to $n=10$ ($d=1024$) by explicit matrix products, compute the eigenvalues of $\hat{X}_n(\epsilon)$, and obtain the exact $d_{\diamond}$ from Eq.~(\ref{eq:results_exact_diamond_convex_hull}). In this large-$d$ coherent setting, the saturated $(r,u)$ bound becomes extremely loose due to its explicit dimension factor, while the fluctuation information in $D$ continues to constrain the spectral polygon and yields a substantially sharper certificate.

Figure~\ref{fig:examples} shows that the tightening provided by $D$ is not specific to a single gate: it persists from a two-qubit entangling primitive to a multi-gate Toffoli construction and to a large-scale ($10$-qubit) QFT circuit module. Across all three coherent examples, incorporating the fidelity deviation $D$ yields a substantially tighter worst-case certificate in the coherent regime, while the unitarity-assisted bound becomes increasingly loose with system size because unitarity necessarily saturates for unitary errors. Further worked details and additional finite-sample simulations of the estimation protocol are provided in Sec.~VI of the Supplementary Information.

\medskip
\paragraph*{\bf Regime-adaptive certification by combining $(r,u)$ and $(F,D)$.}

The unitarity $u$ is a powerful regime witness when incoherent components are present: in decoherence-dominated regimes it enables linear-in-infidelity certificates~\cite{Wallman2015,Kueng2016}. However, once the noise becomes predominantly coherent, the unitarity saturates (it equals one for perfectly unitary errors) and loses the resolution within the coherent manifold. In that regime, the fluctuation information encoded in $D$ remains nontrivial and restores high-resolution worst-case certification.

This complementarity suggests a practical workflow that is useful for experimental calibration and gate qualification. One first measures $(F, D)$ from randomized fidelity data, and (when applicable) also extracts $(r, u)$ from standard RB-type characterizations. If $u$ is significantly below one, the noise has a meaningful incoherent component and the $(r,u)$-based bound can certify a favorable $O(r)$ worst-case scaling; if $u$ is close to one and the error appears coherence-dominated, then $(F,D)$ can resolve how ``valley-like'' the coherent landscape is and yield a much tighter coherent certificate. Formally, this idea is captured by the following hybrid min-bound.

\begin{theorem}[Hybrid regime-adaptive certification]
\label{thm:hybrid}
Let $\mathcal{E}$ be the interaction-picture error channel on $\mathbb{C}^d$. Let $r := 1-F$ be the average infidelity and let $u$ be the unitarity. Assume (i) $\mathcal{E}$ is unital and has no leakage so that Kueng~et~al.'s $(r, u)$ bound applies, and (ii) the coherent model hypotheses needed for {\bf Theorem~\ref{thm:main}} apply (for example, $\mathcal{E}(\hat{\rho})=\hat{X}\hat{\rho}\hat{X}^{\dagger}$ for a unitary $\hat{X}$). Define two computable upper bounds,
\begin{eqnarray}
B_{ru}(r, u) &:=& d^2 c_d\sqrt{u+\frac{2dr}{d-1}-1}, \nonumber \\
B_{FD}(F,D) &:=& \sqrt{1-c(F,D)^2},
\end{eqnarray}
where $c_d := \frac{1}{2}\sqrt{1-\frac{1}{d^2}}$ and $c(F,D)$ is given in Eq.~(\ref{eq:thm_c_def}).
Then, the diamond distance obeys the hybrid bound
\begin{eqnarray}
d_{\diamond}(\mathcal{E},\mathcal{I}) \le B_{\mathrm{hyb}}(F,D,r,u) := \min\{B_{ru}(r,u),\,B_{FD}(F,D)\}.
\label{eq:disc_hybrid}
\end{eqnarray}
\end{theorem}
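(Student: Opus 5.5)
The hybrid bound follows directly from the two upper bounds already at hand. The plan is to show that, under the stated hypotheses, $d_{\diamond}(\mathcal{E},\mathcal{I})$ is bounded by $B_{ru}(r,u)$ and separately by $B_{FD}(F,D)$. Since a quantity bounded above by two numbers is bounded above by their minimum, Eq.~(\ref{eq:disc_hybrid}) then follows at once. No new analysis of the channel is needed; the work consists of checking that each hypothesis licenses the corresponding earlier result and that the quantities entering both bounds refer to the same channel.

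\emph{First step.} Under hypothesis (i), $\mathcal{E}$ is unital and has no leakage. This is exactly the setting of Kueng et al.'s bound, Eq.~(\ref{eq:results_ru_bound}), which gives $d_{\diamond}(\mathcal{E},\mathcal{I})\le d^2c_d\sqrt{u+2dr/(d-1)-1}=B_{ru}(r,u)$, with $r=1-F$ the same average infidelity used throughout. I will take this result as imported and only confirm that the normalization of $d_{\diamond}$ matches: both use the factor $\tfrac12$ in Eq.~(\ref{eq:intro_diamond_def}).

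\emph{Second step.} Under hypothesis (ii), $\mathcal{E}=\mathcal{X}$ with $\hat{X}\in U(d)$ and $d\ge4$. Then {\bf Theorem~\ref{thm:main}}, whose proof goes through {\bf Lemma~\ref{lem:unitary_diamond}} and {\bf Lemma~\ref{lem:PQ_from_FD}}, gives $d_{\diamond}(\mathcal{X},\mathcal{I})\le\sqrt{1-c(F,D)^2}=B_{FD}(F,D)$. It is worth remarking that a unitary channel is automatically unital and leakage-free, so (i) and (ii) are jointly consistent. In this case $u=1$, and $B_{ru}$ reduces to the dimension-amplified expression given after Eq.~(\ref{eq:results_ru_bound}). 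The minimum is therefore typically attained by $B_{FD}$, but the statement does not require identifying which bound is smaller.

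\emph{Main obstacle.} The only delicate point is interpretational rather than technical. If hypothesis (ii) is meant more broadly than strictly unitary errors (the ``for example'' clause), one must make sure that $c(F,D)$ is still a valid lower bound on the relevant worst-case overlap. I would handle this by stating the theorem under the literal assumption that the conclusion of {\bf Theorem~\ref{thm:main}} holds for $\mathcal{E}$, so that $B_{FD}$ is an upper bound by assumption. With both inequalities established for the same $\mathcal{E}$, $d_{\diamond}(\mathcal{E},\mathcal{I})\le\min\{B_{ru},B_{FD}\}=B_{\mathrm{hyb}}$.
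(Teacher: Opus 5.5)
Your proposal is correct and is the paper's own argument: Kueng et al.'s bound gives $d_{\diamond}(\mathcal{E},\mathcal{I})\le B_{ru}(r,u)$ under (i), Theorem~\ref{thm:main} gives $d_{\diamond}(\mathcal{E},\mathcal{I})\le B_{FD}(F,D)$ under (ii), and the minimum of two valid upper bounds is again an upper bound. Your side remarks also match how the paper uses these hypotheses: a unitary error is automatically unital and leakage-free with $u=1$, and (ii) is read as licensing the conclusion of Theorem~\ref{thm:main}, including $d\ge 4$.
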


{\bf Theorem~\ref{thm:hybrid}} formalizes the regime-adaptive strategy: use $u$ as a ``coherence thermometer'' to exploit favorable linear scaling when incoherence is present, and use $(F,D)$ to regain tight worst-case control when unitarity saturates. In practice, tracking both $u$ and $D$ can also help diagnose model mismatch: when the noise is not perfectly unitary, these quantities provide complementary diagnostics for how and why the average fidelity may fail to reflect worst-case risk.

\section*{Discussion and outlook}\label{sec:discussion}

This work proposes a practical sharpening of fault-tolerance assessment for quantum computing by elevating the average fidelity from a single mean value to a landscape characterization. The fidelity deviation $D$ is the standard deviation of the state-dependent fidelities underlying the average fidelity, and it captures the state-to-state non-uniformity that coherent control errors naturally generate. From a physical perspective, a coherent error is a unitary mis-rotation: it cannot affect all input states equally, and the resulting ``valleys'' in the fidelity landscape are precisely what can dominate adversarial circuit behavior. Our main technical result makes this intuition quantitative: for unitary errors on two or more qubits, $(F, D)$ fixes the spectral moment invariants sufficient to compute an explicit certified overlap $c(F, D)$, which in turn yields a tight upper bound on the diamond distance without full process tomography.

Beyond improving bounds in theory, the main significance of fidelity deviation is operational. Once an experiment can sample fidelities to estimate $F$, the second-moment information needed for $D$ can be harvested from the same data stream with a simple shot-noise correction (Methods). This provides an economical way to upgrade coherent worst-case certification using only randomized input-state preparation and projective measurements, and it remains informative precisely in the coherence-dominated regime where the unitarity saturates and fidelity-only conversions exhibit the $\sqrt{r}$ catastrophe. From the viewpoint of experimental reporting, this suggests a simple practice: together with the usual average fidelity (or infidelity), report the deviation $D$ and, when available, the unitarity $u$; the resulting data can be translated into a conservative and interpretable circuit-level certificate via Eqs.~(\ref{eq:thm_diamond_bound}) and (\ref{eq:disc_hybrid}).

It is also worth noting that the proposed deviation observable is compatible with existing benchmarking pipelines. Many experimental platforms already estimate average fidelities (or infidelities) using randomized protocols. Whenever such a protocol produces a collection of fidelity estimates across randomized settings---rather than only a single aggregated fit parameter---the same dataset can be post-processed to obtain $D$ using the unbiased second-moment estimator described here. Thus, extracting $D$ often does not require new quantum resources so much as retaining and analyzing information that is typically discarded when reporting only a mean. At the same time, because $D$ is defined from the fidelities, practical implementations must treat SPAM carefully; one can either embed the $(F, D)$ estimation in SPAM-robust randomized schemes or regard $(F, D)$ as a complementary diagnostic used alongside RB when calibrating coherent control errors.

The fluctuation viewpoint also suggests a calibration lesson. Two gates with the same average fidelity can have very different fidelity landscapes: one may be uniformly good, while the other may contain narrow low-fidelity directions that are invisible to the mean yet dangerous for worst-case circuit execution. The quantity $D$ detects this difference directly. In principle, this enables a refined control objective: beyond maximizing $F$, one can aim to suppress $D$ to ``flatten'' the fidelity landscape and reduce the risk of adversarial coherent accumulation.

In summary, our methodology provides a direct and economical route to the evaluation, since it produces a worst-case certificate from experimentally accessible data without full tomography, and it remains accurate in the coherent regime that is often dangerous for FTQC. We therefore suggest that reporting $(F, D)$---together with $u$ when available---can serve as a default and standard way to benchmark the engineered quantum gates and circuits from a fault-tolerance viewpoint.


\section*{Methods}

\medskip
\paragraph*{\bf Closed-form Haar moments for unitary errors.}

In the coherent setting $\mathcal{X}(\hat{\rho})=\hat{X}\hat{\rho}\hat{X}^{\dagger}$, the state-dependent single fidelity is $f_{\mathcal{X}}(\psi)=\abs{\bra{\psi}\hat{X}\ket{\psi}}^2$. The average fidelity $F$ and the second Haar moment $E_2 := \int d\psi\, f_{\mathcal{X}}(\psi)^2$ (hence $D^2=E_2-F^2$) can be evaluated in closed form via second and fourth Haar moments~\cite{Collins2006,Mele2024,Cho2026}. A convenient identity is the symmetric-projector formula
\begin{eqnarray}
\int d\psi\, \bigl(\ketbra{\psi}{\psi}\bigr)^{\otimes k} = \frac{\hat{\Pi}^{(k)}_{\rm sym}}{\tr{\hat{\Pi}^{(k)}_{\rm sym}}},
\label{eq:methods_sym_proj}
\end{eqnarray}
where $\hat{\Pi}^{(k)}_{\rm sym}$ is the projector onto the fully symmetric subspace of $(\mathbb{C}^d)^{\otimes k}$. The normalization is explicit: $\tr{\hat{\Pi}^{(k)}_{\rm sym}}=\binom{d+k-1}{k}$, so for $k=2$ and $k=4$, we have $\tr{\hat{\Pi}^{(2)}_{\rm sym}}=d(d+1)/2$ and $\tr{\hat{\Pi}^{(4)}_{\rm sym}}=d(d+1)(d+2)(d+3)/24$.

Using the trace identity
\begin{eqnarray}
\abs{\bra{\psi}\hat{X}\ket{\psi}}^2 = \tr{(\hat{X}\otimes \hat{X}^{\dagger}) \bigl(\ketbra{\psi}{\psi}\bigr)^{\otimes 2}},
\end{eqnarray}
Eq.~(\ref{eq:methods_sym_proj}) for $k=2$ yields the well-known formula
\begin{eqnarray}
F = \frac{d+\abs{\tr{(\hat{X})}}^2}{d(d+1)}.
\label{eq:methods_F_closed}
\end{eqnarray}
Thus, in the coherent unitary setting, the average fidelity depends only on the single spectral invariant $P:=\abs{\tr{(\hat{X})}}$, which is a global measure of how strongly the eigenphases of $\hat{X}$ cluster on the unit circle.

The second moment $E_2$ requires fourth-order Haar contractions. Evaluating Eq.~(\ref{eq:methods_sym_proj}) at $k=4$ produces two independent spectral combinations, which we package as $P=\abs{\tr{(\hat{X})}}$ and
\begin{eqnarray}
Q:=\abs{ \tr{(\hat{X}^2)} + \tr{(\hat{X})}^2 }.
\end{eqnarray}
In terms of these invariants one obtains
\begin{eqnarray}
E_2 = D^2+F^2 = \frac{2d(d+3)+4(d+2)P^2+Q^2}{d(d+1)(d+2)(d+3)}.
\label{eq:methods_E2_compact}
\end{eqnarray}
Thus, $(F, D)$ determine $(P, Q)$ for $d\ge 4$. Intuitively, $P$ fixes a second-moment spectral constraint (the average overlap), while $Q$ injects a fourth-moment constraint that controls the coherent spectral spread. This is the technical reason that $D$ remains informative even when the unitarity saturates at $u=1$. For more details of the fourth-moment evaluation, see Sec.~III of the Supplementary Information.

\medskip
\paragraph*{\bf Direct estimation protocol and shot-noise correction.}

Let $\ket{\psi_i}=\hat{V}_i\ket{0}$ be random input states obtained by sampling $\hat{V}_i$ from a unitary $4$-design~\cite{Dankert2009,Harrow2009,Brandao2016}. (In practice, an approximate design produced by sufficiently random Clifford or local random circuits can be used; the only requirement is that the resulting ensemble reproduces Haar moments up to order four to the desired accuracy.) For each $i$, estimate the single fidelity $f_i:=f_{\mathcal{E}}(\psi_i)$ by repeating the projective test $N$ times and recording the number of ``pass'' outcomes $K_i \sim {\rm Bin}(N,f_i)$. Define the per-state frequency $\hat{f}_i=K_i/N$, so $\mathbb{E}[\hat{f}_i|f_i]=f_i$~\cite{Hoeffding1992}. Averaging over $M$ random inputs then estimates the Haar mean $F$.

To estimate $D$, one must also estimate the second Haar moment $E_2=\mathbb{E}_{\psi}[f_{\mathcal{E}}(\psi)^2]$. A naive approach would average $\hat{f}_i^2$, but this is biased upward by shot noise:
\begin{eqnarray}
\mathbb{E}[\hat{f}_i^2|f_i]=f_i^2+\frac{f_i(1-f_i)}{N}.
\label{eq:methods_bias_naive_second_moment}
\end{eqnarray}
This bias is small when $N$ is very large, but in the moderate-shot regime it would systematically overestimate $E_2$ and hence distort $D$. We therefore use a simple factorial-moment correction. Using the identity $\mathbb{E}[K_i(K_i-1)|f_i]=N(N-1)f_i^2$, define
\begin{eqnarray}
\widehat{f_i^2} := \frac{K_i(K_i-1)}{N(N-1)} = \frac{N\hat{f}_i^2-\hat{f}_i}{N-1},
\quad
\mathbb{E}[\widehat{f_i^2}|f_i]=f_i^2.
\end{eqnarray}
Then, the unbiased estimators of the first and second Haar moments are
\begin{eqnarray}
\hat{F} := \frac{1}{M}\sum_{i=1}^{M}\hat{f}_i,
\quad
\widehat{E_2} := \frac{1}{M}\sum_{i=1}^{M}\widehat{f_i^2}.
\label{eq:methods_estimators_FE2}
\end{eqnarray}

Finally, since $\widehat{E_2}-\hat{F}^2$ is still biased (because $\hat{F}^2$ is biased as an estimator of $F^2$), we use an unbiased cross-average:
\begin{eqnarray}
\widehat{F^2} := \frac{1}{M(M-1)}\sum_{i\neq j}\hat{f}_i\hat{f}_j,
\quad
\mathbb{E}[\widehat{F^2}]=F^2.
\end{eqnarray}
The unbiased estimator of the deviation is then
\begin{eqnarray}
\widehat{D^2}:=\widehat{E_2}-\widehat{F^2},
\quad
\hat{D}:=\sqrt{\max\{\widehat{D^2},0\}}.
\end{eqnarray}
The truncation at zero is only to handle rare finite-sample fluctuations that make $\widehat{D^2}$ slightly negative; it does not affect the large-sample limit.

The total experimental cost is set by two knobs: $M$ (how many random inputs are sampled) and $N$ (how many shots are spent per input). Increasing $M$ reduces the sampling noise associated with drawing $\psi$ from the design ensemble, while increasing $N$ reduces the binomial shot noise for each sampled state. Because $D$ depends on a second moment, it is typically beneficial to allocate enough shots so that $N\ge 2$ (to enable the factorial correction) and to increase $M$ until the empirical distribution of $f_{\mathcal{E}}(\psi)$ is well resolved. Importantly, however, no additional circuits or measurement settings are required beyond those already used to estimate $F$.

\medskip
\paragraph*{\bf Numerical evaluation for Fig.~\ref{fig:examples}.}

The ``theory'' curves in Fig.~\ref{fig:examples}(c) are obtained deterministically from the effective error unitary $\hat{X}$, rather than by Monte Carlo integration over $\ket{\psi}$. Concretely, for each error parameter value (e.g., $\phi_{\epsilon}$ or $\epsilon$) we construct the implemented unitary $\hat{U}_{\rm exp}$ by replacing each primitive gate in the chosen decomposition with its coherently miscalibrated version and multiplying the resulting matrices in the computational basis. We then form $\hat{X}=\hat{U}_{\rm ideal}^{\dagger}\hat{U}_{\rm exp}$ and compute $\tr{(\hat{X})}$ and $\tr{(\hat{X}^2)}$, which determine $(F,D)$ via Eqs.~(\ref{eq:methods_F_closed})--(\ref{eq:methods_E2_compact}) (equivalently, Eq.~(\ref{eq:results_moments_closed})).

To compute the ``true'' diamond distance shown in Fig.~\ref{fig:examples}, we diagonalize $\hat{X}$ to obtain its eigenvalues $\{\lambda_j\}_{j=1}^{d}$, take their convex hull in the complex plane, and evaluate the minimum distance from the origin to this polygon to obtain $m(\hat{X})$, hence $d_{\diamond}=\sqrt{1-m(\hat{X})^2}$ via Eq.~(\ref{eq:results_exact_diamond_convex_hull}). For the CZ panel, we additionally overlay finite-sample, protocol-based point estimates obtained by simulating the direct $(F,D)$ estimation procedure described above (see Sec.~VI of the Supplementary Information).

\medskip
\paragraph*{\bf Two-qubit coherent phase rotation example (closed forms).}

For the CZ-like coherent phase miscalibration model, i.e.,
\begin{eqnarray}
\hat{X}(\phi_{\epsilon})={\rm diag}(1,1,1,e^{i\phi_{\epsilon}})\qquad (d=4),
\end{eqnarray}
Eq.~(\ref{eq:methods_F_closed}) yields a simple closed form. Here, $\tr{(\hat{X})}=3+e^{i\phi_{\epsilon}}$ so that $P^2=\abs{3+e^{i\phi_{\epsilon}}}^2=10+6\cos(\phi_{\epsilon})$, and therefore
\begin{eqnarray}
F(\phi_{\epsilon}) = 1-\frac{3}{5}\sin^2\left(\frac{\phi_{\epsilon}}{2}\right),
\quad
r(\phi_{\epsilon}) = \frac{3}{5}\sin^2\left(\frac{\phi_{\epsilon}}{2}\right).
\end{eqnarray}
A direct fourth-moment evaluation (equivalently, Eq.~(\ref{eq:methods_E2_compact}) with $d=4$) gives
\begin{eqnarray}
D(\phi_{\epsilon}) = \frac{1}{5}\sqrt{\frac{17}{7}}\ \sin^2\left(\frac{\phi_{\epsilon}}{2}\right),
\end{eqnarray}
so in the small-error regime both $r$ and $D$ scale as $\Theta(\phi_{\epsilon}^2)$.

For the worst-case metric, the numerical range of $\hat{X}$ is the chord between $1$ and $e^{i\phi_{\epsilon}}$ (the repeated eigenvalues at $1$ do not change the convex hull), so the distance from the origin to this chord is $\cos(\phi_{\epsilon}/2)$ and thus $m(\hat{X})=\cos(\phi_{\epsilon}/2)$. Using {\bf Lemma~\ref{lem:unitary_diamond}}, we obtain the exact diamond distance
\begin{eqnarray}
d_{\diamond}(\mathcal{X},\mathcal{I})= \left| \sin\left(\frac{\phi_{\epsilon}}{2}\right) \right|.
\end{eqnarray}
This explicitly illustrates the coherent $\sqrt{r}$ phenomenon: $r\simeq (3/20)\phi_{\epsilon}^2$ while $d_{\diamond} \simeq \abs{\phi_{\epsilon}}/2$.

Finally, this example provides a concrete check of tightness: inserting the closed-form $(F(\phi_{\epsilon}),D(\phi_{\epsilon}))$ into {\bf Theorem~\ref{thm:main}} reproduces $c(F,D)=\cos(\phi_{\epsilon}/2)$, and hence $\sqrt{1-c(F,D)^2}=\abs{\sin(\phi_{\epsilon}/2)}$, i.e., the bound is saturated by the model itself. In other words, the moment constraints encoded in $(F,D)$ are sufficient to reconstruct the relevant spectral geometry of $\hat{X}$ exactly in this coherent two-qubit setting.

\section*{Declarations}

\medskip\noindent
{\bf Data Availability.}---The data that support the findings of this study are available from the corresponding author upon request.

\medskip\noindent
{\bf Code Availability.}---The code that supports the findings of this study is available from the corresponding author upon request.

\medskip\noindent
{\bf Author contributions.}---JB and YH conceived the original idea of the work. IS performed the initial numerical simulations to further develop and validate the idea, and the project was subsequently advanced. JB carried out the theoretical proofs together with KC and YH. The manuscript was written by KC and JB, and all authors jointly reviewed and refined the structure, logic, and presentation of the paper. Correspondence and requests for materials should be addressed to JB and YH.

\medskip\noindent
{\bf Competing interests.}---The authors declare no competing interests.

\section*{Acknowledgements}
This work was supported by the Ministry of Science, ICT and Future Planning (MSIP) through the National Research Foundation of Korea (RS-2024-00432214, RS-2025-03532992, and RS-2025-18362970) and the Institute of Information and Communications Technology Planning and Evaluation grant funded by the Korean government (RS-2019-II190003, ``Research and Development of Core Technologies for Programming, Running, Implementing and Validating of Fault-Tolerant Quantum Computing System''), the Korean ARPA-H Project through the Korea Health Industry Development Institute (KHIDI), funded by the Ministry of Health \& Welfare, Republic of Korea (RS-2025-25456722), and the Ministry of Trade, Industry and Resources (MOTIR), Korea, under the project ``Industrial Technology Infrastructure Program'' (RS-2024-00466693). This work was also supported by Korea Institute of Science and Technology Information (KISTI) ((KISTI)K26L1M3C5-01). We acknowledge the Yonsei University Quantum Computing Project Group for providing support and access to the Quantum System One (Eagle Processor), which is operated at Yonsei University.



%


\clearpage
\newpage
\onecolumngrid

\makeatletter
\let\addcontentsline\origaddcontentsline
\makeatother

\part{Supplementary Information}

This supplementary information is written as a self-contained and complete companion manuscript, so that the readers can fully understand the calculations, logical flow, and results of our main research on ``Sharpen Worst-Case Error Assessment for Fault-Tolerant Quantum Computing: Fidelity and Its Deviation.''

\tableofcontents

\section{Introduction}\label{Sec:Intro}

The characterization of gate performance is a central task in quantum computer development, and particularly essential for fault-tolerant quantum computation (FTQC)~\cite{fowler2004constructing,aharonov2008,Gilchrist2005,aliferis2006,Kueng2016}.  In practice, a widely adopted single-number figure of merit is the average gate fidelity, or simply, gate fidelity~\cite{nielsen2002simple}, because it can be estimated efficiently and robustly against the state-preparation-and-measurement (SPAM) errors by scalable protocols, for example, randomized benchmarking (RB)~\cite{PhysRevA.77.012307,magesan2011scalable}. Concretely, given an ideal unitary gate $\hat{U}$ and an implemented CPTP map $\mathcal{G}$ on $\mathbb{C}^d$ ($d=2^n$ for an $n$-qubit gate), it is convenient to work in the interaction picture and define the error channel $\mathcal{E}=\mathcal{U}^{\dagger}\circ\mathcal{G}$ so that $\mathcal{E}=\mathcal{I}$ corresponds to perfect implementation. Here, the state-dependent survival probability is $f_{\mathcal{E}}(\psi) = \bra{\psi}\mathcal{E}(\ketbra{\psi}{\psi})\ket{\psi}$, and the gate fidelity (and gate infidelity) is given by $F = F_{\mathrm{avg}}(\mathcal{E}) = \int d\psi f_{\mathcal{E}}(\psi)$ ($r = 1-F$). Because $F$ (equivalently $r$) is experimentally attractive and has become a standard reporting metric, it is often used as a proxy for the gate quality when comparing the devices and calibrating control.

FTQC, however, ultimately poses a worst-case question: when a noisy gate is embedded deep inside a largely-scaled circuit and is used repeatedly, how badly can it behave in the most adversarial scenario?  This viewpoint is fundamentally different from the ``average-case'' metrics, because a small subset of input directions can dominate the failure probability of an error-correcting gadget.  It is known that the standard circuit-relevant metric capturing this worst-case behavior is the diamond distance~\cite{kitaev1997,Gilchrist2005}. For any two channels $\mathcal{E}$, $\mathcal{F}$ acting on $\mathbb{C}^d$, the normalized diamond distance is defined as
\begin{eqnarray}
d_{\diamond}(\mathcal{E},\mathcal{F}) := \frac{1}{2}\|\mathcal{E}-\mathcal{F}\|_{\diamond} = \frac{1}{2}\max_{\hat{\rho}} \norm{(\mathcal{E}\otimes\mathcal{I})(\hat{\rho})-(\mathcal{F}\otimes\mathcal{I})(\hat{\rho})}_{1},
\label{eq:intro_diamond_def}
\end{eqnarray}
where the maximization is over the density operators $\hat{\rho}$ on system plus an ancilla of dimension $d$. Operationally, $d_{\diamond}$ is the optimal single-shot distinguishability between the two channels under arbitrary entangled inputs and measurements, and it is stable under adding the idle ancillas and subadditive under composition, which is why it naturally enters threshold theorems (see Ref.~\cite{Kueng2016}). However, the practical difficulty is that $d_{\diamond}$ is not directly accessible without essentially tomographic information or costly optimization, so one seeks sharp certificates of $d_{\diamond}$ from experimentally accessible data, such as, $F$.

A central obstacle is that $r=1-F$ can drastically underestimate the worst-case risk in low-error regime. In general, one has the conversion bounds of the form $d_{\diamond}(\mathcal{E},\mathcal{I}) \le \sqrt{d(d+1)r}$, implying that even when $r \ll 1$, the diamond distance can scale as $\Theta(\sqrt{r})$.  This square-root amplification is not merely a loose artifact; it is structurally saturated by coherent (systematic/unitary) miscalibrations, where the gate fidelity is second order in a small coherent angle while the worst-case distinguishability is first order.  A major refinement of this conversion problem was proposed by Kueng {\em et al.}, who introduced the \emph{unitarity} $u$ and derived $(r,u)$-based worst-case bounds for unital noise without leakage~\cite{Kueng2016}. This refinement is powerful precisely when the noise has a significant incoherent component, as the deviations of $u$ from unity certify the favorable linear scaling $d_{\diamond}=O(r)$.  However, in the coherent-dominated regime that is most dangerous for FTQC, unitarity inevitably saturates to $u \simeq 1$ and becomes nearly constant. In particular, for purely unitary errors $u=1$ exactly, so the $(r, u)$ bounds collapse and can even become provably looser than the best-known fidelity-only conversion by a dimension factor. Thus, while $u$ is an excellent regime witness for incoherence, it does not resolve fine structure within the fully coherent manifold where the $\sqrt{r}$ phenomenon is most pronounced.

In this work, we develop a complementary strategy: instead of asking only ``how coherent is the noise?'' (which unitarity answers), we ask ``how non-uniform is the coherent action across input states?''  We formalize this idea by suggesting a notion of, dubbed as, fidelity deviation $D$, defined as the standard deviation of the same experimentally measurable $f_{\mathcal{E}}(\psi)$: namely,
\begin{eqnarray}
D = D(\mathcal{E}) := \sqrt{\int d\psi f_{\mathcal{E}}(\psi)^2 - F^2}.
\end{eqnarray}
While $F$ captures only the mean of the fidelity landscape, $D$ captures its fluctuation and therefore injects higher-moment information that remains nontrivial even when $u=1$. Our main technical result shows that in the coherent (i.e., unitary/systematic) error setting $\mathcal{E}(\hat{\rho})=\hat{X}\hat{\rho}\hat{X}^{\dagger}$, the pair $(F, D)$ determines the relevant spectral moment invariants and yields an explicit, computable, and tight lower bound on the minimum overlap $m(\hat{X})=\min_{\psi}|\langle\psi|\hat{X}|\psi\rangle|$.  Consequently, we can achieve a remarkably
sharp $(F, D)$-assisted worst-case certificate of the form
\begin{eqnarray}
d_{\diamond}(\mathcal{E}, \mathcal{I}) \le \sqrt{1-c(F,D)^2},
\end{eqnarray}
where $c(F,D)$ is an explicit function computed from the observed ($F$, $D$). Importantly, $D$ is experimentally accessible without full tomography: we propose a direct $t$-design sampling protocol in which one measures the same survival probabilities used to estimate $F$ and additionally tracks their second moment to estimate $D$. We then demonstrate the practical impact of the resulting certification by worked examples, including a CZ-like coherent phase error (two-qubit, $d=4$), a decomposed Toffoli gate under coherent primitive over-rotations (three-qubit, $d=8$), and an $n$-qubit quantum Fourier transform (up to $n=10$, $d=2^n$), where the $(F, D)$-assisted bound tracks the true diamond distance far more closely than the bounds derived from $F$ alone in the coherent regime. Finally, because $(r, u)$ and $(F, D)$ are informative in complementary noise regimes, we advocate a regime-adaptive hybrid strategy: use $(r, u)$ to exploit linear scaling when incoherence is present, and/or use $(F, D)$ to recover sharp resolution when unitarity saturates near $u=1$.

\section{Background: Average fidelity vs.\ diamond distance (fault-tolerance viewpoint)}\label{sec:background}

Fault-tolerant quantum computation (FTQC) ultimately asks a worst-case question: when a noisy gate is embedded deep inside a large-circuit quantum computation and is used repeatedly, how badly can it behave in the most adversarial input scenario? This perspective is fundamentally different from that of many experimentally efficient
characterization tools---most notably randomized benchmarking (RB)---which naturally return the average-case figures of merit. This mismatch is not merely a minor conversion issue: in the low-error regime of our practical interest, the average- and worst-case errors can differ by orders of magnitude, especially when coherent (systematic/unitary) effects dominate.

The goal of this section is to set a precise conceptual and mathematical stage for the main theme of this paper: i.e., the average fidelity alone can severely underestimate worst-case fault-tolerance risk, and one needs additional experimentally accessible information to sharpen the worst-case assessment. We therefore review:
\begin{itemize}
\item[(i)] the gate-fidelity-based error metric and why it is experimentally attractive;
\item[(ii)] the diamond distance as the canonical FTQC-relevant worst-case metric;
\item[(iii)] the best-known general conversion bounds linking the two (highlighting the coherent $\sqrt{r}$ catastrophe); and
\item[(iv)] the unitarity-assisted refinement from prior work and a crucial limitation that becomes acute in the fully coherent regime.
\end{itemize}

\subsection{Gate fidelity and the average error rate}\label{subsec:avg_fid}

Let $\mathcal{H} \cong \mathbb{C}^d$ be the system Hilbert space with $d=2^n$ for an $n$-qubit gate. Let $\mathcal{U}(\hat{\rho}) := \hat{U}\hat{\rho}\hat{U}^\dagger$ denote the ideal unitary gate and $\mathcal{G}$ the experimentally implemented quantum channel (i.e., a completely-positive trace-preserving map). It is convenient to work in the interaction picture and define the error channel such that
\begin{eqnarray}
\mathcal{E} := \mathcal{U}^\dagger \circ \mathcal{G},
\label{eq:def_error_channel}
\end{eqnarray}
where $\mathcal{E} = \mathcal{I}$ corresponds to the perfect implementation. Here, $\mathcal{I}$ is the identity channel.

A standard average-case metric is the gate fidelity~\cite{nielsen2002simple}
\begin{eqnarray}
F_{\rm avg}(\mathcal{E}) := \int d\psi \bra{\psi} \mathcal{E}\bigl(\ketbra{\psi}{\psi}\bigr) \ket{\psi},
\label{eq:def_Favg}
\end{eqnarray}
where the integral is over the pure states $\ket{\psi} \in \mathcal{H}$ with respect to the Haar measure. The corresponding average error rate (gate infidelity) is
\begin{eqnarray}
r(\mathcal{E}) := 1 - F_{\rm avg}(\mathcal{E}).
\label{eq:def_r}
\end{eqnarray}
The practical appeal of $r$ is substantial: it can be estimated efficiently and robustly to SPAM errors using RB-type protocols, and thus $r$ is widely reported as a single-number gate-quality metric~\cite{magesan2011scalable}.

However, Eq.~(\ref{eq:def_Favg}) is an average over the input states for a single use of the gate. FTQC, by contrast, demands the control of the error behavior uniformly over all inputs and in the presence of entanglement with other subsystems. This is the origin of the central conversion problem: how can we relate a measured $r$ to a worst-case noise strength that enters threshold theorems?

\subsection{Diamond distance as a fault-tolerance-relevant worst-case metric}\label{subsec:diamond}

A worst-case error metric for quantum circuits should satisfy, beyond the axioms of a metric, the two operational requirements:
\begin{itemize}
\item {\bf Stability under idle ancillas}: A gate in a circuit acts on a subsystem while many other qubits are present. The error assigned to that gate should not change if we tensor with an arbitrary idle identity on an ancilla.
\item {\bf Chaining under composition}: FTQC is compositional: gates are applied sequentially and in parallel. An error metric should not behave pathologically under composition; ideally, it should accumulate at most additively.
\end{itemize}

The standard measure satisfying these circuit-level desiderata is the \emph{diamond distance}~\cite{Gilchrist2005}. For channels $\mathcal{E}$, $\mathcal{F}$ acting on $\mathcal{H}$, the diamond norm distance is defined as
\begin{eqnarray}
\norm{\mathcal{E}-\mathcal{F}}_{\diamond} := \max_{\hat{\rho}} \norm{(\mathcal{E}\otimes\mathcal{I})(\hat{\rho}) - (\mathcal{F}\otimes\mathcal{I})(\hat{\rho})}_{1},
\label{eq:def_diamond_norm}
\end{eqnarray}
where the maximization is over the density operators $\hat{\rho}$ on $\mathcal{H} \otimes \mathcal{H}$ (it suffices to take an ancilla of the dimension $d$), and $\norm{\cdot}_1$ is the trace norm. The associated (normalized) diamond distance is
\begin{eqnarray}
d_{\diamond}(\mathcal{E}, \mathcal{F}) := \frac{1}{2}\norm{\mathcal{E} - \mathcal{F}}_{\diamond}.
\label{eq:def_diamond_dist}
\end{eqnarray}
This quantity $d_{\diamond}(\mathcal{E},\mathcal{F})$ equals the optimal single-shot bias for distinguishing $\mathcal{E}$ from $\mathcal{F}$ when one is allowed to use an entangled input and arbitrary measurement. Equivalently, it quantifies the maximum possible change of any experiment’s outcome statistics caused by replacing $\mathcal{F}$ with $\mathcal{E}$ in a larger circuit.

Now we address the two key properties (stability and chaining) explicitly, because they explain why the diamond distance is the natural worst-case error metric in threshold theorems.
\begin{proposition}[Stability under idle ancillas]\label{prop:diamond_stability}
For any channels $\mathcal{E}$, $\mathcal{F}$ on $\mathcal{H}$ and any ancillary system $\mathcal{K}$,
\begin{eqnarray}
d_{\diamond}(\mathcal{E}\otimes \mathcal{I}_{\mathcal{K}},\; \mathcal{F}\otimes \mathcal{I}_{\mathcal{K}}) = d_{\diamond}(\mathcal{E},\mathcal{F}).
\label{eq:diamond_stability}
\end{eqnarray}
\end{proposition}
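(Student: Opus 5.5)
We need to prove Proposition~\ref{prop:diamond_stability}: $d_\diamond(\mathcal{E}\otimes\mathcal{I}_{\mathcal{K}},\mathcal{F}\otimes\mathcal{I}_{\mathcal{K}})=d_\diamond(\mathcal{E},\mathcal{F})$. The plan is to prove the two inequalities separately, using only the definition in Eq.~(\ref{eq:def_diamond_norm}). We will use it in the form where the maximization runs over states on the system plus an ancilla of arbitrary dimension. The standard fact that an ancilla of dimension equal to the input dimension already attains the maximum (via Schmidt decomposition of a purification) reconciles this with the stated definition. Write $\Delta:=\mathcal{E}-\mathcal{F}$, a Hermiticity-preserving linear map on $\mathcal{H}$.

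For the inequality ``$\ge$'', take an optimal input $\hat{\rho}$ on $\mathcal{H}\otimes\mathcal{H}'$ for $\Delta$, and a fixed pure state $\ket{0}$ on $\mathcal{K}$. Feed $\hat{\rho}\otimes\ketbra{0}{0}$, viewed as a state on $(\mathcal{H}\otimes\mathcal{K})\otimes\mathcal{H}'$ after reordering tensor factors, into $\Delta\otimes\mathcal{I}_{\mathcal{K}}\otimes\mathcal{I}_{\mathcal{H}'}$. The output is $[(\Delta\otimes\mathcal{I}_{\mathcal{H}'})(\hat{\rho})]\otimes\ketbra{0}{0}$. Its trace norm equals that of $(\Delta\otimes\mathcal{I}_{\mathcal{H}'})(\hat{\rho})$, by multiplicativity $\|A\otimes B\|_1=\|A\|_1\|B\|_1$ together with invariance under the swap unitary. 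Hence the left-hand side is at least $d_\diamond(\mathcal{E},\mathcal{F})$.

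For the inequality ``$\le$'', take any state $\hat{\sigma}$ on $\mathcal{H}\otimes\mathcal{K}\otimes\mathcal{H}''$, where $\mathcal{H}''$ is an arbitrary ancilla. Since $(\Delta\otimes\mathcal{I}_{\mathcal{K}})\otimes\mathcal{I}_{\mathcal{H}''}=\Delta\otimes\mathcal{I}_{\mathcal{K}\otimes\mathcal{H}''}$, this input is simply an admissible input for $\Delta$ with the enlarged ancilla $\mathcal{K}\otimes\mathcal{H}''$. Its trace norm is therefore bounded by $\|\Delta\|_\diamond$, provided the supremum over ancillas of arbitrary dimension equals the value attained with a $d$-dimensional ancilla.

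The only genuinely non-routine step is justifying that reduction, so that both sides are measured by the same quantity. The argument runs as follows. First, by convexity of the trace norm, it suffices to consider pure inputs $\ket{\Psi}$. Any pure state on $\mathcal{H}\otimes\mathcal{A}$ has Schmidt rank at most $d$, so there is an isometry $V:\mathbb{C}^d\to\mathcal{A}$ with $\ket{\Psi}=(\hat{\mathds{1}}\otimes V)\ket{\Psi'}$ for some $\ket{\Psi'}\in\mathcal{H}\otimes\mathbb{C}^d$. Because $\Delta$ acts only on the first factor,
\[
(\Delta\otimes\mathcal{I})(\ketbra{\Psi}{\Psi})=(\hat{\mathds{1}}\otimes V)\,\bigl[(\Delta\otimes\mathcal{I})(\ketbra{\Psi'}{\Psi'})\bigr]\,(\hat{\mathds{1}}\otimes V)^{\dagger}.
\]
Isometric conjugation preserves the trace norm, which gives the desired equality. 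Combining the two inequalities and dividing by $2$ yields Eq.~(\ref{eq:diamond_stability}).
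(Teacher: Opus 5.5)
Your argument is correct and is essentially the paper's proof made rigorous: absorbing $\mathcal{K}$ into the ancilla gives $\le$, and product inputs $\hat{\rho}\otimes\ketbra{0}{0}$ give $\ge$, which is exactly the paper's ``enlarge the optimization domain / restrict the ancilla to a subsystem'' reasoning. What you add is the Schmidt-rank isometry argument showing that the optimum does not depend on the ancilla dimension, which the paper only asserts. One small edge case: if the ancilla has dimension below $d$, there is no isometry $\mathbb{C}^d\to\mathcal{A}$, but you can embed $\mathcal{A}$ into $\mathbb{C}^d$ instead.
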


\begin{proof}---By definition, the diamond norm already optimizes over arbitrary input states on a system plus an ancilla. By tensoring both channels with an additional idle identity, we can simply enlarge the optimization domain without changing the attainable maximum; conversely, restricting the ancilla to a subsystem recovers the original value. Therefore, the value is invariant under adding an idle ancilla.
\end{proof}

\begin{proposition}[Chaining / subadditivity under composition]\label{prop:diamond_chaining}
For CPTP maps $\mathcal{E}_{1},\mathcal{E}_{2}$ and $\mathcal{F}_{1},\mathcal{F}_{2}$ on the same space,
\begin{eqnarray}
d_{\diamond}(\mathcal{E}_{2}\circ\mathcal{E}_{1},\; \mathcal{F}_{2}\circ\mathcal{F}_{1}) \le d_{\diamond}(\mathcal{E}_{2},\mathcal{F}_{2}) + d_{\diamond}(\mathcal{E}_{1},\mathcal{F}_{1}).
\label{eq:diamond_chaining}
\end{eqnarray}
Consequently, for an ideal circuit $\mathcal{C}_{\rm id}=\mathcal{U}_{L}\circ\cdots\circ\mathcal{U}_{1}$ and its implementation $\mathcal{C}=\mathcal{E}_{L}\circ\cdots\circ\mathcal{E}_{1}$,
\begin{eqnarray}
d_{\diamond}(\mathcal{C},\mathcal{C}_{\rm id}) \le \sum_{j=1}^{L} d_{\diamond}(\mathcal{E}_{j},\mathcal{U}_{j}).
\label{eq:diamond_circuit_bound}
\end{eqnarray}
\end{proposition}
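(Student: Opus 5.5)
The plan is to apply the triangle inequality for the diamond norm through a hybrid, intermediate channel. The first step is the two-gate case. We write
\begin{eqnarray}
\mathcal{E}_2\circ\mathcal{E}_1-\mathcal{F}_2\circ\mathcal{F}_1 = (\mathcal{E}_2-\mathcal{F}_2)\circ\mathcal{E}_1 + \mathcal{F}_2\circ(\mathcal{E}_1-\mathcal{F}_1),
\end{eqnarray}
and take $\norm{\cdot}_{\diamond}$ of both sides. The triangle inequality bounds the left-hand side by the sum of the two terms. Dividing by two then gives Eq.~(\ref{eq:diamond_chaining}), provided we establish that composing a Hermiticity-preserving difference map with a CPTP map, on either side, does not increase the diamond norm.

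That contractivity is the key technical step, and it needs both orders.

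\emph{Post-composition.} For any input $\hat{\rho}$ on $\mathcal{H}\otimes\mathcal{H}$, the operator $\hat{Y}:=((\mathcal{E}_2-\mathcal{F}_2)\otimes\mathcal{I})(\hat{\rho})$ is Hermitian. Post-composing with $\mathcal{F}_2\otimes\mathcal{I}$ gives $\norm{(\mathcal{F}_2\otimes\mathcal{I})(\hat{Y})}_1\le\norm{\hat{Y}}_1$, because CPTP maps contract the trace norm on Hermitian operators. This follows by splitting $\hat{Y}=\hat{Y}_+-\hat{Y}_-$ into positive and negative parts and using trace preservation on each part.

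\emph{Pre-composition.} Here the map applied first is $\mathcal{E}_1\otimes\mathcal{I}$, which sends the input $\hat{\rho}$ to another density operator $\hat{\sigma}=(\mathcal{E}_1\otimes\mathcal{I})(\hat{\rho})$ on the same system-plus-ancilla space. The quantity $\norm{((\mathcal{E}_2-\mathcal{F}_2)\otimes\mathcal{I})(\hat{\sigma})}_1$ is therefore bounded by the maximum defining $\norm{\mathcal{E}_2-\mathcal{F}_2}_{\diamond}$. Since the ancilla is fixed at dimension $d$ and $\hat{\sigma}$ lives on the same space, no ancilla-size subtlety arises. Proposition~\ref{prop:diamond_stability} covers it in any case.

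Combining the two estimates gives
\begin{eqnarray}
\norm{(\mathcal{E}_2-\mathcal{F}_2)\circ\mathcal{E}_1}_{\diamond}\le\norm{\mathcal{E}_2-\mathcal{F}_2}_{\diamond},\qquad \norm{\mathcal{F}_2\circ(\mathcal{E}_1-\mathcal{F}_1)}_{\diamond}\le\norm{\mathcal{E}_1-\mathcal{F}_1}_{\diamond},
\end{eqnarray}
which proves the two-gate statement.

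For the circuit bound Eq.~(\ref{eq:diamond_circuit_bound}), we induct on $L$. Set $\mathcal{C}_k=\mathcal{E}_k\circ\cdots\circ\mathcal{E}_1$ and $\mathcal{C}^{\rm id}_k=\mathcal{U}_k\circ\cdots\circ\mathcal{U}_1$, both CPTP. Applying the two-gate inequality to the pairs $(\mathcal{E}_{k+1},\mathcal{C}_k)$ and $(\mathcal{U}_{k+1},\mathcal{C}^{\rm id}_k)$ gives
\begin{eqnarray}
d_{\diamond}(\mathcal{C}_{k+1},\mathcal{C}^{\rm id}_{k+1})\le d_{\diamond}(\mathcal{E}_{k+1},\mathcal{U}_{k+1})+d_{\diamond}(\mathcal{C}_k,\mathcal{C}^{\rm id}_k).
\end{eqnarray}
Unrolling this recursion from $k=1$ to $k=L-1$ yields the sum in Eq.~(\ref{eq:diamond_circuit_bound}).

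I expect the only delicate point to be the contractivity step. One must keep track that the operators being composed are differences of channels, so their outputs are Hermitian but not positive, and apply trace-norm contraction only to Hermitian inputs of CPTP maps. Everything else is bookkeeping.
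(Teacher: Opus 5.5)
Your proof is correct and follows essentially the paper's route: a triangle inequality through a hybrid channel (you use $\mathcal{F}_2\circ\mathcal{E}_1$ where the paper uses $\mathcal{E}_2\circ\mathcal{F}_1$, which is immaterial), contractivity of the diamond norm under pre- and post-composition with CPTP maps, and iteration over the $L$ layers. You also justify the contractivity steps that the paper only asserts. One small index slip: in your post-composition paragraph $\hat{Y}$ should be $((\mathcal{E}_1-\mathcal{F}_1)\otimes\mathcal{I})(\hat{\rho})$ to match the term $\mathcal{F}_2\circ(\mathcal{E}_1-\mathcal{F}_1)$, though the argument is unaffected since it holds for any Hermitian $\hat{Y}$.
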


\begin{proof}---Using the triangle inequality, we can verify that
\begin{eqnarray}
d_\diamond(\mathcal{E}_{2}\circ\mathcal{E}_{1}, \mathcal{F}_{2}\circ\mathcal{F}_{1}) \le d_\diamond(\mathcal{E}_{2}\circ\mathcal{E}_{1}, \mathcal{E}_{2}\circ\mathcal{F}_{1}) + d_\diamond(\mathcal{E}_{2}\circ\mathcal{F}_{1}, \mathcal{F}_{2}\circ\mathcal{F}_{1}).
\end{eqnarray}
For CPTP post-processing $\mathcal{E}_{2}$, the diamond distance is contractive: $d_\diamond(\mathcal{E}_{2}\circ\mathcal{E}_{1}, \mathcal{E}_{2}\circ\mathcal{F}_{1})
\le d_\diamond(\mathcal{E}_{1},\mathcal{F}_{1})$, and similarly for CPTP pre-processing by $\mathcal{F}_{1}$. By combining these, we can yield Eq.~(\ref{eq:diamond_chaining}). By iterating the same argument over $L$ layers, Eq.~(\ref{eq:diamond_circuit_bound}) is obtained.
\end{proof}

Because of {\bf Propositions~\ref{prop:diamond_stability}} and {\bf \ref{prop:diamond_chaining}}, $d_\diamond(\mathcal{E},\mathcal{I})$ is the canonical per-location worst-case error strength used in FTQC. The main difficulty is experimental: $d_\diamond$ is not directly measurable without essentially tomographic information. Hence we seek sharp bounds or proxies in terms of experimentally accessible quantities such as $r$.

\subsection{From average infidelity to diamond distance: the coherent $\sqrt{r}$ catastrophe}\label{subsec:rb_diamond_bounds}

The threshold theorems are phrased in the worst-case terms (diamond distance or closely related norms), whereas standard benchmarking yields an average-case quantity. A crucial point is that this mismatch is not merely an inessential constant factor: in the regime $r \ll 1$, $d_\diamond$ can scale like $\sqrt{r}$, making the worst-case risk far larger than what the average infidelity suggests~\cite{sanders2015bounding}. To argue this point, we provide the following theorem:
\begin{proposition}[Dimension-dependent conversion bounds]
\label{prop:rb_bounds_proof}
For any CPTP map $\mathcal{E}$ on $\mathbb{C}^d$, the following holds:
\begin{eqnarray}
\frac{d+1}{d}\,r(\mathcal{E}) \le d_{\diamond}(\mathcal{E},\mathcal{I}) \le \sqrt{d(d+1)\,r(\mathcal{E})}.
\label{eq:rb_best_known_bounds}
\end{eqnarray}
\end{proposition}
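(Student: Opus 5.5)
The plan is to route both inequalities through the entanglement (process) fidelity $F_e(\mathcal{E}) := \bra{\Phi}(\mathcal{E}\otimes\mathcal{I})(\ketbra{\Phi}{\Phi})\ket{\Phi}$ with $\ket{\Phi}=d^{-1/2}\sum_i\ket{ii}$. The first ingredient is the Horodecki--Nielsen relation
\begin{eqnarray}
F_{\rm avg}(\mathcal{E}) = \frac{dF_e(\mathcal{E})+1}{d+1},
\qquad\text{so that}\qquad
1-F_e = \frac{d+1}{d}\,r .
\end{eqnarray}
This relation follows from the $k=2$ symmetric-projector identity already used in the Methods, since both fidelities are linear functionals of the Choi matrix. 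With this identity in hand, the two sides of Eq.~(\ref{eq:rb_best_known_bounds}) become statements comparing $d_\diamond$ with the single quantity $1-F_e$.

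\emph{Lower bound.} I will feed the maximally entangled input $\ket{\Phi}$ into the definition of the diamond norm, which gives
\begin{eqnarray}
d_\diamond \ge \tfrac12\norm{\rho_{\mathcal{E}}-\ketbra{\Phi}{\Phi}}_1 ,
\qquad \rho_{\mathcal{E}} := (\mathcal{E}\otimes\mathcal{I})(\ketbra{\Phi}{\Phi}).
\end{eqnarray}
I then use the elementary inequality $\tfrac12\norm{\rho-\ketbra{\Phi}{\Phi}}_1 \ge 1-\bra{\Phi}\rho\ket{\Phi}$. To see it, take the two-outcome test $\{\ketbra{\Phi}{\Phi},\,\mathds{1}-\ketbra{\Phi}{\Phi}\}$: trace distance dominates the difference in outcome probabilities, and the probability of the first outcome is $1$ under the ideal state and $F_e$ under $\rho_{\mathcal{E}}$. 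Combining this with the identity above yields $d_\diamond\ge 1-F_e=\frac{d+1}{d}r$.

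\emph{Upper bound.} Here the plan is to establish $d_\diamond(\mathcal{E},\mathcal{I}) \le d\sqrt{1-F_e}$, which equals $\sqrt{d(d+1)r}$ after substitution. It suffices to take pure inputs $\ket{\psi}=(\hat{A}\otimes\mathds{1})\sqrt{d}\ket{\Phi}$ with $\mathrm{Tr}(\hat{A}^\dagger \hat{A})=1$. The route is to write the output difference as the operator $(\hat{A}\otimes\mathds{1})\,d\,(\rho_{\mathcal{E}}-\ketbra{\Phi}{\Phi})\,(\hat{A}^\dagger\otimes\mathds{1})$ and bound its trace norm using Hölder's inequality with $\norm{\hat{A}}_\infty\le 1$. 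This bound gives $d_\diamond\le d\cdot\tfrac12\norm{\rho_{\mathcal{E}}-\ketbra{\Phi}{\Phi}}_1$, with the Choi-matrix trace norm appearing. For the Choi states I apply Fuchs--van de Graaf, $\tfrac12\norm{\rho-\ketbra{\Phi}{\Phi}}_1\le\sqrt{1-\bra{\Phi}\rho\ket{\Phi}}$, which turns the bound into $d\sqrt{1-F_e}$.

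\emph{Main obstacle.} The difficulty is the sandwiching step: the operator $\hat{A}$ is merely contractive, not unitary, and the trace norm is not invariant under such maps. I therefore need $\norm{\hat{B}\hat{Y}\hat{B}^\dagger}_1\le\norm{\hat{B}}_\infty^2\norm{\hat{Y}}_1$, which holds by Hölder. The factor $d$ then comes from the normalization $\sqrt{d}\ket{\Phi}$ and is exactly what yields the $\sqrt{d(d+1)r}$ form. If this bookkeeping produces a loss of $\sqrt{d}$ rather than $d$, I fall back on the standard $d\sqrt{1-F_e}$ statement of Sanders et al.\ cited before the proposition.
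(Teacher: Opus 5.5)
Your proof follows the paper's essentially step for step: the relation $1-F_{\rm e}=\frac{d+1}{d}r$; the two-outcome test $\{\ketbra{\Phi}{\Phi},\mathds{1}-\ketbra{\Phi}{\Phi}\}$ on the Choi state for the lower bound; and the Choi-to-diamond factor $d$ followed by Fuchs--van de Graaf for the upper bound.

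The one place you go beyond the paper is justifying $\norm{\mathcal{E}-\mathcal{I}}_{\diamond}\le d\,\norm{\hat{\rho}_{\mathcal{E}}-\ketbra{\Phi}{\Phi}}_1$, which the paper only asserts in a footnote. That step contains a slip you need to fix. With $\ket{\psi}=(\hat{A}\otimes\mathds{1})\sqrt{d}\ket{\Phi}$, the operator $\hat{A}$ acts on the system, so it does not commute with $\mathcal{E}\otimes\mathcal{I}$. The output difference is therefore not $(\hat{A}\otimes\mathds{1})\,d\,(\hat{\rho}_{\mathcal{E}}-\ketbra{\Phi}{\Phi})(\hat{A}^{\dagger}\otimes\mathds{1})$ in general; it fails, for example, for a unitary $\mathcal{E}$ not commuting with $\hat{A}$.

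The fix is to move $\hat{A}$ to the ancilla with the transpose trick, $(\hat{A}\otimes\mathds{1})\ket{\Phi}=(\mathds{1}\otimes\hat{A}^{T})\ket{\Phi}$. The difference then becomes $d\,(\mathds{1}\otimes\hat{A}^{T})(\hat{\rho}_{\mathcal{E}}-\ketbra{\Phi}{\Phi})(\mathds{1}\otimes\hat{A}^{T})^{\dagger}$. Applying H\"older with $\norm{\hat{A}^{T}}_\infty=\norm{\hat{A}}_\infty\le\norm{\hat{A}}_2=1$ gives exactly the factor $d$. So no fallback to the cited result is needed, and your corrected argument supplies the step the paper leaves unproved.
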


\begin{proof}---Firstly, we introduce a useful intermediate, i.e., entanglement fidelity. Firstly, let us define the maximally entangled state on $\mathcal{H}\otimes\mathcal{H}$ such that
\begin{eqnarray}
\ket{\Phi} := \frac{1}{\sqrt{d}}\sum_{j=1}^d \ket{j}\otimes\ket{j},
\end{eqnarray}
 and the (normalized) Choi state $\hat{\rho}_{\mathcal{E}} := (\mathcal{E}\otimes\mathcal{I})(\ketbra{\Phi}{\Phi})$. Then, the entanglement fidelity is
\begin{eqnarray}
F_{\rm e}(\mathcal{E}) := \bra{\Phi}\hat{\rho}_{\mathcal{E}}\ket{\Phi} = \bra{\Phi}(\mathcal{E}\otimes\mathcal{I})(\ketbra{\Phi}{\Phi})\ket{\Phi}.
\label{eq:def_ent_fid}
\end{eqnarray}
It is standard that
\begin{eqnarray}
F_{\rm avg}(\mathcal{E}) = \frac{d F_{\rm e}(\mathcal{E}) + 1}{d+1},
\label{eq:Favg_Fe_relation}
\end{eqnarray}
so the entanglement infidelity $r_{\rm e}:=1-F_{\rm e}$ is also related to the average infidelity $r$ by
\begin{eqnarray}
r_{\rm e}(\mathcal{E}) = \frac{d+1}{d} r(\mathcal{E}).
\label{eq:re_vs_r}
\end{eqnarray}

\noindent\medskip {\bf (Lower bound.)} 
We then consider the lower bound. Consider the two-outcome measurement $\{\ketbra{\Phi}{\Phi}, \mathds{1}-\ketbra{\Phi}{\Phi}\}$ on $\mathcal{H}\otimes\mathcal{H}$. Applied to the ideal Choi state $\hat{\rho}_{\mathcal{I}}=\ketbra{\Phi}{\Phi}$, the first outcome occurs with probability $1$, while applied to $\hat{\rho}_{\mathcal{E}}$ it occurs with probability $F_{\rm e}(\mathcal{E})$. Hence, the total-variation distance between these two induced classical distributions is $1 - F_{\rm e}(\mathcal{E})=r_{\rm e}(\mathcal{E})$. Since the trace distance upper bounds the distinguishing advantage over all POVMs, we have
\begin{eqnarray}
r_{\rm e}(\mathcal{E}) \le \frac{1}{2}\norm{\hat{\rho}_{\mathcal{E}} - \hat{\rho}_{\mathcal{I}}}_{1}.
\label{eq:re_le_trace_dist}
\end{eqnarray}
Finally, because the diamond norm maximizes $\left\|(\mathcal{E}-\mathcal{I})\otimes\mathcal{I}(\hat{\rho})\right\|_1$ over all input states $\hat{\rho}$, in particular it is at least the value at $\hat{\rho}=\ketbra{\Phi}{\Phi}$, so
\begin{eqnarray}
d_{\diamond}(\mathcal{E},\mathcal{I}) = \frac{1}{2}\norm{\mathcal{E}-\mathcal{I}}_{\diamond} \ge \frac{1}{2}\norm{\hat{\rho}_{\mathcal{E}} - \hat{\rho}_{\mathcal{I}}}_{1}.
\label{eq:diamond_ge_choi_trace}
\end{eqnarray}
By combining Eqs.~(\ref{eq:re_le_trace_dist})--(\ref{eq:diamond_ge_choi_trace}), we can yield $d_{\diamond}(\mathcal{E},\mathcal{I})\ge r_{\rm e}(\mathcal{E})$. Substituting Eq.~(\ref{eq:re_vs_r}) gives the stated lower bound.

\noindent\medskip  {\bf (Upper bound.)}
A standard inequality relates the diamond norm to the trace distance of Choi states~\footnote{One way to see this is via the SDP characterization of the diamond norm and the fact that the maximally entangled input captures the Choi operator, together with a dimension factor converting between input-state optimizations and the Choi representation.}:
\begin{eqnarray}
\norm{\mathcal{E}-\mathcal{I}}_{\diamond} \le d \norm{\hat{\rho}_{\mathcal{E}} - \hat{\rho}_{\mathcal{I}}}_{1}.
\label{eq:diamond_le_d_choi_trace}
\end{eqnarray}
Then, apply the Fuchs-van de Graaf inequality to states $\hat{\rho}_{\mathcal{E}}$ and $\hat{\rho}_{\mathcal{I}}$~\cite{fuchs2002cryptographic}:
\begin{eqnarray}
\frac{1}{2}\norm{\hat{\rho}_{\mathcal{E}} - \hat{\rho}_{\mathcal{I}}}_{1} \le \sqrt{1 - F(\hat{\rho}_{\mathcal{E}}, \hat{\rho}_{\mathcal{I}})}.
\label{eq:fvG_upper}
\end{eqnarray}
Since $\hat{\rho}_{\mathcal{I}}=\ketbra{\Phi}{\Phi}$ is pure, we have $F(\hat{\rho}_{\mathcal{E}}, \hat{\rho}_{\mathcal{I}})=\bra{\Phi}\hat{\rho}_{\mathcal{E}}\ket{\Phi}=F_{\rm e}(\mathcal{E})$.
Therefore,
\begin{eqnarray}
\norm{\hat{\rho}_{\mathcal{E}} - \hat{\rho}_{\mathcal{I}}}_{1} \le 2\sqrt{1-F_{\rm e}(\mathcal{E})} = 2\sqrt{r_{\rm e}(\mathcal{E})}.
\label{eq:choi_trace_le_sqrt_re}
\end{eqnarray}
By combining Eqs.~(\ref{eq:diamond_le_d_choi_trace})--(\ref{eq:choi_trace_le_sqrt_re}), we can yield
\begin{eqnarray}
d_{\diamond}(\mathcal{E},\mathcal{I}) = \frac{1}{2}\norm{\mathcal{E}-\mathcal{I}}_{\diamond} \le \frac{1}{2} d \cdot 2\sqrt{r_{\rm e}(\mathcal{E})} = d\sqrt{r_{\rm e}(\mathcal{E})}.
\end{eqnarray}
Finally, substitute Eq.~(\ref{eq:re_vs_r}) to obtain $d_\diamond(\mathcal{E},\mathcal{I})\le \sqrt{d(d+1)\,r(\mathcal{E})}$.
\end{proof}

The problem is that Eq.~(\ref{eq:rb_best_known_bounds}) is often summarized as ``$d_\diamond$ can be as large as $O(\sqrt{r})$''. It is important to understand that this is not merely a loose artifact: the square-root penalty is a structural feature of coherent (systematic/unitary) miscalibrations. Intuitively, $r$ is an averaged overlap-squared quantity and thus becomes second order in a small coherent angle, whereas $d_{\diamond}$ is a worst-case distinguishability and becomes first order.

To make this concrete, consider a purely unitary error channel $\mathcal{U}(\hat{\rho}) = \hat{V}\hat{\rho}\hat{V}^{\dagger}$. For such channels, one can show the general inequalities of the form
\begin{eqnarray}
\sqrt{\frac{d+1}{d} r_U} \le d_{\diamond}(\mathcal{U}, \mathcal{I}) \le \sqrt{d (d+1) r_U},
\label{eq:rb_unitary_sqrt_scaling}
\end{eqnarray}
where $r_U$ is the corresponding average infidelity. Thus, every unitary error exhibits worst-case behavior that scales like $\Theta(\sqrt{r_U})$. Here, a particularly transparent illustration is the single-qubit coherent over-rotation $\hat{V}(\delta)=e^{-i \delta \hat{\sigma}_z}$ (calibration error). For the small error, i.e., $\delta \ll 1$, one finds explicitly
\begin{eqnarray}
r(\delta) = \frac{2}{3}\sin^2\delta \simeq \frac{2}{3}\delta^2,
\quad
d_{\diamond}(\mathcal{U}, \mathcal{I}) = \abs{\sin\delta} \simeq \abs{\delta}.
\label{eq:rb_qubit_rotation_example}
\end{eqnarray}
Thus, we can have the canonical coherent relation
\begin{eqnarray}
d_{\diamond}(\mathcal{U}_{V(\delta)},\mathcal{I}) \simeq \sqrt{\frac{3}{2}\,r(\delta)}.
\label{eq:rb_diamond_vs_r_qubit}
\end{eqnarray}
This is the source of the coherent ``catastrophe'': for example, if a benchmarking experiment reports $r\simeq 10^{-4}$, Eq.~(\ref{eq:rb_diamond_vs_r_qubit}) predicts a worst-case error at the percent level~\cite{Kueng2016}.

We can observe a sharp crossover at between the dephasing and calibration error. In particular, we can understand why the coherence matters. Note that the square-root scaling is not universal. For incoherent processes such as depolarizing or pure dephasing, the worst-case and average-case errors often scale linearly. Hence the decisive question is not only whether $r$ is small, but whether the noise is coherence-dominated. A minimal model vividly demonstrating the transition is a single-qubit channel with simultaneous (i) dephasing with probability $p$ and (ii) a coherent over-rotation $\delta$ about $Z$. In the low-error regime, one finds the asymptotic forms~\cite{sanders2015bounding}
\begin{eqnarray}
r_{\mathrm{CD}} \simeq \frac{2}{3}\left(p+\delta^2\right),
\quad
d_{\diamond}^{\mathrm{CD}} \simeq \sqrt{p^2+\delta^2}.
\label{eq:rb_CD_asymptotics}
\end{eqnarray}
These two expressions already encode the qualitative physics:
\begin{itemize}
\item Dephasing-dominated regime ($p\gg \abs{\delta}$). In this case, Eq.~(\ref{eq:rb_CD_asymptotics}) reduces to
\begin{eqnarray}
d_{\diamond}^{\mathrm{CD}} \simeq p \simeq \frac{3}{2}\,r_{\mathrm{CD}},
\label{eq:rb_CD_linear_regime}
\end{eqnarray}
so the worst- and average-case errors are comparable up to a modest constant factor.

\item Calibration-dominated regime ($\abs{\delta}\gg p$). In this case, Eq.~(\ref{eq:rb_CD_asymptotics}) reduces to
\begin{eqnarray}
d_{\diamond}^{\mathrm{CD}} \simeq \abs{\delta} \simeq \sqrt{\frac{3}{2}\,r_{\mathrm{CD}}},
\label{eq:rb_CD_sqrt_regime}
\end{eqnarray}
so the worst-case error becomes parametrically larger than the average infidelity.
\end{itemize}

Eq.~(\ref{eq:rb_best_known_bounds}) is therefore not a mere technical inequality: it encodes a fundamental obstruction to using $r$ alone as a fault-tolerance-relevant figure of merit. In favorable (incoherent) regimes, $r$ can indeed be an accurate proxy for the worst-case behavior. But in coherent-dominated regimes, $r$ systematically underestimates operational risk, and one must supplement $r$ with additional experimentally accessible information that diagnoses \emph{how coherent} the noise is. This motivates incorporating structure parameters such as the unitarity $u$ in the prior study in Ref.~\cite{Kueng2016}, and also motivates
our development of alternative sharp worst-case certification strategy in the present work.

\subsection{Unitarity-assisted refinement : power and a key limitation at $u\simeq 1$}\label{subsec:unitarity_limitation}

To bridge the gap between average-case and worst-case metrics beyond Eq.~(\ref{eq:rb_best_known_bounds}), a prior work Ref.~\cite{Kueng2016} introduced an additional experimentally accessible observable: the \emph{unitarity} $u(\mathcal{E})$, which quantifies how close the noise is to being unitary and can be estimated efficiently using benchmarking-type experiments.

For the unital noise processes without leakage, the pair $(r, u)$ yields a two-sided characterization of the worst-case error via an inequality of the form
\begin{eqnarray}
c_d\sqrt{u(\mathcal{E})+\frac{2d}{d-1}r(\mathcal{E})-1} \le d_{\diamond}(\mathcal{E},\mathcal{I}) \le d^2 c_d\sqrt{u(\mathcal{E})+\frac{2d}{d-1}r(\mathcal{E})-1},
\label{eq:unitarity_bound_general}
\end{eqnarray}
where
\begin{eqnarray}
c_d := \frac{1}{2}\sqrt{1-\frac{1}{d^2}}.
\label{eq:def_cd}
\end{eqnarray}
This refinement is extremely informative when the noise is \emph{not} close to unitary: the condition
\begin{eqnarray}
u(\mathcal{E}) = 1 - \frac{2d}{d-1}r(\mathcal{E}) + O(r^2)
\label{eq:unitarity_linear_condition}
\end{eqnarray}
is necessary and sufficient (within that framework) for the favorable linear scaling
$d_{\diamond}=O(r)$.
In this sense, $u(\mathcal{E})$ acts as a quantitative witness of whether the device has entered the ``good'' incoherent-dominated regime where the average and worst-case metrics are closely aligned.

Nevertheless, there is a crucial limitation directly relevant to our goal: in the coherent-dominated regime where $u(\mathcal{E}) \to 1$, the unitarity ceases to provide additional resolving power. Indeed, for a purely unitary error channel one has $u(\mathcal{E}) = 1$, and Eq.~(\ref{eq:unitarity_bound_general}) becomes
\begin{eqnarray}
d_{\diamond}(\mathcal{E},\mathcal{I}) \le d^2 c_d \sqrt{\frac{2d}{d-1}\,r(\mathcal{E})} = \frac{d}{\sqrt{2}}\sqrt{d(d+1)\,r(\mathcal{E})}.
\label{eq:unitarity_bound_at_u1}
\end{eqnarray}
Comparing Eq.~(\ref{eq:unitarity_bound_at_u1}) with the fidelity-only upper bound in Eq.~(\ref{eq:rb_best_known_bounds}), we obtain the strict multiplicative gap
\begin{eqnarray}
\frac{\left[d^2 c_d \sqrt{\frac{2d}{d-1}\,r}\right]}{\sqrt{d(d+1)\,r}} = \frac{d}{\sqrt{2}}.
\label{eq:ratio_unitarity_vs_r_only}
\end{eqnarray}
Thus, when coherent errors dominate so strongly that $u=1$, the unitarity-assisted upper bound is actually \emph{worse} than the best-known fidelity-only upper bound by a factor $d/\sqrt{2}$. For a two-qubit gate ($d=4$), this factor is $4/\sqrt{2} \approx 2.83$, i.e., the $(r, u)$ strategy provably loses the sharpness exactly where one hopes it would be most helpful.

\begin{remark}[The limitation in the use of unitarity]
The unitarity $u(\mathcal{E})$ was designed to quantify how stochastic a noise process is. When stochasticity is present, $u$ separates ``nearly Pauli/dephasing'' noise (favorable $d_\diamond \sim r$ scaling) from noise with significant coherent content (unfavorable $d_\diamond\sim \sqrt{r}$ scaling). However, once the noise becomes predominantly coherent, $u$ saturates to $u \simeq 1$ and loses sensitivity: many distinct coherent error channels collapse to essentially the same unitarity value. In this sense, $u$ is a powerful regime classifier, but not a high-resolution coherent-error diagnostic. This is not merely aesthetic; Eq.~(\ref{eq:ratio_unitarity_vs_r_only}) shows that it can degrade the tightness of the worst-case upper bounds in the fully coherent limit.
\end{remark}

This observation sharpens the research motivation of our work. If the practical bottleneck for fair threshold comparison lies in coherent (systematic/unitary) errors, then we need an experimentally accessible observable that can resolve finer structure within the coherent regime. Our approach is to augment the average fidelity with an additional moment-based quantity---the \emph{fidelity deviation}---and to show that, for universal gate sets (single- and two-qubit gates), the pair $(F,D)$ yields a markedly sharper characterization of worst-case errors than what is possible using $F$ alone, precisely in the coherent-dominated regime
where unitarity-based refinements lose resolving power.

\section{Gate fidelity deviation $D$: definition, evaluation, and experimental accessibility}\label{sec:FD_def}

The previous Sec.~\ref{sec:background} motivates why the ``average'' gate fidelity alone can be insufficient for worst-case assessment in fault-tolerant regimes, especially when coherent (systematic) effects dominate. The central goal of this section is to introduce a companion quantity---the \emph{gate fidelity deviation} $D$---that captures fluctuations of the state-dependent gate fidelity over the input-state manifold. This is precisely the information that is invisible to a single scalar mean value, yet becomes decisive when the coherent structure survives the error mitigation and calibration.

Throughout, we work with the error channel in the interaction picture, defined in Eq.~(\ref{eq:def_error_channel}). Namely, given an ideal unitary channel $\mathcal{U}(\hat{\rho}) = \hat{U}\hat{\rho}\hat{U}^\dagger$ and an implemented gate $\mathcal{G}$, we consider $\mathcal{E} := \mathcal{U}^\dagger \circ \mathcal{G}$ so that $\mathcal{E} = \mathcal{I}$ corresponds to a perfect implementation.

\subsection{Average fidelity and fidelity deviation}\label{subsec:FD_definitions}

Let $\mathcal{H} \simeq \mathbb{C}^d$ be a $d$-dimensional Hilbert space. For a quantum channel $\mathcal{E}$ on $\mathcal{H}$ and a pure state $\ket{\psi}\in\mathcal{H}$, we define the state-dependent fidelity (or survival probability)
\begin{eqnarray}
f_{\mathcal{E}}(\psi) := \bra{\psi}\mathcal{E}\left(\ketbra{\psi}{\psi}\right)\ket{\psi} \in [0,1].
\label{eq:def_state_fidelity}
\end{eqnarray}
Operationally, $f_{\mathcal{E}}(\psi)$ is the probability that, for the input $\ket{\psi}$, the output of $\mathcal{E}$ passes the projective test onto $\ket{\psi}$. The gate fidelity is the Haar average of $f_{\mathcal{E}}(\psi)$ over pure states:
\begin{eqnarray}
F = F_{\mathrm{avg}}(\mathcal{E}) := \int d\psi f_{\mathcal{E}}(\psi),
\label{eq:def_F_again}
\end{eqnarray}
where $d\psi$ denotes the unitarily invariant (Haar-induced) probability measure on the set of pure states. We also use the standard gate infidelity $r: = 1 - F$ [c.f., Eq.~(\ref{eq:def_r})].

The mean $F$ summarizes the typical performance, but it does not quantify \emph{how uniform} that the performance is over different inputs. To quantify stability, we introduce the gate fidelity deviation
\begin{eqnarray}
D = D(\mathcal{E}) := \sqrt{\int d\psi~ f_{\mathcal{E}}(\psi)^2 - F^2}.
\label{eq:def_D_general}
\end{eqnarray}
Equivalently, $D^2$ is the variance of the random variable $f_{\mathcal{E}}(\psi)$ when $\ket{\psi}$ is Haar random. Hence: (i) $D=0$ iff $f_{\mathcal{E}}(\psi)$ is constant almost everywhere (perfectly uniform performance), and (ii) larger $D$ indicates stronger state dependence, i.e., a ``rougher'' fidelity landscape. Since $0 \le f_{\mathcal{E}}(\psi) \le 1$, one always has the trivial bound $D \le \sqrt{F(1-F)}$, but this inequality is rarely informative in coherent regimes---precisely where the structure of fluctuations matters.

A crucial conceptual point is that $F$ and $D$ are ``moment-type'' quantities: $F$ is the first moment of $f_{\mathcal{E}}(\psi)$, while $D^2$ requires the second moment $\int d\psi f_{\mathcal{E}}(\psi)^2$. The latter is sensitive to coherent anisotropies that can remain invisible to the mean, and therefore provides exactly the missing ingredient for sharpening worst-case assessment from experimentally accessible data.

\subsection{Restriction to coherent (systematic/unitary) errors and the effective error unitary}\label{subsec:FD_unitary_setting}

In the remainder of this paper, we assume that the implemented operation is unitary on the computational subspace, so that the error channel $\mathcal{E}$ is a unitary conjugation. Let $\hat{U}_{\mathrm{ideal}}$, $\hat{U}_{\mathrm{exp}} \in U(d)$ be, respectively, the ideal and implemented unitaries on $\mathcal{H}$. We define the effective error unitary
\begin{eqnarray}
\hat{X} := \hat{U}_{\mathrm{ideal}}^\dagger \hat{U}_{\mathrm{exp}} \in U(d).
\label{eq:def_error_unitary_X_sec2}
\end{eqnarray}
Then the interaction-picture error channel takes the form
\begin{eqnarray}
\mathcal{E}(\hat{\rho}) = \hat{X}\hat{\rho}\hat{X}^\dagger.
\label{eq:def_unitary_error_channel}
\end{eqnarray}
In this coherent setting, the state-dependent fidelity simplifies to
\begin{eqnarray}
f_{\hat{X}}(\psi) := f_{\mathcal{E}}(\psi) = \abs{\bra{\psi}\hat{X}\ket{\psi}}^2,
\label{eq:def_state_fidelity_unitary}
\end{eqnarray}
and therefore, we have
\begin{eqnarray}
F = \int d\psi \abs{\bra{\psi}\hat{X}\ket{\psi}}^2, \quad D^2 = \int d\psi \abs{\bra{\psi}\hat{X}\ket{\psi}}^4 - F^2.
\label{eq:def_FD_unitary}
\end{eqnarray}

Here, two invariances are worth emphasizing. First, $F$ and $D$ are insensitive to any global phase $\hat{X}\mapsto e^{i\theta}\hat{X}$, as is clear from Eqs.~(\ref{eq:def_state_fidelity_unitary})--(\ref{eq:def_FD_unitary}). Second, $F$ and $D$ depend only on $\hat{X}$, and hence, only on the relative mismatch between $\hat{U}_{\mathrm{ideal}}$ and $\hat{U}_{\mathrm{exp}}$ (as intended in the interaction picture). In particular, $F$ and $D$ may be viewed as spectral functionals of $\hat{X}$, and this viewpoint will be essential for translating experimentally accessible moments into the worst-case guarantees.

\subsection{Closed-form evaluation via second and fourth Haar moments}\label{subsec:FD_closed_form}

We now provide the closed-form expressions for $F$ and $D$ in the coherent setting. The underlying mechanism is standard: Haar averages over pure states reduce to traces over the symmetric subspace, equivalently to Weingarten-calculus identities. We provide a proof sketch here and defer a full derivation to Appendix~\ref{append:Haar}.

\begin{proposition}[Closed-form formulas for $(F,D)$ under unitary errors]
\label{prop:FD_closed_form_unitary}
Let $\hat{X} \in U(d)$ and define $F$ and $D$ by Eq.~(\ref{eq:def_FD_unitary}). Then, we have
\begin{eqnarray}
F &=& \frac{d + \abs{\tr{(\hat{X})}}^2}{d(d+1)}, \\
\label{eq:F_trace_closedform}
D^2 &=& \frac{2d(d+3) + 4(d+2)\abs{\tr{(\hat{X})}}^2 + \abs{\tr{(\hat{X}^2)}}^2 + \abs{\tr{(\hat{X})}}^4 + 2\mathrm{Re}\left[\tr{(\hat{X}^2)}\tr{(\hat{X}^\dagger)}^2\right]}{d(d+1)(d+2)(d+3)} - F^2.
\label{eq:D_trace_closedform}
\end{eqnarray}
\end{proposition}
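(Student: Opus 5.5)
The plan is to write both moments as traces against tensor powers of $\ketbra{\psi}{\psi}$, integrate with the symmetric-projector identity Eq.~(\ref{eq:methods_sym_proj}), and reduce to permutation sums. For $F$ we use $\abs{\bra{\psi}\hat{X}\ket{\psi}}^2=\tr{[(\hat{X}\otimes\hat{X}^{\dagger})(\ketbra{\psi}{\psi})^{\otimes 2}]}$. With $\hat{\Pi}^{(2)}_{\rm sym}=\frac{1}{2}(\hat{\mathds{1}}+\hat{S})$, where $\hat{S}$ is the swap, the two terms give $\tr{(\hat{X})}\tr{(\hat{X}^{\dagger})}=\abs{\tr{(\hat{X})}}^2$ and $\tr{(\hat{X}\hat{X}^{\dagger})}=d$. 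Dividing by $d(d+1)/2$ gives the formula for $F$.

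For the fourth moment we write $\abs{\bra{\psi}\hat{X}\ket{\psi}}^4=\tr{[(\hat{X}\otimes\hat{X}\otimes\hat{X}^{\dagger}\otimes\hat{X}^{\dagger})(\ketbra{\psi}{\psi})^{\otimes 4}]}$. We then use $\hat{\Pi}^{(4)}_{\rm sym}=\frac{1}{24}\sum_{\pi\in S_4}\hat{W}_\pi$ and $\tr{\hat{\Pi}^{(4)}_{\rm sym}}=d(d+1)(d+2)(d+3)/24$, so that
\begin{eqnarray}
E_2=\frac{1}{d(d+1)(d+2)(d+3)}\sum_{\pi\in S_4}\tr{[(A_1\otimes A_2\otimes A_3\otimes A_4)\hat{W}_\pi]},
\end{eqnarray}
with $A_1=A_2=\hat{X}$ and $A_3=A_4=\hat{X}^{\dagger}$. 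Each summand factorizes over the cycles of $\pi$: every cycle $(i_1\cdots i_k)$ contributes $\tr{(A_{i_1}\cdots A_{i_k})}$, up to a fixed ordering convention that is irrelevant here because traces are cyclic. The main step is to classify the 24 permutations by cycle type and by how each cycle mixes the $\hat{X}$ slots $\{1,2\}$ with the $\hat{X}^{\dagger}$ slots $\{3,4\}$, using $\hat{X}\hat{X}^{\dagger}=\hat{\mathds{1}}$ throughout.

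The classification runs as follows.
\begin{itemize}
\item Identity: contributes $\abs{\tr{(\hat{X})}}^4$.
\item Transpositions (6 total): $(12)$ gives $\tr{(\hat{X}^2)}\,\overline{\tr{(\hat{X})}}^{2}$ and $(34)$ gives its conjugate; the four mixed transpositions each give $d\,\abs{\tr{(\hat{X})}}^2$.
\item Double transpositions (3 total): $(12)(34)$ gives $\abs{\tr{(\hat{X}^2)}}^2$; $(13)(24)$ and $(14)(23)$ each give $d^2$.
\item 3-cycles (8 total): each cycle containing two $\hat{X}$'s and one $\hat{X}^{\dagger}$ reduces to $\tr{(\hat{X})}$ times the remaining $\tr{(\hat{X}^{\dagger})}$, and symmetrically for the other type. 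This gives $8\abs{\tr{(\hat{X})}}^2$ in total.
\item 4-cycles (6 total): when $\hat{X}$ and $\hat{X}^{\dagger}$ alternate around the cycle (2 of them), the trace is $\tr{(\hat{\mathds{1}})}=d$. When they are adjacent (4 of them), we get $\tr{(\hat{X}\hat{X}\hat{X}^{\dagger}\hat{X}^{\dagger})}=d$ as well. Total $6d$.
\end{itemize}

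Summing, the constant terms are $2d^2+6d=2d(d+3)$ and the $\abs{\tr{(\hat{X})}}^2$ terms are $(4d+8)\abs{\tr{(\hat{X})}}^2=4(d+2)\abs{\tr{(\hat{X})}}^2$. The remaining terms are $\abs{\tr{(\hat{X})}}^4+\abs{\tr{(\hat{X}^2)}}^2+2\mathrm{Re}[\tr{(\hat{X}^2)}\tr{(\hat{X}^{\dagger})}^2]$, which combine to exactly $\abs{\tr{(\hat{X}^2)}+\tr{(\hat{X})}^2}^2=Q^2$. Subtracting $F^2$ then yields $D^2$.

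I expect the bookkeeping of the 3- and 4-cycles to be the main obstacle. The point to verify is that every mixed cycle collapses via $\hat{X}\hat{X}^{\dagger}=\hat{\mathds{1}}$, so no invariant beyond $\tr{(\hat{X})}$ and $\tr{(\hat{X}^2)}$ survives. As a check, setting $\hat{X}=\hat{\mathds{1}}$ gives total $d(d+1)(d+2)(d+3)$, hence $E_2=1$ and $D=0$. The CZ example should also reproduce $D=\frac{1}{5}\sqrt{17/7}\sin^2(\phi_\epsilon/2)$.
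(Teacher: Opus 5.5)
Your proposal is correct and follows essentially the same route as the paper: the symmetric-projector identity for $k=2,4$, expansion of $\hat{\Pi}^{(4)}_{\rm sym}$ over $S_4$, and cycle-wise factorization of the traces. Your cycle-type tally (constants $2d(d+3)$, the $4(d+2)\abs{\tr{(\hat{X})}}^2$ term, and three terms combining into $Q^2$) reproduces the stated formula. The only difference is cosmetic: the paper quotes a general two-operator fourth-moment formula for $\hat{X}_\alpha\otimes\hat{X}_\alpha^{\dagger}\otimes\hat{X}_\beta\otimes\hat{X}_\beta^{\dagger}$ and then sets $\alpha=\beta$, whereas you do the $S_4$ bookkeeping directly for a single $\hat{X}$.
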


\begin{proof}[Proof sketch.]---For Haar-random $\ket{\psi}$, one has the well-known identity
\begin{eqnarray}
\int d\psi \ketbra{\psi}{\psi}^{\otimes k} = \frac{\Pi_{\mathrm{sym}}^{(k)}}{\tr{\Pi_{\mathrm{sym}}^{(k)}}},
\label{eq:haar_pure_state_moment}
\end{eqnarray}
where $\Pi_{\mathrm{sym}}^{(k)}$ is the projector onto the fully symmetric subspace of $\mathcal{H}^{\otimes k}$. Eq.~(\ref{eq:F_trace_closedform}) follows by writing
\begin{eqnarray}
\abs{\bra{\psi}\hat{X}\ket{\psi}}^2 = \tr{\bigl(\hat{X}\otimes \hat{X}^\dagger\bigr)\ketbra{\psi}{\psi}^{\otimes 2}}
\end{eqnarray}
and applying Eq.~(\ref{eq:haar_pure_state_moment}) with $k=2$. Similarly, 
\begin{eqnarray}
\abs{\bra{\psi}\hat{X}\ket{\psi}}^4 = \tr{\bigl(\hat{X}\otimes\hat{X}^\dagger\otimes\hat{X}\otimes\hat{X}^\dagger\bigr)\ketbra{\psi}{\psi}^{\otimes 4}},
\end{eqnarray}
and applying Eq.~(\ref{eq:haar_pure_state_moment}) with $k=4$ reduces the computation to traces of $\hat{X}$ and $\hat{X}^2$ (together with complex conjugates). A complete derivation can be written in terms of permutation operators and Weingarten functions; we defer these details to Appendix~\ref{append:Haar}.
\end{proof}

Eq.~(\ref{eq:F_trace_closedform}) shows that $F$ depends only on the first spectral moment $\tr{(\hat{X})}$. In contrast, $D$ depends additionally on the second spectral moment $\tr{(\hat{X}^2)}$ and on phase-correlations between $\tr{(\hat{X})}$ and $\tr{(\hat{X}^2)}$ via the real-part term in Eq.~(\ref{eq:D_trace_closedform}). This is exactly the mechanism by which $D$ becomes informative in the coherent regimes where other ``coherence detectors'' can saturate. In particular, a quantity such as $u \simeq 1$ for every unitary channel, whereas $D$ remains sensitive to how the unitary error is distributed across input states.
\section{Experimental estimation of $(F, D)$: a direct protocol}\label{sec:FD_measurement}

In this section, we provide a concrete experimental protocol to estimate the pair $(F, D)$. The key point is that both $F$ and $D^2$ are moments of the same experimentally measurable random variable, the survival probability $f_{\mathcal{E}}(\psi)$, where $\ket{\psi}$ is sampled at random from an appropriate unitary design.

\subsection{Unitary $2$-design for $F$ and $4$-design for $D$}

Let $\mathcal{E}$ be the interaction-picture error channel on $\mathbb{C}^d$. For a pure state $\ket{\psi}$, let us define the rank-one projector $\hat{P}_{\psi} := \ketbra{\psi}{\psi}$ and the survival probability
\begin{eqnarray}
f_{\mathcal{E}}(\psi) := \tr{\bigl( \hat{P}_{\psi}\mathcal{E}(\hat{P}_{\psi}) \bigr)}.
\end{eqnarray}
By definition, the average fidelity and the fidelity deviation are the first two Haar moments:
\begin{eqnarray}
F := \int d\psi f_{\mathcal{E}}(\psi), \label{eq:FD_meas_F_def}, \quad D^2 := \int d\psi f_{\mathcal{E}}(\psi)^2 - F^2. \label{eq:FD_meas_D_def}
\end{eqnarray}
Thus, the estimation of $(F, D)$ reduces to the estimation of the first and the second moment of $f_{\mathcal{E}}(\psi)$.

We now explain how $f_{\mathcal{E}}(\psi)$ is directly measurable. Let $\hat{V} \in U(d)$ be any unitary and set $\ket{\psi}=\hat{V}\ket{0}$. Then, let us consider the following circuit: prepare $\ket{0}$ $\to$ apply $\hat{V}$ $\to$ apply the physical implementation of the target gate $\to$ apply the ideal inverse of the target gate, then apply $\hat{V}^{\dagger}$ $\to$ measure in the computational basis. Detailed protocol is in Algorithm.~\ref{alg:protocol_FD}.

The probability of obtaining outcome $0$ equals $f_{\mathcal{E}}(\psi)$. In short, we have
\begin{proposition} [Direct measurability of $f_{\mathcal{E}}(\psi)$]
Let $\ket{\psi}=\hat{V}\ket{0}$ and $\hat{P}_0 := \ketbra{0}{0}$. If the experiment implements the error channel $\mathcal{E}$ on the prepared state, then the probability of measuring outcome $0$ after applying $\hat{V}^{\dagger}$ equals $f_{\mathcal{E}}(\psi)$.
\begin{eqnarray}
\Pr[0 | \psi] = \tr{\hat{P}_0 \hat{V}^{\dagger}\mathcal{E}(\hat{V}\hat{P}_0\hat{V}^{\dagger})\hat{V}}
= \tr{\hat{V}\hat{P}_0\hat{V}^{\dagger}\mathcal{E}(\hat{V}\hat{P}_0\hat{V}^{\dagger})}
= \tr{\hat{P}_{\psi}\mathcal{E}(\hat{P}_{\psi})}.
\end{eqnarray}
\end{proposition}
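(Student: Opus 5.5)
The plan is a direct computation of the circuit's outcome-$0$ probability, using only cyclicity of the trace and the definition of $\hat{P}_\psi$. The statement itself is essentially tautological once the experiment is modelled precisely, so most of the work lies in fixing that model.

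First, I will make explicit what ``the experiment implements the error channel $\mathcal{E}$'' means. The circuit applies $\hat{V}$, then the physical gate $\mathcal{G}$, then the ideal inverse $\mathcal{U}^\dagger$, then $\hat{V}^\dagger$. By Eq.~(\ref{eq:def_error_channel}) the middle two steps compose to $\mathcal{U}^\dagger\circ\mathcal{G}=\mathcal{E}$. Hence the pre-measurement state is $\hat{V}^\dagger\mathcal{E}(\hat{V}\hat{P}_0\hat{V}^\dagger)\hat{V}$.

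Next, I will apply the Born rule for the computational-basis outcome $0$. This gives $\Pr[0|\psi]=\tr{\hat{P}_0\hat{V}^\dagger\mathcal{E}(\hat{V}\hat{P}_0\hat{V}^\dagger)\hat{V}}$. Using cyclicity of the trace, I move $\hat{V}$ to the front and obtain $\tr{\hat{V}\hat{P}_0\hat{V}^\dagger\,\mathcal{E}(\hat{V}\hat{P}_0\hat{V}^\dagger)}$.

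Finally, I note that $\hat{V}\hat{P}_0\hat{V}^\dagger=\ketbra{\psi}{\psi}=\hat{P}_\psi$ when $\ket{\psi}=\hat{V}\ket{0}$. This yields $\tr{(\hat{P}_\psi\mathcal{E}(\hat{P}_\psi))}=\bra{\psi}\mathcal{E}(\ketbra{\psi}{\psi})\ket{\psi}=f_{\mathcal{E}}(\psi)$, which matches Eq.~(\ref{eq:def_state_fidelity}).

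The only step needing care is the modelling assumption itself. The ideal inverse and the $\hat{V}$, $\hat{V}^\dagger$ layers must be taken as noiseless, so that the net channel is exactly $\mathcal{E}$ conjugated by $\hat{V}$. Under SPAM noise this identification fails, and the proposition should be read as conditional on that idealization, which is what the hypothesis states.
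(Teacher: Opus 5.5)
Your proposal is correct and follows the paper's argument exactly: the Born rule for outcome $0$, cyclicity of the trace to move the trailing $\hat{V}$ to the front, and the identification $\hat{V}\hat{P}_0\hat{V}^{\dagger}=\hat{P}_{\psi}$. Your explicit remark that the $\hat{V}$, $\hat{V}^{\dagger}$ and ideal-inverse layers must be noiseless, so that the net channel is exactly $\mathcal{E}=\mathcal{U}^{\dagger}\circ\mathcal{G}$, is the same idealization the paper builds into the hypothesis.
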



The Haar integrals in Eq.~(\ref{eq:FD_meas_F_def}) can be replaced by sampling from a finite or efficiently generated ensemble of the random states. A convenient construction is to sample $\ket{\psi} = \hat{V}\ket{0}$ with $\hat{V}$ drawn from a unitary design: namely,
\begin{lemma}[Design order required for moment estimation] 
Let $\hat{V}$ be sampled from a unitary $t$-design on $\mathbb{C}^d$, and let $\ket{\psi}=\hat{V}\ket{0}$. Then, for any CPTP map $\mathcal{E}$, the estimator based on i.i.d. samples of $f_{\mathcal{E}}(\psi)$ satisfies
\begin{eqnarray}
\mathbb{E}_{\hat{V}}[f_{\mathcal{E}}(\psi)] &=& \int d\psi f_{\mathcal{E}}(\psi) \quad \textrm{if $t \ge 2$}, \label{eq:design_2} \\
\mathbb{E}_{\hat{V}}[f_{\mathcal{E}}(\psi)^2] &=& \int d\psi f_{\mathcal{E}}(\psi)^2 \quad \textrm{if $t \ge 4$}. \label{eq:design_4}
\end{eqnarray}
\end{lemma}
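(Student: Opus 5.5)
The plan is to write each moment of $f_{\mathcal{E}}(\psi)$ as the expectation of a polynomial in the matrix entries of $\hat{V}$ and $\hat{V}^{\dagger}$ of bounded degree, and then invoke the defining property of a unitary $t$-design. Recall that $\{\hat{V}\}$ is a $t$-design exactly when $\mathbb{E}_{\hat{V}}[\hat{V}^{\otimes t}\hat{A}(\hat{V}^{\dagger})^{\otimes t}]=\int dU\,\hat{U}^{\otimes t}\hat{A}(\hat{U}^{\dagger})^{\otimes t}$ for every operator $\hat{A}$. Equivalently, it matches Haar averages of all polynomials of degree at most $t$ in the entries of $\hat{V}$ and at most $t$ in the entries of $\hat{V}^{*}$. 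Since the Haar measure on pure states is the pushforward of the Haar measure on $U(d)$ under $\hat{V}\mapsto\hat{V}\ket{0}$, it suffices to compare $\mathbb{E}_{\hat{V}}$ with the Haar unitary average.

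For the first moment, I would use linearity of $\mathcal{E}$ to write
\begin{eqnarray}
f_{\mathcal{E}}(\psi)=\tr{\bigl[(\hat{\mathds{1}}\otimes\mathcal{E})(\mathbb{S})\,\hat{P}_{\psi}^{\otimes 2}\bigr]},
\end{eqnarray}
where $\mathbb{S}$ is the swap operator. This follows from the identity $\tr{[\mathbb{S}(\hat{A}\otimes\hat{B})]}=\tr{(\hat{A}\hat{B})}$, after moving $\mathcal{E}$ onto the second factor via its adjoint. A cleaner route is
\begin{eqnarray}
f_{\mathcal{E}}(\psi)=\tr{\bigl[\mathbb{S}\,(\mathrm{id}\otimes\mathcal{E})(\hat{P}_{\psi}^{\otimes 2})\bigr]},
\end{eqnarray}
which is linear in $\hat{P}_{\psi}^{\otimes 2}=\hat{V}^{\otimes 2}\hat{P}_0^{\otimes 2}(\hat{V}^{\dagger})^{\otimes 2}$. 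Taking expectations and pulling them through the linear map $\mathrm{id}\otimes\mathcal{E}$ and the trace gives $\mathbb{E}_{\hat{V}}[f]=\tr{[\mathbb{S}(\mathrm{id}\otimes\mathcal{E})(\mathbb{E}_{\hat{V}}[\hat{V}^{\otimes 2}\hat{P}_0^{\otimes 2}\hat{V}^{\dagger\otimes 2}])]}$. The 2-design property replaces this inner expectation by its Haar value, which is $\int d\psi\,\hat{P}_{\psi}^{\otimes 2}$, and this yields Eq.~(\ref{eq:design_2}).

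For the second moment, I would square the expression above. Using $\tr{\hat{A}}\,\tr{\hat{B}}=\tr{(\hat{A}\otimes\hat{B})}$,
\begin{eqnarray}
f_{\mathcal{E}}(\psi)^2=\tr{\bigl[(\mathbb{S}\otimes\mathbb{S})\,(\mathrm{id}\otimes\mathcal{E}\otimes\mathrm{id}\otimes\mathcal{E})(\hat{P}_{\psi}^{\otimes 4})\bigr]}.
\end{eqnarray}
This is a linear functional of $\hat{P}_{\psi}^{\otimes 4}=\hat{V}^{\otimes 4}\hat{P}_0^{\otimes 4}(\hat{V}^{\dagger})^{\otimes 4}$, which has degree $(4,4)$ in $(\hat{V},\hat{V}^{*})$. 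The same argument with the 4-design property gives Eq.~(\ref{eq:design_4}). The claims for larger $t$ follow because a $t$-design is also a $t'$-design for every $t'\le t$: one can take partial traces of $\hat{V}^{\otimes t}\hat{A}\hat{V}^{\dagger\otimes t}$ with $\hat{A}=\hat{A}'\otimes\hat{\mathds{1}}$.

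The main thing to verify is the linearization step. The point is that $f^2$ involves $\mathcal{E}$ twice but is still linear in $\hat{P}_{\psi}^{\otimes 4}$, because $\mathcal{E}\otimes\mathcal{E}$ acting on a product is linear. I also need to be careful that the expectation commutes with the fixed linear maps, which is immediate in finite dimensions. Finally, the i.i.d.\ sample averages are then unbiased estimators of these moments by linearity of expectation.
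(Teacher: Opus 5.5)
The paper states this lemma without proof and uses it directly in the unbiasedness claims for $\hat{F}$ and $\widehat{E_2}$, so there is no proof of the authors' to compare against. Your argument is the standard one, and it is correct.

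It is the general-channel version of the device the paper uses in the coherent case. There, $f$ and $f^2$ are written as $\mathrm{Tr}[(\hat{X}\otimes\hat{X}^{\dagger})\hat{P}_{\psi}^{\otimes 2}]$ and $\mathrm{Tr}[(\hat{X}\otimes\hat{X}^{\dagger}\otimes\hat{X}\otimes\hat{X}^{\dagger})\hat{P}_{\psi}^{\otimes 4}]$. Your swap-trick identities $f_{\mathcal{E}}(\psi)=\mathrm{Tr}[\mathbb{S}\,(\mathrm{id}\otimes\mathcal{E})(\hat{P}_{\psi}^{\otimes 2})]$ and $f_{\mathcal{E}}(\psi)^2=\mathrm{Tr}[(\mathbb{S}\otimes\mathbb{S})(\mathrm{id}\otimes\mathcal{E}\otimes\mathrm{id}\otimes\mathcal{E})(\hat{P}_{\psi}^{\otimes 4})]$ show the same linearity in $\hat{P}_{\psi}^{\otimes k}=\hat{V}^{\otimes k}\hat{P}_{0}^{\otimes k}(\hat{V}^{\dagger})^{\otimes k}$ for an arbitrary CPTP map. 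That linearity is exactly what lets the design property finish the proof.

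There are two small slips, neither of which your argument actually relies on:
\begin{itemize}
\item In your first displayed formula, the map applied to $\mathbb{S}$ should be the Hilbert--Schmidt adjoint $\mathcal{E}^{\dagger}$, not $\mathcal{E}$. Your ``cleaner route'' avoids this.
\item The polynomial reformulation of a $t$-design should be restricted to balanced polynomials of degree $(k,k)$ with $k\le t$. Unbalanced ones need not be matched. For example, the Pauli group is a unitary $1$-design, yet its average of $\hat{U}$ is $(\hat{\mathds{1}}+\hat{\sigma}_{x}+\hat{\sigma}_{y}+\hat{\sigma}_{z})/4\neq 0$, whereas the Haar average of $\hat{U}$ vanishes.
\end{itemize}
Since you only ever use the operator-twirl definition, the proof stands as written.
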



In practice, one may use an exact design when available or an approximate design generated by random circuits. The approximate designs introduce a systematic bias that can be bounded in terms of the design approximation parameter; for clarity we focus on the exact-design idealization.

We note that the exact unitary $4$-designs may be costly in large $n$, but approximate designs generated by random circuits are available and suffice in practice~\cite{harrow2009random, brandao2016local, dankert2009exact}.

\subsection{Protocol and unbiased estimators with shot-noise correction}

We now specify the full sampling protocol and construct unbiased estimators for $F$ and $D^2$ in the presence of the finite-shot (binomial) noise.

\begin{algorithm}[H]
\caption{Protocol 1 (Randomized survival-probability sampling for $(F,D)$).}
\label{alg:protocol_FD}
\begin{algorithmic}[1]
\Require Integers $M\ge 2$ and $N\ge 2$.
\For{$i=1,\ldots,M$}
    \State Sample $\hat{V}_i$ from a unitary $4$-design and prepare $\ket{\psi_i}=\hat{V}_i\ket{0}$.
    \State Implement the interaction-picture error channel $\mathcal{E}$ on $\hat{P}_{\psi_i}$.
    \State Apply $\hat{V}_i^{\dagger}$ and measure $\hat{P}_0$ for $N$ repeated shots, producing outcomes
$X_{i,1},\ldots,X_{i,N}\in\{0,1\}$.
\EndFor
\State \Return estimators $\hat{F}$ and $\widehat{D^2}$ defined below.
\end{algorithmic}
\end{algorithm}

For each sampled state $\ket{\psi_i}$, define the underlying survival probability
\begin{eqnarray}
f_i := f_{\mathcal{E}}(\psi_i).
\end{eqnarray}
Conditional on $\psi_i$, each shot is Bernoulli:
\begin{eqnarray}
\Pr[X_{i,s}=1 | \psi_i] = f_i, \quad \Pr[X_{i,s}=0|\psi_i]=1-f_i,
\end{eqnarray}
and the $N$ shots are i.i.d. given $\psi_i$. Then, we can define the per-state sample mean as
\begin{eqnarray}
\hat{f}_i := \frac{1}{N}\sum_{s=1}^{N} X_{i,s}. \label{eq:fi_hat}
\end{eqnarray}
This satisfies $\mathbb{E}\bigl[ \hat{f}_i | \psi_i \bigr]=f_i$. To estimate the second moment $f_i^2$ without bias from finite $N$, we use a U-statistic~\cite{hoeffding1992class,serfling2009approximation}:
\begin{eqnarray}
\widehat{f_i^2} := \frac{1}{N(N-1)}\sum_{s\ne t} X_{i,s}X_{i,t}. \label{eq:fi2_hat_def}
\end{eqnarray}
Equivalently, in terms of $\hat{f}_i$,
\begin{eqnarray}
\widehat{f_i^2} = \frac{N\hat{f}_i^2-\hat{f}_i}{N-1}. \label{eq:fi2_hat_simplify}
\end{eqnarray}
We then define the estimators for the first and second Haar moments:
\begin{eqnarray}
\hat{F} &:=& \frac{1}{M}\sum_{i=1}^{M}\hat{f}_i, \label{eq:F_hat} \\
\widehat{E_2} &:=& \frac{1}{M}\sum_{i=1}^{M}\widehat{f_i^2}. \label{eq:E2_hat}
\end{eqnarray}
Finally, to remove the bias that would arise from squaring $\hat{F}$, we define an unbiased estimator for $F^2$ by cross-averaging:
\begin{eqnarray}
\widehat{F^2} := \frac{1}{M(M-1)}\sum_{i\ne j}\hat{f}_i\hat{f}_j.
\label{eq:F2_hat}
\end{eqnarray}
The fidelity-deviation estimator is then
\begin{eqnarray}
\widehat{D^2} := \widehat{E_2}-\widehat{F^2}.
\label{eq:D2_hat}
\end{eqnarray}
In a finite set of data, $\widehat{D^2}$ may become slightly negative due to statistical fluctuations. Thus, as a practical nonnegative estimator, we use $\hat{D} := \sqrt{\max\{\widehat{D^2},0\}}$.

\begin{proposition} [Unbiasedness of the estimators]
Assume $\hat{V}_i$ are i.i.d. from a unitary $4$-design and the measurement shots are i.i.d. conditional on each $\psi_i$. Then, we have
\label{prop:FD_estimator}
\begin{eqnarray}
\mathbb{E}[\hat{F}] &=& F, \label{eq:unbiased_F}\\
\mathbb{E}[\widehat{E_2}] &=& \int d\psi f_{\mathcal{E}}(\psi)^2, \label{eq:unbiased_E2}\\
\mathbb{E}[\widehat{F^2}] &=& F^2, \label{eq:unbiased_F2}\\
\mathbb{E}[\widehat{D^2}] &=& D^2. \label{eq:unbiased_D2}
\end{eqnarray}
\end{proposition}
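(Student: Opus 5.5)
The four identities follow by conditioning on the sampled inputs and using the tower property. The shot-level expectations are taken first, given the $\psi_i$. The design property in the preceding Lemma then converts averages over the sampled inputs into Haar integrals: Eq.~(\ref{eq:design_2}) for first moments and Eq.~(\ref{eq:design_4}) for second moments.

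\textbf{Single-input statistics.} Conditional on $\psi_i$, the shots $X_{i,s}$ are i.i.d.\ Bernoulli$(f_i)$. Linearity gives $\mathbb{E}[\hat f_i|\psi_i]=f_i$. For $s\neq t$, conditional independence gives $\mathbb{E}[X_{i,s}X_{i,t}|\psi_i]=f_i^2$. Eq.~(\ref{eq:fi2_hat_def}) contains exactly $N(N-1)$ ordered pairs, so $\mathbb{E}[\widehat{f_i^2}|\psi_i]=f_i^2$. This is where $N\ge 2$ enters. The diagonal terms $X_{i,s}^2=X_{i,s}$ would contribute $f_i$ rather than $f_i^2$, and excluding them is precisely what removes the binomial bias. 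I would also confirm the algebraic identity Eq.~(\ref{eq:fi2_hat_simplify}) by writing $\sum_{s\ne t}X_{i,s}X_{i,t}=K_i^2-K_i$ with $K_i=N\hat f_i$.

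\textbf{Averaging over inputs.} Taking the outer expectation over $\hat V_i$ gives $\mathbb{E}[\hat f_i]=\mathbb{E}_{\hat V}[f_{\mathcal{E}}(\psi)]$. Since a $4$-design is also a $2$-design, Eq.~(\ref{eq:design_2}) turns this into $F$. Likewise $\mathbb{E}[\widehat{f_i^2}]=\mathbb{E}_{\hat V}[f_{\mathcal{E}}(\psi)^2]$, and Eq.~(\ref{eq:design_4}) turns this into $\int d\psi\, f_{\mathcal{E}}(\psi)^2$. Averaging over $i=1,\dots,M$ then yields Eqs.~(\ref{eq:unbiased_F}) and (\ref{eq:unbiased_E2}).

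\textbf{The cross term.} For $i\neq j$, the pairs $(\hat V_i,\{X_{i,s}\})$ and $(\hat V_j,\{X_{j,s}\})$ are independent, because the inputs are i.i.d.\ and the shots depend only on their own input. Hence $\mathbb{E}[\hat f_i\hat f_j]=\mathbb{E}[\hat f_i]\,\mathbb{E}[\hat f_j]=F^2$. There are $M(M-1)$ ordered pairs in Eq.~(\ref{eq:F2_hat}), which gives Eq.~(\ref{eq:unbiased_F2}) and uses $M\ge2$. Eq.~(\ref{eq:unbiased_D2}) then follows by linearity from Eq.~(\ref{eq:D2_hat}) and the definition $D^2=\int d\psi\, f_{\mathcal{E}}(\psi)^2-F^2$.

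The only point needing care is stating the independence structure precisely: the shot outcomes of different inputs must be independent, and the randomness of each input must be correctly composed with its own shot noise. I would formalize this by declaring the joint law as a product over $i$ of (design measure) $\times$ (conditional Bernoulli product). The remainder is routine. I would also note that the truncated $\hat D$ is not claimed to be unbiased; unbiasedness is asserted only for $\widehat{D^2}$.
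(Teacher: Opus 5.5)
Your proposal is correct and follows essentially the same route as the paper. Like the paper, you condition on the sampled inputs, use the factorial-moment (U-statistic) identity $\mathbb{E}[K_i(K_i-1)\mid f_i]=N(N-1)f_i^2$ for the per-input second moment, use independence across distinct inputs for the cross-average $\widehat{F^2}$, and apply the design lemma to turn ensemble averages into Haar integrals.
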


\subsection{Error analysis and sample complexity scalings}

We next quantify how the statistical error scales with the number of sampled input states $M$ and the number of repeated shots per state $N$. The analysis cleanly separates the two sources of randomness: (i) the randomness of the sampled input state $\psi$ (which is intrinsic and determines $D$), and (ii) the finite-shot binomial noise for each fixed $\psi$.
\begin{proposition} [Exact variance decomposition for $\hat{F}$]
\label{prop:est_err}
Under the assumptions of {\bf Proposition~\ref{prop:FD_estimator}},
\begin{eqnarray}
\mathrm{Var}(\hat{F}) = \frac{D^2}{M} + \frac{1}{MN}\left(F-\int d\psi f_{\mathcal{E}}(\psi)^2\right).
\label{eq:Var_F_exact}
\end{eqnarray}
In particular, using $0 \le f_{\mathcal{E}}(\psi) \le 1$, the following holds.
\begin{eqnarray}
\mathrm{Var}(\hat{F}) \le \frac{D^2}{M} + \frac{1}{4MN}.
\label{eq:Var_F_bound}
\end{eqnarray}
\end{proposition}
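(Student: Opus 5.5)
We use the law of total variance, conditioning on the sampled inputs $\psi_1,\dots,\psi_M$. Since the $\hat{V}_i$ are i.i.d. and the shots are i.i.d. given each $\psi_i$, the pairs $(\psi_i,\hat{f}_i)$ are i.i.d. across $i$. Hence $\mathrm{Var}(\hat{F}) = \frac{1}{M}\mathrm{Var}(\hat{f}_1)$, and it suffices to compute the variance of a single per-state frequency $\hat{f}_1 = K_1/N$.

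Decompose $\mathrm{Var}(\hat{f}_1) = \mathrm{Var}\bigl(\mathbb{E}[\hat{f}_1\,|\,\psi_1]\bigr) + \mathbb{E}\bigl[\mathrm{Var}(\hat{f}_1\,|\,\psi_1)\bigr]$.

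The first term is $\mathrm{Var}(f_{\mathcal{E}}(\psi_1))$, with $\psi_1$ drawn from the design ensemble. Because the ensemble is a $4$-design, the lemma on design order gives $\mathbb{E}[f]=F$ and $\mathbb{E}[f^2]=\int d\psi\, f_{\mathcal{E}}(\psi)^2 = E_2$. Therefore this term equals $E_2 - F^2 = D^2$ by the definition of $D$.

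The second term uses the fact that, conditional on $\psi_1$, $K_1\sim\mathrm{Bin}(N,f_1)$. So $\mathrm{Var}(\hat{f}_1\,|\,\psi_1) = f_1(1-f_1)/N$. Taking the expectation again with the $2$- and $4$-design moment identities gives $(F - E_2)/N$. Dividing the sum of the two terms by $M$ yields Eq.~(\ref{eq:Var_F_exact}).

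For the bound Eq.~(\ref{eq:Var_F_bound}), note that $f(1-f)\le 1/4$ pointwise for $f\in[0,1]$. Hence $F - E_2 = \mathbb{E}[f(1-f)] \le 1/4$.

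None of these steps is a serious obstacle. The only point needing care is justifying that the cross terms vanish, which follows from independence of the samples. One must also invoke the $4$-design property, rather than just a $2$-design, for the second moment to match its Haar value $E_2$.
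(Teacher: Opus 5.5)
Your proposal is correct and matches the paper's proof: both use the law of total variance, obtaining $D^2/M$ from the conditional mean and $\frac{1}{MN}\bigl(F-\int d\psi\, f_{\mathcal{E}}(\psi)^2\bigr)$ from the binomial conditional variance, and then bound $f(1-f)\le 1/4$. The only difference is that you first reduce to $\mathrm{Var}(\hat{f}_1)/M$ using the i.i.d. structure, whereas the paper conditions jointly on $\psi_1,\ldots,\psi_M$; this is a cosmetic reordering of the same steps.
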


\begin{proof}---By the law of total variance,
\begin{eqnarray}
\mathrm{Var}(\hat{F}) = \mathrm{Var}\left(\mathbb{E}[\hat{F} \,|\, \psi_1,\ldots,\psi_M]\right) + \mathbb{E}\left[\mathrm{Var}(\hat{F} \,|\, \psi_1,\ldots,\psi_M)\right].
\label{eq:total_var}
\end{eqnarray}
The conditional mean is $\mathbb{E}[\hat{F} \,|\, \psi_1, \ldots, \psi_M] = \frac{1}{M}\sum_i f_i$.
Since $f_i$ are i.i.d. across $i$, we have
\begin{eqnarray}
\mathrm{Var}\left(\frac{1}{M}\sum_{i=1}^{M} f_i\right) = \frac{1}{M}\mathrm{Var}(f_{\mathcal{E}}(\psi)) = \frac{D^2}{M}.
\end{eqnarray}
Next, conditional on $\psi_i$, the random variables $\hat{f}_i$ have variance $\mathrm{Var}(\hat{f}_i \,|\, \psi_i) = f_i(1-f_i)/N$, and the conditional independence across $i$ yields
\begin{eqnarray}
\mathrm{Var}(\hat{F} \,|\, \psi_1, \ldots, \psi_M) = \frac{1}{M^2}\sum_{i=1}^{M}\frac{f_i(1-f_i)}{N}.
\end{eqnarray}
Taking expectation over $\psi_i$ gives
\begin{eqnarray}
\mathbb{E}\left[\mathrm{Var}(\hat{F} \,|\, \psi_1,\ldots,\psi_M)\right]
&=& \frac{1}{MN}\mathbb{E}_{\psi}[f_{\mathcal{E}}(\psi)(1-f_{\mathcal{E}}(\psi))] \nonumber \\
&=& \frac{1}{MN}\left(\mathbb{E}_{\psi}[f_{\mathcal{E}}(\psi)]-\mathbb{E}_{\psi}[f_{\mathcal{E}}(\psi)^2]\right) \nonumber \\
&=& \frac{1}{MN}\left(F-\int d\psi f_{\mathcal{E}}(\psi)^2\right),
\end{eqnarray}
which proves Eq.~(\ref{eq:Var_F_exact}) after substituting into Eq.~(\ref{eq:total_var}). Finally, $f(1-f)\le 1/4$ implies Eq.~(\ref{eq:Var_F_bound}).
\end{proof}

{\bf Proposition~\ref{prop:est_err}} shows that estimating $F$ has a two-scale character: the state-to-state fluctuation term contributes $D^2/M$, while shot noise contributes at most $1/(4MN)$. Thus, once $N$ is large enough to make shot noise subdominant, the remaining uncertainty is controlled by $D^2/M$.

We now give a rigorous confidence interval scaling using Chebyshev inequality~\cite{Alsmeyer2014Chebyshev}.

\begin{corollary}[A simple high-confidence scaling for $\hat{F}$]
\label{cor:confidence_estimator}
For any $\delta\in(0,1)$,
\begin{eqnarray}
\Pr\left[\abs{\hat{F}-F} \ge \epsilon\right] \le \frac{\mathrm{Var}(\hat{F})}{\epsilon^2} \le \frac{D^2}{M\epsilon^2} + \frac{1}{4MN\epsilon^2}.
\label{eq:cheb_F}
\end{eqnarray}
In particular, it suffices to choose $(M,N)$, such that
\begin{eqnarray}
\frac{D^2}{M} + \frac{1}{4MN} &\le& \delta\epsilon^2
\end{eqnarray}
to guarantee $\Pr\bigl[\abs{\hat{F}-F} \le \epsilon\bigr] \ge 1-\delta$.
\end{corollary}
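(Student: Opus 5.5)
This is a direct Chebyshev argument built on Proposition~\ref{prop:est_err}. First I establish that $\mathbb{E}[\hat{F}]=F$. This is Eq.~(\ref{eq:unbiased_F}) of Proposition~\ref{prop:FD_estimator}, and it also follows from the tower property: $\mathbb{E}[\hat f_i\,|\,\psi_i]=f_i$, and the $4$-design (in particular $2$-design) property gives $\mathbb{E}_{\hat V}[f_i]=F$. Since $\hat F$ is an unbiased estimator with finite variance (it is bounded in $[0,1]$), Chebyshev's inequality applies directly:
\begin{eqnarray}
\Pr\bigl[\abs{\hat F - F}\ge\epsilon\bigr]\le \frac{\mathrm{Var}(\hat F)}{\epsilon^2}.
\end{eqnarray}

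Next I substitute the bound Eq.~(\ref{eq:Var_F_bound}) from Proposition~\ref{prop:est_err}, namely $\mathrm{Var}(\hat F)\le D^2/M+1/(4MN)$. That bound comes from the exact decomposition Eq.~(\ref{eq:Var_F_exact}) together with
\begin{eqnarray}
F-\int d\psi\, f_{\mathcal{E}}(\psi)^2=\mathbb{E}_\psi\bigl[f_{\mathcal{E}}(\psi)(1-f_{\mathcal{E}}(\psi))\bigr]\le 1/4.
\end{eqnarray}
Dividing by $\epsilon^2$ then yields the second inequality in Eq.~(\ref{eq:cheb_F}).

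For the sample-size condition, assume $D^2/M+1/(4MN)\le\delta\epsilon^2$. The right-hand side of Eq.~(\ref{eq:cheb_F}) is then at most $\delta$, so $\Pr[\abs{\hat F-F}\ge\epsilon]\le\delta$. Taking complements gives $\Pr[\abs{\hat F-F}<\epsilon]\ge 1-\delta$, which implies the stated non-strict version $\Pr[\abs{\hat F-F}\le\epsilon]\ge1-\delta$.

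No step here is a real obstacle; the only point needing care is the unbiasedness, which depends on the design order and on the independence structure (i.i.d.\ inputs, with shots conditionally i.i.d.). Both are already stated as hypotheses inherited from Proposition~\ref{prop:FD_estimator}. I would also remark that $\delta$ plays no role in the first inequality; it enters only through the sufficient condition on $(M,N)$.
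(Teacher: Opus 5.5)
Your proposal is correct and is essentially the paper's own argument: Chebyshev's inequality applied to the unbiased estimator $\hat{F}$, the bound $\mathrm{Var}(\hat{F}) \le D^2/M + 1/(4MN)$ from Proposition~\ref{prop:est_err} (via $F-\int d\psi\, f_{\mathcal{E}}(\psi)^2=\mathbb{E}_\psi[f(1-f)]\le 1/4$), and a complement step for the sample-size condition. One small refinement: unbiasedness needs only a $2$-design, whereas the $4$-design hypothesis you inherit from Proposition~\ref{prop:est_err} is what makes the per-input variance equal to $D^2$.
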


Next, we analyze the second-moment estimator $\widehat{E_2}$ and the derived estimator $\widehat{D^2}$. Because $\widehat{f_i^2}$ in Eq.~(\ref{eq:fi2_hat_def}) is an average of bounded random variables in $[0,1]$, its variance is finite and decreases with $N$. A concise bound sufficient for the sample-complexity is as follows.

\begin{proposition}[Variance scaling for $\widehat{E_2}$ and $\widehat{D^2}$]
\label{prop:var_scaling}
Under the assumptions of {\bf Proposition~\ref{prop:FD_estimator}}, one has
\begin{eqnarray}
\mathrm{Var}(\widehat{E_2}) &\le& \frac{1}{M}\left(\int d\psi f_{\mathcal{E}}(\psi)^4 - \left(\int d\psi f_{\mathcal{E}}(\psi)^2\right)^2\right) + \frac{1}{MN}, \label{eq:Var_E2_bound}\\
\mathrm{Var}(\widehat{F^2}) &\le& \frac{1}{M}\left(\int d\psi f_{\mathcal{E}}(\psi)^2 - F^2\right)^2 + \frac{1}{MN}, \label{eq:Var_F2_bound}\\
\mathrm{Var}(\widehat{D^2}) &\le& 2\mathrm{Var}(\widehat{E_2}) + 2\mathrm{Var}(\widehat{F^2}). \label{eq:Var_D2_bound}
\end{eqnarray}
\end{proposition}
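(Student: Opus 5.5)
The plan is to treat the three inequalities separately. For the first two I would use the law of total variance, conditioning on the sampled inputs $\psi_1,\ldots,\psi_M$ as in the proof of \textbf{Proposition~\ref{prop:est_err}}. The third inequality is elementary. I would also reuse the unbiasedness facts of \textbf{Proposition~\ref{prop:FD_estimator}}, namely $\mathbb{E}[\widehat{f_i^2}\,|\,\psi_i]=f_i^2$ and $\mathbb{E}[\hat f_i\,|\,\psi_i]=f_i$, together with conditional independence across $i$.

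For Eq.~(\ref{eq:Var_E2_bound}), the conditional mean of $\widehat{E_2}$ is $\frac1M\sum_i f_i^2$. Since the $f_i$ are i.i.d. under the $4$-design, its variance is $\frac1M\bigl(\mathbb{E}f^4-(\mathbb{E}f^2)^2\bigr)$. For the Haar-valued second moment $\mathbb{E}f^2$ this is exact by Eq.~(\ref{eq:design_4}); for the fourth moment I would read $\int d\psi f^4$ as the ensemble moment, or else assume an $8$-design. For the fluctuation term, conditional independence gives $\frac1{M^2}\sum_i \mathrm{Var}(\widehat{f_i^2}\,|\,\psi_i)$. The U-statistic $\widehat{f_i^2}$ has the standard exact variance
\begin{eqnarray}
\frac{4(N-2)}{N(N-1)}f^3(1-f)+\frac{2}{N(N-1)}f^2(1-f^2).
\end{eqnarray}
Using $f\in[0,1]$, this is at most $\frac{1}{N}$; indeed $4f^3(1-f)\le 27/64$ and $2f^2(1-f^2)\le 1/2$, so it is at most $\frac{C}{N}$ with $C<1$ for $N\ge2$. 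Averaging over $\psi$ and dividing by $M$ then yields the $\frac1{MN}$ term.

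For Eq.~(\ref{eq:Var_F2_bound}), I would again split the variance. The shot-noise part is $\mathbb{E}\,\mathrm{Var}(\widehat{F^2}\,|\,\psi)$. Expanding $\widehat{F^2}$ as a degree-two polynomial in independent $\hat f_i$ with conditional variances $\le\frac{1}{4N}$, each $\hat f_i$ appears with coefficient $\frac{2}{M(M-1)}\sum_{j\neq i}\hat f_j\le\frac{2}{M}$. An Efron--Stein bound then gives a contribution of order $\frac{1}{MN}$. The state-sampling part is the variance of the U-statistic $\frac{1}{M(M-1)}\sum_{i\neq j}f_if_j$. I would use Hoeffding's decomposition: with $h_1(f)=Ff$ and $\sigma_1^2=\mathrm{Var}(h_1)=F^2D^2$, the exact variance is
\begin{eqnarray}
\frac{4(M-2)}{M(M-1)}\sigma_1^2+\frac{2}{M(M-1)}\bigl(E_2^2-F^4\bigr).
\end{eqnarray}
The remaining task is to compare this with the stated $\frac1M(E_2-F^2)^2$.

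I expect this comparison to be the main obstacle. The Hoeffding projection term is governed by $F^2D^2$, whereas the statement writes $D^4=(E_2-F^2)^2$, and in the low-error regime $D^4\ll F^2D^2$. So the bound as literally worded need not hold term by term. I would first try to absorb the discrepancy into constants, but I anticipate the honest derivation will leave an $F^2D^2/M$ term. I would flag this explicitly and, if needed, present the stated inequality with the first term replaced by the Hoeffding expression.

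Finally, Eq.~(\ref{eq:Var_D2_bound}) follows from $\mathrm{Var}(A-B)=\mathrm{Var}A+\mathrm{Var}B-2\mathrm{Cov}(A,B)$ and $2|\mathrm{Cov}|\le\mathrm{Var}A+\mathrm{Var}B$ (Cauchy--Schwarz plus AM--GM), applied with $A=\widehat{E_2}$ and $B=\widehat{F^2}$.
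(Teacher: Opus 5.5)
Your route is the one the paper only sketches: the law of total variance conditioned on $\psi_1,\ldots,\psi_M$, independence across inputs, $O(1/N)$ conditional variances, and $\mathrm{Var}(A-B)\le 2\mathrm{Var}(A)+2\mathrm{Var}(B)$. You actually carry it out, and your treatment of (\ref{eq:Var_E2_bound}) and (\ref{eq:Var_D2_bound}) is correct.

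It is also more rigorous than the paper's step (iii). Boundedness $0\le\widehat{f_i^2}\le 1$ by itself only gives a conditional variance $\le 1/4$. What actually produces the $1/(MN)$ term is your exact U-statistic variance together with $4f^3(1-f)\le 27/64$ and $2f^2(1-f^2)\le 1/2$. Your caveat about the fourth moment is also an omission in the paper: $\int d\psi\, f_{\mathcal{E}}(\psi)^4$ is reproduced only by an $8$-design, and otherwise must be read as the ensemble moment.

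On (\ref{eq:Var_F2_bound}) your suspicion is right, and you should not try to absorb the discrepancy into constants. The inequality is false as stated, so no proof of it exists. Using $E_2^2-F^4=D^2(D^2+2F^2)$, your Hoeffding expression collapses to
\[
\mathrm{Var}\Bigl(\frac{1}{M(M-1)}\sum_{i\neq j}f_if_j\Bigr)=\frac{4F^2D^2}{M}+\frac{2D^4}{M(M-1)}.
\]
This needs only a $4$-design, and by the law of total variance it is a lower bound on $\mathrm{Var}(\widehat{F^2})$ for every $N$.

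For $F>1/5$ one has $D^2\le F(1-F)<4F^2$, so this strictly exceeds $D^4/M$ whenever $D>0$. Meanwhile the slack $1/(MN)$ vanishes as $N\to\infty$. The ratio $4F^2/D^2$ diverges exactly in the low-error regime, so no fixed constant in front of $D^4/M$ can repair the bound.

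The paper's own CZ model violates it already at $N=1000$. At $\phi_\epsilon=\pi/2$ one has $F=0.7$ and $D^2\approx 0.024$. The left-hand side is then at least $0.047/M$, while the right-hand side is about $0.0016/M$.

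The correct statement is the one your two pieces give:
\[
\mathrm{Var}(\widehat{F^2})\le \frac{4F^2D^2}{M}+\frac{2D^4}{M(M-1)}+\frac{1}{MN},
\]
with your Efron--Stein estimate supplying exactly the $\frac{1}{MN}$. With this replacement, the resulting bound on $\mathrm{Var}(\widehat{D^2})$ via (\ref{eq:Var_D2_bound}) is still $O(1/M)$. The paper's downstream claim that $M,N=O(1/\epsilon^2)$ suffice is therefore unaffected.
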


\begin{proof}---The bounds follow from (i) independence across the $M$ sampled states, (ii) the law of total variance as in {\bf Proposition~\ref{prop:est_err}}, (iii) the fact that $0 \le \widehat{f_i^2} \le 1$ and $0 \le \hat{f}_i \le 1$, which upper bounds conditional variances by constants of order $1/N$, and (iv) $\mathrm{Var}(A-B) \le 2\mathrm{Var}(A) + 2\mathrm{Var}(B)$. A full expansion is straightforward but lengthy; the key point is the scaling in $M$ and $N$ exhibited in Eqs.~(\ref{eq:Var_E2_bound})-(\ref{eq:Var_D2_bound}).
\end{proof}

Eqs.~(\ref{eq:Var_F_bound}) and (\ref{eq:Var_D2_bound}) summarize the practical message. To estimate $F$ and $D^2$ within additive precision, the required number of random states scales as $M=O(1/\epsilon^2)$, while the number of shots per state scales as $N=O(1/\epsilon^2)$ until shot noise becomes subdominant. Once $N$ is large enough, further improvements are most efficiently achieved by increasing $M$.

\section{Main results: Coherent (unitary/systematic) errors and sharpening worst-case assessment using $(F,D)$}\label{sec:main_results}

The previous Sec.~\ref{sec:background} explained why the fault-tolerance statements are naturally phrased in a worst-case metric, while most scalable benchmarking protocols return average-case information. Having introduced the gate-fidelity deviation $D$ in Sec.~\ref{sec:FD_def}, we now show how the pair $(F, D)$ can be promoted into a sharp certificate on the worst-case error when the coherent (unitary/systematic) effects dominate.

The practical obstacle is well known.  Experiments typically report the \emph{average} infidelity $r := 1 - F$, while the threshold theorems are stated in terms of a worst-case distance between channels, most commonly the diamond distance $d_{\diamond}$. The mismatch is not merely an issue of loose constants. In the small-error regime $r \ll 1$, the coherent errors can force $d_{\diamond}$ to scale like $\sqrt{r}$ rather than $r$, meaning that an apparently excellent average fidelity may still coexist with a worst-case error already beyond the fault-tolerance thresholds~\cite{sanders2015bounding}.

A major advance of Ref.~\cite{Kueng2016} was to incorporate the unitarity $u$ as extra experimentally accessible data. For unital noise without leakage, the pair $(r, u)$ yields the inequality
\begin{eqnarray}
c_d\sqrt{u+\frac{2dr}{d-1}-1} \le d_{\diamond} \le d^2 c_d\sqrt{u+\frac{2dr}{d-1}-1},
\label{eq:kueng_ru_bound_recalled_main}
\end{eqnarray}
where $c_d:=\frac{1}{2}\sqrt{1-\frac{1}{d^2}}$. This is powerful precisely when the noise contains a substantial incoherent component, so that $u$ deviates from unity in a controlled way. Our starting point is the complementary, but crucial observation that in the coherent-dominated regime the unitarity saturates to $u=1$ and loses the resolution: for a purely unitary error channel, one has $u=1$ identically. Then, Eq.~(\ref{eq:kueng_ru_bound_recalled_main}) reduces to
\begin{eqnarray}
d_{\diamond} \le d^2 c_d \sqrt{\frac{2d}{d-1}r} = \frac{d}{\sqrt{2}}\sqrt{d(d+1)r}.
\label{eq:kueng_u1_loose}
\end{eqnarray}
For two-qubit gates ($d=4$), this is worse than the best-known fidelity-only conversion $d_{\diamond}\le\sqrt{d(d+1)r}=\sqrt{20r}$ by the fixed factor $d/\sqrt{2}=2\sqrt{2}$. Thus, exactly in the regime where coherent errors are most dangerous, the unitarity-assisted refinement becomes silent (and can even degrade tightness).

The resolution developed here is to replace ``how stochastic is the noise?'' (a question unitarity answers well) by ``how \emph{non-uniform} is the coherent action across input states?''  The latter is measured by the fidelity deviation $D$. Unlike $u$, the quantity $D$ does not collapse to a constant on the manifold of unitary errors. Instead, it captures a higher moment of the state-dependent fidelity landscape and therefore retains sensitivity to coherent spectral structure. The main result of this section is a dimension-specific statement: for $d=4$ (two-qubit gates), the pair $(F, D)$ yields a computable and tight worst-case bound on $d_{\diamond}$ that is strictly sharper than what is achievable from $F$ alone.

\subsection{Unitary worst-case error and a geometric overlap parameter}\label{subsec:unitary_diamond_geometry}

We specialize throughout to coherent errors. Let $U_{\mathrm{ideal}},U_{\mathrm{exp}}\in U(d)$ be the ideal and implemented unitaries, and define the effective
error unitary (see Sec.~\ref{sec:FD_def})
\begin{eqnarray}
\hat{X} := \hat{U}_{\mathrm{ideal}}^{\dagger}\hat{U}_{\mathrm{exp}} \in U(d).
\label{eq:def_X_error_unitary_main}
\end{eqnarray}
The interaction-picture error channel is then the unitary channel
\begin{eqnarray}
\mathcal{X}(\hat{\rho}) = \hat{X}\hat{\rho}\hat{X}^{\dagger}.
\label{eq:def_unitary_channel_X_main}
\end{eqnarray}

For the unitary channels, the diamond distance to the identity admits a particularly transparent characterization in terms of a single overlap functional. Here, we define the minimum unitary overlap
\begin{eqnarray}
m(\hat{X}) := \min_{\ket{\psi} \in\mathcal{H}} \abs{\bra{\psi}\hat{X}\ket{\psi}}.
\label{eq:def_mX_main}
\end{eqnarray}

\begin{lemma}[Diamond distance for unitary errors]
\label{lem:diamond_unitary_overlap_main}
For $\mathcal{X}(\hat{\rho}) = \hat{X}\hat{\rho}\hat{X}^{\dagger}$ with $\hat{X} \in U(d)$,
\begin{eqnarray}
d_{\diamond}(\mathcal{X}, \mathcal{I}) = \max_{\ket{\psi}}\sqrt{1-\abs{\bra{\psi}\hat{X}\ket{\psi}}^2} = \sqrt{1-m(\hat{X})^2}.
\label{eq:diamond_in_terms_of_m_main}
\end{eqnarray}
\end{lemma}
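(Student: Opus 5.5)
The proof will establish the two equalities in Eq.~(\ref{eq:diamond_in_terms_of_m_main}) in turn. First, reduce the diamond-norm maximization to pure inputs on system plus ancilla. Then evaluate the trace distance between two pure states exactly. Finally, show that the ancilla gives no advantage, because the numerical range of $\hat{X}$ is convex and equals the set of values $\tr{(\hat{X}\hat{\rho}_S)}$.

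\emph{Step 1 (pure inputs).} The map $\hat{\rho}\mapsto\norm{((\mathcal{X}-\mathcal{I})\otimes\mathcal{I})(\hat{\rho})}_1$ is convex, since it is a norm composed with a linear map. Its maximum over the convex set of density operators is therefore attained at an extreme point, i.e. a pure state $\ket{\Psi}\in\mathcal{H}\otimes\mathcal{H}_A$. For such a state the two outputs are the pure states $\ket{\Psi}$ and $\ket{\Phi}=(\hat{X}\otimes\hat{\mathds{1}}_A)\ket{\Psi}$.

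\emph{Step 2 (pure-state trace distance).} I will use the standard identity $\frac12\norm{\ketbra{\Psi}{\Psi}-\ketbra{\Phi}{\Phi}}_1=\sqrt{1-\abs{\braket{\Psi}{\Phi}}^2}$. Its proof is short: the difference is a traceless rank-$\le 2$ Hermitian operator supported on $\mathrm{span}\{\Psi,\Phi\}$, so its eigenvalues are $\pm\lambda$. Computing the trace of its square gives $2\lambda^2=2(1-\abs{\braket{\Psi}{\Phi}}^2)$. Combining Steps 1 and 2,
\[
d_\diamond(\mathcal{X},\mathcal{I})=\max_{\ket{\Psi}}\sqrt{1-\abs{\bra{\Psi}(\hat{X}\otimes\hat{\mathds{1}}_A)\ket{\Psi}}^2}.
\]

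\emph{Step 3 (removing the ancilla).} Write $\bra{\Psi}(\hat{X}\otimes\hat{\mathds{1}}_A)\ket{\Psi}=\tr{(\hat{X}\hat{\rho}_S)}$ with $\hat{\rho}_S=\mathrm{Tr}_A\ketbra{\Psi}{\Psi}$. Every system density operator arises this way, by purification with a $d$-dimensional ancilla. The achievable overlaps are thus $\{\tr{(\hat{X}\hat{\rho})}\}$, the convex hull of $W(\hat{X})$. By the Toeplitz--Hausdorff theorem $W(\hat{X})$ is convex and compact, so this hull is $W(\hat{X})$ itself. For unitary $\hat{X}$ this is also immediate from normality: diagonalize $\hat{X}=\sum_k\lambda_k\ketbra{e_k}{e_k}$. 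Then $\bra{\psi}\hat{X}\ket{\psi}=\sum_k|\langle e_k|\psi\rangle|^2\lambda_k$ ranges over all convex combinations of the eigenvalues, and so does $\tr{(\hat{X}\hat{\rho})}=\sum_k\bra{e_k}\hat{\rho}\ket{e_k}\lambda_k$. Hence the maximum over $\ket{\Psi}$ equals the maximum over system pure states $\ket{\psi}$. Since $x\mapsto\sqrt{1-x^2}$ is decreasing on $[0,1]$, the maximum is attained where $\abs{\bra{\psi}\hat{X}\ket{\psi}}$ is minimal, i.e. at $m(\hat{X})$. The minimum exists by compactness of the unit sphere.

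The main obstacle is Step 3: justifying that entangled inputs cannot beat product inputs. I will handle it via the spectral description of both sets, which avoids invoking Toeplitz--Hausdorff in general. Step 1's convexity reduction also needs a brief justification that extreme points of the state space are pure.
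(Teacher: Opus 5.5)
Your proof is correct and follows the paper's argument step for step. You reduce to pure system--ancilla inputs by convexity, apply the pure-state trace-distance identity, and remove the ancilla by noting that the overlap depends only on $\hat{\rho}_S$ and that the numerical range is already convex; the only minor difference is that you justify this last point directly from the spectral decomposition of the normal operator $\hat{X}$ (both sets equal $\mathrm{Conv}(\mathrm{Spec}(\hat{X}))$) instead of citing Toeplitz--Hausdorff, which makes your version slightly more self-contained.
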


\begin{proof}---By definition, we write
\begin{eqnarray}
d_{\diamond}(\mathcal{X},\mathcal{I}) := \frac{1}{2}\norm{\mathcal{X}-\mathcal{I}}_{\diamond} = \frac{1}{2} \max_{\hat{\rho}_{SA}} \norm{(\mathcal{X}\otimes\mathcal{I}_A)(\hat{\rho}_{SA}) - \hat{\rho}_{SA}}_{1},
\label{eq:diamond_def_main}
\end{eqnarray}
where the maximization is over all states $\hat{\rho}_{SA}$ on $\mathcal{H}\otimes\mathcal{H}_A$ with an arbitrary ancilla $\mathcal{H}_A$.

First, since the trace norm is convex and the map $\hat{\rho} \mapsto (\mathcal{X}\otimes\mathcal{I}_A)(\hat{\rho}) - \hat{\rho}$ is linear, an optimizer exists among pure states. Hence, we may take $\hat{\rho}_{SA}=\ketbra{\Psi}{\Psi}$ and write
\begin{eqnarray}
d_{\diamond}(\mathcal{X},\mathcal{I}) = \max_{\ket{\Psi}} \frac{1}{2}\norm{(\hat{X} \otimes \hat{\mathds{1}}_A)\ketbra{\Psi}{\Psi}(\hat{X}^{\dagger} \otimes \hat{\mathds{1}}_A) - \ketbra{\Psi}{\Psi}}_{1}.
\label{eq:diamond_pure_reduction_main}
\end{eqnarray}
Second, for any two pure states $\ket{\Psi}$, $\ket{\Phi}$, one has the standard identity
\begin{eqnarray}
\frac{1}{2}\norm{\ketbra{\Psi}{\Psi}-\ketbra{\Phi}{\Phi}}_{1} = \sqrt{1-\abs{\braket{\Psi}{\Phi}}^2}.
\label{eq:pure_state_trace_distance_main}
\end{eqnarray}
Applying \eqref{eq:pure_state_trace_distance_main} with $\ket{\Phi}=(\hat{X} \otimes \hat{\mathds{1}}_A)\ket{\Psi}$, we obtain
\begin{eqnarray}
d_{\diamond}(\mathcal{X},\mathcal{I}) = \max_{\ket{\Psi}} \sqrt{1 - \abs{\bra{\Psi}(\hat{X} \otimes \hat{\mathds{1}}_A)\ket{\Psi}}^2}.
\label{eq:diamond_overlap_with_ancilla_main}
\end{eqnarray}
Third, the overlap appearing in \eqref{eq:diamond_overlap_with_ancilla_main} depends only on the reduced state on the system:
if $\hat{\rho}_S = \tr{{}_A\ketbra{\Psi}{\Psi}}$, then
\begin{eqnarray}
\bra{\Psi}\bigl( \hat{X} \otimes \hat{\mathds{1}}_A \bigr)\ket{\Psi} = \tr{(\hat{X}\hat{\rho}_S)}.
\label{eq:overlap_reduced_state_main}
\end{eqnarray}
As $\ket{\Psi}$ ranges over purifications, the reduced state $\hat{\rho}_S$ ranges over all density operators on
$\mathcal{H}$.  Therefore, the set of achievable overlaps is
\begin{eqnarray}
\left\{ \bra{\Psi}(\hat{X} \otimes \hat{\mathds{1}}_A)\ket{\Psi} \right\} = \left\{ \tr{(\hat{X}\hat{\rho}_S)} : \hat{\rho}_S \ge 0, ~\tr{\hat{\rho}_S}=1 \right\}.
\label{eq:achievable_overlaps_main}
\end{eqnarray}
Finally, the set $\bigl\{ \tr{(\hat{X}\hat{\rho}_S)} \bigr\}$ is the convex hull of the pure-state expectations $\bigl\{ \bra{\psi}\hat{X}\ket{\psi} \bigr\}$, but the latter set is already convex (c.f., Toeplitz--Hausdorff theorem: the numerical range $W(\hat{X})$ is convex~\cite{toeplitz1918algebraische,halmos2012hilbert}). Hence, allowing the mixed state $\hat{\rho}_S$ (equivalently, allowing an ancilla) does not enlarge the range of the overlaps. Consequently, we have
\begin{eqnarray}
\max_{\ket{\Psi}}
\sqrt{1-\abs{\bra{\Psi}(\hat{X} \otimes \hat{\mathds{1}}_A)\ket{\Psi}}^2} = \max_{\ket{\psi}} \sqrt{1-\abs{\bra{\psi}\hat{X}\ket{\psi}}^2}.
\label{eq:no_ancilla_gain_main}
\end{eqnarray}
The function $x \mapsto \sqrt{1-x^2}$ is decreasing on $[0,1]$, and thus, the maximization of the left-hand side is equivalent to the minimization of $\abs{\bra{\psi}\hat{X}\ket{\psi}}$.  This yields
\begin{eqnarray}
d_{\diamond}(\mathcal{X}, \mathcal{I}) = \sqrt{1-m(\hat{X})^2},
\end{eqnarray}
which is Eq.~(\ref{eq:diamond_in_terms_of_m_main}).
\end{proof}

{\bf Lemma~\ref{lem:diamond_unitary_overlap_main}} reduces the worst-case certification to a purely geometric problem. For a normal operator (in particular, a unitary), the numerical range equals the convex hull of the eigenvalues,
\begin{eqnarray}
W(\hat{X}) := \left\{ \bra{\psi}\hat{X}\ket{\psi} : \norm{\psi}=1 \right\} = \mathrm{Conv}\bigl(\mathrm{Spec}(\hat{X})\bigr).
\label{eq:numerical_range_convex_hull_main}
\end{eqnarray}
Thus, $m(\hat{X})$ is the Euclidean distance from the origin to the polygon $\mathrm{Conv}(\mathrm{Spec}(\hat{X}))$ in the complex plane, and $d_{\diamond} = \sqrt{1 - m(\hat{X})^2}$ is the complementary ``distance to perfect overlap.''

\subsection{Why unitarity saturates in the coherent regime and why $D$ is needed}\label{subsec:why_D_needed}

The unitarity $u$ was designed to quantify the extent to which a noise process is stochastic rather than coherent. This makes it an excellent regime witness: when $u$ deviates from unity in the manner induced by incoherent components, the $(r, u)$ bound Eq.~(\ref{eq:kueng_ru_bound_recalled_main}) can recover the favorable scaling $d_{\diamond}=O(r)$~\cite{Kueng2016}. But this design itself implies an unavoidable limitation: for a purely unitary error channel, $u=1$ identically, so the unitarity becomes constant on the entire coherent manifold.  Once one enters the $u \simeq 1$ regime, the unitarity cannot distinguish the coherent errors, and the $(r,u)$ upper bound collapses to Eq.~(\ref{eq:kueng_u1_loose}).

The central idea in our approach is to move from ``how coherent is the channel?'' to ``how does coherence vary across states?''  The quantity that answers this second question is the fidelity deviation $D$. Recall the quantities: For unitary errors, 
\begin{eqnarray}
f(\psi) := \abs{\bra{\psi}\hat{X}\ket{\psi}}^2, \quad F = \int d\psi f(\psi), \quad\text{and}\quad D^2 = \int d\psi\, f(\psi)^2 - F^2.
\label{eq:fpsi_F_D_recalled_main}
\end{eqnarray}
The mean $F$ alone can be large even when $f(\psi)$ has sharp valleys. These valleys determine $m(\hat{X})$, and hence, $d_{\diamond}$ via {\bf Lemma~\ref{lem:diamond_unitary_overlap_main}}. Here, the role of $D$ is precisely to quantify the spread of $f(\psi)$, and thereby, restrict how small $m(\hat{X})$ can be, given the same average $F$.

From the standpoint of Sec.~\ref{sec:FD_def}, the mechanism is straightforward: $F$ depends only on the first spectral moment $\tr{(\hat{X})}$ via Eq.~(\ref{eq:F_trace_closedform}), whereas $D$ depends additionally on a fourth-moment combination that contains $\tr{(\hat{X}^2)}$ and phase correlations, via Eq.~(\ref{eq:D_trace_closedform}).  This is exactly the extra information required to resolve coherent structure once $u$ has saturated.

\subsection{Warm-up: $d=2$ (single-qubit) coherent errors}\label{subsec:d2_no_advantage}

For $d=2$, the spectrum of a unitary is essentially one-parameter after removing a global phase.  Consequently, the entire distribution of $f(\psi)$ is fixed once $F$ is fixed, and $D$ carries no independent information in this case.
\begin{proposition}[No additional information at $d=2$]
\label{prop:d2_linear_relation_main}
Let $d=2$ and $\hat{X}\in U(2)$.  For the unitary errors,
\begin{eqnarray}
D = \frac{1-F}{\sqrt{5}}.
\label{eq:d2_relation_main}
\end{eqnarray}
In particular, in the single-qubit coherent setting, the pair $(F,D)$ is equivalent to $F$ alone.
\end{proposition}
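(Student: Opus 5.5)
The plan is a direct substitution into the closed-form moment formulas of Proposition~\ref{prop:FD_closed_form_unitary}, after reducing the spectrum of $\hat{X}$ to a single parameter. Since $F$ and $D$ are invariant under $\hat{X}\mapsto e^{i\phi}\hat{X}$, as noted in Sec.~\ref{subsec:FD_unitary_setting}, I will first multiply by a global phase so that $\det\hat{X}=1$. Then $\mathrm{Spec}(\hat{X})=\{e^{i\theta},e^{-i\theta}\}$ for some $\theta$. Writing $x:=\cos^2\theta$, the only trace data entering Proposition~\ref{prop:FD_closed_form_unitary} are
\begin{eqnarray}
\tr{(\hat{X})}=2\cos\theta, \qquad \tr{(\hat{X}^2)}=2\cos 2\theta=2(2x-1).
\end{eqnarray}
Both are real, so the cross term $2\mathrm{Re}[\tr{(\hat{X}^2)}\tr{(\hat{X}^\dagger)}^2]$ simplifies to $16x(2x-1)$.

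Next, I substitute $d=2$ into Eq.~(\ref{eq:F_trace_closedform}). This gives $F=(2+4x)/6=(1+2x)/3$, and hence $1-F=\tfrac{2}{3}(1-x)$.

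For the second moment, the denominator of Eq.~(\ref{eq:D_trace_closedform}) at $d=2$ is $2\cdot3\cdot4\cdot5=120$. I will collect the five numerator terms one by one:
\begin{eqnarray}
20+64x+4(2x-1)^2+16x^2+16x(2x-1)=24+32x+64x^2 .
\end{eqnarray}
This yields $E_2=(3+4x+8x^2)/15$. Subtracting $F^2=(1+4x+4x^2)/9$ over the common denominator $45$ gives
\begin{eqnarray}
D^2=\frac{4-8x+4x^2}{45}=\frac{4(1-x)^2}{45}.
\end{eqnarray}

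Taking the nonnegative square root, $D=\frac{2(1-x)}{3\sqrt5}$. Comparing with the expression for $1-F$ above, this is exactly $D=(1-F)/\sqrt5$. Since $D$ is an explicit function of $F$, the pair $(F,D)$ carries no more information than $F$ alone, which proves the second claim.

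The main risk is an arithmetic slip in the fourth-moment numerator, in particular in the sign and normalization of the real-part cross term. As an independent check, I would compute the one-parameter family directly on the Bloch sphere. For $\hat{X}=e^{-i\theta\hat{\sigma}_z}$ one has $f(\psi)=1-\sin^2\theta\,(1-z^2)$, where $z$ is the Bloch coordinate along the rotation axis. Under the Haar measure $z$ is uniform on $[-1,1]$, so $\mathrm{Var}(z^2)=\tfrac15-\tfrac19=\tfrac{4}{45}$. This reproduces $D^2=\tfrac{4}{45}\sin^4\theta$ and $1-F=\tfrac23\sin^2\theta$, consistent with the result above.
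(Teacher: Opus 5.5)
Your proposal is correct and follows essentially the paper's own route: normalize to $\det\hat{X}=1$ so that the spectrum is $\{e^{\pm i\theta}\}$, then substitute the real traces into the $d=2$ closed forms of Proposition~\ref{prop:FD_closed_form_unitary}. The paper parametrizes by $t=2\cos\theta$ and obtains $D^2=(t^2-4)^2/180$ with $t^2-4=-6(1-F)$, which is your computation with $t^2=4x$; your Bloch-sphere check is a useful addition because it confirms the result without relying on the fourth-moment formula.
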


\begin{proof}---Because $F$, $D$, and $m(\hat{X})$ are all invariant under global phase, we may multiply $\hat{X}$ by an overall phase so that $\det{\hat{X}}=1$.  Then, $\hat{X} \in SU(2)$ has the eigenvalues $e^{i\theta}$ and $e^{-i\theta}$ for some $\theta \in [0,\pi]$, and its trace is real:
\begin{eqnarray}
t := \tr{\hat{X}} = e^{i\theta} + e^{-i\theta} = 2\cos(\theta).
\label{eq:d2_trace_param_main}
\end{eqnarray}
Moreover,
\begin{eqnarray}
\tr{(\hat{X}^2)} = e^{i2\theta}+e^{-i2\theta} = 2\cos(2\theta) = 4\cos(\theta)^2-2 = t^2-2.
\label{eq:d2_trace_square_identity_main}
\end{eqnarray}

We now insert $d=2$ into the closed-form expressions from {\bf Proposition~\ref{prop:FD_closed_form_unitary}}. First,
\begin{eqnarray}
F = \frac{d+\abs{\tr{(\hat{X})}}^2}{d(d+1)} = \frac{2+t^2}{6}.
\label{eq:d2_F_in_t_main}
\end{eqnarray}
For the second moment, write $E_2 := \int d\psi f(\psi)^2 = D^2+F^2$.  By specializing Eq.~(\ref{eq:D_trace_closedform}) to $d=2$ and using that $t$ is real, we can yield
\begin{eqnarray}
E_2 &=& \frac{2d(d+3) + 4(d+2)t^2 + \abs{\tr{(\hat{X}^2)}}^2 + t^4 + 2\textrm{Re}\left[\tr{(\hat{X}^2)} t^2\right]}{d(d+1)(d+2)(d+3)} \nonumber \\
    &=& \frac{20 + 16t^2 + (t^2-2)^2 + t^4 + 2(t^2-2)t^2}{120}.
\label{eq:d2_E2_raw_main}
\end{eqnarray}
A direct simplification gives
\begin{eqnarray}
E_2 = \frac{4t^4+8t^2+24}{120} = \frac{t^4+2t^2+6}{30}.
\label{eq:d2_E2_simplified_main}
\end{eqnarray}
By combining Eq.~(\ref{eq:d2_F_in_t_main}) and Eq.~(\ref{eq:d2_E2_simplified_main}), we obtain
\begin{eqnarray}
D^2 = E_2 - F^2 = \frac{(t^2-4)^2}{180}.
\label{eq:d2_D2_in_t_main}
\end{eqnarray}
Finally, by inserting $t^2-4=6F-6=-6(1-F)$ into Eq.~(\ref{eq:d2_D2_in_t_main}), we attain
\begin{eqnarray}
D^2 = \frac{36(1-F)^2}{180} = \frac{(1-F)^2}{5},
\end{eqnarray}
which implies Eq.~(\ref{eq:d2_relation_main}).
\end{proof}

{\bf Proposition~\ref{prop:d2_linear_relation_main}} is an important scope calibration: the fluctuation observable $D$ becomes informative only once the unitary spectrum has enough internal degrees of freedom. This begins at $d=4$, i.e., for two-qubit gates.

\subsection{Coherent errors beyond two-qubit ($d\ge 4$): reconstructing spectral invariants from $(F,D)$}\label{subsec:d4_moments}

For $n \ge 2$ qubits, the Hilbert space dimension satisfies $d=2^n\ge 4$. In this regime, the pair $(F, D)$ probes two distinct Haar moments and therefore encodes two independent spectral invariants of $\hat{X}$. It is convenient to package these invariants as
\begin{eqnarray}
P &:=& \abs{\tr{(\hat{X})}}, \nonumber \\
Q &:=& \abs{\tr{(\hat{X}^2)} + \tr{(\hat{X})}^2}.
\label{eq:def_PQ_main}
\end{eqnarray}
Both are conjugation-invariant and hence depend only on the eigenvalue spectrum.

\begin{lemma}[Spectral invariants from $(F,D)$ for $d\ge 4$]
\label{lem:PQ_from_FD_main}
Let $d \ge 4$ and $\hat{X} \in U(d)$. Then, $(F,D)$ uniquely determines $P^2$ and $Q^2$ via
\begin{eqnarray}
P^2 &=& d(d+1)F - d,
\label{eq:P2_from_F_main}
\\
Q^2 &=& d(d+1)(d+2)(d+3)(D^2+F^2) - 2d(d+3) - 4(d+2)P^2.
\label{eq:Q2_from_FD_main}
\end{eqnarray}
\end{lemma}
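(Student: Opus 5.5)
The plan is to get this as a direct consequence of the closed-form Haar moments in Proposition~\ref{prop:FD_closed_form_unitary}. The only new work is to rewrite the fourth-moment numerator in terms of $P$ and $Q$ and then invert two linear relations.

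First, the expression for $P^2$ follows from the first formula of Proposition~\ref{prop:FD_closed_form_unitary}. There $F=(d+P^2)/(d(d+1))$ with $P=\abs{\tr{(\hat{X})}}$. Multiplying by $d(d+1)$ and subtracting $d$ gives Eq.~(\ref{eq:P2_from_F_main}) immediately.

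Second, I would expand $Q^2$ directly. Write $a=\tr{(\hat{X})}$ and $b=\tr{(\hat{X}^2)}$. Then
\begin{eqnarray}
Q^2=\abs{b+a^2}^2=\abs{b}^2+\abs{a}^4+2\,\mathrm{Re}\bigl[b\,\bar{a}^2\bigr].
\end{eqnarray}
Since $\bar a=\tr{(\hat{X}^\dagger)}$, the last term is exactly $2\mathrm{Re}[\tr{(\hat{X}^2)}\tr{(\hat{X}^\dagger)}^2]$. So the combination $\abs{\tr{(\hat{X}^2)}}^2+\abs{\tr{(\hat{X})}}^4+2\mathrm{Re}[\cdots]$ in the numerator of Eq.~(\ref{eq:D_trace_closedform}) is precisely $Q^2$. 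Together with $E_2=D^2+F^2$, this reproduces
\begin{eqnarray}
E_2=\frac{2d(d+3)+4(d+2)P^2+Q^2}{d(d+1)(d+2)(d+3)},
\end{eqnarray}
which is Eq.~(\ref{eq:results_moments_closed}). Multiplying through by $d(d+1)(d+2)(d+3)$, which is nonzero, and substituting the $P^2$ already found, gives Eq.~(\ref{eq:Q2_from_FD_main}).

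Uniqueness is then immediate. The map from $(P^2,Q^2)$ to $(F,E_2)$ is affine and triangular with nonzero diagonal coefficients $1/(d(d+1))$ and $1/(d(d+1)(d+2)(d+3))$, so it is invertible. The data $(F,D)$ determine $E_2$, and because $P,Q\ge 0$ are defined as moduli, $P^2$ and $Q^2$ fix $P$ and $Q$.

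The main obstacle is the correctness of the fourth-moment formula itself, i.e.\ the Weingarten/symmetric-projector contraction behind Eq.~(\ref{eq:D_trace_closedform}); I take that as given from Proposition~\ref{prop:FD_closed_form_unitary}. As a safeguard, I would check the formula against the $d=2$ computation in Proposition~\ref{prop:d2_linear_relation_main}, which recovered $D=(1-F)/\sqrt5$.

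The role of $d\ge4$ is interpretive rather than algebraic. The inversion is formally valid for every $d$, but for $d=2$, and generically for $d=3$, $Q$ is a function of $P$, so the second relation adds no information. For $d\ge 4$ the spectrum has enough freedom that $P$ and $Q$ vary independently, which makes the statement meaningful. I would add this remark without claiming more than the algebra gives.
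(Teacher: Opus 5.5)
Your proof is correct and is essentially the paper's own argument. You rearrange the closed form for $F$ to get $P^2$, identify $\abs{\tr{(\hat{X}^2)}}^2+\abs{\tr{(\hat{X})}}^4+2\mathrm{Re}[\tr{(\hat{X}^2)}\tr{(\hat{X}^\dagger)}^2]$ as $Q^2$ by expanding $\abs{\tr{(\hat{X}^2)}+\tr{(\hat{X})}^2}^2$, and solve the resulting linear relation for $Q^2$.

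The only thing to fix is your closing side remark, which is false for $d=3$. The unitaries $\mathrm{diag}(1,1,i)$ and $\mathrm{diag}(e^{i\beta},e^{-i\beta},1)$ with $\cos\beta=(\sqrt{5}-1)/2$ both have $P^2=5$, yet they have $Q^2=32$ and $Q^2=120-40\sqrt{5}$ respectively. So among $d\ge 2$, the collapse of $Q$ onto $P$ happens only at $d=2$, and the paper's restriction to $d\ge 4$ merely reflects $d=2^n$.
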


\begin{proof}---Firstly, we consider $P^2$. For unitary errors, Eq.~(\ref{eq:F_trace_closedform}) gives
\begin{eqnarray}
F = \frac{d+\abs{\tr{(\hat{X})}}^2}{d(d+1)}.
\end{eqnarray}
By rearranging, we have $\abs{\tr{(\hat{X})}}^2=d(d+1)F-d$, which proves Eq.~(\ref{eq:P2_from_F_main}).

For $Q^2$, we start from the fourth-moment identity in Eq.~(\ref{eq:D_trace_closedform}):
\begin{eqnarray}
D^2+F^2 = \int d\psi f(\psi)^2 = \frac{2d(d+3) + 4(d+2)\abs{\tr{(\hat{X})}}^2 + \abs{\tr{(\hat{X}^2)}}^2 + \abs{\tr{(\hat{X})}}^4 + 2\textrm{Re}\left[\tr{(\hat{X}^2)} \tr{(\hat{X}^{\dagger})}^2 \right]}{d(d+1)(d+2)(d+3)}.
\label{eq:E2_general_main}
\end{eqnarray}
Here, we have the exact algebraic identity
\begin{eqnarray}
\abs{\tr{(\hat{X}^2)} + \tr{(\hat{X})}^2}^2 = \abs{\tr{(\hat{X}^2)}}^2 + \abs{\tr{(\hat{X})}}^4 + 2\textrm{Re}\left[\tr{(\hat{X}^2)} \tr{(\hat{X}^{\dagger})}^2\right].
\label{eq:Q2_expansion_main}
\end{eqnarray}
Thus, the last three terms in the numerator of \eqref{eq:E2_general_main} equal $Q^2$. Writing $P=\abs{\tr{(\hat{X})}}$, we obtain the general moment relation
\begin{eqnarray}
D^2+F^2 = \frac{2d(d+3) + 4(d+2)P^2 + Q^2}{d(d+1)(d+2)(d+3)}.
\label{eq:E2_in_PQ_general_main}
\end{eqnarray}
Solving \eqref{eq:E2_in_PQ_general_main} for $Q^2$ yields \eqref{eq:Q2_from_FD_main}.

For later reference, we note that by specializing \eqref{eq:E2_in_PQ_general_main} to $d=4$, one can yield
\begin{eqnarray}
D^2+F^2 = \frac{56 + 24P^2 + Q^2}{840}.
\label{eq:E2_d4_main}
\end{eqnarray}
\end{proof}

{\bf Lemma~\ref{lem:PQ_from_FD_main}} is the ``moment-to-spectrum'' conversion that defeats the $u=1$ saturation issue: although unitarity is constant on the coherent manifold, the pair $(P, Q)$ varies nontrivially across unitary errors. In the next subsection, we show that, for any $d \ge 4$, this extra spectral information is sufficient to tightly lower bound
$m(\hat{X})$ and hence tightly upper bound $d_{\diamond}$.

\subsection{A tight $(F, D)$-based bound on $m(\hat{X})$ and on $d_{\diamond}$ for $d \ge 4$}\label{subsec:d4_main_theorem}

We now state and prove our main theorem.  Define the positive-part function
\begin{eqnarray}
[x]_+ := \max\{ x, 0 \}.
\label{eq:positive_part_main}
\end{eqnarray}
Given measured $(F, D)$ for a unitary error $\hat{X} \in U(d)$ with $d \ge 4$, we compute $P = \sqrt{P^2}$ and $Q = \sqrt{Q^2}$ from Eqs.~(\ref{eq:P2_from_F_main})--(\ref{eq:Q2_from_FD_main}) and define
\begin{eqnarray}
c(F,D) := \Biggl[ \frac{P}{d} - \frac{1}{2d} \sqrt{(d-2)\left(dQ+d^2-(d+2)P^2\right)} \Biggr]_+.
\label{eq:def_cFD}
\end{eqnarray}
Then, we give the following theorem:
\begin{theorem}[Main result: sharpening coherent worst-case error for $d\ge 4$]
\label{thm:main_d4}
Let $d \ge 4$ and let $\hat{X} \in U(d)$ be the effective error unitary of an $n$-qubit gate. Let $F$ and $D$ be the average fidelity and fidelity deviation. Then,
\begin{eqnarray}
m(\hat{X}) \ge c(F,D),
\label{eq:m_lower_cFD_main}
\end{eqnarray}
and, consequently
\begin{eqnarray}
d_{\diamond}(\mathcal{X},\mathcal{I}) \le \sqrt{1-c(F,D)^2}.
\label{eq:diamond_upper_cFD_main}
\end{eqnarray}
This bound is ``tight'' in the following sense: for any admissible pair $(F, D)$ arising from a unitary $\hat{X} \in U(d)$, there exists a unitary $\hat{X}_{\star} \in U(d)$ consistent with $(F, D)$ such that $m(\hat{X}_{\star}) = c(F,D)$ and equality holds in Eq.~(\ref{eq:diamond_upper_cFD_main}).
\end{theorem}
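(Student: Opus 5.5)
By Lemma~\ref{lem:diamond_unitary_overlap_main}, Eq.~(\ref{eq:diamond_upper_cFD_main}) follows from Eq.~(\ref{eq:m_lower_cFD_main}), since $x\mapsto\sqrt{1-x^2}$ is decreasing on $[0,1]$. So the plan is to prove $m(\hat{X})\ge c(F,D)$ and then exhibit an extremal spectrum. By Lemma~\ref{lem:PQ_from_FD_main}, the data $(F,D)$ fix $P$ and $Q$, and both are functions of $\mathrm{Spec}(\hat{X})$ alone. The statement therefore becomes a purely spectral extremal problem: among $d$ points on the unit circle with prescribed $P$ and $Q$, how close can their convex hull come to the origin? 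By Eq.~(\ref{eq:numerical_range_convex_hull_main}), $m(\hat{X})$ is exactly that distance.

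\textbf{Normalization.} If $0\in\mathrm{Conv}(\mathrm{Spec}\,\hat{X})$ there is nothing to prove, because $c\ge 0$ trivially bounds $m=0$ from below only when the bracket is nonpositive. So the real content is to show that the bracket is then $\le 0$; I would fold this into the general inequality below rather than treat it separately. Otherwise the eigenvalues lie in an open half-circle. The point of the hull nearest the origin lies on an edge joining two extreme eigenvalues. After a global phase, which changes neither $P$, $Q$ nor $m$, I take this edge to have endpoints $e^{\pm i\beta}$, and all eigenphases satisfy $\theta_k\in[-\beta,\beta]$. Then $m=b:=\cos\beta$. I write $x_k=\cos\theta_k\in[b,1]$ for the $d-2$ remaining eigenvalues.

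\textbf{Reduction to real moment constraints.} Next I turn the modulus constraints into real ones. I will use $P\ge \mathrm{Re}\,\mathrm{tr}\hat{X}=2b+\sum_k x_k$. For $Q$ I use $\mathrm{tr}\hat{X}^2+(\mathrm{tr}\hat{X})^2$ and $\cos 2\theta=2x^2-1$, bounding $Q$ from below by its real part. The real part involves $\sum_k(2x_k^2-1)$, the term $2(2b^2-1)$, and $(\mathrm{Re}\,\mathrm{tr}\hat{X})^2-(\mathrm{Im}\,\mathrm{tr}\hat{X})^2$. The imaginary parts must be controlled, for example by the symmetrization freedom in the choice of phase, or by noting that $P^2$ counts them with the opposite sign in $dQ-(d+2)P^2$. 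This is the step where the specific combination $dQ+d^2-(d+2)P^2$ should emerge.

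\textbf{Core inequality.} I then apply Cauchy--Schwarz,
\[
\sum_k x_k^2\ \ge\ \frac{1}{d-2}\Bigl(\sum_k x_k\Bigr)^2 ,
\]
with $\sum_k x_k$ expressed through $P$ and $b$. Substituting into the $Q$-constraint gives a quadratic inequality in $b$, roughly of the form
\[
d^2b^2-2dPb+\text{const}(P,Q)\le 0 .
\]
Its roots are
\[
b_\pm=\frac{P}{d}\pm\frac{1}{2d}\sqrt{(d-2)\bigl(dQ+d^2-(d+2)P^2\bigr)} ,
\]
so $b\ge b_-$, and with $b\ge 0$ this gives $m\ge[b_-]_+=c(F,D)$.

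\textbf{Main obstacle.} I expect the hard step to be getting the sign and direction of the $Q$ relaxation right, i.e.\ showing that replacing $|\cdot|$ by the real part, together with the imaginary-part bookkeeping, moves the inequality in the useful direction. If a direct bound fails, I would first try choosing the global phase so that $\mathrm{tr}\hat{X}$ is real and positive. I would then argue that the extremal configuration can be taken symmetric, by reflecting $\theta\mapsto-\theta$ and averaging, which does not decrease $m$. This makes all traces real and turns the constraints into equalities.

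\textbf{Tightness.} Equality in Cauchy--Schwarz forces all $x_k$ equal. I take the spectrum to consist of one bulk eigenvalue $e^{i\alpha}$, or a symmetric pair $e^{\pm i\alpha}$, with multiplicity $d-2$, together with the extremal pair $e^{\pm i\beta}$, where $\cos\beta=c(F,D)$. I would choose $\alpha$ and $\beta$ to solve the two equations $P(\alpha,\beta)=P$ and $Q(\alpha,\beta)=Q$; admissibility of $(F,D)$ should guarantee a solution with $\cos\alpha\ge\cos\beta$. As a sanity check on both the formula and the construction, the CZ example of the Methods, with $\mathrm{Spec}=\{1,1,1,e^{i\phi}\}$, reproduces $c=\cos(\phi/2)$.
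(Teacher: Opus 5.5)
Your outline follows the paper's proof (symmetric-chord frame, Cauchy--Schwarz on the $d-2$ bulk cosines, smaller root, two-angle extremizer). The step you flag as the main obstacle is a genuine gap, and your proposed repair does not close it.

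In the chord frame write $\mathrm{tr}\,\hat{X}=C+iS$. The inequality $P\ge C$ points the wrong way: $c$ is increasing in $P$, and Cauchy--Schwarz needs a \emph{lower} bound on $\sum_k x_k=C-2b$. Meanwhile $\mathrm{Re}[\mathrm{tr}\,\hat{X}^2+(\mathrm{tr}\,\hat{X})^2]$ contains $C^2-S^2$ while $P^2=C^2+S^2$, so $S$ works against you on both sides.

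Reflect-and-average does not help. Conjugation preserves $(P,Q,m)$, but averaging spectra does not preserve $(P,Q)$, and the reduction itself is false. For the CZ data at $\phi_\epsilon=\pi/2$ ($d=4$, $P^2=10$, $Q^2=136$), no conjugation-symmetric spectrum has these invariants. The ansatz $\{e^{\pm i\alpha},e^{\pm i\beta}\}$ forces a cosine of modulus $\approx1.078$, and $\{e^{\pm i\beta},1,-1\}$ gives $P\le2$. The paper's Step~3 asserts the same reduction without proof, so following it does not help either.

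What works is the first half of your fallback without the second. Rotate so that $\mathrm{tr}\,\hat{X}=P\ge0$ and set $y_j=\cos\theta_j$ for all $d$ eigenphases. Then $\sum_j y_j=P$ exactly, and $Q\ge P^2+\sum_j(2y_j^2-1)$. This is the useful direction, because $c$ decreases in $Q$. The target quadratic becomes $\sum_j(y_j-\bar y)^2\ge\frac{2d}{d-2}(\bar y-b)^2$ with $\bar y=P/d$.

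Suppose the spectrum fills the arc $[\gamma-\beta,\gamma+\beta]$. The two endpoint eigenvalues satisfy $y_++y_-=2\cos\gamma\cos\beta\le2b$. Projecting onto the bisector gives $\bar y\ge b$. Split off these two eigenvalues and apply Cauchy--Schwarz to the other $d-2$. This gives $\sum_j(y_j-\bar y)^2\ge\bigl(\tfrac12+\tfrac1{d-2}\bigr)(2\bar y-y_+-y_-)^2\ge\frac{2d}{d-2}(\bar y-b)^2$.

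You rightly flag the case $0\in W(\hat{X})$ but leave it open, and the paper's remark ``$c\ge0$'' does not settle it. It goes the same way: Carath\'eodory gives two eigenvalues with $y_a+y_b\le0$, hence $b_-\le0$.

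The tightness half cannot be proved as stated. Equality throughout the chain above forces three things: $\gamma=0$, $\mathrm{Im}[\mathrm{tr}\,\hat{X}^2+(\mathrm{tr}\,\hat{X})^2]=0$, and all $d-2$ bulk cosines equal to $a=(P-2c)/(d-2)$. So a unitary with the given $(F,D)$ and $m=c>0$ exists only if $a\le1$, and admissibility does not guarantee this.

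For CZ at $\phi_\epsilon=\pi/2$ one finds $c\approx0.503$ and $a\approx1.078$, so no unitary with these $(F,D)$ attains $m=c$. Your sanity check is therefore wrong: $c\approx0.503\neq\cos(\pi/4)\approx0.707$. For small $\phi_\epsilon$, $1-c\approx0.81\sin^2(\phi_\epsilon/2)$ while $1-m\approx0.5\sin^2(\phi_\epsilon/2)$. The Methods claim that the CZ model saturates the bound, and the theorem's blanket tightness assertion, share this defect. What survives is tightness for data with $c>0$ and $a\le1$, where the two-angle spectrum you describe does reproduce $(P,Q)$.
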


\begin{proof}---{\bf By Lemma~\ref{lem:diamond_unitary_overlap_main}}, it suffices to lower bound $m(\hat{X})$. {\bf By Lemma~\ref{lem:PQ_from_FD_main}}, $(F, D)$ fixes the two spectral invariants $(P, Q)$. Thus the worst-case (i.e., the smallest possible $m$ consistent with the data) is determined by the constrained optimization problem
\begin{eqnarray}
\min_{\hat{X}\in U(d)} m(\hat{X}),
\label{eq:opt_problem_main}
\end{eqnarray}
subject to $\abs{\tr{(\hat{X})}}=P$ and $\abs{\tr{(\hat{X}^2)}+(\tr{(\hat{X})})^2}=Q$. We solve this optimization exactly.

\medskip
\paragraph*{\bf Step 1: Reduction to eigenvalues.}
Since $\hat{X}$ is unitary, it is normal and diagonalizable: $\hat{X} = \hat{V}\hat{\Lambda}\hat{V}^{\dagger}$ with $\hat{\Lambda} = \mathrm{diag}(\lambda_1, \lambda_2, \ldots, \lambda_d)$ and $\abs{\lambda_j}=1$. Both constraints in Eq.~(\ref{eq:opt_problem_main}) depend only on the eigenvalues (they are symmetric functions of $\lambda_j$), and $m(\hat{X})$ depends only on the numerical range $W(\hat{X})=\mathrm{conv}(\lambda_1,\lambda_2,\ldots,\lambda_d)$, hence only on the eigenvalues as well (Eq.~(\ref{eq:numerical_range_convex_hull_main})). Thus, the problem reduces to choosing $d$ points on the unit circle with the same constraints. Here, if $0 \in W(\hat{X})$, then $m(\hat{X})=0$ and Eq.~(\ref{eq:m_lower_cFD_main}) holds because $c(F, D) \ge 0$ by definition. Hence we may assume the optimizer satisfies $m(\hat{X}) > 0$.

\medskip
\paragraph*{\bf Step 2: The closest point lies on a chord and fixes a canonical angle.} Because $W(\hat{X})$ is a compact convex polygon not containing the origin, the closest point to the origin lies on the boundary. For a polygon, the closest boundary point is either a vertex or lies in the interior of an edge; in either case, it lies on a line segment connecting two extreme points of the spectrum. Thus, there exist two eigenvalues $\lambda_a$, $\lambda_b$ and a parameter $t \in [0,1]$ such that
\begin{eqnarray}
m(\hat{X}) = \abs{t\lambda_a + (1-t)\lambda_b}.
\label{eq:closest_point_on_edge_main}
\end{eqnarray}
Multiplying $\hat{X}$ by a global phase rotates all eigenvalues by the same phase, rotates $W(\hat{X})$ rigidly, and does not change $m(\hat{X})$, $P$, and/or $Q$. We may thus rotate so that the point achieving Eq.~(\ref{eq:closest_point_on_edge_main}) lies on the positive real axis, i.e., equals $m(\hat{X})$ as a real number. With this phase choice, the supporting line at the closest point is the vertical line $\textrm{Re}(z)=m(\hat{X})$, and all eigenvalues satisfy $\textrm{Re}(\lambda_j)\ge m(\hat{X})$. In particular, the two eigenvalues defining the edge in Eq.~(\ref{eq:closest_point_on_edge_main}) must satisfy $\textrm{Re}(\lambda_a)=\textrm{Re}(\lambda_b)=m(\hat{X})$. Writing $\lambda_a=e^{i\beta}$ and $\lambda_b=e^{-i\beta}$ for some $\beta\in[0,\pi]$, we obtain
\begin{eqnarray}
m(\hat{X}) = \cos(\beta),
\label{eq:m_cos_beta_main}
\end{eqnarray}
where $0 \le \beta < \frac{\pi}{2}$. Here, the strict inequality $\beta < \pi/2$ follows from the assumption $m(\hat{X}) > 0$. Thus, all remaining eigenvalues must also satisfy $\textrm{Re}(\lambda_j)\ge \cos(\beta)$, and hence, lie on the arc $\{e^{i\theta}:\theta\in[-\beta,\beta]\}$.

\medskip
\paragraph*{\bf Step 3: An optimizer can be taken to have a conjugate-pair spectrum.} In this work $d=2^n$, so for $d \ge 4$ the dimension is assumed to be even. Since only the magnitudes $P$ and $Q$ are fixed, we may choose a representative optimizer that realizes the moment constraints with the maximal alignment, i.e., with vanishing net
imaginary parts. Concretely, we may take an optimizer whose spectrum is invariant under complex conjugation, so that the eigenvalues occur in conjugate pairs. After the phase choice of {\bf Step 2}, an optimizer may thus be taken to have the form
\begin{eqnarray}
\mathrm{Spec}(\hat{X}_{\star}) = \{e^{i\alpha_1}, e^{-i\alpha_1}, e^{i\alpha_2}, e^{-i\alpha_2}, \ldots, e^{i\alpha_{(d-2)/2}}, e^{-i\alpha_{(d-2)/2}}, e^{i\beta}, e^{-i\beta}\},
\label{eq:symmetric_spectrum_main}
\end{eqnarray}
where $0\le \alpha_k\le \beta<\pi/2$ for all $k$. With this reduction, $m(\hat{X}_{\star})=\cos(\beta)$.

\medskip
\paragraph*{\bf Step 4: Express the constraints in real algebraic variables.} Let us now define
\begin{eqnarray}
b := \cos(\beta), \quad x_k &:=& \cos(\alpha_k), k=1, 2, \ldots, \frac{d-2}{2}.
\end{eqnarray}
Then, $1 \ge x_k \ge b > 0$. For the spectrum in Eq.~(\ref{eq:symmetric_spectrum_main}), the trace is real and equals
\begin{eqnarray}
\tr{(\hat{X}_{\star})} = 2\cos(\beta)+2\sum_{k=1}^{(d-2)/2}\cos(\alpha_k) = 2b+2\sum_{k=1}^{(d-2)/2}x_k.
\label{eq:trace_bx_main}
\end{eqnarray}
Since $P=\abs{\tr{(\hat{X}_{\star})}}$ and the right-hand side is nonnegative in the regime of interest, we may write
\begin{eqnarray}
P = 2b+2\sum_{k=1}^{(d-2)/2}x_k.
\label{eq:sum_xk_main}
\end{eqnarray}
Next, the squared trace is $(\tr{(\hat{X}_{\star})})^2=P^2$, and
\begin{eqnarray}
\tr{\hat{X}_{\star}^2}
&=&
2\cos(2\beta)+2\sum_{k=1}^{(d-2)/2}\cos(2\alpha_k)
\nonumber\\
&=&
2(2b^2-1)+2\sum_{k=1}^{(d-2)/2}(2x_k^2-1)
\nonumber\\
&=&
4b^2+4\sum_{k=1}^{(d-2)/2}x_k^2-d.
\label{eq:trace2_bx_main}
\end{eqnarray}
Therefore,
\begin{eqnarray}
\tr{\hat{X}_{\star}^2}+(\tr{\hat{X}_{\star}})^2
=
P^2+4b^2+4\sum_{k=1}^{(d-2)/2}x_k^2-d.
\label{eq:moment2_bx_main}
\end{eqnarray}
In the fault-tolerance regime of interest (near the identity), the quantity in \eqref{eq:moment2_bx_main} is nonnegative,
and for admissible $(F,D)$ arising from a unitary it is consistent to take
\begin{eqnarray}
Q = P^2+4b^2+4\sum_{k=1}^{(d-2)/2}x_k^2-d.
\label{eq:sum_xk2_main}
\end{eqnarray}

\medskip
\paragraph*{\bf Step 5: Cauchy-Schwarz reduction and an explicit lower bound on $b$.} By the Cauchy-Schwarz inequality applied to the vector $(x_1,x_2,\ldots,x_{(d-2)/2})$,
\begin{eqnarray}
\sum_{k=1}^{(d-2)/2}x_k^2 \ge \frac{2}{d-2}\left(\sum_{k=1}^{(d-2)/2}x_k\right)^2.
\label{eq:CS_xk_main}
\end{eqnarray}
By substituting Eq.~(\ref{eq:sum_xk_main}) and Eq.~(\ref{eq:sum_xk2_main}) into Eq.~(\ref{eq:CS_xk_main}), we obtain
\begin{eqnarray}
\frac{Q+d-P^2}{4}-b^2 \ge \frac{2}{d-2}\left(\frac{P}{2}-b\right)^2.
\label{eq:ineq_before_quadratic_main}
\end{eqnarray}
A direct rearrangement of Eq.~(\ref{eq:ineq_before_quadratic_main}) yields the quadratic inequality
\begin{eqnarray}
4d b^2 - 8P b + \bigl( dP^2 - d^2 + 2d-(d-2)Q \bigr) \le 0.
\label{eq:quadratic_ineq_b_main}
\end{eqnarray}
The discriminant of the corresponding quadratic polynomial equals
\begin{eqnarray}
\Delta = 16(d-2)\left(dQ+d^2-(d+2)P^2\right).
\label{eq:disc_main}
\end{eqnarray}
Hence the smaller root is
\begin{eqnarray}
b_{-} = \frac{8P-\sqrt{\Delta}}{8d} = \frac{P}{d} - \frac{1}{2d}\sqrt{(d-2)\left(dQ+d^2-(d+2)P^2\right)}.
\label{eq:bminus_main}
\end{eqnarray}
Since \eqref{eq:quadratic_ineq_b_main} holds only between the two roots, any feasible optimizer must satisfy $b\ge b_{-}$. Finally, since $m(\hat{X})=\cos(\beta)=b$ when $m(\hat{X})>0$, we conclude that
\begin{eqnarray}
m(\hat{X}) \ge [b_{-}]_+.
\label{eq:m_lower_bminus_main}
\end{eqnarray}
By combining Eq.~(\ref{eq:m_lower_bminus_main}) with Eq.~(\ref{eq:def_cFD}), we can prove Eq.~(\ref{eq:m_lower_cFD_main}). Eq.~(\ref{eq:diamond_upper_cFD_main}) follows immediately from {\bf Lemma~\ref{lem:diamond_unitary_overlap_main}}.

\medskip
\paragraph*{\bf Tightness.} The Cauchy-Schwarz inequality \eqref{eq:CS_xk_main} is saturated if and only if $x_1=x_2=\cdots=x_{(d-2)/2}$. Thus, the equality in Eq.~(\ref{eq:m_lower_cFD_main}) is achieved by a two-angle spectrum with a single bulk angle $\alpha$: the set
\begin{eqnarray}
b = [b_{-}]_+, \quad a = \frac{P-2b}{d-2}.
\end{eqnarray}
Define $\alpha=\arccos(a)$ and $\beta=\arccos(b)$, and let
\begin{eqnarray}
\hat{X}_{\star} = \mathrm{diag}(\underbrace{e^{i\alpha},e^{-i\alpha},\ldots,e^{i\alpha},e^{-i\alpha}}_{d-2},e^{i\beta},e^{-i\beta}).
\label{eq:Xstar_explicit_main}
\end{eqnarray}
Then, $\hat{X}_{\star} \in U(d)$ is consistent with the same $(P, Q)$, and hence the same $(F, D)$. Thus, its numerical range is the convex hull of points on the arc $[-\beta, \beta]$ with the extremal points $e^{i\beta}$ and $e^{-i\beta}$. Therefore, $m(\hat{X}_{\star})=[b]_+=c(F,D)$ and the bound is attained.
\end{proof}

{\bf Theorem~\ref{thm:main_d4}} should be read as a dimension-dependent ``moment upgrade'': for coherent unitary errors in any $d \ge 4$, the second-moment fluctuation information encoded in $D$ converts average benchmarking data into a sharp worst-case certificate. The bound is tight because it matches the exact solution of the constrained spectral extremal problem in Eq.~(\ref{eq:opt_problem_main}) under the moment constraints fixed by $(F,D)$.

\subsection{Physical interpretation: what $c(F,D)$ means and why $D$ is the right observable}\label{subsec:physical_interpretation_cFD}

The quantity $c(F,D)$ is not an algebraic complement; it has a clean operational and geometric meaning.

\begin{itemize}
\item {\bf $c(F,D)$ as a certified minimum overlap.} By construction, $m(\hat{X}) = \min_{\psi}\abs{\bra{\psi}\hat{X}\ket{\psi}}$ is the guaranteed overlap amplitude between an input and its image under the error unitary. Thus, $m(\hat{X})$ answers the worst-case question: is there an input state whose phase evolution under the implemented gate becomes almost orthogonal to the target evolution? {\bf Theorem~\ref{thm:main_d4}} states that from the identified $(F, D)$, we can certify that no input state suffers overlap below $c(F, D)$. Equivalently, $\sqrt{1 - c(F,D)^2}$ is the largest trace-distance distortion that is compatible with the measured moments.

\item {\bf $c(F,D)$ as the ``radius'' of the numerical range.} Since $\bra{\psi}\hat{X}\ket{\psi}$ ranges over the numerical range $W(\hat{X})$, $m(\hat{X})$ is the distance from the origin to $W(\hat{X})$. Hence, $c(F,D)$ is a certified inner radius of $W(\hat{X})$ obtained by the Haar-moment data. In the coherent regime, this is precisely the missing geometric control that $F$ alone cannot provide: $F$ fixes only a coarse second-moment projection (essentially $\abs{\tr{(\hat{X})}}$), while $D$ injects the fourth-moment information needed to constrain the shape of $W(\hat{X})$.

\item {\bf Why $D$ is indispensable when $u=1$.} The unitarity is an excellent measure at detecting departures from coherence: it measures how much the channel deviates from being unitary. But in the coherent-dominated regime, every unitary error has $u=1$, so the unitarity cannot resolve which coherent error one has. This is not merely a lack of improvement: in that regime, the unitarity-assisted bound reduces to Eq.~(\ref{eq:kueng_u1_loose}), and even can be looser than the fidelity-only bound. By contrast, $D$ measures fluctuations of $f(\psi)=\abs{\bra{\psi}\hat{X}\ket{\psi}}^2$ across $\psi$, and therefore remains sensitive even when $u$ is maximally saturated. In short, the unitarity tells us that we are in the coherent regime; whereas, the fidelity deviation tells us how dangerous that the coherence actually is.

\item {\bf ``Same experiment, more truth.''} A final practical point is that $D$ is not an exotic new primitive requiring full tomography. It is a moment of the same state-dependent fidelity distribution that underlies $F$. Thus, once one is already investing experimental effort to estimate $F$ accurately, the additional information carried by $D$ can be harvested from the same sampling process. From a fault-tolerance standpoint, this is precisely the upgrade one wants: a strictly sharper worst-case certification, purchased not by a leap to tomography, but by extracting a higher moment from data that experiments already generate.
\end{itemize}

Taken all together, {\bf Theorem~\ref{thm:main_d4}} and its interpretation provide a concrete resolution of the coherent-error dilemma: when coherence dominates, $F$ alone is a dangerously lossy proxy and $u$ saturates to a constant, but the pair $(F, D)$ retains the spectral information required to \emph{sharpen} worst-case error assessment for $n \ge 2$ qubit gates, which are the critical entangling primitives of any universal gate set. 

\section{Examples: two-qubit CZ gate, Toffoli gate, and quantum Fourier transform}

We illustrate {\bf Theorem~\ref{thm:main_d4}} on an experimentally relevant coherent error model: a controlled-phase (CZ-like) gate, a decomposed Toffoli gate, and $n$-qubit QFT circuit.

\subsection{Example $1$: a CZ-like coherent phase error}\label{subsec:cz_example}

Firstly, we consider a controlled-phase (CZ-like) gate with a systematic phase miscalibration on the $\ket{11}$ component.  The model cleanly exhibits the $\sqrt{r}$ coherent amplification in $d_{\diamond}$, the saturation $u=1$, and the sharpening obtained by incorporating $D$. In addition, this example provides an end-to-end demonstration of how the direct $(F, D)$ estimation protocol (Sec.~\ref{sec:FD_measurement}) produces finite-sample data that can be inserted into the bounds. In Fig.~\ref{fig:cz_comparison}, we overlay protocol-based point estimates of the $F$-only and $(F, D)$-assisted bounds on top of the corresponding theoretical curves.

\begin{figure}[t]
  \centering
  \includegraphics[width=0.90\linewidth]{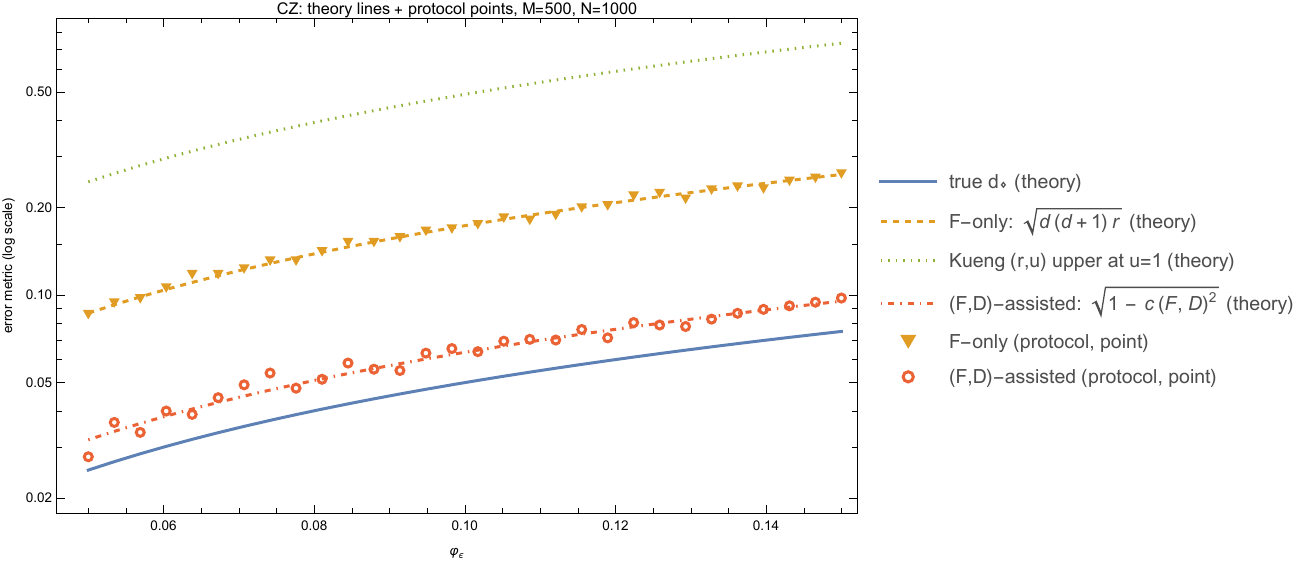}
  \caption{\textbf{CZ-like coherent phase error: exact diamond distance vs.\ upper bounds, with protocol-based estimates.} We consider the unitary error $\hat{X}=\mathrm{diag}(1,1,1,e^{i\phi_{\epsilon}})$ arising from a coherent phase miscalibration of a controlled-phase gate. The solid curve shows the exact worst-case error $d_{\diamond}(\mathcal{X},\mathcal{I})=\abs{\sin(\phi_{\epsilon}/2)}$ from Eq.~(\ref{eq:cz_diamond_exact_main}). The dashed curve shows the fidelity-only bound $\sqrt{20r}$ at $d=4$. The dotted curve shows Kueng's $(r,u)$-based upper bound in Eq.~(\ref{eq:kueng_ru_bound_recalled_main}) evaluated at $u=1$, which is provably looser than $\sqrt{20r}$ by the factor $2\sqrt{2}$ in this coherent regime. The dash-dotted curve shows our moment-assisted bound $\sqrt{1-c(F,D)^2}$ from {\bf Theorem~\ref{thm:main_d4}}, which substantially sharpens worst-case certification by exploiting the fluctuation information encoded in $D$. In addition, the markers overlay finite-sample, protocol-based estimates obtained by simulating the direct $(F, D)$ estimation procedure of Sec.~\ref{sec:FD_measurement}: for each $\phi_{\epsilon}$ we estimate $(\hat{F}, \hat{D})$ from $M=500$ random input states and $N=1000$ projective-test shots per state, and then evaluate the $F$-only and $(F,D)$-assisted bounds at $(\hat{F},\hat{D})$ as in Eq.~(\ref{eq:cz_protocol_bounds_main}). The vertical axis is shown on a logarithmic scale to highlight the separation between average-case and worst-case estimates in the small-error regime relevant to fault tolerance.}
  \label{fig:cz_comparison}
\end{figure}

\medskip
\paragraph*{\bf Model.}---Let us consider
\begin{eqnarray}
U_{\mathrm{ideal}} = \mathrm{diag}(1,1,1,e^{i\phi}), \quad U_{\mathrm{exp}} = \mathrm{diag}(1,1,1,e^{i(\phi+\phi_{\epsilon})}),
\label{eq:cz_model_unitaries_main}
\end{eqnarray}
where $\phi_{\epsilon}$ is a coherent over-rotation of the controlled phase.  The effective error unitary is
\begin{eqnarray}
\hat{X} = \hat{U}_{\mathrm{ideal}}^{\dagger} \hat{U}_{\mathrm{exp}} = \mathrm{diag}(1,1,1,e^{i\phi_{\epsilon}}).
\label{eq:cz_error_unitary_main}
\end{eqnarray}

\medskip
\paragraph*{\bf Closed forms for $(F,D)$ from the trace-moment formulas.}---We compute $(F,D)$ by inserting the traces of Eq.~(\ref{eq:cz_error_unitary_main}) into {\bf Proposition~\ref{prop:FD_closed_form_unitary}}. First,
\begin{eqnarray}
\tr{(\hat{X})} = 3+e^{i\phi_{\epsilon}}, \quad \abs{\tr{(\hat{X})}}^2 = 16-12\sin^2\left(\frac{\phi_{\epsilon}}{2}\right).
\label{eq:cz_trace_and_norm_main}
\end{eqnarray}
Using $F=\frac{4+\abs{\tr{(\hat{X})}^2}}{20}$ at $d=4$, we obtain
\begin{eqnarray}
F(\phi_{\epsilon}) = 1-\frac{3}{5}\sin^2\left(\frac{\phi_{\epsilon}}{2}\right),
\quad
r(\phi_{\epsilon}) = 1 - F(\phi_{\epsilon}) = \frac{3}{5}\sin^2\left(\frac{\phi_{\epsilon}}{2}\right).
\label{eq:cz_F_and_r_closed_main}
\end{eqnarray}
For $D$, it is simplest to compute the second moment $E_2:=D^2+F^2$ using Eq.~(\ref{eq:E2_d4_main}):
\begin{eqnarray}
E_2 = \frac{56+24P^2+Q^2}{840},
\quad
P = \abs{\tr{(\hat{X})}},
\quad
Q = \abs{\tr{(\hat{X}^2)} + \bigl(\tr{(\hat{X})}\bigr)^2}.
\label{eq:E2_in_PQ_main}
\end{eqnarray}
For Eq.~(\ref{eq:cz_error_unitary_main}), we have
\begin{eqnarray}
\tr{(\hat{X}^2)} = 3+e^{i2\phi_{\epsilon}}.
\label{eq:cz_trace2_main}
\end{eqnarray}
A direct evaluation of the full fourth-moment numerator in {\bf Proposition~\ref{prop:FD_closed_form_unitary}} at $d=4$ simplifies remarkably when expressed in $x:=\sin^2(\phi_{\epsilon}/2)$.
Using $\cos(\phi_{\epsilon})=1-2x$ and $\cos(2\phi_{\epsilon})=1-8x+8x^2$, one finds
\begin{eqnarray}
\int d\psi f(\psi)^2 = 1-\frac{6}{5}x+\frac{16}{35}x^2,
\quad
F^2 = \left(1-\frac{3}{5}x\right)^2 = 1-\frac{6}{5}x+\frac{9}{25}x^2.
\label{eq:cz_second_moment_simplified_main}
\end{eqnarray}
Therefore, we attain
\begin{eqnarray}
D(\phi_{\epsilon}) = \frac{1}{5}\sqrt{\frac{17}{7}}\sin^2\left(\frac{\phi_{\epsilon}}{2}\right).
\label{eq:cz_D_closed_main}
\end{eqnarray}
In particular, as $\phi_{\epsilon}\to 0$, the average infidelity scales as $r(\phi_{\epsilon})\sim \frac{3}{20}\phi_{\epsilon}^2$, while $D(\phi_{\epsilon})\sim \frac{1}{20}\sqrt{\frac{17}{7}}\phi_{\epsilon}^2$, so both are quadratic in the coherent miscalibration angle (but encode different spectral information).

\medskip
\paragraph*{\bf Exact worst-case error.}---Since the error is unitary, Lemma~\ref{lem:diamond_unitary_overlap_main} applies. Here, the numerical range of $\hat{X}=\mathrm{diag}(1,1,1,e^{i\phi_{\epsilon}})$ is the convex hull of $\{1,e^{i\phi_{\epsilon}}\}$ (the extra $1$'s do not change the hull), namely the chord between $1$ and $e^{i\phi_{\epsilon}}$. The distance from the origin to this chord is $\cos(\phi_{\epsilon}/2)$, and hence, we have
\begin{eqnarray}
m(\hat{X}) = \cos\left(\frac{\phi_{\epsilon}}{2}\right),
\quad
d_{\diamond}(\mathcal{X},\mathcal{I}) = \sqrt{1-m(\hat{X})^2} = \left| \sin\left(\frac{\phi_{\epsilon}}{2}\right) \right|.
\label{eq:cz_diamond_exact_main}
\end{eqnarray}
By combining Eq.~(\ref{eq:cz_F_and_r_closed_main}) and Eq.~(\ref{eq:cz_diamond_exact_main}), we find the canonical coherent scaling
\begin{eqnarray}
d_{\diamond}(\mathcal{X},\mathcal{I}) = \sqrt{\frac{5}{3}r(\phi_{\epsilon})},
\label{eq:cz_diamond_vs_r_main}
\end{eqnarray}
which makes explicit how a very small average infidelity can still correspond to a worst-case error of order $\sqrt{r}$.

\medskip
\paragraph*{\bf Three driven upper bounds.}
We compare three standard ways to upper bound $d_{\diamond}$ from experimentally accessible data:

\emph{Fidelity-only conversion.} For $d=4$, the standard conversion yields
\begin{eqnarray}
d_{\diamond}(\mathcal{X},\mathcal{I}) \le \sqrt{d(d+1)r} = \sqrt{20r}.
\label{eq:cz_bound_Fonly_main}
\end{eqnarray}

\emph{Kueng's $(r,u)$ bound and the $u=1$ pitfall.} For unital noise with no leakage, Eq.~(\ref{eq:kueng_ru_bound_recalled_main}) gives an upper bound in terms of $(r,u)$. In the present coherent model $u=1$ exactly, so it collapses to
\begin{eqnarray}
d_{\diamond}(\mathcal{X},\mathcal{I}) \le \frac{d}{\sqrt{2}}\sqrt{d(d+1)r} = 2\sqrt{2}\sqrt{20r},
\label{eq:cz_bound_kueng_u1_main}
\end{eqnarray}
which is uniformly looser than \eqref{eq:cz_bound_Fonly_main} by the factor $2\sqrt{2}$.

\emph{Our $(F,D)$-assisted bound.}
Finally, {\bf Theorem~\ref{thm:main_d4}} yields
\begin{eqnarray}
d_{\diamond}(\mathcal{X},\mathcal{I}) \le \sqrt{1-c(F,D)^2},
\label{eq:cz_bound_cFD_main}
\end{eqnarray}
where $c(F, D)$ is computed from the observed $(F, D)$ via Eq.~(\ref{eq:def_cFD}) and {\bf Lemma~\ref{lem:PQ_from_FD_main}}. In contrast to the unitarity $u$, the deviation $D$ remains informative on the coherent manifold, and in this model it substantially tightens the certification of the worst-case error (see Fig.~\ref{fig:cz_comparison}).

\medskip
\paragraph*{\bf Protocol-based estimation of $(F,D)$ and data-driven bounds.}---We now connect the bounds above to the direct estimation protocol of Sec.~\ref{sec:FD_measurement}. For each random trial $i=1, \ldots, M=500$, sample a random pure state $\ket{\psi_i}$ (e.g., by preparing $\ket{\psi_i}=\hat{V}_i\ket{0}$ with $\hat{V}_i$ drawn from a unitary $4$-design), apply the interaction-picture error channel $\mathcal{X}(\hat{\rho})=\hat{X}\hat{\rho}\hat{X}^{\dagger}$, and perform a projective test onto $\ket{\psi_i}$.  The corresponding survival probability is
\begin{eqnarray}
f_i = f_{\mathcal{X}}(\psi_i) = \abs{\bra{\psi_i}\hat{X}\ket{\psi_i}}^2.
\label{eq:cz_protocol_fi_main}
\end{eqnarray}
Using $N=1000$ repeated shots for each $\ket{\psi_i}$, let $K_i$ be the number of ``pass'' outcomes. Conditioned on $f_i$, we have the binomial model
\begin{eqnarray}
K_i \sim \mathrm{Bin}(N,f_i),
\quad
\hat{f}_i = \frac{K_i}{N}.
\label{eq:cz_protocol_binomial_main}
\end{eqnarray}
Then $\hat{f}_i$ is an unbiased estimator of $f_i$:
\begin{eqnarray}
\mathbb{E}[\hat{f}_i|f_i] = \frac{1}{N}\mathbb{E}[K_i|f_i] = \frac{1}{N}Nf_i = f_i.
\label{eq:cz_protocol_f_unbiased_main}
\end{eqnarray}
To estimate the second moment without bias from shot noise, we use the factorial-moment identity $\mathbb{E}[K_i(K_i-1)|f_i]=N(N-1)f_i^2$, which follows by writing $K_i=\sum_{j=1}^N Z_{ij}$ with i.i.d. Bernoulli$(f_i)$ variables $Z_{ij}$ and expanding $K_i(K_i-1)=\sum_{j\neq k} Z_{ij}Z_{ik}$.  This gives an unbiased estimator of $f_i^2$:
\begin{eqnarray}
\widehat{f_i^2} = \frac{K_i(K_i-1)}{N(N-1)} = \frac{N\hat{f}_i^2-\hat{f}_i}{N-1},
\quad
\mathbb{E}[\widehat{f_i^2} \;|\; f_i] = f_i^2.
\label{eq:cz_protocol_f2_unbiased_main}
\end{eqnarray}
The Haar moments $F=\int d\psi f(\psi)$ and $E_2=\int d\psi f(\psi)^2$ are then estimated by
\begin{eqnarray}
\hat{F} = \frac{1}{M}\sum_{i=1}^M \hat{f}_i,
\quad
\widehat{E_2} = \frac{1}{M}\sum_{i=1}^M \widehat{f_i^2}.
\label{eq:cz_protocol_F_E2_main}
\end{eqnarray}
To remove the finite-$M$ bias in estimating $F^2$, we use the cross-average
\begin{eqnarray}
\widehat{F^2} = \frac{1}{M(M-1)}\sum_{i\neq j}\hat{f}_i\hat{f}_j = \frac{\left(\sum_{i=1}^M \hat{f}_i\right)^2-\sum_{i=1}^M \hat{f}_i^2}{M(M-1)}.
\label{eq:cz_protocol_F2_main}
\end{eqnarray}
Conditioned on the sampled states, the outcomes for different $i$ are independent, hence $\mathbb{E}[\hat{f}_i\hat{f}_j \;|\; \psi_i, \psi_j]=f_i f_j$ for $i\neq j$, and averaging over the random draws yields $\mathbb{E}[\widehat{F^2}]=F^2$.  Therefore, the shot-noise-corrected estimator
\begin{eqnarray}
\widehat{D^2} = \widehat{E_2} - \widehat{F^2}
\label{eq:cz_protocol_D2_main}
\end{eqnarray}
is unbiased for $D^2=E_2-F^2$ in the idealized model where $\ket{\psi_i}$ are exact $4$-design samples and each projective test implements $\ketbra{\psi_i}{\psi_i}$ exactly.  In finite data, $\widehat{D^2}$ can fluctuate slightly negative, in which case we report
\begin{eqnarray}
\hat{D} = \sqrt{\max\{\widehat{D^2},0\}}.
\label{eq:cz_protocol_Dhat_main}
\end{eqnarray}
Finally, from the same dataset we obtain data-driven certificates by evaluating the bounds at the estimated moments:
\begin{eqnarray} 
\hat{r} = 1 - \hat{F},
\quad
d_{\diamond}(\mathcal{X},\mathcal{I})
\le
\sqrt{20\hat{r}},
\qquad
d_{\diamond}(\mathcal{X},\mathcal{I}) \le \sqrt{1-c(\hat{F},\hat{D})^2}.
\label{eq:cz_protocol_bounds_main}
\end{eqnarray}
These are the two sets of protocol-based points overlaid in Fig.~\ref{fig:cz_comparison}.

\medskip
\paragraph*{\bf Numerical comparison.}---Fig.~\ref{fig:cz_comparison} compares the exact value in Eq.~(\ref{eq:cz_diamond_exact_main}) against the three bounds in Eq.~(\ref{eq:cz_bound_Fonly_main}), Eq.~(\ref{eq:cz_bound_kueng_u1_main}), and Eq.~(\ref{eq:cz_bound_cFD_main}) over a range of $\phi_{\epsilon}$. The separation between the average-case and worst-case error is most pronounced in the small-error regime relevant to fault tolerance. In that regime, incorporating $D$ allows the $(F,D)$-assisted bound to track the exact coherent worst-case scaling far more closely than bounds derived from $F$ alone, and dramatically improves over the unitarity-assisted bound which is forced to saturate at $u=1$. Moreover, the protocol-based point estimates computed from finite samples of $\hat{F}$ and $\hat{D}$ concentrate around the corresponding theoretical curves, directly illustrating how the proposed moment-assisted certification can be realized from experimentally accessible survival-probability data.

{\bf Theorem~\ref{thm:main_d4}} and the worked example above establish the main conceptual point: in the coherent regime, the worst-case assessment hinges on the spectral geometry (via $m(\hat{X})$), and the fidelity deviation $D$ supplies exactly the additional experimentally accessible moment information needed to tightly control that geometry for two-qubit gates.

\subsection{Example $2$: a decomposed Toffoli gate with coherent over-rotation}\label{subsec:toffoli_example}

We now present a three-qubit worked example showing that the same coherent worst-case certification phenomenon persists beyond two-qubit gates. In particular, we consider a standard Clifford+$T$ decomposition of the Toffoli gate~\cite{Selinger2013,amy2013meet} and assume a single coherent over-rotation parameter affects \emph{every} primitive in the decomposition (all CNOTs and all single-qubit gates). The resulting error channel is unitary, hence unitarity saturates at $u=1$, and Kueng's $(r,u)$-based worst-case bound becomes dimension-amplified and extremely loose. In contrast, the $(F, D)$-assisted bound based on $c(F,D)$ remains informative and tracks the true diamond distance far more closely.

\begin{figure}[t]
  \centering
  \includegraphics[width=0.90\linewidth]{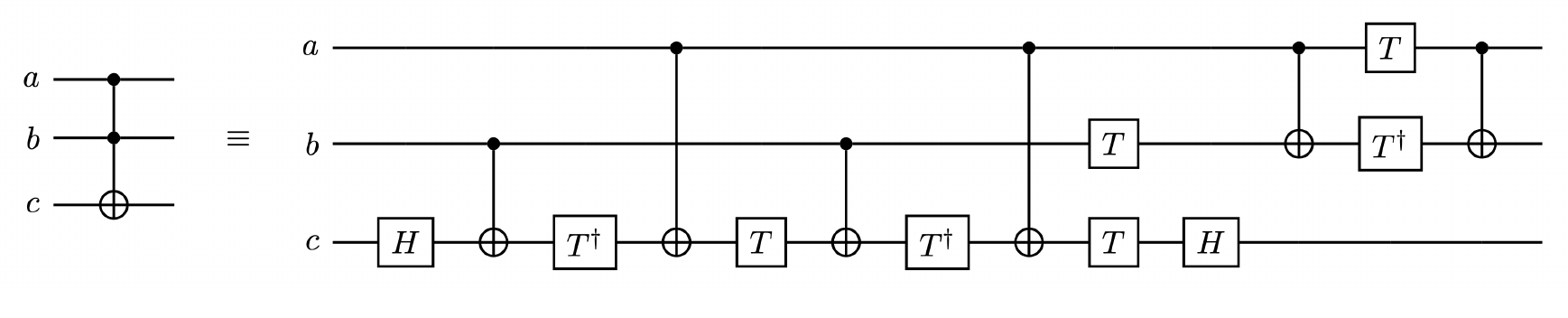}
  \caption{\textbf{A standard decomposition of Toffoli gate.} $6$ CNOTs, $2$ Hadamard, and $7$ $\hat{T}$ (or $\hat{T}^\dagger$)~\cite{amy2013meet}.}
  \label{fig:Toffoli_decomp}
\end{figure}

\medskip
\paragraph*{\bf Ideal gate and a standard decomposition.}---Let qubits $1,2$ be controls and qubit $3$ be the target.  Let $\hat{U}_{\mathrm{Tof}}$ denote the ideal Toffoli (CCNOT) unitary. We use the following well-known decomposition into $6$ CNOTs and single-qubit Clifford+$T$ gates~\cite{fowler2004constructing}:
\begin{eqnarray}
\hat{U}_{\mathrm{Tof}} &=& \hat{\mathrm{CNOT}}_{1\rightarrow 2} \hat{T}_{2}^{\dagger} \hat{T}_{1} \hat{H}_{3} \hat{\mathrm{CNOT}}_{1\rightarrow 2} \hat{T}_{3} \hat{T}_{2} \nonumber\\
&& \hat{\mathrm{CNOT}}_{1\rightarrow 3} \hat{T}_{3}^{\dagger} \hat{\mathrm{CNOT}}_{2\rightarrow 3} \hat{T}_{3} \hat{\mathrm{CNOT}}_{1\rightarrow 3} \hat{T}_{3}^{\dagger} \hat{\mathrm{CNOT}}_{2\rightarrow 3} \hat{H}_{3}.
\label{eq:toffoli_decomposition_ideal}
\end{eqnarray}
Here $\hat{H}$ is the Hadamard gate and $\hat{T}=\mathrm{diag}(1,e^{i\pi/4})$. See Fig.~\ref{fig:Toffoli_decomp}.

\medskip
\paragraph*{\bf Uniform coherent over-rotation model on all primitives.}---We assume that each primitive gate in Eq.~(\ref{eq:toffoli_decomposition_ideal}) is implemented with the same coherent over-rotation parameter $\epsilon$.

For the single-qubit primitives, we model a systematic angle miscalibration along the gate's native axis:
\begin{eqnarray}
\hat{T}^{(\epsilon)} &:=& \exp\left(-i\epsilon\frac{\hat{\sigma}_{z}}{2}\right)\hat{T},
\label{eq:T_overrot_model}
\\
(\hat{T}^{\dagger})^{(\epsilon)} &:=& \exp\left(-i\epsilon\frac{\hat{\sigma}_{z}}{2}\right)\hat{T}^{\dagger},
\label{eq:Tdg_overrot_model}
\\
\hat{H}^{(\epsilon)} &:=& \exp\left(-i\epsilon\frac{\hat{H}}{2}\right)\hat{H}.
\label{eq:H_overrot_model}
\end{eqnarray}
The choice of Eq.~(\ref{eq:H_overrot_model}) is natural because $\hat{H}$ is Hermitian and unitary, so it defines a rotation axis in $SU(2)$. This makes the coherent perturbation fully unitary while keeping $\hat{H}^{(\epsilon)} \to \hat{H}$ as $\epsilon\to 0$.

For each CNOT, we use an over-rotation model that is also unitary and experimentally motivated as a coherent calibration error on the controlled-$X$ generator:
\begin{eqnarray}
\hat{\Pi}_{1} &:=& \ket{1}\bra{1} = \frac{1}{2}\left(\hat{I}-\hat{\sigma}_{z}\right),
\label{eq:proj1_def_toffoli}
\\
\hat{\mathrm{CNOT}}_{c \rightarrow t}^{(\epsilon)} &:=& \exp\left(-i\epsilon\hat{\Pi}_{1}^{(c)}\otimes \hat{\sigma}_{x}^{(t)}\right)\hat{\mathrm{CNOT}}_{c\rightarrow t}.
\label{eq:CNOT_overrot_model}
\end{eqnarray}
By construction, all primitives remain unitary for all $\epsilon$.

Let $\hat{U}_{\mathrm{exp}}(\epsilon)$ be the implemented three-qubit unitary obtained by replacing every primitive in Eq.~(\ref{eq:toffoli_decomposition_ideal}) with its over-rotated version, i.e., Eqs.~(\ref{eq:T_overrot_model})--(\ref{eq:CNOT_overrot_model})). Define the effective error unitary
\begin{eqnarray}
\hat{X}(\epsilon) = \hat{U}_{\mathrm{Tof}}^{\dagger}\hat{U}_{\mathrm{exp}}(\epsilon).
\label{eq:toffoli_error_unitary}
\end{eqnarray}
Then, $\hat{X}(\epsilon) \in U(8)$ and the corresponding error channel $\mathcal{X}(\hat{\rho})=\hat{X}(\epsilon)\hat{\rho}\hat{X}(\epsilon)^{\dagger}$ is fully coherent. In particular, its unitarity satisfies $u(\mathcal{X})=1$ exactly for all $\epsilon$.

\medskip
\paragraph*{\bf Computing $(F,D)$ for $d=8$ from trace moments.} Since the error is unitary, {\bf Proposition~\ref{prop:FD_closed_form_unitary}} applies at $d=8$. Firstly, the average fidelity is given by
\begin{eqnarray}
F(\epsilon) = \frac{8+\abs{\tr{(\hat{X}(\epsilon))}}^{2}}{72},
\quad
r(\epsilon) = 1-F(\epsilon).
\label{eq:toffoli_F_r_d8}
\end{eqnarray}
The second moment $E_{2}:=D^{2}+F^{2}$ is
\begin{eqnarray}
E_{2}(\epsilon) = \frac{176 + 40\abs{\tr{(\hat{X}(\epsilon))}}^{2} + \abs{\tr{(\hat{X}(\epsilon)^{2})}}^{2} + \abs{\tr{(\hat{X}(\epsilon))}}^{4} + 2\mathrm{Re}\left(\tr{(\hat{X}(\epsilon)^{2})}\left(\tr{(\hat{X}(\epsilon))}^{\ast}\right)^{2}\right)}{7920},
\label{eq:toffoli_E2_d8}
\end{eqnarray}
and hence
\begin{eqnarray}
D(\epsilon)^{2} = E_{2}(\epsilon) - F(\epsilon)^{2}.
\label{eq:toffoli_D_d8_from_E2}
\end{eqnarray}
In this decomposed model, $\tr{(\hat{X}(\epsilon))}$ and $\tr{(\hat{X}(\epsilon)^{2})}$ can be evaluated from the explicit matrix product in Eq.~(\ref{eq:toffoli_error_unitary}).

\medskip
\paragraph*{\bf Exact worst-case error for the three-qubit coherent channel.}---Since $\hat{X}(\epsilon)$ is unitary, {\bf Lemma~\ref{lem:diamond_unitary_overlap_main}} gives
\begin{eqnarray}
d_{\diamond}(\mathcal{X},\mathcal{I}) = \sqrt{1-m(\hat{X}(\epsilon))^{2}}.
\label{eq:toffoli_diamond_exact_from_m}
\end{eqnarray}
Moreover, because $\hat{X}(\epsilon)$ is normal, its numerical range equals the convex hull of its eigenvalues. Writing $\mathrm{spec}(\hat{X}(\epsilon))=\{\lambda_{j}(\epsilon)\}_{j=1}^{8}$ with $\abs{\lambda_{j}(\epsilon)}=1$, we have
\begin{eqnarray}
m(\hat{X}(\epsilon)) = \min_{z\in \mathrm{conv}\{\lambda_{j}(\epsilon)\}}\abs{z}.
\label{eq:toffoli_m_from_convex_hull}
\end{eqnarray}
Eq.~(\ref{eq:toffoli_m_from_convex_hull}) yields the true diamond distance by a purely geometric computation in the complex plane (convex hull followed by distance-to-origin), which we implement numerically.

\medskip
\paragraph*{\bf Three driven upper bounds.}---We compare three ways to upper bound $d_{\diamond}(\mathcal{X},\mathcal{I})$.

\emph{Fidelity-only conversion.}
The standard conversion gives
\begin{eqnarray}
d_{\diamond}(\mathcal{X},\mathcal{I}) \le \sqrt{d(d+1)r} = \sqrt{72r}.
\label{eq:toffoli_bound_Fonly}
\end{eqnarray}

\emph{Kueng's $(r, u)$ bound and the $u=1$ pitfall.}
Because the error is unitary, $u=1$ and the $(r,u)$-based bound collapses to the dimension-amplified form
\begin{eqnarray}
d_{\diamond}(\mathcal{X},\mathcal{I}) \le \frac{d}{\sqrt{2}}\sqrt{d(d+1)r} = \frac{8}{\sqrt{2}}\sqrt{72r},
\label{eq:toffoli_bound_kueng_u1}
\end{eqnarray}
which is uniformly looser than \eqref{eq:toffoli_bound_Fonly} by the factor $8/\sqrt{2}$ in this coherent regime.

\begin{figure}[t]
  \centering
  \includegraphics[width=0.90\linewidth]{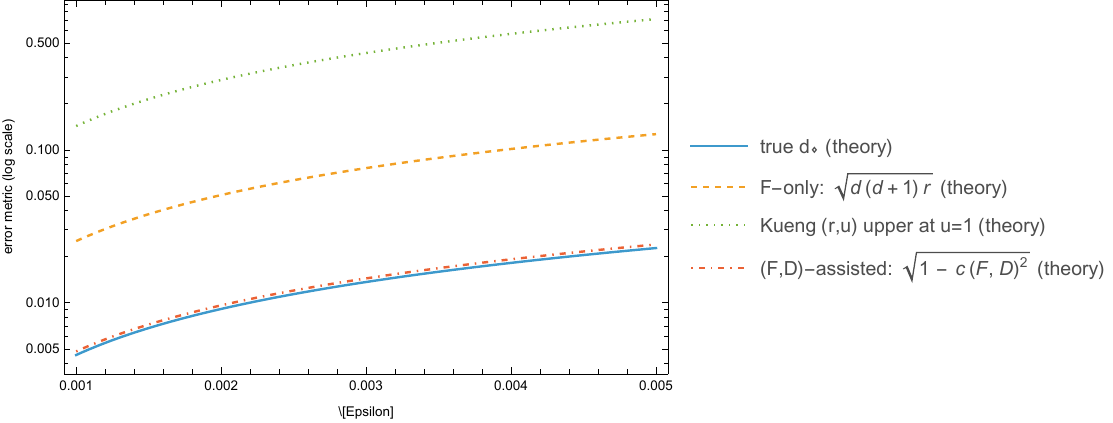}
  \caption{\textbf{Decomposed Toffoli with coherent primitive over-rotations: true diamond distance vs. upper bounds.} We consider a three-qubit Toffoli gate implemented by the standard $6$-CNOT Clifford+$T$ decomposition as in Eq.~(\ref{eq:toffoli_decomposition_ideal}) and/or Fig.~\ref{fig:Toffoli_decomp}. Every primitive (all CNOTs and all single-qubit gates) is assumed to carry the same coherent over-rotation parameter $\epsilon$ according to Eqs.~(\ref{eq:T_overrot_model})--(\ref{eq:CNOT_overrot_model}), yielding a unitary error channel with $u=1$. The solid curve shows the true diamond distance computed from the convex-hull characterization in Eq.~(\ref{eq:toffoli_m_from_convex_hull}). The dashed curve shows the fidelity-only bound $\sqrt{72r}$. The dotted curve shows Kueng's $(r, u)$-based upper bound evaluated at $u=1$, which is amplified by the factor $8/\sqrt{2}$ and becomes extremely loose in this coherent regime. The dash-dotted curve shows our $(F,D)$-assisted bound $\sqrt{1-c_{8}(F,D)^{2}}$, which uses the fluctuation information encoded in $D$ and yields substantially sharper worst-case certification. The vertical axis is logarithmic to emphasize separation in the low-error regime.}
  \label{fig:toffoli_comparison}
\end{figure}

\emph{$(F, D)$-assisted bound at $d=8$.} 
Using the $d\ge 4$ moment-assisted framework, we compute $P,Q$ from the observed $(F,D)$ and obtain a certified overlap $c(F,D)$, which yields $d_{\diamond}\le \sqrt{1-c(F,D)^{2}}$. For convenience, we record the $d=8$ specialization explicitly. First, $P$ and $Q$ are determined by $(F,D)$ via
\begin{eqnarray}
P^{2} &=& 72F-8,
\label{eq:toffoli_P2_from_F}
\\
Q^{2} &=& 7920(D^{2}+F^{2}) - 176 - 40P^{2}.
\label{eq:toffoli_Q2_from_FD}
\end{eqnarray}
Then, we can write
\begin{eqnarray}
c_{8}(F,D) = \Biggl[ \frac{P}{8} - \frac{1}{16} \sqrt{6\left(8Q+64-10P^{2}\right)} \Biggr]_{+}.
\label{eq:def_cFD_d8}
\end{eqnarray}
The $(F,D)$-assisted bound is
\begin{eqnarray}
d_{\diamond}(\mathcal{X},\mathcal{I}) \le \sqrt{1-c_{8}(F,D)^{2}}.
\label{eq:toffoli_bound_cFD}
\end{eqnarray}
In contrast to the unitarity $u$, the deviation $D$ remains nontrivial on the coherent manifold and thus provides additional spectral information beyond $F$ even when $u=1$.

\medskip
\paragraph*{\bf Numerical comparison.}---Fig.~\ref{fig:toffoli_comparison} compares the true value obtained from Eqs.~(\ref{eq:toffoli_diamond_exact_from_m})--(\ref{eq:toffoli_m_from_convex_hull}) against the three bounds in Eq.~(\ref{eq:toffoli_bound_Fonly}), Eq.~(\ref{eq:toffoli_bound_kueng_u1}), and Eq.~(\ref{eq:toffoli_bound_cFD}). As in the CZ example, the small-error regime exhibits the coherent $\sqrt{r}$ gap between average- and worst-case behavior. However, for a three-qubit gate the $u=1$ saturation pitfall is even more severe because the dimension factor in Eq.~(\ref{eq:toffoli_bound_kueng_u1}) grows with $d$. In this regime, incorporating $D$ allows the $(F,D)$-assisted bound to track the true worst-case scaling far more closely than $F$-only conversion and vastly improves over the $(r,u)$ bound which collapses at $u=1$.

\subsection{Example $3$: a coherent over-rotation model for the $n$-qubit QFT circuit}\label{subsec:qft_example}

We extend the worked-example to a genuinely multi-qubit unitary of practical interest: the $n$-qubit quantum Fourier transform (QFT). The consideration is twofold. First, the QFT circuit has a gate count that grows quadratically in $n$, so coherent miscalibrations can accumulate across many locations. Second, since the overall error is unitary, the unitarity satisfies $u=1$ and the $(r,u)$-based refinement loses the resolving power; this is precisely the regime where $D$ retains the nontrivial spectral information.

\begin{figure}[t]
  \centering
  \includegraphics[width=0.90\linewidth]{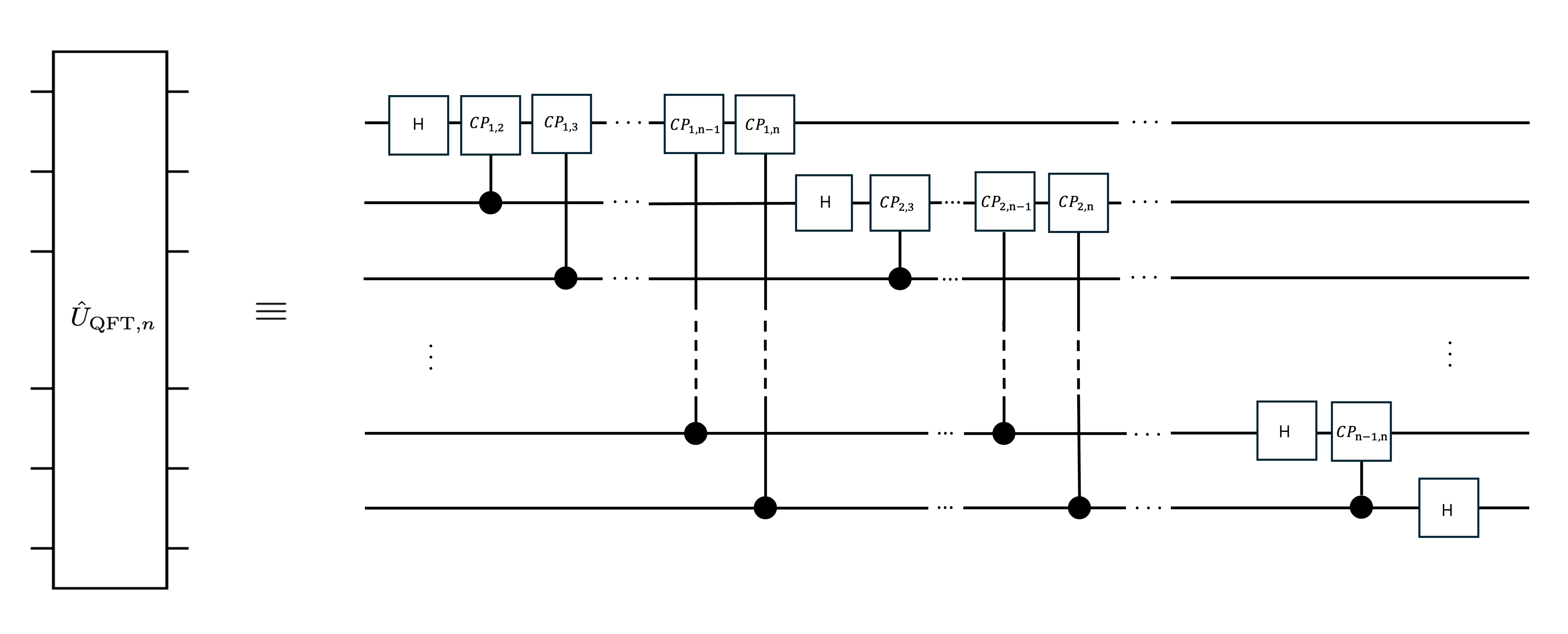}
  \caption{\textbf{A standard quantum Fourier transform circuit.}}
  \label{fig:QFT_circuit}
\end{figure}

\medskip
\paragraph*{\bf Ideal QFT circuit and a uniform coherent over-rotation model.}---Let $n\ge2$ and $d=2^n$.
We consider the standard QFT decomposition into Hadamard and controlled-phase gates (we omit the final bit-reversal swaps, which only permute computational basis labels and do not change the coherent-scaling mechanism we highlight).
We write the ideal QFT unitary as a product of elementary gates in chronological order (the first gate acts first):
\begin{eqnarray}
\hat{U}_{\mathrm{QFT},n} = \hat{G}_L \cdots \hat{G}_2 \hat{G}_1, \quad d=2^n,
\label{eq:qft_ideal_gate_list}
\end{eqnarray}
where the gate sequence is
\begin{eqnarray}
(\hat{G}_1,\ldots,\hat{G}_L) = \left( \hat{H}_1, \hat{CP}_{2,1}(\theta_{2,1}), \ldots, \hat{CP}_{n,1}(\theta_{n,1}), \hat{H}_2, \hat{CP}_{3,2}(\theta_{3,2}), \ldots, \hat{H}_n \right),
\label{eq:qft_gate_sequence}
\end{eqnarray}
with
\begin{eqnarray}
\theta_{k,j}=\frac{\pi}{2^{k-j}}, \quad 1\le j < k \le n.
\label{eq:qft_angles}
\end{eqnarray}
Here, $\hat{H}_j$ denotes the Hadamard on qubit $j$, and $\hat{CP}_{k,j}(\theta)$ is the controlled-phase gate acting on the two-qubit subspace of qubits $(k, j)$:
\begin{eqnarray}
\hat{CP}_{k,j}(\theta) = \hat{\mathds{1}}+(e^{i\theta}-1)\ketbra{11}{11}_{k,j}, \quad \ketbra{11}{11}_{k,j} = \ketbra{1}{1}_k\otimes\ketbra{1}{1}_j.
\label{eq:qft_cp_def}
\end{eqnarray}

Then, we assume a uniform coherent miscalibration parameter $\epsilon \in \mathbb{R}$ affecting every primitive in the same fractional way. Concretely, we model an over-rotation of the Hadamard about its own Hermitian axis and an over-rotation of each controlled phase about the projector $\ketbra{11}{11}$:
\begin{eqnarray}
\hat{H}_j^{(\epsilon)} &=& \exp\left(-i\frac{\epsilon}{2}\hat{H}_j\right)\hat{H}_j, \nonumber \\
\hat{CP}_{k,j}^{(\epsilon)}(\theta) &=& \exp\left(i\epsilon\theta\ket{11}\bra{11}_{k,j}\right)\hat{CP}_{k,j}(\theta) = \hat{CP}_{k,j}((1+\epsilon)\theta).
\label{eq:qft_overrot_model}
\end{eqnarray}
The implemented QFT is then
\begin{eqnarray}
\hat{U}_{\mathrm{QFT},n}^{(\epsilon)} = \hat{G}_L^{(\epsilon)}\cdots\hat{G}_2^{(\epsilon)}\hat{G}_1^{(\epsilon)},
\label{eq:qft_Uexp}
\end{eqnarray}
where each $\hat{G}_\ell^{(\epsilon)}$ is obtained from $\hat{G}_\ell$ by applying Eq.~(\ref{eq:qft_overrot_model}). The effective error unitary is
\begin{eqnarray}
\hat{X}_n(\epsilon) = \hat{U}_{\mathrm{QFT},n}^{\dagger}\hat{U}_{\mathrm{QFT},n}^{(\epsilon)} \in U(d),
\label{eq:qft_X_def}
\end{eqnarray}
and the coherent interaction-picture error channel is
\begin{eqnarray}
\mathcal{E}_n^{(\epsilon)}(\hat{\rho}) = \hat{X}_n(\epsilon)\hat{\rho}\hat{X}_n(\epsilon)^{\dagger}.
\label{eq:qft_error_channel}
\end{eqnarray}
Since $\mathcal{E}_n^{(\epsilon)}$ is unitary for every $\epsilon$, its unitarity equals
\begin{eqnarray}
u(\mathcal{E}_n^{(\epsilon)})=1,
\label{eq:qft_unitarity_one}
\end{eqnarray}
and therefore the $(r,u)$-assisted conversion necessarily collapses to the $u=1$ saturation form.

\medskip
\paragraph*{\bf Computing $(F,D)$ from trace moments.}---Let us define the spectral invariants
\begin{eqnarray}
P(\epsilon) := \abs{\tr{\hat{X}_n(\epsilon)}},
\quad
Q(\epsilon) := \abs{\tr{\hat{X}_n(\epsilon)^2}+\tr{\hat{X}_n(\epsilon)}^2}.
\label{eq:qft_PQ_def}
\end{eqnarray}
The average fidelity depends only on $P(\epsilon)$, and thus, we have
\begin{eqnarray}
F(\epsilon) = \frac{d+P(\epsilon)^2}{d(d+1)}.
\label{eq:qft_F_from_P}
\end{eqnarray}
For the fidelity deviation, it is convenient to compute the second moment
\begin{eqnarray}
E_2(\epsilon):=D(\epsilon)^2+F(\epsilon)^2 = \int d\psi \abs{\bra{\psi}\hat{X}_n(\epsilon)\ket{\psi}}^4,
\label{eq:qft_E2_def}
\end{eqnarray}
which is evaluated in terms of $\tr{(\hat{X}_n(\epsilon))}$ and $\tr{(\hat{X}_n(\epsilon)^2)}$:
\begin{eqnarray}
E_2(\epsilon) = \frac{ 2d(d+3)+4(d+2)\abs{\tr{(\hat{X}_n(\epsilon))}}^2 + \abs{\tr{(\hat{X}_n(\epsilon)^2)}}^2 + \abs{\tr{(\hat{X}_n(\epsilon))}}^4 + 2\mathrm{Re}\left(\tr{(\hat{X}_n(\epsilon)^2)}\left(\tr{(\hat{X}_n(\epsilon))}^{\ast}\right)^2\right)}{d(d+1)(d+2)(d+3)}.
\label{eq:qft_E2_formula}
\end{eqnarray}
Finally, we can obtain $D$ by
\begin{eqnarray}
D(\epsilon) = \sqrt{E_2(\epsilon)-F(\epsilon)^2}.
\label{eq:qft_D_from_E2}
\end{eqnarray}

\medskip
\paragraph*{\bf True diamond distance from the numerical range.}---Because $\hat{X}_n(\epsilon)$ is unitary, we can write
\begin{eqnarray}
d_{\diamond}(\mathcal{E}_n^{(\epsilon)},\mathcal{I}) = \sqrt{1-m(\hat{X}_n(\epsilon))^2},
\label{eq:qft_true_diamond}
\end{eqnarray}
where
\begin{eqnarray}
m(\hat{X}_n(\epsilon)) = \min\left\{ \abs{z} : z \in W(\hat{X}_n(\epsilon)) \right\},
\quad
W(\hat{X}_n(\epsilon))=\mathrm{conv}\left(\mathrm{Spec}(\hat{X}_n(\epsilon))\right).
\label{eq:qft_m_from_hull}
\end{eqnarray}
Thus, for the QFT example the true $d_{\diamond}$ can be computed exactly by extracting the eigenvalues of $\hat{X}_n(\epsilon)$ and taking the Euclidean distance from the origin to their convex hull in the complex plane.

\medskip
\paragraph*{\bf Numerical comparison (up to $n=10$).}---Let $r(\epsilon)=1-F(\epsilon)$.
The best-known fidelity-only bound gives
\begin{eqnarray}
d_{\diamond}(\mathcal{E}_n^{(\epsilon)},\mathcal{I}) \le \sqrt{d(d+1)r(\epsilon)}.
\label{eq:qft_fonly_bound}
\end{eqnarray}
In contrast, the unitarity-assisted upper bound of Kueng et al. becomes strictly worse when $u=1$:
\begin{eqnarray}
d_{\diamond}(\mathcal{E}_n^{(\epsilon)},\mathcal{I}) \le \frac{d}{\sqrt{2}}\sqrt{d(d+1)r(\epsilon)},
\quad
u(\mathcal{E}_n^{(\epsilon)})=1.
\label{eq:qft_ru_u1_bound}
\end{eqnarray}
For $n=10$ one has $d=1024$, so Eq.~(\ref{eq:qft_ru_u1_bound}) is worse than Eq.~(\ref{eq:qft_fonly_bound}) by a factor $1024/\sqrt{2}$, i.e., the $(r,u)$ strategy provably loses sharpness by orders of magnitude in the fully coherent regime relevant to systematic circuit miscalibrations.

\begin{figure}[t]
  \centering
  \includegraphics[width=0.90\linewidth]{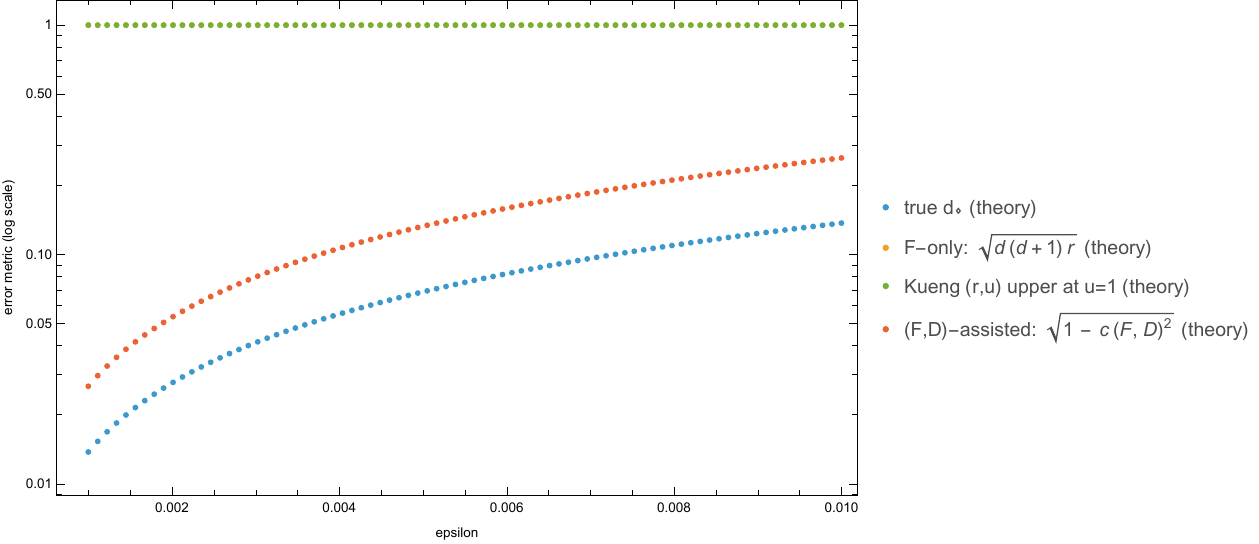}
  \caption{{\bf QFT coherent over-rotation example at $n=10$ qubits ($d=1024$)}. We plot the true diamond distance $d_{\diamond}(\mathcal{E}_n^{(\epsilon)},\mathcal{I})$ (computed from the convex hull of $\mathrm{Spec}(\hat{X}_n(\epsilon))$) and three upper bounds: the fidelity-only bound $\sqrt{d(d+1)r}$, the saturated $(r,u)$ bound at $u=1$, and the proposed $(F,D)$-assisted bound $\sqrt{1-c(F(\epsilon),D(\epsilon))^2}$. The QFT circuit is decomposed as in Eqs.~(\ref{eq:qft_ideal_gate_list})-(\ref{eq:qft_angles}) and each primitive is subject to the uniform coherent over-rotation model in Eq. (\ref{eq:qft_overrot_model}).}
  \label{fig:QFT_example}
\end{figure}

From the same unitary $\hat{X}_n(\epsilon)$, the pair $(F(\epsilon),D(\epsilon))$ fixes the two moment-invariants $(P(\epsilon),Q(\epsilon))$ through Eqs.~(\ref{eq:qft_PQ_def})--(\ref{eq:qft_D_from_E2}). We then compute the moment-assisted quantity $c(F,D)$ for dimension $d=2^n$:
\begin{eqnarray}
c(F(\epsilon),D(\epsilon)) = \left[ \frac{P(\epsilon)}{d} - \frac{1}{2d}\sqrt{(d-2)\left(dQ(\epsilon)+d^2-(d+2)P(\epsilon)^2\right)} \right]_+,
\label{eq:qft_cFD_def}
\end{eqnarray}
and obtain the corresponding coherent worst-case certification
\begin{eqnarray}
d_{\diamond}(\mathcal{E}_n^{(\epsilon)},\mathcal{I}) \le \sqrt{1-c(F(\epsilon),D(\epsilon))^2}.
\label{eq:qft_FD_bound}
\end{eqnarray}
In Fig.~\ref{fig:QFT_example}, we compare four curves as functions of $\epsilon$: (i) the true diamond distance in Eq.~(\ref{eq:qft_true_diamond}), (ii) the fidelity-only bound in Eq.~(\ref{eq:qft_fonly_bound}), (iii) the saturated $(r, u)$ bound Eq.~(\ref{eq:qft_ru_u1_bound}), (iv) our $(F, D)$-assisted bound Eq.~(\ref{eq:qft_FD_bound}). Fig.~\ref{fig:QFT_example} plots these quantities on a log-scale for the case $n=10$ (i.e., $d=1024$), computed exactly from the eigenvalues of $\hat{X}_n(\epsilon)$.

\section{A regime-adaptive certification strategy using $(r,u)$ and $(F,D)$}\label{sec:hybrid_FDu}

The diamond distance $d_{\diamond}$ is the fault-tolerance-relevant worst-case metric, but it is not directly measurable. On the other hand, the experimentally accessible data---such as $r=1-F$ obtained from
randomized benchmarking---can be related to $d_{\diamond}$ only through bounds whose tightness depends strongly on the structure of the noise. The most delicate regime is the coherent-dominated one:
even when $r \ll 1$, systematic (unitary) errors can force $d_{\diamond}$ to scale as $\Theta(\sqrt{r})$.

A celebrated refinement proposed by Kueng et al.~\cite{Kueng2016} supplements $r$ with the \emph{unitarity} $u$, an experimentally accessible witness of how close a channel is to being unitary. This refinement is especially powerful in decoherence-dominated regimes, where it can certify the favorable scaling $d_{\diamond}=O(r)$. However, as emphasized in Sec.~\ref{subsec:unitarity_limitation}, unitarity necessarily \emph{saturates} to $u=1$ for purely coherent errors and therefore loses resolving power exactly in the regime where the coherent errors can be the most damaging. In this work, by incorporating the fidelity deviation $D$ (Sec.~\ref{sec:FD_def}), we obtain a sharp worst-case certification inside the coherent regime. 

In this section, we formalize a simple but rigorous hybrid strategy: use $(r, u)$ to exploit decoherence-dominated linearity when available, and use $(F,D)$ to sharpen worst-case assessment when unitarity saturates.

\subsection{Unitarity as a regime witness and $(r, u)$-based worst-case bounds}\label{subsec:hybrid_ru}

We briefly recall the role of the unitarity $u(\mathcal{E})$. For a quantum channel $\mathcal{E}$ on $\mathbb{C}^d$, we adopt the standard definition introduced in Ref.~\cite{wallman2015estimating} (also used in Ref.~\cite{Kueng2016}):
\begin{eqnarray}
u(\mathcal{E}) := \frac{d}{d-1}\int d\psi \; \tr{\left[\mathcal{E}\!\left(\ketbra{\psi}{\psi}-\frac{\hat{\mathds{1}}}{d}\right)\right]^2}, \quad 0 \le u(\mathcal{E}) \le 1,
\label{eq:def_unitarity}
\end{eqnarray}
where $d\psi$ is the Haar measure on pure states. In particular, $u(\mathcal{E})=1$ for unitary channels, while $u(\mathcal{E})<1$ indicates the presence of irreversible (incoherent) components.

For a unital noise without leakage, Kueng et al. derived the following characterization of the worst-case error in terms of the experimentally accessible pair $(r, u)$:
\begin{eqnarray}
c_d\sqrt{u+\frac{2dr}{d-1}-1} \le d_{\diamond}(\mathcal{E},\mathcal{I}) \le d^2 c_d\sqrt{u+\frac{2dr}{d-1}-1},
\label{eq:kueng_ru_bound_hybrid}
\end{eqnarray}
where $c_d=\frac{1}{2}\sqrt{1-\frac{1}{d^2}}$. This inequality implies that the scaling of $d_{\diamond}$ with $r$ is controlled by the deviation of $u$ from unity. Indeed, defining the non-negative quantity 
\begin{eqnarray}
\eta(r, u) := u+\frac{2dr}{d-1}-1,
\end{eqnarray}
Eq.~(\ref{eq:kueng_ru_bound_hybrid}) reads
\begin{eqnarray}
d_{\diamond}(\mathcal{E},\mathcal{I}) \le d^2 c_d \sqrt{\eta(r,u)}.
\label{eq:kueng_ru_upper_eta}
\end{eqnarray}
Moreover, within this framework, Kueng et al. identify the condition
\begin{eqnarray}
u = 1-\frac{2dr}{d-1}+O(r^2) \quad \Longleftrightarrow \quad \eta(r,u)=O(r^2),
\label{eq:kueng_linear_condition_hybrid}
\end{eqnarray}
as necessary and sufficient for recovering the favorable linear scaling $d_{\diamond}=O(r)$. Thus, $u$ can be viewed as a \emph{regime witness}: it certifies whether the device has entered a decoherence-dominated regime where the average-case estimates are representative of the worst-case behavior.

However, when coherence dominates, one typically has $u \simeq 1$, and for a purely unitary error channel $u=1$ identically. Inserting $u=1$ into Eq.~(\ref{eq:kueng_ru_upper_eta}), we have
\begin{eqnarray}
d_{\diamond}(\mathcal{E}, \mathcal{I}) \le d^2 c_d \sqrt{\frac{2d}{d-1}r} = \frac{d}{\sqrt{2}}\sqrt{d(d+1)r}.
\label{eq:kueng_u1_hybrid}
\end{eqnarray}
Comparing with the best-known fidelity-only upper bound $d_{\diamond}\le \sqrt{d(d+1)r}$ (Sec.~\ref{subsec:rb_diamond_bounds}), we see that in the fully coherent limit the unitarity-assisted upper bound becomes looser by a fixed factor $d/\sqrt{2}$. This is not a contradiction: $u$ is doing what it is designed to do---detect incoherence---but it cannot resolve different coherent errors once it saturates.

\subsection{$(F,D)$ as coherent-resolution data and the $c(F,D)$ certification}\label{subsec:hybrid_FD}

Now let us recall our complementary certification obtained in Sec.~\ref{sec:main_results}. The fidelity deviation $D$ defined in Sec.~\ref{sec:FD_def} is the standard deviation of the state-dependent fidelity $f_{\mathcal{E}}(\psi)=\bra{\psi}\mathcal{E}(\ketbra{\psi}{\psi})\ket{\psi}$:
\begin{eqnarray}
F=\int d\psi~ f_{\mathcal{E}}(\psi),
\quad
D^2=\int d\psi f_{\mathcal{E}}(\psi)^2 - F^2.
\label{eq:FD_moments_recall}
\end{eqnarray}
Unlike unitarity, $D$ does not collapse to a constant on the set of unitary errors; instead, it probes a higher Haar moment of the same experimentally measurable fidelity landscape.

In the coherent (unitary/systematic) setting where $\mathcal{E}(\hat{\rho})=\hat{X}\hat{\rho} \hat{X}^{\dagger}$ with $\hat{X} \in U(d)$, we introduced a quantity $c(F,D)$ with the following operational meaning: it is a certified lower bound on the minimum overlap amplitude $m(\hat{X})=\min_{\psi}\abs{\bra{\psi}\hat{X}\ket{\psi}}$, and therefore yields an upper bound on the diamond distance. In particular, for $d \ge 4$, {\bf Theorem~\ref{thm:main_d4}} proves that
\begin{eqnarray}
d_{\diamond}(\mathcal{E},\mathcal{I}) \le \sqrt{1-c(F,D)^2},
\label{eq:FD_upper_recall}
\end{eqnarray}
and that this bound is tight for admissible $(F, D)$. We emphasize that this certification addresses precisely the regime where $u$ saturates: when $u=1$ provides no resolution within the coherent manifold, the additional fourth-moment information in $D$ constrains the coherent spectral spread, and thereby, sharpens worst-case guarantees.

\subsection{A hybrid regime-adaptive worst-case bound}\label{subsec:hybrid_theorem}

We now formalize a simple hybrid strategy that combines the strengths of both approaches.

\begin{definition}[Two computable upper bounds]
\label{def:two_bounds}
Let $r=1-F$ be the gate infidelity and $u$ the unitarity. Define the unitarity-assisted bound (valid for unital noise without leakage) as
\begin{eqnarray}
B_{ru}(r,u) := d^2 c_d\sqrt{u+\frac{2dr}{d-1}-1},
\quad
c_d=\frac{1}{2}\sqrt{1-\frac{1}{d^2}}.
\label{eq:def_Bru}
\end{eqnarray}
Define the moment-assisted coherent bound (valid under the coherent model assumptions of Sec.~\ref{sec:main_results}) as
\begin{eqnarray}
B_{FD}(F,D) := \sqrt{1-c(F,D)^2},
\label{eq:def_BFD}
\end{eqnarray}
where $c(F,D)$ is the quantity defined in Eq.~(\ref{eq:def_cFD}) for $d\ge 4$.
\end{definition}

\begin{theorem}[Hybrid certification]
\label{thm:hybrid_certification}
Assume that $\mathcal{E}$ is unital and has no leakage so that Eq.~(\ref{eq:kueng_ru_bound_hybrid}) applies. Assume further that the coherent model hypotheses needed for Eq.~(\ref{eq:FD_upper_recall}) apply to $\mathcal{E}$ (e.g., $\mathcal{E}(\hat{\rho})=\hat{X}\hat{\rho}\hat{X}^{\dagger}$ for a unitary $\hat{X}$ in the $d\ge 4$ coherent setting). Then the diamond distance obeys the hybrid upper bound
\begin{eqnarray}
d_{\diamond}(\mathcal{E},\mathcal{I}) \le B_{\mathrm{hyb}}(F,D,r,u) := \min \left\{ B_{ru}(r,u), B_{FD}(F,D) \right\}.
\label{eq:hybrid_min_bound}
\end{eqnarray}
\end{theorem}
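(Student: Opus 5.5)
The hybrid bound follows from the two upper bounds already available, each valid under its own hypotheses. Since both hypotheses are assumed at once, the diamond distance lies below each bound separately, hence below their minimum. The proof is therefore just a conjunction of two previously established inequalities.

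\textbf{Step 1: the $(r,u)$ bound.} By assumption (i), $\mathcal{E}$ is unital and has no leakage, so the upper half of Kueng et al.'s inequality, Eq.~(\ref{eq:kueng_ru_bound_hybrid}), applies. It gives $d_{\diamond}(\mathcal{E},\mathcal{I})\le B_{ru}(r,u)$. Here $\eta(r,u)\ge 0$, so the square root is real. For a unitary channel this is automatic: with $u=1$ and $r\ge 0$ we get $\eta=2dr/(d-1)\ge 0$.

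\textbf{Step 2: the $(F,D)$ bound.} By assumption (ii), $\mathcal{E}(\hat{\rho})=\hat{X}\hat{\rho}\hat{X}^{\dagger}$ with $\hat{X}\in U(d)$ and $d\ge 4$. Theorem~\ref{thm:main_d4} (equivalently Theorem~\ref{thm:main}) applies and gives $m(\hat{X})\ge c(F,D)$. Combined with Lemma~\ref{lem:diamond_unitary_overlap_main}, this yields $d_{\diamond}(\mathcal{E},\mathcal{I})=\sqrt{1-m(\hat{X})^2}\le\sqrt{1-c(F,D)^2}=B_{FD}(F,D)$. Monotonicity of $x\mapsto\sqrt{1-x^2}$ on $[0,1]$ is used here, together with $0\le c(F,D)\le m(\hat{X})\le 1$.

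\textbf{Step 3: take the minimum.} Since $d_{\diamond}$ is bounded above by both $B_{ru}$ and $B_{FD}$, it is bounded by $\min\{B_{ru},B_{FD}\}$, which is Eq.~(\ref{eq:hybrid_min_bound}).

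\textbf{Main obstacle.} The only delicate point is consistency: the quantities $(F,D,r,u)$ must all refer to the same channel $\mathcal{E}$, and $r=1-F$ must hold by definition. One should also note that assumption (ii) already implies (i), since unitary channels are unital and leakage-free. The theorem is thus really a statement about unitary errors, and there $u=1$ makes $B_{ru}$ reduce to Eq.~(\ref{eq:kueng_u1_hybrid}).

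I would add a remark that any looseness or validity issues in the cited Theorem~\ref{thm:main_d4} are inherited rather than addressed here. In particular, Step~3 of its proof, the conjugate-pair reduction, is the part of the chain that would deserve scrutiny. If that step fails, then $B_{FD}$, and hence the hybrid bound, is only as reliable as that reduction.
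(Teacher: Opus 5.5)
Your proposal is correct and is the same argument as the paper's: $B_{ru}$ and $B_{FD}$ are each valid upper bounds under their respective hypotheses, so $d_{\diamond}(\mathcal{E},\mathcal{I})$ lies below their minimum. Your caveat is well founded---the conjugate-pair reduction in Step~3 of Theorem~\ref{thm:main_d4} is asserted rather than proved, and the tightness claim resting on it does not hold in general---but the inequality $m(\hat X)\ge c(F,D)$ that you actually use still holds for arbitrary spectra, so your hybrid bound stands: writing $\theta_j$ for the eigenphases of $\hat X$ and $\psi=\arg\mathrm{Tr}\,\hat X$, one has $Q\ge P^2+\sum_j\cos 2(\theta_j-\psi)$, which reduces the claim to $(\sum_j z_j)^2\le (d-2)\sum_j z_j^2$ with $z_j=\cos(\theta_j-\psi)-m(\hat X)$, and this follows from Cauchy--Schwarz after setting aside two indices with $z_j+z_k\le 0$ (the edge pair $e^{\pm i\beta}$ when $m(\hat X)>0$; such a pair also exists when $0\in W(\hat X)$).
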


\begin{proof}---Under the first assumption, $d_{\diamond}(\mathcal{E},\mathcal{I})\le B_{ru}(r,u)$ by Eq.~(\ref{eq:kueng_ru_upper_eta}). Under the second assumption, $d_{\diamond}(\mathcal{E},\mathcal{I})\le B_{FD}(F,D)$ by Eq.~(\ref{eq:FD_upper_recall}). Taking the minimum of two valid upper bounds yields Eq.~(\ref{eq:hybrid_min_bound}).
\end{proof}

{\bf Theorem~\ref{thm:hybrid_certification}} turns the conceptual complementarity of $(r,u)$ and $(F,D)$ into a rigorous recipe.

\begin{corollary}[Decoherence-dominated regime: unitarity enables linear scaling]
\label{cor:hybrid_decoherence}
If $\eta(r,u)=u+\frac{2dr}{d-1}-1 = O(r^2)$, or equivalently, $u=1-\frac{2dr}{d-1}+O(r^2)$, then $B_{ru}(r,u)=O(r)$ and therefore $B_{\mathrm{hyb}}(F,D,r,u)=O(r)$.
\end{corollary}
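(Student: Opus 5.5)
The corollary is a direct consequence of the explicit form of $B_{ru}$ in Definition~\ref{def:two_bounds}, combined with the min-structure of Theorem~\ref{thm:hybrid_certification}. The plan is to substitute the hypothesis on $\eta(r,u)$ into $B_{ru}$ and then use that a minimum is bounded by each of its arguments.

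\textbf{Step 1 (the two hypotheses are equivalent).} Since $\eta(r,u)=u+\frac{2dr}{d-1}-1$ is affine in $u$, the condition $\eta=O(r^2)$ is, by rearrangement, the same as $u=1-\frac{2dr}{d-1}+O(r^2)$. This is a one-line identity, so we may work with $\eta$ directly.

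\textbf{Step 2 (bound $B_{ru}$).} By Definition~\ref{def:two_bounds}, $B_{ru}(r,u)=d^2c_d\sqrt{\eta(r,u)}$. Here $\eta\ge 0$ under the unital no-leakage hypothesis, as noted after Eq.~(\ref{eq:kueng_ru_bound_hybrid}), so the square root is real. Suppose $0\le\eta\le Kr^2$ for some constant $K$ and all sufficiently small $r$. Then $\sqrt{\eta}\le\sqrt{K}\,r$. Because $d$ is fixed and $c_d\le 1/2$, this gives
\begin{eqnarray}
B_{ru}(r,u)\le \frac{d^2}{2}\sqrt{K}\,r=O(r).
\end{eqnarray}

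\textbf{Step 3 (pass to the hybrid bound).} Since $B_{\mathrm{hyb}}=\min\{B_{ru},B_{FD}\}\le B_{ru}$, we get $B_{\mathrm{hyb}}=O(r)$. This step needs nothing about $B_{FD}$ beyond its being a well-defined nonnegative number. By Theorem~\ref{thm:hybrid_certification}, it follows that $d_\diamond(\mathcal{E},\mathcal{I})=O(r)$.

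\textbf{Main obstacle.} The only delicate point is interpretive. The constant implicit in $O(\cdot)$ depends on $d$ through $d^2c_d$, so the statement should be read at fixed dimension. One should also confirm that $\eta\ge0$, so that taking the square root of a big-$O$ bound is legitimate. I would state both points explicitly in the proof. Note also that the corollary's hypothesis ($\eta=O(r^2)$, i.e.\ a genuinely incoherent regime) cannot literally hold together with exact unitarity $u=1$ for $r>0$: at $u=1$, $\eta=\frac{2dr}{d-1}$, which is not $O(r^2)$. The corollary should therefore be read as a statement about the $B_{ru}$ branch of the minimum, which is all the argument uses.
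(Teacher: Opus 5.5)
Your proposal is correct and matches the paper's one-line proof: substitute $\eta=O(r^2)$ into $B_{ru}=d^2c_d\sqrt{\eta}$ to get $B_{ru}=O(r)$, then use $B_{\mathrm{hyb}}\le B_{ru}$. Your caveats go beyond what the paper states and are accurate: the constant is taken at fixed $d$, one needs $\eta\ge 0$, and the hypothesis cannot hold with exact $u=1$ for $r>0$, so the corollary really concerns only the $B_{ru}$ branch.
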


\begin{proof}---Immediate from $B_{ru}(r,u)=d^2c_d\sqrt{\eta(r,u)}$ and $\eta(r,u)=O(r^2)$.
\end{proof}

\begin{corollary}[Coherent-dominated regime: $u\simeq 1$ saturates while $D$ remains informative]
\label{cor:hybrid_coherent}
In the fully coherent limit $u=1$, the unitarity-assisted bound reduces to Eq.~(\ref{eq:kueng_u1_hybrid}) and is looser than the fidelity-only bound by a factor $d/\sqrt{2}$. In this regime, $B_{\mathrm{hyb}}$ selects the sharper moment-assisted certification whenever $B_{FD}(F,D) < B_{ru}(r,1)$, which is typical for structured two-qubit coherent errors (Sec.~\ref{sec:main_results}).
\end{corollary}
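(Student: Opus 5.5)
The corollary has two parts: an algebraic reduction of $B_{ru}$ at $u=1$, and a comparison showing that the minimum in Theorem~\ref{thm:hybrid_certification} then picks $B_{FD}$. I will prove the comparison for every admissible unitary error with $r>0$, which is stronger than the ``typical'' claim.

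\emph{Step 1 (reduction at $u=1$).} Set $u=1$ in Definition~\ref{def:two_bounds}, so that $\eta(r,1)=\frac{2dr}{d-1}$. Using $c_d=\frac{1}{2}\sqrt{(d-1)(d+1)}/d$, we get
\begin{eqnarray}
d^2c_d\sqrt{\frac{2dr}{d-1}}=\frac{d}{2}\sqrt{2d(d+1)r}=\frac{d}{\sqrt{2}}\sqrt{d(d+1)r}.
\end{eqnarray}
This is Eq.~(\ref{eq:kueng_u1_hybrid}). Dividing by the fidelity-only bound of Proposition~\ref{prop:rb_bounds_proof} gives the factor $d/\sqrt{2}$, as already recorded in Eq.~(\ref{eq:ratio_unitarity_vs_r_only}). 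This step is routine.

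\emph{Step 2 (why $B_{\mathrm{hyb}}$ selects $B_{FD}$).} By definition of the minimum, $B_{\mathrm{hyb}}=B_{FD}$ whenever $B_{FD}<B_{ru}(r,1)$, so what remains is to show this inequality holds. Rather than comparing the explicit formula for $c(F,D)$ with $\sqrt{r}$, which would be messy, I will use the tightness clause of Theorem~\ref{thm:main_d4}. For admissible $(F,D)$ arising from a unitary error with $d\ge 4$, it gives a unitary $\hat{X}_\star$ with the same $(P,Q)$, hence the same $F$ and the same $r$, such that $m(\hat{X}_\star)=c(F,D)$. Lemma~\ref{lem:diamond_unitary_overlap_main} then gives $B_{FD}(F,D)=d_\diamond(\mathcal{X}_\star,\mathcal{I})$. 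Applying the upper bound of Proposition~\ref{prop:rb_bounds_proof} to the channel $\mathcal{X}_\star$ yields
\begin{eqnarray}
B_{FD}(F,D)\le\sqrt{d(d+1)r}<\frac{d}{\sqrt{2}}\sqrt{d(d+1)r}=B_{ru}(r,1),
\end{eqnarray}
where the strict inequality holds because $d/\sqrt{2}>1$ for $d\ge 4$ and $r>0$. If $r=0$, then $\hat{X}$ is a global phase, $P=d$, and the bracket in Eq.~(\ref{eq:def_cFD}) gives $c=1$, so both bounds vanish and nothing needs to be proved. Thus in the coherent limit $B_{\mathrm{hyb}}=B_{FD}$, and the moment-assisted bound is never worse than the fidelity-only conversion.

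\emph{Main obstacle.} The delicate point is Step 2's reliance on the tightness construction. One must check that $\hat{X}_\star$ really reproduces $P$ and $Q$ themselves, not merely the expressions used inside the Cauchy--Schwarz step. In particular, the sign choices in Step 4 of the proof of Theorem~\ref{thm:main_d4} must be consistent, and $a=(P-2b)/(d-2)$ must lie in $[b,1]$ so that $\alpha=\arccos(a)$ is a legitimate bulk angle.

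If this verification fails in some corner of the admissible region, I would fall back on a weaker direct argument. Since $c\ge 0$, we always have $B_{FD}\le 1$, while $B_{ru}(r,1)\ge 1$ as soon as $r\ge 2/(d^3(d+1))$. That covers all but extremely small $r$. In the remaining small-$r$ regime, I would use the CZ closed form as the representative case, where $B_{FD}=\sqrt{5r/3}$ against $B_{ru}(r,1)=2\sqrt{2}\sqrt{20r}$. This fallback supports only the ``typical'' wording of the corollary, not the uniform statement above.
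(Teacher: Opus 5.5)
Your Step~1 is the paper's own computation. Your remark that $B_{\mathrm{hyb}}=B_{FD}$ whenever $B_{FD}<B_{ru}(r,1)$ is simply the definition of the minimum. That is all the paper itself establishes; ``typical'' is supported only by its worked examples.

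The gap is in Step~2, where you try to prove the strict inequality for every unitary error by invoking the tightness clause of Theorem~\ref{thm:main_d4}. That clause is false, and it fails at exactly the point you flagged: $a=(P-2b)/(d-2)$ can exceed $1$.

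Take the paper's own CZ model at $\phi_\epsilon=\pi/2$. There $P=\sqrt{10}$ and $Q=\abs{10+6i}=\sqrt{136}$, so $dQ+d^2-(d+2)P^2\approx 2.648$, $c(F,D)=b_-\approx 0.503$, and $a\approx 1.078$. Worse, no unitary with this $P$ has $m=0.503$ at all. If $m(\hat Y)=\cos\beta>0$, rotate so that the edge of $W(\hat Y)$ nearest the origin joins $e^{\pm i\beta}$. The other $d-2$ eigenvalues contribute at most $d-2$ in modulus, so $\abs{\mathrm{Tr}\,\hat Y}\le 2\cos\beta+d-2$, i.e.\ $m(\hat Y)\ge (P-d+2)/2\approx 0.581$ here.

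The same happens for every small $\phi_\epsilon$. With $x=\sin^2(\phi_\epsilon/2)$ one finds $c(F,D)\approx 1-0.81x$, which lies below both $\cos(\phi_\epsilon/2)\approx 1-0.5x$ and $(P-2)/2\approx 1-0.75x$. Hence the identity $B_{FD}=d_\diamond(\mathcal{X}_\star,\mathcal{I})$ is unavailable.

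Your fallback has the same defect. The value $B_{FD}=\sqrt{5r/3}$ for CZ comes from the paper's incorrect claim $c(F,D)=\cos(\phi_\epsilon/2)$. The actual value is $B_{FD}\approx 1.27\abs{\sin(\phi_\epsilon/2)}$, which still lies far below $B_{ru}(r,1)$.

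Your stronger statement is nevertheless true, and it can be proved directly without any tightness.
\begin{itemize}
\item Put $s:=d^2-P^2=d(d+1)r$ and let $\theta_j$ be the eigenphases of $\hat X$. Then $s=2\sum_{j,k}\sin^2\frac{\theta_j-\theta_k}{2}$ and $d^2-\abs{\mathrm{Tr}(\hat X^2)}^2=4s-8\sum_{j,k}\sin^4\frac{\theta_j-\theta_k}{2}$.
\item Combine $Q\le\abs{\mathrm{Tr}(\hat X^2)}+P^2$ with $\abs{\mathrm{Tr}(\hat X^2)}\le d-\frac{d^2-\abs{\mathrm{Tr}(\hat X^2)}^2}{2d}$ (equivalent to $(d-\abs{\mathrm{Tr}(\hat X^2)})^2\ge 0$). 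This gives $dQ+d^2-(d+2)P^2\le 4\sum_{j,k}\sin^4\frac{\theta_j-\theta_k}{2}\le s^2$.
\item Use also $1-P/d\le s/d^2$ and $1-c^2\le 2(1-b_-)$, which covers $b_-<0$ as well.
\end{itemize}
Together these yield $B_{FD}^2\le s\left(\frac{2}{d^2}+\frac{\sqrt{d-2}}{d}\right)<s$ for $d\ge 4$. Hence $B_{FD}<\sqrt{d(d+1)r}<B_{ru}(r,1)$ whenever $r>0$.
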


The operational meaning of this hybrid strategy can be summarized as follows. The unitarity $u$ acts as a coherence thermometer: it signals whether the system has left the favorable decoherence-dominated regime where $d_{\diamond}=O(r)$ may be certified. If $u$ is far below unity, $B_{ru}$ is often the decisive certification. If $u$ is close to unity, $u$ becomes nearly uninformative and $B_{ru}$ may lose sharpness; in that regime, the additional higher-moment data contained in $D$ provides resolution within the coherent manifold, and $B_{FD}$ becomes the decisive certification.

\section{Discussions}\label{sec:discussion}


In this work, we have introduced the fidelity deviation $D$ as the standard deviation of the state-dependent survival probability and treated $(F, D)$ as a pair of moment data describing a fidelity landscape rather than a single mean value. We derived closed-form relations connecting $F$ and $D$ to spectral trace invariants of the effective error unitary $\hat{X}$ and showed that, for $d \ge 4$, $(F, D)$ determined two independent invariants that include both $\tr{(\hat{X})}$ and $\tr{(\hat{X}^2)}$. Using this extra spectral information, we solved the corresponding constrained extremal problem and obtained a tight lower bound on the minimum overlap amplitude $m(\hat{X})$, and hence a sharp worst-case certificate for the diamond distance $d_{\diamond}$. We also presented a direct sampling protocol that estimated $D$ from the same survival-probability data stream used for $F$, and we numerically demonstrated with representative coherent examples (CZ gate, Toffoli gate, and quantum Fourier transform) that the resulting $(F,D)$-assisted bound tracked the true worst-case behavior far more closely than fidelity-only conversions in the coherent limit.

From an engineering and benchmarking standpoint, the main practical message is that the $(F,D)$-assisted certificate should be treated as a necessary performance report for high-fidelity gate development, not as an optional post-processing trick. Average fidelity alone is a dangerously lossy summary in the coherent-dominated regime because it fixes only a coarse first-moment constraint, effectively through $\abs{\tr{\hat X}}$, while leaving enough freedom for the spectral polygon of $\hat X$ to approach the origin and collapse $m(\hat X)$. The deviation $D$ provides the missing second-moment constraint that controls spectral spread (through invariants that include $\tr{\hat X^2}$), and therefore controls how small the worst-case overlap can be at fixed $F$. Consequently, in an FTQC-oriented evaluation it is not sufficient to compare gates by reporting only $F$ (or $r$). We advocate that future gate-quality tables and device-comparison plots should report $F$ together with $D$, and should compare the derived certified overlap $c(F,D)$ (or equivalently the certified worst-case bound $\sqrt{1-c(F,D)^2}$). This recommendation is especially relevant for multi-qubit entangling gates (starting at $d \ge 4$), which dominate logical error budgets and where coherent structure has enough internal degrees of freedom that $D$ carries genuinely new information beyond $F$. Importantly, this does not demand tomography: $D$ is a second moment of the same survival-probability data stream already used to estimate $F$, so once an experiment commits to measuring fidelities carefully, the additional overhead is primarily computational and statistical rather than conceptual or infrastructural.

Finally, we can provide several forward-looking points. First, $(F,D)$ and $(r,u)$ are informative in complementary regimes, suggesting a regime-adaptive certification workflow: use $u$ as a coherence thermometer to detect when incoherent components make linear-in-$r$ bounds meaningful, and use $(F,D)$ to regain sharp worst-case resolution when coherence dominates and $u$ saturates. Second, the emergence of $D$ as a physically meaningful ``coherent non-uniformity'' indicator suggests a new calibration target: beyond maximizing $F$, control design can aim to reduce $D$ to suppress state-dependent valleys that drive worst-case risk. Third, while the present results focus on coherent unitary errors, the same moment-based philosophy motivates extensions to near-unitary channels, coherent-plus-incoherent mixtures, and leakage-aware settings, where one may combine fluctuation information with additional witnesses to retain both practicality and worst-case relevance. Overall, we view $(F,D)$ reporting and the derived $c(F,D)$ comparison as a concrete step toward standardizing gate metrics that are directly aligned with FTQC requirements rather than with average-case convenience.


\medskip\medskip
{\bf Acknowledgement.}---This work was supported by the Ministry of Science, ICT and Future Planning (MSIP) by the National Research Foundation of Korea (RS-2024-00432214, RS-2025-03532992, and RS-2025-18362970) and the Institute of Information and Communications Technology Planning and Evaluation grant funded by the Korean government (RS-2019-II190003, ``Research and Development of Core Technologies for Programming, Running, Implementing and Validating of Fault-Tolerant Quantum Computing System''), the Korean ARPA-H Project through the Korea Health Industry Development Institute (KHIDI), funded by the Ministry of Health \& Welfare, Republic of Korea (RS-2025-25456722). We acknowledge the Yonsei University Quantum Computing Project Group for providing support and access to the Quantum System One (Eagle Processor), which is operated at Yonsei University.


\appendix
\section{Representation-Theoretic Derivation of Proposition~\ref{prop:FD_closed_form_unitary}}
\label{append:Haar}
This appendix provides a self-contained account of the algebraic mechanisms that underlie the Haar-integral computations used in the main text. Rather than reproducing full proofs or detailed derivations of the relevant results, our aim is to isolate the representation-theoretic principles that explain why the Haar-averaged expressions appearing in Proposition~\ref{prop:FD_closed_form_unitary} take their particular closed forms. In particular, we clarify how quantities such as the averaged fidelity $F$ and fidelity deviation $D$, introduced in Sec.~\ref{sec:FD_def}, ultimately reduce to combinations of permutation-invariant tensors, as dictated by the commutant structure of tensor-power representations.

Several ingredients presented here are standard and overlap with earlier works, including Refs.~\cite{cho2025entangling,cho2025fundamental,cho2025witness}. Our emphasis, however, is not on novelty but on organization and clarity: we collect these tools in a unified framework tailored to the specific Haar averages required in this paper. For a broader and more systematic introduction to Haar integration and its applications in quantum information theory, we refer the reader to Ref.~\cite{Mele2024introductiontohaar}.

The guiding theme of this appendix is symmetry. The invariance of the Haar measure under left and right multiplication imposes strong algebraic constraints on all averaged quantities. In the context of the unitary group $U(d)$, these constraints imply that Haar integration effectively projects operators onto the subalgebra that is invariant under collective conjugation by tensor powers of unitaries. As a result, the structure of Haar averages is governed entirely by the commutant of the tensor-power representation.

Concretely, for each $k \ge 1$, we consider the $k$-fold twirling map
\begin{eqnarray}
\mathcal{T}_k(\hat{O}) := \int \hat{U}^{\otimes k} \hat{O} (\hat{U}^\dagger)^{\otimes k} \, d\hat{U},
\end{eqnarray}
which averages an operator over conjugation by $k$ copies of a Haar-random unitary. By construction, $\mathcal{T}_k$ is equivariant with respect to the tensor-power action of $U(d)$ and therefore maps any operator to an element of the corresponding commutant. The Schur--Weyl duality identifies this commutant with the algebra generated by permutation operators acting on $(\mathbb{C}^d)^{\otimes k}$, explaining the ubiquitous appearance of permutations in Haar-averaged expressions.

To make this structure computationally explicit, we introduce the $k$-th moment operators associated with the Haar measure and describe their expansion in the permutation basis. The coefficients in this expansion are determined by the Weingarten calculus~\cite{Collins:2006jgn}, which provides closed-form expressions for integrals of products of matrix elements of $\hat{U}$ and $\hat{U}^\dagger$, such as
\begin{eqnarray}
\int \hat{U}_{i_1 j_1} \cdots \hat{U}_{i_k j_k} \overline{\hat{U}}_{i'_1 j'_1} \cdots \overline{\hat{U}}_{i'_k j'_k} d\hat{U} = \sum_{\sigma,\tau \in S_k}  \delta_{i_1, i'_{\sigma(1)}} \cdots \delta_{i_k, i'_{\sigma(k)}} \delta_{j_1, j'_{\tau(1)}} \cdots \delta_{j_k, j'_{\tau(k)}} Wg(\sigma^{-1}\tau , d),
\end{eqnarray}
In this framework, the Schur--Weyl duality fixes the invariant tensor structure, while the Weingarten function supplies the weight associated with each permutation. We further summarize a collection of trace identities and cycle-decomposition rules that allow these general formulas to be evaluated efficiently for the specific operators relevant to Proposition~\ref{prop:FD_closed_form_unitary}.

\subsection{Haar measure}

The Haar measure provides the canonical notion of uniform randomness on a compact group. In the present work, it serves as the reference distribution for random unitaries drawn from the unitary group $U(d)$. Rather than emphasizing its measure-theoretic construction, we focus on the invariance properties of the Haar measure, as these are directly responsible for the structural simplifications observed in Haar-averaged quantities throughout this work.

For a compact group $G$, the Haar measure $\mu_H$ is the unique probability measure that is invariant under the group action. In the case of $U(d)$, we denote integration with respect to this measure simply by $dU$. Given an integrable function $f : U(d) \to \mathbb{C}$, its Haar average is written as
\begin{eqnarray}
\int f(U)\, dU .
\end{eqnarray}

The defining feature of the Haar measure is its invariance under left and right multiplication, as well as under conjugation and inversion. Concretely, for any fixed $V \in U(d)$ and any integrable function $f$, one has
\begin{eqnarray}
\int f(VU)\, dU = \int f(UV)\, dU = \int f(VUV^\dagger)\, dU = \int f(U)\, dU .
\end{eqnarray}
Invariance under inversion, $\int f(U^\dagger)\,dU = \int f(U)\,dU$, follows as a special case. These identities express the fact that averaging over the Haar measure completely removes any dependence on a preferred reference frame in $U(d)$.

From the perspective of operator averages, these invariance properties have a particularly important consequence. Consider an operator $\hat{O}$ acting on a $k$-fold tensor product space $(\mathbb{C}^d)^{\otimes k}$, and define the associated $k$-fold twirling map
\begin{eqnarray}
\mathcal{T}_k(\hat{O}) := \int \hat{U}^{\otimes k} \hat{O} (\hat{U}^\dagger)^{\otimes k} \, d\hat{U}.
\end{eqnarray}
By construction, $\mathcal{T}_k(\hat{O})$ is invariant under simultaneous conjugation by $\hat{V}^{\otimes k}$ for any $\hat{V} \in U(d)$. Equivalently, the Haar twirl acts as an orthogonal projection
onto the subspace of operators that commute with the collective action of $U(d)$ on $k$ copies.

This observation is the starting point for all subsequent calculations. It implies that Haar integration produces only operators belonging to a highly constrained invariant subalgebra, whose structure is fixed entirely
by symmetry. In the following subsections, we characterize this invariant algebra explicitly using representation theory, showing that it is spanned by permutation operators acting on the tensor factors. This structural result ultimately explains why Haar-averaged expressions can be written in closed form and evaluated using combinatorial data such as permutation cycles.
\subsection{$k$-th moment operators}

To formalize the effect of Haar-random unitaries on multi-copy systems, it is convenient to package Haar averages into a family of linear maps, known as moment operators. These maps capture, in a basis-independent way, how conjugation by random unitaries acts on operators defined on tensor-power Hilbert spaces.

For each integer $k \ge 1$, the $k$-th moment operator is defined as the Haar average of conjugation by the $k$-fold tensor representation of $U(d)$. Explicitly, it is the linear map
\begin{eqnarray}
\mathcal{T}_k : \mathcal{L}\bigl((\mathbb{C}^d)^{\otimes k}\bigr) \longrightarrow \mathcal{L}\bigl((\mathbb{C}^d)^{\otimes k}\bigr),
\end{eqnarray}
given by
\begin{eqnarray}
\mathcal{T}_k(\hat{O}) := \int \hat{U}^{\otimes k} \hat{O} \hat{U}^{\dagger \otimes k} \, d\hat{U},
\qquad
\hat{O} \in \mathcal{L}\bigl((\mathbb{C}^d)^{\otimes k}\bigr).
\end{eqnarray}

Several basic properties of $\mathcal{T}_k$ follow directly from Haar invariance. The map is linear, trace preserving, and self-adjoint with respect to the Hilbert--Schmidt inner product. More importantly for our purposes, $\mathcal{T}_k$ is idempotent,
\begin{eqnarray}
\mathcal{T}_k^2 = \mathcal{T}_k ,
\end{eqnarray}
and therefore acts as an orthogonal projector.

The image of this projection is precisely the set of operators that are invariant under collective conjugation by $U(d)$, i.e., operators $\hat{A}$ satisfying
\begin{eqnarray}
\hat{U}^{\otimes k} \hat{A} \hat{U}^{\dagger \otimes k} = \hat{A}
\qquad
(\forall \hat{U} \in U(d)).
\end{eqnarray}
Equivalently, the moment operator projects onto the commutant of the $k$-fold tensor representation of the unitary group. As a result, evaluating Haar averages is reduced to identifying and manipulating elements of this invariant subalgebra.

Beyond their role in explicit Haar integrations, moment operators also provide a natural benchmark for quantifying how closely a finite ensemble of unitaries approximates the Haar measure. In particular, many notions of pseudorandomness in quantum information are phrased in terms of reproducing Haar moments up to a fixed order. This motivates the definition of \emph{unitary $t$-design}~\cite{dankert2009exact}.

\begin{definition}[Unitary $t$-design]
A finite ensemble of unitaries $\mathcal{E} = \{\hat{U}_j\}$ forms an exact unitary $t$-design if, for all $k \le t$ and all operators
$\hat{O} \in \mathcal{L}\bigl((\mathbb{C}^d)^{\otimes k}\bigr)$,
\begin{eqnarray}
\frac{1}{|\mathcal{E}|} \sum_{\hat{U}_j \in \mathcal{E}}
\hat{U}_j^{\otimes k} \hat{O} \hat{U}_j^{\dagger \otimes k}
= \mathcal{T}_k(\hat{O}) .
\end{eqnarray}
\end{definition}

In this sense, the Haar moment operators provide the universal reference against which finite ensembles are compared: an ensemble reproduces the statistics of Haar-random unitaries up to order $t$ if and only if it matches the moment operators $\mathcal{T}_k$ for all $k \le t$. In the remainder of this appendix, we exploit the projection property of $\mathcal{T}_k$ together with representation-theoretic tools—most notably Schur--Weyl duality— to obtain explicit formulas for these operators and for the Haar-averaged quantities appearing in Proposition~\ref{prop:FD_closed_form_unitary}.

\subsection{Schur--Weyl duality and the structure of Haar averages}

The defining property of the $k$-th moment operator is that it projects onto operators that are invariant under the collective action of the unitary group. As a consequence, the evaluation of Haar averages reduces to a purely algebraic question: how can one characterize all operators on $(\mathbb{C}^d)^{\otimes k}$ that commute with $\hat{U}^{\otimes k}$ for every $\hat{U} \in U(d)$?

To formalize this, we consider the commutant of the $k$-fold tensor representation of the unitary group. This object captures the full set of symmetry constraints imposed by Haar invariance and determines the possible structure of all Haar-averaged expressions.

\begin{definition}[$k$-th order commutant]
For a set of operators $S \subset \mathcal{L}(\mathbb{C}^d)$, we define the $k$-th order commutant as
\begin{eqnarray}
\mathrm{Comm}(S,k)
:= \left\{
\hat{A} \in \mathcal{L}\bigl((\mathbb{C}^d)^{\otimes k}\bigr)
\,:\,
[\hat{A}, \hat{B}^{\otimes k}] = 0
\;\; \forall \hat{B} \in S
\right\}.
\end{eqnarray}
This is a subalgebra of $\mathcal{L}((\mathbb{C}^d)^{\otimes k})$, and it consists of the operators that are fixed by the Haar twirl.
\end{definition}

By construction, the image of the moment operator $\mathcal{T}_k$ coincides with $\mathrm{Comm}(U(d),k)$. Since $\mathcal{T}_k$ is an orthogonal projection, any operator $\hat{O}$ admits a decomposition of the form
\begin{eqnarray}
\mathcal{T}_k(\hat{O})
= \sum_{i=1}^{\dim \mathrm{Comm}(U(d),k)}
\langle \hat{P}_i , \hat{O} \rangle_{\mathrm{HS}} \, \hat{P}_i ,
\label{eq:Tk_commutant_expansion_re}
\end{eqnarray}
where $\{\hat{P}_i\}$ is an orthonormal basis of the commutant with respect to the Hilbert--Schmidt inner product. The practical challenge, therefore, is to identify a concrete and computationally convenient basis for this invariant subalgebra.

A central result of representation theory provides a complete answer. The Schur--Weyl duality states that the commutant of the tensor-power representation of $U(d)$ is generated by the natural action of the symmetric group on tensor factors. More precisely, the only operators that commute with $\hat{U}^{\otimes k}$ for all unitaries $\hat{U}$ are linear combinations of permutation operators.

\begin{theorem}[Schur--Weyl duality]
For the natural action of $U(d)$ on $(\mathbb{C}^d)^{\otimes k}$,
\begin{eqnarray}
\mathrm{Comm}(U(d),k)
= \mathrm{span}\bigl\{ \hat{V}_d(\pi) : \pi \in S_k \bigr\},
\end{eqnarray}
where $\hat{V}_d(\pi)$ denotes the operator that permutes tensor factors according to $\pi$.
\end{theorem}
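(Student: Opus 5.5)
The plan is to prove the two inclusions separately and to obtain the nontrivial one from the von Neumann double-commutant theorem. Write $\mathcal{A}_k:=\mathrm{span}\{\hat{V}_d(\pi):\pi\in S_k\}$ and $\mathcal{B}_k:=\mathrm{span}\{\hat{U}^{\otimes k}:\hat{U}\in U(d)\}$, both regarded as subsets of $\mathcal{L}((\mathbb{C}^d)^{\otimes k})\cong\mathcal{L}(\mathbb{C}^d)^{\otimes k}$.

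\emph{Inclusion $\supseteq$.} This follows immediately from the definitions. Since $\hat{V}_d(\pi)\,(\hat{B}_1\otimes\cdots\otimes\hat{B}_k)\,\hat{V}_d(\pi)^{\dagger}=\hat{B}_{\pi^{-1}(1)}\otimes\cdots\otimes\hat{B}_{\pi^{-1}(k)}$, every permutation operator commutes with $\hat{U}^{\otimes k}$. Hence $\mathcal{A}_k\subseteq\mathrm{Comm}(U(d),k)$.

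\emph{Inclusion $\subseteq$.} I would argue in three steps.

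First, I would identify the commutant of the permutations. An operator $\hat{A}\in\mathcal{L}(\mathbb{C}^d)^{\otimes k}$ commutes with all $\hat{V}_d(\pi)$ if and only if it is invariant under the permutation-conjugation action. That is, $\mathcal{A}_k'$ is the symmetric subspace $\mathrm{Sym}^k(\mathcal{L}(\mathbb{C}^d))$. By the polarization identity this subspace is spanned by the elements $\hat{B}^{\otimes k}$ with $\hat{B}\in\mathcal{L}(\mathbb{C}^d)$. Concretely, $\sum_{\pi}\hat{B}_{\pi(1)}\otimes\cdots\otimes\hat{B}_{\pi(k)}$ is a combination of terms $(\sum_j\epsilon_j\hat{B}_j)^{\otimes k}$ with $\epsilon_j\in\{0,1\}$, via inclusion--exclusion.

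Second, I would show that $\mathcal{B}_k=\mathrm{span}\{\hat{B}^{\otimes k}:\hat{B}\in\mathcal{L}(\mathbb{C}^d)\}$, so that $\mathcal{B}_k=\mathcal{A}_k'$. I expect this to be the main obstacle. The route I would try first is a duality argument. Take a linear functional $\hat{L}$ vanishing on $\mathcal{B}_k$. Then the polynomial $p(\hat{B}):=\mathrm{Tr}[\hat{L}\hat{B}^{\otimes k}]$ vanishes on $U(d)$. Set $g(\hat{H}):=p(e^{i\hat{H}})$. This is a holomorphic function of the entries of $\hat{H}$ that vanishes for all Hermitian $\hat{H}$, i.e.\ on a totally real subspace of full real dimension $d^2$. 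It therefore vanishes for all complex matrices $\hat{H}$. Since $e^{i\hat{H}}$ covers a neighbourhood of $\hat{\mathds{1}}$ in $GL(d)$, the polynomial $p$ vanishes on an open set and hence identically. So $\hat{L}$ annihilates every $\hat{B}^{\otimes k}$, which establishes the claimed equality. A fallback is to cite Zariski density of $U(d)$ in $GL(d,\mathbb{C})$.

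Third, I would apply the double-commutant theorem. The commutant of $U(d)$ on $k$ copies coincides with the commutant of the algebra $\mathcal{B}_k$, which by the second step is $\mathcal{A}_k'$. So $\mathrm{Comm}(U(d),k)=\mathcal{A}_k''$. The span $\mathcal{A}_k$ is a unital $*$-subalgebra of a finite-dimensional matrix algebra, since $\hat{V}_d(\pi)^{\dagger}=\hat{V}_d(\pi^{-1})$ and it is closed under products. The finite-dimensional von Neumann double-commutant theorem then gives $\mathcal{A}_k''=\mathcal{A}_k$, which proves $\mathrm{Comm}(U(d),k)=\mathrm{span}\{\hat{V}_d(\pi)\}$.
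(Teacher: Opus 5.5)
Your proof is correct. It necessarily takes a different route from the paper, because the paper does not prove this theorem at all. It quotes Schur--Weyl duality as a standard result of representation theory, with the Haar-measure literature (e.g.\ Mele's tutorial) as the implicit source. It then uses only the consequence that every twirled operator $\mathcal{T}_k(\hat{O})$ is a linear combination of permutation operators.

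Your argument is the standard self-contained bicommutant proof, and each step holds up:
\begin{itemize}
\item Conjugation by $\hat{V}_d(\pi)$ permutes the factors of $\mathcal{L}(\mathbb{C}^d)^{\otimes k}$. So $\mathcal{A}_k'$ is the symmetric subspace, which polarization shows is spanned by the $\hat{B}^{\otimes k}$.
\item The function $g(\hat{H})=p(e^{i\hat{H}})$ is entire in the complex coordinates of $\hat{H}$ with respect to a basis of Hermitian matrices. It vanishes at all real coordinates, hence identically. Since $\exp$ is a local diffeomorphism at $0$, the polynomial $p$ vanishes on an open set and therefore everywhere. This gives $\mathcal{B}_k=\mathrm{span}\{\hat{B}^{\otimes k}\}=\mathcal{A}_k'$, which is just a hands-on form of the Zariski density of $U(d)$ in $GL(d,\mathbb{C})$.
\item $\mathcal{A}_k$ is a unital $*$-algebra, so $\mathrm{Comm}(U(d),k)=\mathcal{B}_k'=\mathcal{A}_k''=\mathcal{A}_k$.
\end{itemize}

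What your route buys is a proof that needs no knowledge of the irreducible representations of $S_k$ or $U(d)$. What the citation buys the paper is brevity. It also gives implicit access to the full decomposition of $(\mathbb{C}^d)^{\otimes k}$ into paired irreducibles, which your argument does not produce and the statement does not need.

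One remark: neither the statement nor your proof asserts linear independence of the $\hat{V}_d(\pi)$, and independence fails once $k>d$ (e.g.\ $d=2$, $k=4$). This is harmless for the span identity. It is, however, why the paper's later inversion of the permutation Gram matrix has to be taken on the support of the commutant.
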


This classification has an immediate and far-reaching consequence: Haar integration cannot generate arbitrary tensor structures. Instead, every Haar-averaged operator must be expressible as a linear combination of permutation operators. In particular, the general expansion of the moment operator takes the explicit form
\begin{eqnarray}
\mathcal{T}_k(\hat{O})
= \sum_{\pi \in S_k} t_\pi \, \hat{V}_d(\pi),
\qquad
t_\pi \in \mathbb{C}.
\label{eq:Tk_perm_expansion_re}
\end{eqnarray}

From this perspective, the role of Haar invariance becomes transparent: it eliminates all operator components except those compatible with permutation symmetry. What remains is a finite-dimensional algebra whose basis elements are labeled by elements of the symmetric group. The task of evaluating Haar averages is therefore reduced to determining the coefficients $\{t_\pi\}$ associated with each permutation.

In the following sections, we make this decomposition fully explicit. We first summarize the algebraic properties of permutation operators and their associated invariant projectors, and then introduce the Weingarten calculus, which provides a systematic method for computing the coefficients $t_\pi$ appearing in Eq.~(\ref{eq:Tk_perm_expansion_re}).

\subsection{Permutation operators and invariant projectors}\label{sec:perm_proj}

The Schur--Weyl duality identifies permutation operators as the fundamental building blocks of Haar-invariant operators on tensor-power Hilbert spaces. In this subsection, we collect the concrete properties of these operators that are repeatedly used in explicit moment calculations. Our emphasis is practical: permutation operators provide a convenient algebraic basis in which Haar averages can be expanded and evaluated.

For each permutation $\pi \in S_k$, we define the corresponding operator $\hat{V}_d(\pi)$ acting on $(\mathbb{C}^d)^{\otimes k}$ by permuting tensor factors. In the computational basis, this action can be written as
\begin{eqnarray}
\hat{V}_d(\pi)
= \sum_{j_1,\ldots,j_k=0}^{d-1}
\ket{j_{\pi^{-1}(1)}, \ldots, j_{\pi^{-1}(k)}}
\bra{j_1,\ldots,j_k},
\end{eqnarray}
where the inverse permutation appears due to the convention that operators act from the left on kets.

The mapping $\pi \mapsto \hat{V}_d(\pi)$ defines a unitary representation of the symmetric group on the tensor-product space. In particular, permutation operators satisfy the relations
\begin{eqnarray}
\hat{V}_d(\mathrm{id}) &=& \hat{\openone}, \nonumber\\
\hat{V}_d(\pi)\hat{V}_d(\nu) &=& \hat{V}_d(\pi \circ \nu), \nonumber\\
\hat{V}_d(\pi)^\dagger &=& \hat{V}_d(\pi^{-1}).
\end{eqnarray}
When acting on product states or product operators, $\hat{V}_d(\pi)$ simply relabels the tensor factors. For instance,
\begin{eqnarray}
\hat{V}_d(\pi)
\bigl(\ket{\psi_1}\otimes\cdots\otimes\ket{\psi_k}\bigr)
&=&
\ket{\psi_{\pi^{-1}(1)}}\otimes\cdots\otimes\ket{\psi_{\pi^{-1}(k)}},
\nonumber\\
\hat{V}_d(\pi)
\bigl(\hat{A}_1\otimes\cdots\otimes\hat{A}_k\bigr)
\hat{V}_d(\pi)^\dagger
&=&
\hat{A}_{\pi^{-1}(1)}\otimes\cdots\otimes\hat{A}_{\pi^{-1}(k)}.
\end{eqnarray}

A key feature that makes permutation operators particularly useful in Haar calculations is that their traces depend only on the cycle structure of the underlying permutation. Specifically, for $\sigma,\pi \in S_k$,
\begin{eqnarray}
\tr\!\left[\hat{V}_d(\sigma)\hat{V}_d(\pi)\right]
= \tr\!\left[\hat{V}_d(\sigma\circ\pi)\right]
= d^{\#\mathrm{cyc}(\sigma\circ\pi)},
\end{eqnarray}
where $\#\mathrm{cyc}(\rho)$ denotes the number of disjoint cycles in the permutation $\rho$. This cycle-counting rule is the origin of the characteristic polynomial dependence on $d$ that appears in Haar-averaged expressions.

Beyond serving as a basis for invariant operators, permutation operators also enable the construction of projectors onto symmetry sectors of the tensor-product space. The simplest examples are the fully symmetric and fully antisymmetric subspaces, which are associated with the one-dimensional irreducible representations of the symmetric group, namely the trivial and sign representations, under the natural action of $S_k$ on $(\mathbb{C}^d)^{\otimes k}$.

The projector onto the symmetric subspace is given by
\begin{eqnarray}
\hat{P}_{\mathrm{sym}}^{(d,k)}
= \frac{1}{k!} \sum_{\pi \in S_k} \hat{V}_d(\pi),
\end{eqnarray}
and projects onto the subspace of states invariant under all permutations. Similarly, the projector onto the antisymmetric subspace reads
\begin{eqnarray}
\hat{P}_{\mathrm{asym}}^{(d,k)}
= \frac{1}{k!} \sum_{\pi \in S_k} \mathrm{sgn}(\pi)\, \hat{V}_d(\pi),
\end{eqnarray}
where $\mathrm{sgn}(\pi)$ denotes the signature of the permutation.

These projectors are Hermitian, idempotent, and orthogonal to each other, illustrating how permutation symmetry organizes $(\mathbb{C}^d)^{\otimes k}$ into invariant subspaces. More generally, Schur--Weyl duality implies that further symmetry sectors can be constructed from linear combinations of permutation operators weighted by the characters of $S_k$. 

In the context of Haar integration, however, it is typically sufficient to work with the symmetric projector, which provides a natural and compact way to express averaged operators and to isolate the symmetry sector relevant for state moments.

\subsection{$k$-th moment operator and the Weingarten expansion}\label{sec:kth_moment_weingarten}

Having identified permutation operators as a complete basis for Haar-invariant operators, we now describe how the $k$-th moment operator is constructed explicitly in this basis. The key idea is that Haar integration induces a projection whose coefficients are fixed uniquely by the orthogonality relations among permutation operators. The Weingarten calculus provides an efficient and systematic way to compute these coefficients.

We recall that the $k$-th moment (or twirling) operator acts on $\hat{O} \in \mathcal{L}((\mathbb{C}^d)^{\otimes k})$ as
\begin{eqnarray}
\mathcal{T}_k(\hat{O})
= \int \hat{U}^{\otimes k} \hat{O} (\hat{U}^\dagger)^{\otimes k} \, d\hat{U}.
\end{eqnarray}
By construction, $\mathcal{T}_k$ is an orthogonal projector onto the commutant algebra $\mathrm{Comm}(U(d),k)$ with respect to the Hilbert--Schmidt inner product. Therefore, expanding $\mathcal{T}_k(\hat{O})$ in the permutation basis amounts to determining the projection of $\hat{O}$ onto this invariant subspace.

To make this projection explicit, we consider the Gram matrix of permutation operators,
\begin{eqnarray}
G_{\pi,\sigma}
:= \tr\!\left[ \hat{V}_d(\pi)^\dagger \hat{V}_d(\sigma) \right]
= d^{\#\mathrm{cyc}(\pi^{-1} \circ \sigma)},
\qquad
\pi,\sigma \in S_k,
\label{eq:gram_perm}
\end{eqnarray}
which encodes the non-orthogonality of the permutation basis. The coefficients of the expansion of $\mathcal{T}_k(\hat{O})$ are obtained by inverting this Gram matrix on the support of the commutant. This inversion implements the orthogonal projection onto a non-orthonormal basis.

More concretely, one may write
\begin{eqnarray}
\mathcal{T}_k(\hat{O})
= \sum_{\pi,\sigma \in S_k}
\tr\!\left[\hat{V}_d(\pi)^\dagger \hat{O}\right]
\,
(G^{-1})_{\pi,\sigma}
\,
\hat{V}_d(\sigma).
\label{eq:Tk_Weingarten_re}
\end{eqnarray}
The matrix $G^{-1}$ appearing in Eq.~(\ref{eq:Tk_Weingarten_re}) defines the Weingarten function. Specifically, the Weingarten coefficient associated with a permutation $\rho \in S_k$ is given by
\begin{eqnarray}
\mathrm{Wg}(\rho,d)
:= (G^{-1})_{\mathrm{id},\rho}.
\end{eqnarray}
By permutation invariance, all matrix elements depend only on $\pi^{-1} \circ \sigma$.

With this definition, the expansion simplifies to the familiar form
\begin{eqnarray}
\mathcal{T}_k(\hat{O})
= \sum_{\pi,\sigma \in S_k}
\mathrm{Wg}(\pi^{-1}\sigma,d)
\,
\tr\!\left[\hat{V}_d(\pi)^\dagger \hat{O}\right]
\,
\hat{V}_d(\sigma).
\end{eqnarray}

This formulation highlights the complementary roles of symmetry and computation in Haar integration. The permutation operators specify the allowed invariant tensor structures, while the Weingarten function encodes the dimension-dependent weights required to assemble the correct projection. Importantly, the dimension dependence associated with the Haar integration enters through the Weingarten coefficients, while any additional $d$-dependence may arise from the operator $\hat{O}$ itself through its trace invariants.

In summary, the Weingarten expansion provides an explicit and algorithmic representation of the $k$-th moment operator. Combined with the structural insights from Schur--Weyl duality, it furnishes a complete framework for evaluating Haar-averaged quantities and for reducing seemingly complicated integrals to finite sums over permutations.

\subsection{Weingarten calculus and trace identities}

In the previous subsection, we expressed the $k$-th moment operator as an explicit expansion in the permutation basis, with coefficients given by the unitary Weingarten function. To turn this structural representation into concrete Haar-average formulas, one needs a collection of summation identities for the Weingarten coefficients together with trace identities for permutation operators. The purpose of this subsection is to gather these ingredients in a form tailored to later calculations.

We recall that the unitary Weingarten function $\mathrm{Wg}(\pi,d)$ is defined as the inverse of the Gram matrix
\begin{eqnarray}
G^{(k)}_{\pi,\sigma} = d^{\#\mathrm{cyc}(\pi\circ\sigma^{-1})},
\end{eqnarray}
whose entries depend only on the cycle decomposition of permutations. As a consequence, many identities involving $\mathrm{Wg}(\pi,d)$ reduce to elementary combinatorial properties of the symmetric group.

One identity that will appear repeatedly is the total sum of the Weingarten coefficients,
\begin{eqnarray}
\sum_{\tau \in S_k} \mathrm{Wg}(\tau,d)
=
\frac{1}{\prod_{i=0}^{k-1}(d+i)}.
\end{eqnarray}
This identity follows from the classical relation
\begin{eqnarray}
\sum_{\sigma \in S_k} d^{\#\sigma}
=
\prod_{i=0}^{k-1}(d+i),
\end{eqnarray}
which expresses the generating function of the Stirling numbers of the first kind. From the perspective of the $k$-th moment operator, this relation reflects the fact that a uniform contraction of the inverse Gram matrix yields the reciprocal of the total cycle weight encoded in $G^{(k)}$.

A second useful property concerns the convolution of the Weingarten function with functions on the symmetric group. Since $\mathrm{Wg}(\cdot,d)$ is a class function, its convolution factorizes in a simple way: for any class function $f:S_k\to\mathbb{C}$, depending only on the cycle structure,
\begin{eqnarray}
\sum_{\pi,\sigma\in S_k}
\mathrm{Wg}(\pi^{-1}\sigma,d)\, f(\sigma)
=
\left(
\sum_{\pi\in S_k} \mathrm{Wg}(\pi,d)
\right)
\left(
\sum_{\tau\in S_k} f(\tau)
\right).
\end{eqnarray}
This factorization plays a central role in simplifying Haar averages, as it allows one to separate the dimension-dependent Weingarten weights from operator-dependent traces.

As a direct consequence, one obtains a compact expression for matrix elements of the $k$-th moment operator in the computational basis:
\begin{eqnarray}
\bra{0}^{\otimes k} \mathcal{T}_k(\hat{O}) \ket{0}^{\otimes k}
=
\frac{1}{d(d+1)\cdots(d+k-1)}
\sum_{\pi\in S_k}
\tr\!\left[\hat{V}_d^\dagger(\pi)\hat{O}\right].
\end{eqnarray}
This identity will be used repeatedly in the main text, where Haar-averaged quantities reduce to such scalar contractions.

As a further application of the Weingarten expansion, we recall that the Haar twirling of tensor powers of a pure state yields the normalized projector onto the symmetric subspace. For any $\ket{\phi}\in\mathbb{C}^d$ and any integer $k\ge1$,
\begin{eqnarray}
\mathbb{E}_{U\sim\mu_H}
\left[
\hat{U}^{\otimes k}(\ket{\phi}\bra{\phi})^{\otimes k}\hat{U}^{\dagger\otimes k}
\right]
=
\frac{\hat{P}_{\mathrm{sym}}^{(d,k)}}{\tr\!\left[\hat{P}_{\mathrm{sym}}^{(d,k)}\right]},
\end{eqnarray}
where $\tr[\hat{P}_{\mathrm{sym}}^{(d,k)}] = {{k+d-1}\choose k}$. This result follows from the permutation invariance of $(\ket{\phi}\bra{\phi})^{\otimes k}$ and the uniqueness of the Haar-invariant projector in the commutant.

Finally, to evaluate explicit Haar averages appearing throughout this work, one needs to compute traces of operator tensors contracted with permutation operators,
\begin{eqnarray}
\tr{\left[
\bigl(\hat{A}_1\otimes\cdots\otimes\hat{A}_k\bigr)\hat{V}_d(\pi)
\right]}.
\end{eqnarray}
The following proposition provides the corresponding cycle-decomposition formula, which serves as the final computational ingredient in the Weingarten framework.
\begin{proposition}\label{cycle-decomposition}\cite{cho2025entangling}
Let $\pi \in S_k$ with disjoint cycle decomposition
\begin{eqnarray}
\pi = c_1 c_2 \cdots c_r,
\end{eqnarray}
where each cycle $c=(l_1\,l_2\,\dots\,l_{k_c})$ has length $k_c$. Then, for operators $\hat{A}_1 ,\ldots, \hat{A}_k \in \mathcal{L}(\mathbb{C}^d)$, the trace has the following decomposition:
\begin{eqnarray}
\tr{\left[ \bigl( \hat{A}_1 \otimes\cdots\otimes \hat{A}_k \bigr) \hat{V}_d(\pi) \right]}
=
\prod_{c \in \pi}
\tr{\left[ \prod_{m=0}^{k_c-1} \hat{A}_{c^{-m} (l_c)} \right]}.
\end{eqnarray}
\end{proposition}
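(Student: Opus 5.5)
The plan is a direct computation in the computational basis: write $\hat{V}_d(\pi)$ as a sum of rank-one matrix units, insert it into the trace, and show that the resulting sum over indices factorizes along the disjoint cycles of $\pi$. Since the statement is purely linear-algebraic, no Haar or Weingarten input is needed. I will use the convention fixed in Sec.~\ref{sec:perm_proj},
\begin{eqnarray}
\hat{V}_d(\pi)=\sum_{j_1,\ldots,j_k}\ket{j_{\pi^{-1}(1)},\ldots,j_{\pi^{-1}(k)}}\bra{j_1,\ldots,j_k}.
\end{eqnarray}

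\textbf{Step 1 (matrix elements).} First I compute $\tr{[(\hat{A}_1\otimes\cdots\otimes\hat{A}_k)\hat{V}_d(\pi)]}$ by closing the trace with $\sum_{\mathbf{j}}\bra{j_1,\ldots,j_k}\cdot\ket{j_1,\ldots,j_k}$ after re-indexing. This gives
\begin{eqnarray}
\sum_{j_1,\ldots,j_k}\prod_{m=1}^{k}\bra{j_m}\hat{A}_m\ket{j_{\pi^{-1}(m)}},
\end{eqnarray}
where every index $j_m$ appears exactly twice: once as a row index of $\hat{A}_m$, and once as a column index of $\hat{A}_{\pi(m)}$.

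\textbf{Step 2 (factorization over cycles).} Next I partition $\{1,\ldots,k\}$ into the orbits of $\pi$, which are the cycles $c_1,\ldots,c_r$. The summand is a product of factors, and each index lives inside a single orbit. The multiple sum therefore splits into a product over cycles of independent sums.

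\textbf{Step 3 (single cycle).} For one cycle $c$ with base point $l_c$, I follow the chain $l_c\to\pi^{-1}(l_c)\to\pi^{-2}(l_c)\to\cdots$. The column index of $\hat{A}_{l_c}$ is the row index of $\hat{A}_{\pi^{-1}(l_c)}$, and so on around the cycle. Summing over the $k_c$ indices of the cycle is then exactly matrix multiplication closed into a trace:
\begin{eqnarray}
\tr\Bigl[\hat{A}_{l_c}\hat{A}_{c^{-1}(l_c)}\cdots\hat{A}_{c^{-(k_c-1)}(l_c)}\Bigr].
\end{eqnarray}
This is independent of the choice of $l_c$ by cyclicity of the trace. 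Multiplying these traces over all cycles gives the claimed product formula.

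\textbf{Main obstacle.} The only real subtlety is bookkeeping of orientation: whether the product inside each cycle trace runs along $c$ or along $c^{-1}$, which depends on the $\pi^{-1}$ convention in $\hat{V}_d(\pi)$. I will settle this by tracking which index is contracted with which factor in Step 1. As sanity checks, the case $k=2$ with $\pi$ the swap must give $\tr{(\hat{A}_1\hat{A}_2)}$, and the identity permutation must give $\prod_m\tr{\hat{A}_m}$. A $3$-cycle then fixes the orientation unambiguously. If it came out reversed, I would simply relabel $\pi\leftrightarrow\pi^{-1}$, which reverses the cyclic order inside each cycle trace.
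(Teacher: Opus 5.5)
Your proposal is correct. The paper gives no proof of its own here: the proposition is imported from \cite{cho2025entangling}, so there is nothing in the text to compare against, and your index computation is the standard argument one would supply.

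With the paper's convention $\hat{V}_d(\pi)\ket{j_1,\ldots,j_k}=\ket{j_{\pi^{-1}(1)},\ldots,j_{\pi^{-1}(k)}}$, each step checks out:
\begin{itemize}
\item Your Step~1 expression $\sum_{\mathbf{j}}\prod_{m}\bra{j_m}\hat{A}_m\ket{j_{\pi^{-1}(m)}}$ is right.
\item The orbit factorization in Step~2 is valid, because each factor couples only $j_m$ and $j_{\pi^{-1}(m)}$.
\item Since $\pi$ acts as $c$ on the orbit of $l_c$, the chain $l_c\to\pi^{-1}(l_c)\to\pi^{-2}(l_c)\to\cdots$ in Step~3 produces $\hat{A}_{l_c}\hat{A}_{c^{-1}(l_c)}\cdots\hat{A}_{c^{-(k_c-1)}(l_c)}$. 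This is exactly the order $\prod_{m=0}^{k_c-1}\hat{A}_{c^{-m}(l_c)}$ in the statement.
\end{itemize}

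So the orientation issue you flag is already settled by your own Steps~1 and~3. Your fallback of ``relabelling $\pi\leftrightarrow\pi^{-1}$'' would not have been a legitimate repair, however. The statement fixes both the convention for $\hat{V}_d(\pi)$ and the cyclic order. For cycles of length $\ge 3$ with non-commuting $\hat{A}_m$, the two orientations give genuinely different traces, so a reversed result would have meant the statement was false, not that a relabelling was needed.
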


Thus, the trace factorizes over the cycles of $\pi$, with each cycle contributing a product of operators in cyclic order. Together with the Weingarten identities summarized above, this trace factorization completes the set of tools required for all Haar-average calculations appearing in the main text.

\subsection{Derivation of Proposition~\ref{prop:FD_closed_form_unitary}}

We now combine the representation-theoretic tools developed in the previous subsections to derive the closed-form expressions for Haar-averaged quantities appearing in the main text, in particular Proposition~\ref{prop:FD_closed_form_unitary}. Conceptually, this subsection serves as the point where the abstract structure of Haar integration---expressed in terms of permutation operators, Weingarten coefficients, and cycle decompositions---is translated into explicit formulas involving simple trace invariants.

The essential object to be evaluated is a Haar average of multilinear forms of the type
\begin{eqnarray}
\int d\phi \,
\bra{\phi}^{\otimes k} \hat{O} \ket{\phi}^{\otimes k},
\end{eqnarray}
where $d\phi$ denotes the unitarily invariant measure on pure states and $\hat{O}$ is an operator acting on $(\mathbb{C}^d)^{\otimes k}$ with a fixed tensor-product structure.

As shown in Sec.~\ref{sec:kth_moment_weingarten}, Haar averaging over pure states can be expressed in terms of the $k$-th moment operator. In particular, using the fact that the Haar twirling of $(\ket{\phi}\bra{\phi})^{\otimes k}$ yields the normalized projector onto the symmetric subspace, one obtains the general identity
\begin{eqnarray}
\int d \phi \,
\bra{\phi}^{\otimes k} \hat{O} \ket{\phi}^{\otimes k}
=
\frac{
\tr\!\left[ \hat{O} \, \hat{P}_{\mathrm{sym}}^{(d,k)} \right]
}{
\tr\!\left[ \hat{P}_{\mathrm{sym}}^{(d,k)} \right]
}.
\label{eq:pure_state_average_general}
\end{eqnarray}
Thus, the evaluation of Haar-averaged state moments reduces to computing traces of $\hat{O}$ contracted with permutation operators spanning the symmetric projector. This reduction is purely structural and relies only on unitary invariance.

In the situations relevant to this work, the operator $\hat{O}$ possesses a product structure built from operators $\hat{X}_\alpha \in \mathcal{L}(\mathbb{C}^d)$. The explicit computation then proceeds by expanding $\hat{P}_{\mathrm{sym}}^{(d,k)}$ in the permutation basis and applying the cycle-decomposition formula of Proposition~\ref{cycle-decomposition}. This step converts each tensor trace into products of ordinary operator traces, with the cycle structure of permutations dictating the precise form of the resulting expressions.

We summarize below the resulting formulas for the second and fourth pure-state moments, which serve as the basic building blocks for the quantities studied in the main text. These expressions are obtained by expanding the symmetric projector in the permutation basis and applying the cycle-decomposition identity of Proposition~\ref{cycle-decomposition} to reduce each contribution to products of ordinary operator traces. The explicit evaluation follows the general procedure outlined above; closely related calculations have appeared in Refs.~\cite{cho2025fundamental,cho2025witness}.

\begin{proposition}[\cite{cho2025fundamental,cho2025witness}]
For $k=2$, let $\hat{O}_{\alpha} = \hat{X}_\alpha \otimes \hat{X}_\alpha^\dagger$. Then
\begin{eqnarray}
\int d\phi \,
\bra{\phi}^{\otimes 2} \hat{O}_{\alpha} \ket{\phi}^{\otimes 2}
=
\frac{
\abs{\tr\!\left(\hat{X}_\alpha\right)}^2 + d
}{
d(d+1)
}.
\label{eq:scalar2}
\end{eqnarray}
For $k=4$, define
\begin{eqnarray}
\hat{O}_{\alpha\beta}
=
\hat{X}_\alpha \otimes \hat{X}_\alpha^\dagger \otimes \hat{X}_\beta \otimes \hat{X}_\beta^\dagger .
\label{eq:Oab}
\end{eqnarray}
Then
\begin{eqnarray}
\int d\phi \,
\bra{\phi}^{\otimes 4} \hat{O}_{\alpha\beta} \ket{\phi}^{\otimes 4}
&=&
\frac{1}{d(d+1)(d+2)(d+3)}
\Big\{
d(d+4)
+ (d+4)\bigl(
\abs{\tr(\hat{X}_\alpha)}^2 + \abs{\tr(\hat{X}_\beta)}^2
\bigr)
+ \abs{\tr(\hat{X}_\alpha)}^2 \abs{\tr(\hat{X}_\beta)}^2
\nonumber \\
&&\quad
+ \abs{\tr(\hat{X}_\alpha \hat{X}_\beta)}^2
+ \abs{\tr(\hat{X}_\alpha \hat{X}_\beta^\dagger)}^2
+ 2\,\mathrm{Re}\!\left[
\tr(\hat{X}_\alpha \hat{X}_\beta)
\tr(\hat{X}_\alpha^\dagger)
\tr(\hat{X}_\beta^\dagger)
\right]
\nonumber \\
&&\quad
+ 2\,\mathrm{Re}\!\left[
\tr(\hat{X}_\alpha \hat{X}_\beta^\dagger)
\tr(\hat{X}_\alpha^\dagger)
\tr(\hat{X}_\beta)
\right]
+ 2\,\mathrm{Re}\!\left[
\tr(\hat{X}_\alpha \hat{X}_\beta \hat{X}_\alpha^\dagger \hat{X}_\beta^\dagger)
\right]
\Big\}.
\label{eq:dbar}
\end{eqnarray}
Each term corresponds to a distinct conjugacy class in $S_4$.

\end{proposition}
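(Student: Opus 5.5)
The plan is to reduce both integrals to finite sums over permutation traces and then evaluate each trace with Proposition~\ref{cycle-decomposition}. We assume throughout that $\hat{X}_\alpha,\hat{X}_\beta$ are unitary, as in the main text; this is what makes $\tr(\hat{X}\hat{X}^\dagger)=d$ and lets adjacent $\hat{X}\hat{X}^\dagger$ factors cancel.

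\emph{Reduction to permutation traces.} Start from Eq.~(\ref{eq:pure_state_average_general}). Write $\hat{P}_{\mathrm{sym}}^{(d,k)}=\frac{1}{k!}\sum_{\pi\in S_k}\hat{V}_d(\pi)$ and $\tr\hat{P}_{\mathrm{sym}}^{(d,k)}=\binom{d+k-1}{k}$. The factors of $k!$ cancel, leaving
\[
\int d\phi\,\bra{\phi}^{\otimes k}\hat{O}\ket{\phi}^{\otimes k}=\frac{1}{d(d+1)\cdots(d+k-1)}\sum_{\pi\in S_k}\tr\!\left[\hat{O}\,\hat{V}_d(\pi)\right].
\]
For $k=2$ the sum has two terms. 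The identity gives $\tr\hat{X}_\alpha\,\tr\hat{X}_\alpha^\dagger=\abs{\tr\hat{X}_\alpha}^2$, and the swap gives $\tr(\hat{X}_\alpha\hat{X}_\alpha^\dagger)=d$. This yields Eq.~(\ref{eq:scalar2}) directly.

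\emph{The case $k=4$.} Label the tensor slots $1,2,3,4$ with $\hat{A}_1=\hat{X}_\alpha$, $\hat{A}_2=\hat{X}_\alpha^\dagger$, $\hat{A}_3=\hat{X}_\beta$, $\hat{A}_4=\hat{X}_\beta^\dagger$. Then sum the $24$ cycle-decomposed traces, grouped by conjugacy class.
\begin{itemize}
\item The identity gives $\abs{\tr\hat{X}_\alpha}^2\abs{\tr\hat{X}_\beta}^2$.
\item Among the six transpositions, $(12)$ and $(34)$ give $d\abs{\tr\hat{X}_\beta}^2$ and $d\abs{\tr\hat{X}_\alpha}^2$. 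The pairs $(13),(24)$ and $(14),(23)$ are complex conjugates of one another; they produce the two $2\,\mathrm{Re}[\cdots]$ terms with $\tr(\hat{X}_\alpha\hat{X}_\beta)$ and $\tr(\hat{X}_\alpha\hat{X}_\beta^\dagger)$.
\item The three double transpositions give $d^2$, $\abs{\tr(\hat{X}_\alpha\hat{X}_\beta)}^2$ and $\abs{\tr(\hat{X}_\alpha\hat{X}_\beta^\dagger)}^2$.
\item Each of the eight $3$-cycles fixes one slot. Its three cyclically ordered factors always place $\hat{X}$ next to its adjoint, so unitarity collapses the cycle trace to a single trace. Fixing slot $3$ or $4$ gives $\abs{\tr\hat{X}_\beta}^2$ (four permutations), and fixing $1$ or $2$ gives $\abs{\tr\hat{X}_\alpha}^2$ (four permutations). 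This contributes $4(\abs{\tr\hat{X}_\alpha}^2+\abs{\tr\hat{X}_\beta}^2)$. Together with the transpositions it produces the coefficient $(d+4)$.
\item Of the six $4$-cycles, the four in which $\hat{X}_\alpha$ is cyclically adjacent to $\hat{X}_\alpha^\dagger$ reduce to $\tr\hat{\openone}=d$. The two in which $\alpha$ and $\alpha^\dagger$ sit opposite give $\tr(\hat{X}_\alpha\hat{X}_\beta\hat{X}_\alpha^\dagger\hat{X}_\beta^\dagger)$ and its conjugate.
\end{itemize}
Collecting constants, $d^2+4d=d(d+4)$, and dividing by $d(d+1)(d+2)(d+3)$ yields Eq.~(\ref{eq:dbar}).

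\emph{Expected obstacle.} The only delicate step is bookkeeping. The orientation convention for $\hat{V}_d(\pi)$ (with $\pi^{-1}$ acting on kets) determines whether a given cycle yields a trace or its complex conjugate. I would handle this by noting two facts. First, $\pi$ and $\pi^{-1}$ are summed together. Second, because $\hat{O}^\dagger$ equals $\hat{O}$ with slots relabeled $1\leftrightarrow2$, $3\leftrightarrow4$, the terms pair into complex conjugates. Hence only the unordered cyclic arrangement matters, and each mixed term appears as $2\,\mathrm{Re}[\cdot]$.

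As a sanity check, setting $\hat{X}_\alpha=\hat{X}_\beta=\hat{X}$ should reproduce the numerator of Eq.~(\ref{eq:D_trace_closedform}). In that case $\abs{\tr(\hat{X}\hat{X}^\dagger)}^2=d^2$ and the last trace becomes $\tr\hat{\openone}=d$, which together give $2d(d+3)$ as required.
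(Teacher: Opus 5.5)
Your proposal is correct and follows the same route the paper indicates. You reduce the pure-state average via the symmetric projector to $\sum_{\pi\in S_k}\mathrm{Tr}[\hat{O}\,\hat{V}_d(\pi)]/[d(d+1)\cdots(d+k-1)]$ and evaluate each term with the cycle-decomposition formula. Your class-by-class bookkeeping checks out: the transpositions and $3$-cycles combine into the $(d+4)$ coefficient, and the double transposition $(12)(34)$ together with the four $4$-cycles having $\hat{X}_\alpha$ adjacent to $\hat{X}_\alpha^\dagger$ gives $d(d+4)$.

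Your explicit unitarity assumption is genuinely needed. The stated formula silently uses that $\hat{X}_\alpha\hat{X}_\alpha^\dagger$ and $\hat{X}_\beta\hat{X}_\beta^\dagger$ are the identity, which the appendix's phrasing $\hat{X}_\alpha\in\mathcal{L}(\mathbb{C}^d)$ does not guarantee.
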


These expressions make explicit how Haar-averaged state moments reduce to a finite set of trace invariants determined by the cycle structure of permutations in $S_k$. In particular, the appearance of mixed trace terms in the $k=4$ case reflects the richer permutation structure beyond simple pairings. As a direct application, by setting $\alpha=\beta$ and rearranging the resulting terms, one obtains closed-form formulas for the quantities $(F,D)$ in the case of unitary errors, as presented in Proposition~\ref{prop:FD_closed_form_unitary}.

\begin{corollary}
\label{prop:FD_closed_form_unitary2}
Let $\hat{X} \in U(d)$ and define $F$ and $D$ by Eq.~(\ref{eq:def_FD_unitary}). Then
\begin{eqnarray}
F &=&
\frac{d + \abs{\tr(\hat{X})}^2}{d(d+1)},
\\
D^2 &=&
\frac{
2d(d+3)
+ 4(d+2)\abs{\tr(\hat{X})}^2
+ \abs{\tr(\hat{X}^2)}^2
+ \abs{\tr(\hat{X})}^4
+ 2\,\mathrm{Re}\!\left[
\tr(\hat{X}^2)\tr(\hat{X}^\dagger)^2
\right]
}{
d(d+1)(d+2)(d+3)
}
- F^2 .
\label{eq:D_trace_closedform2}
\end{eqnarray}
where the subtraction of $F^2$ reflects the variance structure of the fourth-order moment.
\end{corollary}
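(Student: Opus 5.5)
The plan is to specialize the two pure-state moment formulas of the preceding proposition to a single unitary, setting $\hat{X}_\alpha=\hat{X}_\beta=\hat{X}$, and then simplify using unitarity.

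First I will identify the integrands. Since $\bra{\psi}\hat{X}^\dagger\ket{\psi}=\overline{\bra{\psi}\hat{X}\ket{\psi}}$, we have
\begin{eqnarray}
f_{\mathcal{X}}(\psi)=\abs{\bra{\psi}\hat{X}\ket{\psi}}^2=\bra{\psi}^{\otimes 2}(\hat{X}\otimes\hat{X}^\dagger)\ket{\psi}^{\otimes 2}.
\end{eqnarray}
Likewise $f_{\mathcal{X}}(\psi)^2=\bra{\psi}^{\otimes 4}\hat{O}_{\alpha\beta}\ket{\psi}^{\otimes 4}$ with $\hat{X}_\alpha=\hat{X}_\beta=\hat{X}$. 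With this identification, Eq.~(\ref{eq:scalar2}) gives the formula for $F$ directly. The Haar average of $f_{\mathcal{X}}^2$ is $E_2=D^2+F^2$, so $D^2$ is Eq.~(\ref{eq:dbar}) at $\alpha=\beta$ minus $F^2$. This is exactly the form claimed.

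The substantive step is collapsing the $k=4$ numerator of Eq.~(\ref{eq:dbar}). Write $t:=\tr(\hat{X})$. Unitarity gives $\hat{X}\hat{X}^\dagger=\hat{\mathds{1}}$, hence $\tr(\hat{X}\hat{X}^\dagger)=d$ and $\tr(\hat{X}\hat{X}\hat{X}^\dagger\hat{X}^\dagger)=d$. Substituting these, the numerator terms become:
\begin{itemize}
\item $(d+4)(\abs{t}^2+\abs{t}^2)=2(d+4)\abs{t}^2$ and $\abs{t}^2\abs{t}^2=\abs{t}^4$;
\item $\abs{\tr(\hat{X}^2)}^2$, and $\abs{\tr(\hat{X}\hat{X}^\dagger)}^2=d^2$;
\item $2\,\mathrm{Re}[\tr(\hat{X}^2)\overline{t}^{\,2}]$, unchanged;
\item $2\,\mathrm{Re}[d\,\overline{t}\,t]=2d\abs{t}^2$;
\item $2\,\mathrm{Re}[d]=2d$.
\end{itemize}
Collecting the constant terms gives $d(d+4)+d^2+2d=2d(d+3)$. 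The $\abs{t}^2$ coefficient is $2(d+4)+2d=4(d+2)$. The remaining terms are $\abs{\tr(\hat{X}^2)}^2+\abs{t}^4+2\,\mathrm{Re}[\tr(\hat{X}^2)\tr(\hat{X}^\dagger)^2]$, using $\overline{t}=\tr(\hat{X}^\dagger)$. This reproduces Eq.~(\ref{eq:D_trace_closedform2}).

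The only real obstacle is bookkeeping: matching each cycle-type term of Eq.~(\ref{eq:dbar}) correctly when $\alpha=\beta$, and confirming the tensor ordering $\hat{X}\otimes\hat{X}^\dagger\otimes\hat{X}\otimes\hat{X}^\dagger$ matches the convention for $\hat{O}_{\alpha\beta}$. I will cross-check the result in two ways:
\begin{itemize}
\item at $\hat{X}=\hat{\mathds{1}}$ we have $t=d$ and $\tr(\hat{X}^2)=d$, and the numerator must equal $d(d+1)(d+2)(d+3)$ so that $E_2=1$;
\item at $d=2$ the formula should reproduce the $D=(1-F)/\sqrt{5}$ relation of Proposition~\ref{prop:d2_linear_relation_main}.
\end{itemize}
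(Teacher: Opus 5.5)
Your proposal is correct and is essentially the paper's own route: the paper obtains the corollary by setting $\alpha=\beta$ in Eq.~(\ref{eq:dbar}) and rearranging. Your term-by-term bookkeeping makes that step explicit and checks out: the constants give $d(d+4)+d^2+2d=2d(d+3)$, the $\abs{t}^2$ coefficient is $2(d+4)+2d=4(d+2)$, and at $\hat{X}=\hat{\mathds{1}}$ the numerator is $d^4+6d^3+11d^2+6d=d(d+1)(d+2)(d+3)$, as your sanity check requires.
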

Several structural features of the above formulas are worth emphasizing. 
The fidelity $F$ depends only on the first trace invariant $\tr(\hat{X})$, reflecting the fact that it is determined entirely by second-order moments. 
By contrast, the quantity $D$ probes fourth-order fluctuations and consequently involves higher trace invariants, including $\tr(\hat{X}^2)$ and nonlinear combinations of $\tr(\hat{X})$. 
This hierarchy mirrors the moment order of the underlying Haar averages and highlights how increasingly refined spectral information of $\hat{X}$ enters at higher moments. More generally, at a fixed moment order, Haar averaging retains only those spectral invariants compatible with the corresponding permutation symmetries.

\paragraph{Concluding remarks.}
The derivation above illustrates a general mechanism underlying all Haar-averaged quantities considered in this work. While intermediate expressions involve sums over permutations and Weingarten coefficients, the final results depend only on a small number of trace invariants of the underlying unitary operator. This drastic simplification is a direct consequence of unitary invariance and the cycle structure of the symmetric group, as captured by the Schur--Weyl duality and the Weingarten calculus.

From this perspective, Proposition~\ref{prop:FD_closed_form_unitary} is not an isolated computation but an explicit instance of a systematic framework: Haar integration acts as a projection onto a finite-dimensional space of invariant tensors, whose coefficients are fixed uniquely by symmetry. The tools collected in this appendix provide a unified and reusable method for carrying out such projections, both for the quantities studied here and for higher-order moments or more general operator structures.


\begin{thebibliography}{22}%
\makeatletter
\providecommand \@ifxundefined [1]{%
 \@ifx{#1\undefined}
}%
\providecommand \@ifnum [1]{%
 \ifnum #1\expandafter \@firstoftwo
 \else \expandafter \@secondoftwo
 \fi
}%
\providecommand \@ifx [1]{%
 \ifx #1\expandafter \@firstoftwo
 \else \expandafter \@secondoftwo
 \fi
}%
\providecommand \natexlab [1]{#1}%
\providecommand \enquote  [1]{``#1''}%
\providecommand \bibnamefont  [1]{#1}%
\providecommand \bibfnamefont [1]{#1}%
\providecommand \citenamefont [1]{#1}%
\providecommand \href@noop [0]{\@secondoftwo}%
\providecommand \href [0]{\begingroup \@sanitize@url \@href}%
\providecommand \@href[1]{\@@startlink{#1}\@@href}%
\providecommand \@@href[1]{\endgroup#1\@@endlink}%
\providecommand \@sanitize@url [0]{\catcode `\\12\catcode `\$12\catcode
  `\&12\catcode `\#12\catcode `\^12\catcode `\_12\catcode `\%12\relax}%
\providecommand \@@startlink[1]{}%
\providecommand \@@endlink[0]{}%
\providecommand \url  [0]{\begingroup\@sanitize@url \@url }%
\providecommand \@url [1]{\endgroup\@href {#1}{\urlprefix }}%
\providecommand \urlprefix  [0]{URL }%
\providecommand \Eprint [0]{\href }%
\providecommand \doibase [0]{https://doi.org/}%
\providecommand \selectlanguage [0]{\@gobble}%
\providecommand \bibinfo  [0]{\@secondoftwo}%
\providecommand \bibfield  [0]{\@secondoftwo}%
\providecommand \translation [1]{[#1]}%
\providecommand \BibitemOpen [0]{}%
\providecommand \bibitemStop [0]{}%
\providecommand \bibitemNoStop [0]{.\EOS\space}%
\providecommand \EOS [0]{\spacefactor3000\relax}%
\providecommand \BibitemShut  [1]{\csname bibitem#1\endcsname}%
\let\auto@bib@innerbib\@empty
\bibitem [{\citenamefont {Steane}(1999)}]{Steane1999}%
  \BibitemOpen
  \bibfield  {author} {\bibinfo {author} {\bibfnamefont {A.~M.}\ \bibnamefont
  {Steane}},\ }\bibfield  {title} {\bibinfo {title} {Efficient fault-tolerant
  quantum computing},\ }\href@noop {} {\bibfield  {journal} {\bibinfo
  {journal} {Nature}\ }\textbf {\bibinfo {volume} {399}},\ \bibinfo {pages}
  {124} (\bibinfo {year} {1999})}\BibitemShut {NoStop}%
\bibitem [{\citenamefont {Martinis}(2015)}]{Martinis2015}%
  \BibitemOpen
  \bibfield  {author} {\bibinfo {author} {\bibfnamefont {J.~M.}\ \bibnamefont
  {Martinis}},\ }\bibfield  {title} {\bibinfo {title} {Qubit metrology for
  building a fault-tolerant quantum computer},\ }\href@noop {} {\bibfield
  {journal} {\bibinfo  {journal} {npj Quantum Information}\ }\textbf {\bibinfo
  {volume} {1}},\ \bibinfo {pages} {1} (\bibinfo {year} {2015})}\BibitemShut
  {NoStop}%
\bibitem [{\citenamefont {Postler}\ \emph {et~al.}(2022)\citenamefont
  {Postler}, \citenamefont {Heu$\beta$en}, \citenamefont {Pogorelov},
  \citenamefont {Rispler}, \citenamefont {Feldker}, \citenamefont {Meth},
  \citenamefont {Marciniak}, \citenamefont {Stricker}, \citenamefont
  {Ringbauer}, \citenamefont {Blatt} \emph {et~al.}}]{Postler2022}%
  \BibitemOpen
  \bibfield  {author} {\bibinfo {author} {\bibfnamefont {L.}~\bibnamefont
  {Postler}}, \bibinfo {author} {\bibfnamefont {S.}~\bibnamefont
  {Heu$\beta$en}}, \bibinfo {author} {\bibfnamefont {I.}~\bibnamefont
  {Pogorelov}}, \bibinfo {author} {\bibfnamefont {M.}~\bibnamefont {Rispler}},
  \bibinfo {author} {\bibfnamefont {T.}~\bibnamefont {Feldker}}, \bibinfo
  {author} {\bibfnamefont {M.}~\bibnamefont {Meth}}, \bibinfo {author}
  {\bibfnamefont {C.~D.}\ \bibnamefont {Marciniak}}, \bibinfo {author}
  {\bibfnamefont {R.}~\bibnamefont {Stricker}}, \bibinfo {author}
  {\bibfnamefont {M.}~\bibnamefont {Ringbauer}}, \bibinfo {author}
  {\bibfnamefont {R.}~\bibnamefont {Blatt}}, \emph {et~al.},\ }\bibfield
  {title} {\bibinfo {title} {Demonstration of fault-tolerant universal quantum
  gate operations},\ }\href@noop {} {\bibfield  {journal} {\bibinfo  {journal}
  {Nature}\ }\textbf {\bibinfo {volume} {605}},\ \bibinfo {pages} {675}
  (\bibinfo {year} {2022})}\BibitemShut {NoStop}%
\bibitem [{\citenamefont {Zhou}\ \emph {et~al.}(2025)\citenamefont {Zhou},
  \citenamefont {Cain},\ and\ \citenamefont {Lukin}}]{Zhou2025}%
  \BibitemOpen
  \bibfield  {author} {\bibinfo {author} {\bibfnamefont {H.}~\bibnamefont
  {Zhou}}, \bibinfo {author} {\bibfnamefont {M.}~\bibnamefont {Cain}},\ and\
  \bibinfo {author} {\bibfnamefont {M.~D.}\ \bibnamefont {Lukin}},\ }\bibfield
  {title} {\bibinfo {title} {Opportunities in full-stack design of low-overhead
  fault-tolerant quantum computation},\ }\href@noop {} {\bibfield  {journal}
  {\bibinfo  {journal} {Nature Computational Science}\ }\textbf {\bibinfo
  {volume} {5}},\ \bibinfo {pages} {1110} (\bibinfo {year} {2025})}\BibitemShut
  {NoStop}%
\bibitem [{\citenamefont {Kitaev}\ \emph {et~al.}(2002)\citenamefont {Kitaev},
  \citenamefont {Shen},\ and\ \citenamefont {Vyalyi}}]{Kitaev2002}%
  \BibitemOpen
  \bibfield  {author} {\bibinfo {author} {\bibfnamefont {A.~Y.}\ \bibnamefont
  {Kitaev}}, \bibinfo {author} {\bibfnamefont {A.}~\bibnamefont {Shen}},\ and\
  \bibinfo {author} {\bibfnamefont {M.~N.}\ \bibnamefont {Vyalyi}},\
  }\href@noop {} {\emph {\bibinfo {title} {Classical and quantum
  computation}}},\ \bibinfo {number} {47}\ (\bibinfo  {publisher} {American
  Mathematical Soc.},\ \bibinfo {year} {2002})\BibitemShut {NoStop}%
\bibitem [{\citenamefont {Gilchrist}\ \emph {et~al.}(2005)\citenamefont
  {Gilchrist}, \citenamefont {Langford},\ and\ \citenamefont
  {Nielsen}}]{Gilchrist2005}%
  \BibitemOpen
  \bibfield  {author} {\bibinfo {author} {\bibfnamefont {A.}~\bibnamefont
  {Gilchrist}}, \bibinfo {author} {\bibfnamefont {N.~K.}\ \bibnamefont
  {Langford}},\ and\ \bibinfo {author} {\bibfnamefont {M.~A.}\ \bibnamefont
  {Nielsen}},\ }\bibfield  {title} {\bibinfo {title} {Distance measures to
  compare real and ideal quantum processes},\ }\href@noop {} {\bibfield
  {journal} {\bibinfo  {journal} {Physical Review A}\ }\textbf {\bibinfo
  {volume} {71}},\ \bibinfo {pages} {062310} (\bibinfo {year}
  {2005})}\BibitemShut {NoStop}%
\bibitem [{\citenamefont {Nielsen}(2002)}]{Nielsen2002}%
  \BibitemOpen
  \bibfield  {author} {\bibinfo {author} {\bibfnamefont {M.~A.}\ \bibnamefont
  {Nielsen}},\ }\bibfield  {title} {\bibinfo {title} {A simple formula for the
  average gate fidelity of a quantum dynamical operation},\ }\href@noop {}
  {\bibfield  {journal} {\bibinfo  {journal} {Physics Letters A}\ }\textbf
  {\bibinfo {volume} {303}},\ \bibinfo {pages} {249} (\bibinfo {year}
  {2002})}\BibitemShut {NoStop}%
\bibitem [{\citenamefont {Knill}\ \emph {et~al.}(2008)\citenamefont {Knill},
  \citenamefont {Leibfried}, \citenamefont {Reichle}, \citenamefont {Britton},
  \citenamefont {Blakestad}, \citenamefont {Jost}, \citenamefont {Langer},
  \citenamefont {Ozeri}, \citenamefont {Seidelin},\ and\ \citenamefont
  {Wineland}}]{Knill2008}%
  \BibitemOpen
  \bibfield  {author} {\bibinfo {author} {\bibfnamefont {E.}~\bibnamefont
  {Knill}}, \bibinfo {author} {\bibfnamefont {D.}~\bibnamefont {Leibfried}},
  \bibinfo {author} {\bibfnamefont {R.}~\bibnamefont {Reichle}}, \bibinfo
  {author} {\bibfnamefont {J.}~\bibnamefont {Britton}}, \bibinfo {author}
  {\bibfnamefont {R.~B.}\ \bibnamefont {Blakestad}}, \bibinfo {author}
  {\bibfnamefont {J.~D.}\ \bibnamefont {Jost}}, \bibinfo {author}
  {\bibfnamefont {C.}~\bibnamefont {Langer}}, \bibinfo {author} {\bibfnamefont
  {R.}~\bibnamefont {Ozeri}}, \bibinfo {author} {\bibfnamefont
  {S.}~\bibnamefont {Seidelin}},\ and\ \bibinfo {author} {\bibfnamefont
  {D.~J.}\ \bibnamefont {Wineland}},\ }\bibfield  {title} {\bibinfo {title}
  {Randomized benchmarking of quantum gates},\ }\href@noop {} {\bibfield
  {journal} {\bibinfo  {journal} {Physical Review A}\ }\textbf {\bibinfo
  {volume} {77}},\ \bibinfo {pages} {012307} (\bibinfo {year}
  {2008})}\BibitemShut {NoStop}%
\bibitem [{\citenamefont {Magesan}\ \emph {et~al.}(2011)\citenamefont
  {Magesan}, \citenamefont {Gambetta},\ and\ \citenamefont
  {Emerson}}]{Magesan2011}%
  \BibitemOpen
  \bibfield  {author} {\bibinfo {author} {\bibfnamefont {E.}~\bibnamefont
  {Magesan}}, \bibinfo {author} {\bibfnamefont {J.~M.}\ \bibnamefont
  {Gambetta}},\ and\ \bibinfo {author} {\bibfnamefont {J.}~\bibnamefont
  {Emerson}},\ }\bibfield  {title} {\bibinfo {title} {Scalable and robust
  randomized benchmarking of quantum processes},\ }\href@noop {} {\bibfield
  {journal} {\bibinfo  {journal} {Physical Review Letters}\ }\textbf {\bibinfo
  {volume} {106}},\ \bibinfo {pages} {180504} (\bibinfo {year}
  {2011})}\BibitemShut {NoStop}%
\bibitem [{\citenamefont {Sanders}\ \emph {et~al.}(2016)\citenamefont
  {Sanders}, \citenamefont {Wallman},\ and\ \citenamefont
  {Sanders}}]{Sanders2016}%
  \BibitemOpen
  \bibfield  {author} {\bibinfo {author} {\bibfnamefont {Y.~R.}\ \bibnamefont
  {Sanders}}, \bibinfo {author} {\bibfnamefont {J.~J.}\ \bibnamefont
  {Wallman}},\ and\ \bibinfo {author} {\bibfnamefont {B.~C.}\ \bibnamefont
  {Sanders}},\ }\bibfield  {title} {\bibinfo {title} {Bounding quantum gate
  error rate based on reported average fidelity},\ }\href@noop {} {\bibfield
  {journal} {\bibinfo  {journal} {New Journal of Physics}\ }\textbf {\bibinfo
  {volume} {18}},\ \bibinfo {pages} {012002} (\bibinfo {year}
  {2016})}\BibitemShut {NoStop}%
\bibitem [{\citenamefont {Kueng}\ \emph {et~al.}(2016)\citenamefont {Kueng},
  \citenamefont {Long}, \citenamefont {Doherty},\ and\ \citenamefont
  {Flammia}}]{Kueng2016}%
  \BibitemOpen
  \bibfield  {author} {\bibinfo {author} {\bibfnamefont {R.}~\bibnamefont
  {Kueng}}, \bibinfo {author} {\bibfnamefont {D.~M.}\ \bibnamefont {Long}},
  \bibinfo {author} {\bibfnamefont {A.~C.}\ \bibnamefont {Doherty}},\ and\
  \bibinfo {author} {\bibfnamefont {S.~T.}\ \bibnamefont {Flammia}},\
  }\bibfield  {title} {\bibinfo {title} {Comparing experiments to the
  fault-tolerance threshold},\ }\href@noop {} {\bibfield  {journal} {\bibinfo
  {journal} {Physical Review Letters}\ }\textbf {\bibinfo {volume} {117}},\
  \bibinfo {pages} {170502} (\bibinfo {year} {2016})}\BibitemShut {NoStop}%
\bibitem [{\citenamefont {O'Brien}\ \emph {et~al.}(2004)\citenamefont
  {O'Brien}, \citenamefont {Pryde}, \citenamefont {Gilchrist}, \citenamefont
  {James}, \citenamefont {Langford}, \citenamefont {Ralph},\ and\ \citenamefont
  {White}}]{OBrien2004}%
  \BibitemOpen
  \bibfield  {author} {\bibinfo {author} {\bibfnamefont {J.~L.}\ \bibnamefont
  {O'Brien}}, \bibinfo {author} {\bibfnamefont {G.~J.}\ \bibnamefont {Pryde}},
  \bibinfo {author} {\bibfnamefont {A.}~\bibnamefont {Gilchrist}}, \bibinfo
  {author} {\bibfnamefont {D.~F.}\ \bibnamefont {James}}, \bibinfo {author}
  {\bibfnamefont {N.~K.}\ \bibnamefont {Langford}}, \bibinfo {author}
  {\bibfnamefont {T.~C.}\ \bibnamefont {Ralph}},\ and\ \bibinfo {author}
  {\bibfnamefont {A.~G.}\ \bibnamefont {White}},\ }\bibfield  {title} {\bibinfo
  {title} {Quantum process tomography of a controlled-not gate},\ }\href@noop
  {} {\bibfield  {journal} {\bibinfo  {journal} {Physical Review Letters}\
  }\textbf {\bibinfo {volume} {93}},\ \bibinfo {pages} {080502} (\bibinfo
  {year} {2004})}\BibitemShut {NoStop}%
\bibitem [{\citenamefont {Blume-Kohout}\ \emph {et~al.}(2017)\citenamefont
  {Blume-Kohout}, \citenamefont {Gamble}, \citenamefont {Nielsen},
  \citenamefont {Rudinger}, \citenamefont {Mizrahi}, \citenamefont {Fortier},\
  and\ \citenamefont {Maunz}}]{Blume2017}%
  \BibitemOpen
  \bibfield  {author} {\bibinfo {author} {\bibfnamefont {R.}~\bibnamefont
  {Blume-Kohout}}, \bibinfo {author} {\bibfnamefont {J.~K.}\ \bibnamefont
  {Gamble}}, \bibinfo {author} {\bibfnamefont {E.}~\bibnamefont {Nielsen}},
  \bibinfo {author} {\bibfnamefont {K.}~\bibnamefont {Rudinger}}, \bibinfo
  {author} {\bibfnamefont {J.}~\bibnamefont {Mizrahi}}, \bibinfo {author}
  {\bibfnamefont {K.}~\bibnamefont {Fortier}},\ and\ \bibinfo {author}
  {\bibfnamefont {P.}~\bibnamefont {Maunz}},\ }\bibfield  {title} {\bibinfo
  {title} {Demonstration of qubit operations below a rigorous fault tolerance
  threshold with gate set tomography},\ }\href@noop {} {\bibfield  {journal}
  {\bibinfo  {journal} {Nature communications}\ }\textbf {\bibinfo {volume}
  {8}},\ \bibinfo {pages} {14485} (\bibinfo {year} {2017})}\BibitemShut
  {NoStop}%
\bibitem [{\citenamefont {Dankert}\ \emph {et~al.}(2009)\citenamefont
  {Dankert}, \citenamefont {Cleve}, \citenamefont {Emerson},\ and\
  \citenamefont {Livine}}]{Dankert2009}%
  \BibitemOpen
  \bibfield  {author} {\bibinfo {author} {\bibfnamefont {C.}~\bibnamefont
  {Dankert}}, \bibinfo {author} {\bibfnamefont {R.}~\bibnamefont {Cleve}},
  \bibinfo {author} {\bibfnamefont {J.}~\bibnamefont {Emerson}},\ and\ \bibinfo
  {author} {\bibfnamefont {E.}~\bibnamefont {Livine}},\ }\bibfield  {title}
  {\bibinfo {title} {Exact and approximate unitary 2-designs and their
  application to fidelity estimation},\ }\href@noop {} {\bibfield  {journal}
  {\bibinfo  {journal} {Physical Review A}\ }\textbf {\bibinfo {volume} {80}},\
  \bibinfo {pages} {012304} (\bibinfo {year} {2009})}\BibitemShut {NoStop}%
\bibitem [{\citenamefont {Harrow}\ and\ \citenamefont
  {Low}(2009)}]{Harrow2009}%
  \BibitemOpen
  \bibfield  {author} {\bibinfo {author} {\bibfnamefont {A.~W.}\ \bibnamefont
  {Harrow}}\ and\ \bibinfo {author} {\bibfnamefont {R.~A.}\ \bibnamefont
  {Low}},\ }\bibfield  {title} {\bibinfo {title} {Random quantum circuits are
  approximate 2-designs},\ }\href@noop {} {\bibfield  {journal} {\bibinfo
  {journal} {Communications in Mathematical Physics}\ }\textbf {\bibinfo
  {volume} {291}},\ \bibinfo {pages} {257} (\bibinfo {year}
  {2009})}\BibitemShut {NoStop}%
\bibitem [{\citenamefont {Brandao}\ \emph {et~al.}(2016)\citenamefont
  {Brandao}, \citenamefont {Harrow},\ and\ \citenamefont
  {Horodecki}}]{Brandao2016}%
  \BibitemOpen
  \bibfield  {author} {\bibinfo {author} {\bibfnamefont {F.~G.}\ \bibnamefont
  {Brandao}}, \bibinfo {author} {\bibfnamefont {A.~W.}\ \bibnamefont
  {Harrow}},\ and\ \bibinfo {author} {\bibfnamefont {M.}~\bibnamefont
  {Horodecki}},\ }\bibfield  {title} {\bibinfo {title} {Local random quantum
  circuits are approximate polynomial-designs},\ }\href@noop {} {\bibfield
  {journal} {\bibinfo  {journal} {Communications in Mathematical Physics}\
  }\textbf {\bibinfo {volume} {346}},\ \bibinfo {pages} {397} (\bibinfo {year}
  {2016})}\BibitemShut {NoStop}%
\bibitem [{\citenamefont {Selinger}(2013)}]{Selinger2013}%
  \BibitemOpen
  \bibfield  {author} {\bibinfo {author} {\bibfnamefont {P.}~\bibnamefont
  {Selinger}},\ }\bibfield  {title} {\bibinfo {title} {Quantum circuits of
  t-depth one},\ }\href@noop {} {\bibfield  {journal} {\bibinfo  {journal}
  {Physical Review A}\ }\textbf {\bibinfo {volume} {87}},\ \bibinfo {pages}
  {042302} (\bibinfo {year} {2013})}\BibitemShut {NoStop}%
\bibitem [{\citenamefont {Wallman}\ \emph {et~al.}(2015)\citenamefont
  {Wallman}, \citenamefont {Granade}, \citenamefont {Harper},\ and\
  \citenamefont {Flammia}}]{Wallman2015}%
  \BibitemOpen
  \bibfield  {author} {\bibinfo {author} {\bibfnamefont {J.}~\bibnamefont
  {Wallman}}, \bibinfo {author} {\bibfnamefont {C.}~\bibnamefont {Granade}},
  \bibinfo {author} {\bibfnamefont {R.}~\bibnamefont {Harper}},\ and\ \bibinfo
  {author} {\bibfnamefont {S.~T.}\ \bibnamefont {Flammia}},\ }\bibfield
  {title} {\bibinfo {title} {Estimating the coherence of noise},\ }\href@noop
  {} {\bibfield  {journal} {\bibinfo  {journal} {New Journal of Physics}\
  }\textbf {\bibinfo {volume} {17}},\ \bibinfo {pages} {113020} (\bibinfo
  {year} {2015})}\BibitemShut {NoStop}%
\bibitem [{\citenamefont {Collins}\ and\ \citenamefont
  {{\'S}niady}(2006)}]{Collins2006}%
  \BibitemOpen
  \bibfield  {author} {\bibinfo {author} {\bibfnamefont {B.}~\bibnamefont
  {Collins}}\ and\ \bibinfo {author} {\bibfnamefont {P.}~\bibnamefont
  {{\'S}niady}},\ }\bibfield  {title} {\bibinfo {title} {{Integration with
  Respect to the Haar Measure on Unitary, Orthogonal and Symplectic Group}},\
  }\href@noop {} {\bibfield  {journal} {\bibinfo  {journal} {Communications in
  Mathematical Physics}\ }\textbf {\bibinfo {volume} {264}},\ \bibinfo {pages}
  {773} (\bibinfo {year} {2006})}\BibitemShut {NoStop}%
\bibitem [{\citenamefont {Mele}(2024)}]{Mele2024}%
  \BibitemOpen
  \bibfield  {author} {\bibinfo {author} {\bibfnamefont {A.~A.}\ \bibnamefont
  {Mele}},\ }\bibfield  {title} {\bibinfo {title} {Introduction to {H}aar
  {M}easure {T}ools in {Q}uantum {I}nformation: {A} {B}eginner's {T}utorial},\
  }\href@noop {} {\bibfield  {journal} {\bibinfo  {journal} {{Quantum}}\
  }\textbf {\bibinfo {volume} {8}},\ \bibinfo {pages} {1340} (\bibinfo {year}
  {2024})}\BibitemShut {NoStop}%
\bibitem [{\citenamefont {Cho}\ and\ \citenamefont {Bang}(2026)}]{Cho2026}%
  \BibitemOpen
  \bibfield  {author} {\bibinfo {author} {\bibfnamefont {K.}~\bibnamefont
  {Cho}}\ and\ \bibinfo {author} {\bibfnamefont {J.}~\bibnamefont {Bang}},\
  }\bibfield  {title} {\bibinfo {title} {Entangling power and its deviation: A
  quantitative analysis on input-state dependence and variability in
  entanglement generation},\ }\href@noop {} {\bibfield  {journal} {\bibinfo
  {journal} {Physical Review A}\ }\textbf {\bibinfo {volume} {113}},\ \bibinfo
  {pages} {012442} (\bibinfo {year} {2026})}\BibitemShut {NoStop}%
\bibitem [{\citenamefont {Hoeffding}(1992)}]{Hoeffding1992}%
  \BibitemOpen
  \bibfield  {author} {\bibinfo {author} {\bibfnamefont {W.}~\bibnamefont
  {Hoeffding}},\ }\bibfield  {title} {\bibinfo {title} {A class of statistics
  with asymptotically normal distribution},\ }in\ \href@noop {} {\emph
  {\bibinfo {booktitle} {Breakthroughs in statistics: Foundations and basic
  theory}}}\ (\bibinfo  {publisher} {Springer},\ \bibinfo {year} {1992})\ pp.\
  \bibinfo {pages} {308--334}\BibitemShut {NoStop}%
\end{thebibliography}

\begin{thebibliography}{28}%
\makeatletter
\providecommand \@ifxundefined [1]{%
 \@ifx{#1\undefined}
}%
\providecommand \@ifnum [1]{%
 \ifnum #1\expandafter \@firstoftwo
 \else \expandafter \@secondoftwo
 \fi
}%
\providecommand \@ifx [1]{%
 \ifx #1\expandafter \@firstoftwo
 \else \expandafter \@secondoftwo
 \fi
}%
\providecommand \natexlab [1]{#1}%
\providecommand \enquote  [1]{``#1''}%
\providecommand \bibnamefont  [1]{#1}%
\providecommand \bibfnamefont [1]{#1}%
\providecommand \citenamefont [1]{#1}%
\providecommand \href@noop [0]{\@secondoftwo}%
\providecommand \href [0]{\begingroup \@sanitize@url \@href}%
\providecommand \@href[1]{\@@startlink{#1}\@@href}%
\providecommand \@@href[1]{\endgroup#1\@@endlink}%
\providecommand \@sanitize@url [0]{\catcode `\\12\catcode `\$12\catcode
  `\&12\catcode `\#12\catcode `\^12\catcode `\_12\catcode `\%12\relax}%
\providecommand \@@startlink[1]{}%
\providecommand \@@endlink[0]{}%
\providecommand \url  [0]{\begingroup\@sanitize@url \@url }%
\providecommand \@url [1]{\endgroup\@href {#1}{\urlprefix }}%
\providecommand \urlprefix  [0]{URL }%
\providecommand \Eprint [0]{\href }%
\providecommand \doibase [0]{http://dx.doi.org/}%
\providecommand \selectlanguage [0]{\@gobble}%
\providecommand \bibinfo  [0]{\@secondoftwo}%
\providecommand \bibfield  [0]{\@secondoftwo}%
\providecommand \translation [1]{[#1]}%
\providecommand \BibitemOpen [0]{}%
\providecommand \bibitemStop [0]{}%
\providecommand \bibitemNoStop [0]{.\EOS\space}%
\providecommand \EOS [0]{\spacefactor3000\relax}%
\providecommand \BibitemShut  [1]{\csname bibitem#1\endcsname}%
\let\auto@bib@innerbib\@empty
\bibitem [{\citenamefont {Fowler}(2004)}]{fowler2004constructing}%
  \BibitemOpen
  \bibfield  {author} {\bibinfo {author} {\bibfnamefont {A.~G.}\ \bibnamefont
  {Fowler}},\ }\href@noop {} {\bibfield  {journal} {\bibinfo  {journal} {arXiv
  preprint quant-ph/0411206}\ } (\bibinfo {year} {2004})}\BibitemShut {NoStop}%
\bibitem [{\citenamefont {Dorit~Aharonov}(2008)}]{aharonov2008}%
  \BibitemOpen
  \bibfield  {author} {\bibinfo {author} {\bibfnamefont {M.~B.-O.}\
  \bibnamefont {Dorit~Aharonov}},\ }\href@noop {} {\bibfield  {journal}
  {\bibinfo  {journal} {SIAM Journal of Computing}\ }\textbf {\bibinfo {volume}
  {38}},\ \bibinfo {pages} {1207} (\bibinfo {year} {2008})}\BibitemShut
  {NoStop}%
\bibitem [{\citenamefont {Gilchrist}\ \emph {et~al.}(2005)\citenamefont
  {Gilchrist}, \citenamefont {Langford},\ and\ \citenamefont
  {Nielsen}}]{Gilchrist2005}%
  \BibitemOpen
  \bibfield  {author} {\bibinfo {author} {\bibfnamefont {A.}~\bibnamefont
  {Gilchrist}}, \bibinfo {author} {\bibfnamefont {N.~K.}\ \bibnamefont
  {Langford}}, \ and\ \bibinfo {author} {\bibfnamefont {M.~A.}\ \bibnamefont
  {Nielsen}},\ }\href@noop {} {\bibfield  {journal} {\bibinfo  {journal}
  {Physical Review A}\ }\textbf {\bibinfo {volume} {71}},\ \bibinfo {pages}
  {062310} (\bibinfo {year} {2005})}\BibitemShut {NoStop}%
\bibitem [{\citenamefont {Panos~Aliferis}(2006)}]{aliferis2006}%
  \BibitemOpen
  \bibfield  {author} {\bibinfo {author} {\bibfnamefont {J.~P.}\ \bibnamefont
  {Panos~Aliferis}, \bibfnamefont {Daniel~Gottesman}},\ }\href@noop {}
  {\bibfield  {journal} {\bibinfo  {journal} {Quantum Information \&
  Computation}\ }\textbf {\bibinfo {volume} {6}},\ \bibinfo {pages} {97}
  (\bibinfo {year} {2006})}\BibitemShut {NoStop}%
\bibitem [{\citenamefont {Kueng}\ \emph {et~al.}(2016)\citenamefont {Kueng},
  \citenamefont {Long}, \citenamefont {Doherty},\ and\ \citenamefont
  {Flammia}}]{Kueng2016}%
  \BibitemOpen
  \bibfield  {author} {\bibinfo {author} {\bibfnamefont {R.}~\bibnamefont
  {Kueng}}, \bibinfo {author} {\bibfnamefont {D.~M.}\ \bibnamefont {Long}},
  \bibinfo {author} {\bibfnamefont {A.~C.}\ \bibnamefont {Doherty}}, \ and\
  \bibinfo {author} {\bibfnamefont {S.~T.}\ \bibnamefont {Flammia}},\
  }\href@noop {} {\bibfield  {journal} {\bibinfo  {journal} {Physical Review
  Letters}\ }\textbf {\bibinfo {volume} {117}},\ \bibinfo {pages} {170502}
  (\bibinfo {year} {2016})}\BibitemShut {NoStop}%
\bibitem [{\citenamefont {Nielsen}(2002)}]{nielsen2002simple}%
  \BibitemOpen
  \bibfield  {author} {\bibinfo {author} {\bibfnamefont {M.~A.}\ \bibnamefont
  {Nielsen}},\ }\href@noop {} {\bibfield  {journal} {\bibinfo  {journal}
  {Physics Letters A}\ }\textbf {\bibinfo {volume} {303}},\ \bibinfo {pages}
  {249} (\bibinfo {year} {2002})}\BibitemShut {NoStop}%
\bibitem [{\citenamefont {Knill}\ \emph {et~al.}(2008)\citenamefont {Knill},
  \citenamefont {Leibfried}, \citenamefont {Reichle}, \citenamefont {Britton},
  \citenamefont {Blakestad}, \citenamefont {Jost}, \citenamefont {Langer},
  \citenamefont {Ozeri}, \citenamefont {Seidelin},\ and\ \citenamefont
  {Wineland}}]{PhysRevA.77.012307}%
  \BibitemOpen
  \bibfield  {author} {\bibinfo {author} {\bibfnamefont {E.}~\bibnamefont
  {Knill}}, \bibinfo {author} {\bibfnamefont {D.}~\bibnamefont {Leibfried}},
  \bibinfo {author} {\bibfnamefont {R.}~\bibnamefont {Reichle}}, \bibinfo
  {author} {\bibfnamefont {J.}~\bibnamefont {Britton}}, \bibinfo {author}
  {\bibfnamefont {R.~B.}\ \bibnamefont {Blakestad}}, \bibinfo {author}
  {\bibfnamefont {J.~D.}\ \bibnamefont {Jost}}, \bibinfo {author}
  {\bibfnamefont {C.}~\bibnamefont {Langer}}, \bibinfo {author} {\bibfnamefont
  {R.}~\bibnamefont {Ozeri}}, \bibinfo {author} {\bibfnamefont
  {S.}~\bibnamefont {Seidelin}}, \ and\ \bibinfo {author} {\bibfnamefont
  {D.~J.}\ \bibnamefont {Wineland}},\ }\href {\doibase
  10.1103/PhysRevA.77.012307} {\bibfield  {journal} {\bibinfo  {journal} {Phys.
  Rev. A}\ }\textbf {\bibinfo {volume} {77}},\ \bibinfo {pages} {012307}
  (\bibinfo {year} {2008})}\BibitemShut {NoStop}%
\bibitem [{\citenamefont {Magesan}\ \emph {et~al.}(2011)\citenamefont
  {Magesan}, \citenamefont {Gambetta},\ and\ \citenamefont
  {Emerson}}]{magesan2011scalable}%
  \BibitemOpen
  \bibfield  {author} {\bibinfo {author} {\bibfnamefont {E.}~\bibnamefont
  {Magesan}}, \bibinfo {author} {\bibfnamefont {J.~M.}\ \bibnamefont
  {Gambetta}}, \ and\ \bibinfo {author} {\bibfnamefont {J.}~\bibnamefont
  {Emerson}},\ }\href@noop {} {\bibfield  {journal} {\bibinfo  {journal}
  {Physical review letters}\ }\textbf {\bibinfo {volume} {106}},\ \bibinfo
  {pages} {180504} (\bibinfo {year} {2011})}\BibitemShut {NoStop}%
\bibitem [{\citenamefont {Kitaev}(1997)}]{kitaev1997}%
  \BibitemOpen
  \bibfield  {author} {\bibinfo {author} {\bibfnamefont {A.~Y.}\ \bibnamefont
  {Kitaev}},\ }\href@noop {} {\bibfield  {journal} {\bibinfo  {journal}
  {Russian Mathematical Surveys}\ }\textbf {\bibinfo {volume} {52}},\ \bibinfo
  {pages} {1191} (\bibinfo {year} {1997})}\BibitemShut {NoStop}%
\bibitem [{\citenamefont {Sanders}\ \emph {et~al.}(2015)\citenamefont
  {Sanders}, \citenamefont {Wallman},\ and\ \citenamefont
  {Sanders}}]{sanders2015bounding}%
  \BibitemOpen
  \bibfield  {author} {\bibinfo {author} {\bibfnamefont {Y.~R.}\ \bibnamefont
  {Sanders}}, \bibinfo {author} {\bibfnamefont {J.~J.}\ \bibnamefont
  {Wallman}}, \ and\ \bibinfo {author} {\bibfnamefont {B.~C.}\ \bibnamefont
  {Sanders}},\ }\href@noop {} {\bibfield  {journal} {\bibinfo  {journal} {New
  Journal of Physics}\ }\textbf {\bibinfo {volume} {18}},\ \bibinfo {pages}
  {012002} (\bibinfo {year} {2015})}\BibitemShut {NoStop}%
\bibitem [{Note1()}]{Note1}%
  \BibitemOpen
  \bibinfo {note} {One way to see this is via the SDP characterization of the
  diamond norm and the fact that the maximally entangled input captures the
  Choi operator, together with a dimension factor converting between
  input-state optimizations and the Choi representation.}\BibitemShut {Stop}%
\bibitem [{\citenamefont {Fuchs}\ and\ \citenamefont {Van
  De~Graaf}(2002)}]{fuchs2002cryptographic}%
  \BibitemOpen
  \bibfield  {author} {\bibinfo {author} {\bibfnamefont {C.~A.}\ \bibnamefont
  {Fuchs}}\ and\ \bibinfo {author} {\bibfnamefont {J.}~\bibnamefont {Van
  De~Graaf}},\ }\href@noop {} {\bibfield  {journal} {\bibinfo  {journal} {IEEE
  Transactions on Information Theory}\ }\textbf {\bibinfo {volume} {45}},\
  \bibinfo {pages} {1216} (\bibinfo {year} {2002})}\BibitemShut {NoStop}%
\bibitem [{\citenamefont {Harrow}\ and\ \citenamefont
  {Low}(2009)}]{harrow2009random}%
  \BibitemOpen
  \bibfield  {author} {\bibinfo {author} {\bibfnamefont {A.~W.}\ \bibnamefont
  {Harrow}}\ and\ \bibinfo {author} {\bibfnamefont {R.~A.}\ \bibnamefont
  {Low}},\ }\href@noop {} {\bibfield  {journal} {\bibinfo  {journal}
  {Communications in Mathematical Physics}\ }\textbf {\bibinfo {volume}
  {291}},\ \bibinfo {pages} {257} (\bibinfo {year} {2009})}\BibitemShut
  {NoStop}%
\bibitem [{\citenamefont {Brandao}\ \emph {et~al.}(2016)\citenamefont
  {Brandao}, \citenamefont {Harrow},\ and\ \citenamefont
  {Horodecki}}]{brandao2016local}%
  \BibitemOpen
  \bibfield  {author} {\bibinfo {author} {\bibfnamefont {F.~G.}\ \bibnamefont
  {Brandao}}, \bibinfo {author} {\bibfnamefont {A.~W.}\ \bibnamefont {Harrow}},
  \ and\ \bibinfo {author} {\bibfnamefont {M.}~\bibnamefont {Horodecki}},\
  }\href@noop {} {\bibfield  {journal} {\bibinfo  {journal} {Communications in
  Mathematical Physics}\ }\textbf {\bibinfo {volume} {346}},\ \bibinfo {pages}
  {397} (\bibinfo {year} {2016})}\BibitemShut {NoStop}%
\bibitem [{\citenamefont {Dankert}\ \emph {et~al.}(2009)\citenamefont
  {Dankert}, \citenamefont {Cleve}, \citenamefont {Emerson},\ and\
  \citenamefont {Livine}}]{dankert2009exact}%
  \BibitemOpen
  \bibfield  {author} {\bibinfo {author} {\bibfnamefont {C.}~\bibnamefont
  {Dankert}}, \bibinfo {author} {\bibfnamefont {R.}~\bibnamefont {Cleve}},
  \bibinfo {author} {\bibfnamefont {J.}~\bibnamefont {Emerson}}, \ and\
  \bibinfo {author} {\bibfnamefont {E.}~\bibnamefont {Livine}},\ }\href@noop {}
  {\bibfield  {journal} {\bibinfo  {journal} {Physical Review A---Atomic,
  Molecular, and Optical Physics}\ }\textbf {\bibinfo {volume} {80}},\ \bibinfo
  {pages} {012304} (\bibinfo {year} {2009})}\BibitemShut {NoStop}%
\bibitem [{\citenamefont {Hoeffding}(1992)}]{hoeffding1992class}%
  \BibitemOpen
  \bibfield  {author} {\bibinfo {author} {\bibfnamefont {W.}~\bibnamefont
  {Hoeffding}},\ }in\ \href@noop {} {\emph {\bibinfo {booktitle} {Breakthroughs
  in statistics: Foundations and basic theory}}}\ (\bibinfo  {publisher}
  {Springer},\ \bibinfo {year} {1992})\ pp.\ \bibinfo {pages}
  {308--334}\BibitemShut {NoStop}%
\bibitem [{\citenamefont {Serfling}(2009)}]{serfling2009approximation}%
  \BibitemOpen
  \bibfield  {author} {\bibinfo {author} {\bibfnamefont {R.~J.}\ \bibnamefont
  {Serfling}},\ }\href@noop {} {\emph {\bibinfo {title} {Approximation theorems
  of mathematical statistics}}}\ (\bibinfo  {publisher} {John Wiley \& Sons},\
  \bibinfo {year} {2009})\BibitemShut {NoStop}%
\bibitem [{\citenamefont {Alsmeyer}(2014)}]{Alsmeyer2014Chebyshev}%
  \BibitemOpen
  \bibfield  {author} {\bibinfo {author} {\bibfnamefont {G.}~\bibnamefont
  {Alsmeyer}},\ }in\ \href@noop {} {\emph {\bibinfo {booktitle} {International
  Encyclopedia of Statistical Science}}}\ (\bibinfo  {publisher} {Springer},\
  \bibinfo {year} {2014})\ pp.\ \bibinfo {pages} {239--240}\BibitemShut
  {NoStop}%
\bibitem [{\citenamefont {Toeplitz}(1918)}]{toeplitz1918algebraische}%
  \BibitemOpen
  \bibfield  {author} {\bibinfo {author} {\bibfnamefont {O.}~\bibnamefont
  {Toeplitz}},\ }\href@noop {} {\bibfield  {journal} {\bibinfo  {journal}
  {Mathematische Zeitschrift}\ }\textbf {\bibinfo {volume} {2}},\ \bibinfo
  {pages} {187} (\bibinfo {year} {1918})}\BibitemShut {NoStop}%
\bibitem [{\citenamefont {Halmos}(2012)}]{halmos2012hilbert}%
  \BibitemOpen
  \bibfield  {author} {\bibinfo {author} {\bibfnamefont {P.~R.}\ \bibnamefont
  {Halmos}},\ }\href@noop {} {\emph {\bibinfo {title} {A Hilbert space problem
  book}}},\ Vol.~\bibinfo {volume} {19}\ (\bibinfo  {publisher} {Springer
  Science \& Business Media},\ \bibinfo {year} {2012})\BibitemShut {NoStop}%
\bibitem [{\citenamefont {Selinger}(2013)}]{Selinger2013}%
  \BibitemOpen
  \bibfield  {author} {\bibinfo {author} {\bibfnamefont {P.}~\bibnamefont
  {Selinger}},\ }\href@noop {} {\bibfield  {journal} {\bibinfo  {journal}
  {Physical Review A}\ }\textbf {\bibinfo {volume} {87}},\ \bibinfo {pages}
  {042302} (\bibinfo {year} {2013})}\BibitemShut {NoStop}%
\bibitem [{\citenamefont {Amy}\ \emph {et~al.}(2013)\citenamefont {Amy},
  \citenamefont {Maslov}, \citenamefont {Mosca},\ and\ \citenamefont
  {Roetteler}}]{amy2013meet}%
  \BibitemOpen
  \bibfield  {author} {\bibinfo {author} {\bibfnamefont {M.}~\bibnamefont
  {Amy}}, \bibinfo {author} {\bibfnamefont {D.}~\bibnamefont {Maslov}},
  \bibinfo {author} {\bibfnamefont {M.}~\bibnamefont {Mosca}}, \ and\ \bibinfo
  {author} {\bibfnamefont {M.}~\bibnamefont {Roetteler}},\ }\href@noop {}
  {\bibfield  {journal} {\bibinfo  {journal} {IEEE Transactions on
  Computer-Aided Design of Integrated Circuits and Systems}\ }\textbf {\bibinfo
  {volume} {32}},\ \bibinfo {pages} {818} (\bibinfo {year} {2013})}\BibitemShut
  {NoStop}%
\bibitem [{\citenamefont {Wallman}\ \emph {et~al.}(2015)\citenamefont
  {Wallman}, \citenamefont {Granade}, \citenamefont {Harper},\ and\
  \citenamefont {Flammia}}]{wallman2015estimating}%
  \BibitemOpen
  \bibfield  {author} {\bibinfo {author} {\bibfnamefont {J.}~\bibnamefont
  {Wallman}}, \bibinfo {author} {\bibfnamefont {C.}~\bibnamefont {Granade}},
  \bibinfo {author} {\bibfnamefont {R.}~\bibnamefont {Harper}}, \ and\ \bibinfo
  {author} {\bibfnamefont {S.~T.}\ \bibnamefont {Flammia}},\ }\href@noop {}
  {\bibfield  {journal} {\bibinfo  {journal} {New Journal of Physics}\ }\textbf
  {\bibinfo {volume} {17}},\ \bibinfo {pages} {113020} (\bibinfo {year}
  {2015})}\BibitemShut {NoStop}%
\bibitem [{\citenamefont {Cho}\ and\ \citenamefont
  {Bang}(2026)}]{cho2025entangling}%
  \BibitemOpen
  \bibfield  {author} {\bibinfo {author} {\bibfnamefont {K.}~\bibnamefont
  {Cho}}\ and\ \bibinfo {author} {\bibfnamefont {J.}~\bibnamefont {Bang}},\
  }\href@noop {} {\bibfield  {journal} {\bibinfo  {journal} {Physical Review
  A}\ }\textbf {\bibinfo {volume} {113}},\ \bibinfo {pages} {012442} (\bibinfo
  {year} {2026})}\BibitemShut {NoStop}%
\bibitem [{\citenamefont {Cho}\ and\ \citenamefont
  {Bang}(2025{\natexlab{a}})}]{cho2025fundamental}%
  \BibitemOpen
  \bibfield  {author} {\bibinfo {author} {\bibfnamefont {K.}~\bibnamefont
  {Cho}}\ and\ \bibinfo {author} {\bibfnamefont {J.}~\bibnamefont {Bang}},\
  }\href@noop {} {\bibfield  {journal} {\bibinfo  {journal} {arXiv preprint
  arXiv:2509.26373}\ } (\bibinfo {year} {2025}{\natexlab{a}})}\BibitemShut
  {NoStop}%
\bibitem [{\citenamefont {Cho}\ and\ \citenamefont
  {Bang}(2025{\natexlab{b}})}]{cho2025witness}%
  \BibitemOpen
  \bibfield  {author} {\bibinfo {author} {\bibfnamefont {K.}~\bibnamefont
  {Cho}}\ and\ \bibinfo {author} {\bibfnamefont {J.}~\bibnamefont {Bang}},\
  }\href@noop {} {\bibfield  {journal} {\bibinfo  {journal} {arXiv preprint
  arXiv:2511.21079}\ } (\bibinfo {year} {2025}{\natexlab{b}})}\BibitemShut
  {NoStop}%
\bibitem [{\citenamefont {Mele}(2024)}]{Mele2024introductiontohaar}%
  \BibitemOpen
  \bibfield  {author} {\bibinfo {author} {\bibfnamefont {A.~A.}\ \bibnamefont
  {Mele}},\ }\href {\doibase 10.22331/q-2024-05-08-1340} {\bibfield  {journal}
  {\bibinfo  {journal} {{Quantum}}\ }\textbf {\bibinfo {volume} {8}},\ \bibinfo
  {pages} {1340} (\bibinfo {year} {2024})}\BibitemShut {NoStop}%
\bibitem [{\citenamefont {Collins}\ and\ \citenamefont
  {{\'S}niady}(2006)}]{Collins:2006jgn}%
  \BibitemOpen
  \bibfield  {author} {\bibinfo {author} {\bibfnamefont {B.}~\bibnamefont
  {Collins}}\ and\ \bibinfo {author} {\bibfnamefont {P.}~\bibnamefont
  {{\'S}niady}},\ }\href {\doibase 10.1007/s00220-006-1554-3} {\bibfield
  {journal} {\bibinfo  {journal} {Commun. Math. Phys.}\ }\textbf {\bibinfo
  {volume} {264}},\ \bibinfo {pages} {773} (\bibinfo {year}
  {2006})}\BibitemShut {NoStop}%
\end{thebibliography}

%

\end{document}